\documentclass{article}
\usepackage[margin=1in]{geometry}    

\usepackage[T1]{fontenc}
\usepackage[utf8]{inputenc}
\usepackage{amsmath,amssymb,amsthm,mathtools}
\usepackage[cal=euler]{mathalpha}
\usepackage{cochineal}
\usepackage{microtype}
\usepackage{bm}
\usepackage{booktabs}
\usepackage{array}
\usepackage{multirow}
\usepackage{longtable}
\usepackage{graphicx}
\usepackage[shortlabels]{enumitem}
\usepackage[numbers,sort&compress]{natbib}
\let\cite\citep
\usepackage[colorlinks=true,linkcolor=blue,citecolor=blue,urlcolor=blue]{hyperref}

\allowdisplaybreaks

\newtheorem{theorem}{Theorem}[section]
\newtheorem{proposition}[theorem]{Proposition}

\newtheorem{corollary}[theorem]{Corollary}
\theoremstyle{definition}

\DeclareMathOperator{\Poi}{Poi}
\DeclareMathOperator{\Bin}{Bin}
\DeclareMathOperator{\Mult}{Mult}
\DeclareMathOperator{\Dir}{Dir}
\DeclareMathOperator{\DirMult}{DirMult}
\DeclareMathOperator{\DTM}{DTM}

\DeclareMathOperator{\DNM}{DNM}
\DeclareMathOperator{\NegBin}{NB}

\DeclareMathOperator{\BetaBin}{BetaBin}

\DeclareMathOperator{\sgn}{sgn}
\DeclareMathOperator{\KL}{KL}
\DeclareMathOperator*{\argmin}{arg\,min}

\newcommand{\R}{\mathbb{R}}
\newcommand{\N}{\mathbb{N}}
\newcommand{\E}{\mathbb{E}}
\newcommand{\Var}{\mathrm{Var}}
\newcommand{\Cov}{\mathrm{Cov}}

\newcommand{\abs}[1]{\left|#1\right|}
\newcommand{\paren}[1]{\left(#1\right)}
\newcommand{\braces}[1]{\left\{#1\right\}}
\newcommand{\bracks}[1]{\left[#1\right]}
\newcommand{\ind}{\mathbf{1}}
\providecommand{\authcmt}[2]{\textcolor{#1}{#2}}
\providecommand{\vac}[1]{\authcmt{blue}{[VAC: #1]}}
\providecommand{\akshay}[1]{\authcmt{red}{[AKSHAY: #1]}}
\providecommand{\nnz}{\mathrm{nnz}}

\newcolumntype{L}[1]{>{\raggedright\arraybackslash}p{#1}}

\numberwithin{equation}{section}
\title{Deviance-style normalization for jointly overdispersed counts}
\author{Akshay Balsubramani \\ {\small \texttt{akshay@vac.bio}}}
\date{}

\renewcommand{\vac}[1]{}
\renewcommand{\akshay}[1]{}
\begin{document}

\maketitle

\begin{abstract}
We introduce a Dirichlet--multinomial (DM) deviance residualization for sparse, jointly overdispersed count matrices, the regime that dominates sequencing-based biochemical assays.
The DM null treats each sample's count vector as a fixed-total composition with a single scalar concentration $\alpha_0$ governing overdispersion, and arises exactly by conditioning independent negative-binomial feature counts on the observed sample total --- making the DM the joint conditional analogue of standard feature-wise overdispersed count models.
The resulting transform preserves exact sparsity, evaluates in constant time per nonzero entry, agrees with multinomial residuals on singleton counts, shrinks repeated-count residuals according to the overdispersion the null tolerates, and recovers the multinomial residual as $\alpha_0\to\infty$.
The same fixed-dispersion comparison principle extends to ordered and tree-structured features via the generalized DM and the Dirichlet-tree multinomial, giving a single residual family that subsumes joint and feature-wise count nulls under a common compositional logic and is computationally lightweight enough to drop into existing sparse pipelines.
\end{abstract}

\tableofcontents

\section{Modeling multivariate count data}
\label{sec:introduction}

Count data are ubiquitous in modern biological measurement, especially in sequencing-based assays.
A sample is usually a $K$-dimensional nonnegative integer vector: the $j$th entry records how many observed events were assigned to feature $j$.
Stacking $n$ samples gives a count matrix $X\in\N^{n\times K}$, with each sample being a row of $X$.

The row total and the feature allocation play different roles.
The total $n_i=\sum_{j=1}^K X_{ij}$ often reflects sequencing depth, exposure, or sampling effort.
Conditional on that total, the scientific question is how the observed events were distributed across features.
A useful normalization should respect this fixed-total, discrete, and sparse structure rather than immediately replacing the matrix by a dense continuous surrogate.

A principled way to do this is to normalize by residuals from a simple count null model.
The familiar Gaussian $z$-score is the simplest example: fit a mean and variance, then keep the signed discrepancy from that fitted null.
Deviance residuals extend the same idea to non-Gaussian exponential-family models by measuring likelihood discrepancy rather than squared error \cite{MacCullaghNelder1989,efron_2022}.
For sparse counts, this likelihood-based route is attractive because the transform can be derived directly from the count model.

Existing count normalizations tend to choose one of two nulls.
Feature-wise Poisson or negative-binomial models allow each feature to be overdispersed, but they do not model the row as a joint allocation of a fixed total.
The multinomial model does model the row jointly, but it is too rigid when repeated counts are more common than multinomial sampling predicts \cite{townes2019feature,AhlmannEltzeHuber2023,lause2021analytic}.
The Dirichlet--multinomial (DM) sits between these choices: it is an overdispersed multinomial, it preserves the compositional interpretation of the row sum, and it arises exactly by conditioning independent negative-binomial feature counts on the observed sample total.

That conditional perspective is the main point of the paper.
Technical variation in biochemical count data often looks like overdispersed reallocation around a shared baseline composition.
A multinomial null can misread that tolerated burstiness as signal, while feature-wise negative-binomial residuals can ignore the competition among features induced by the fixed row total.
The DM is the smallest joint model that addresses both issues at once.
It separates composition from concentration, keeps the row-sum conditioning explicit, and leaves zeros as zeros.

This paper develops a practical residual normalization based on that model.
The default null is
\[
X_i\mid n_i \sim \DirMult(n_i,\alpha_0\bm\pi)
\]
where $\bm\pi$ is the empirical global composition and $\alpha_0$ is a single fitted concentration parameter.
For each sample, the comparison model keeps $\alpha_0$ fixed and replaces the null composition by the sample's empirical composition.
This gives a closed-form sparse transform that can be evaluated on the nonzero entries alone.

The main contributions are threefold.
First, we derive the DM transform, its fitting equations, and its basic properties: exact sparsity preservation, agreement with multinomial residuals on singleton counts, shrinkage relative to multinomial residuals on repeated counts, and convergence to the multinomial transform as $\alpha_0\to\infty$.
Second, we show that conditioning independent negative-binomial feature counts on the row sum gives exactly the same DM null, which explains why the method is the joint conditional analogue of common overdispersed count residualization.
Third, we extend the same fixed-dispersion comparison principle to structured count spaces, including ordered features through the generalized Dirichlet--multinomial and known hierarchies through the Dirichlet-tree multinomial.

Our evaluation compares the new transform to multinomial residuals, practical feature-wise negative-binomial residuals, independent negative-binomial baselines before and after row-sum conditioning, dense compositional baselines, and structured alternatives on biochemical count matrices.

\subsection{The proposed normalization}
\label{sec:ataglance}

The normalization is a single pass over a sparse count matrix.
Here $\DirMult(n,\bm\alpha)$ denotes the Dirichlet--multinomial distribution with total $n$ and concentration vector $\bm\alpha$; its probability mass function is given in \eqref{eq:dmpmf}.

Given a sparse count matrix $X$:
\begin{itemize}[noitemsep]
\item Compute the row sums $n_i$, the column sums $a_j$, and the global null composition $\pi_j=a_j/N$
\item Fit the single concentration $\hat\alpha_0$ of the conditional null $X_i\mid n_i \sim \DirMult(n_i,\hat\alpha_0\bm\pi)$ by the fixed-point iteration
\[
\alpha_0^{\mathrm{new}}
=
\alpha_0\,
\frac{
\sum_{i,j}\pi_j\bracks{\psi(X_{ij}+\alpha_0\pi_j)-\psi(\alpha_0\pi_j)}
}{
\sum_i\bracks{\psi(n_i+\alpha_0)-\psi(\alpha_0)}
}
\]
with $\psi$ the digamma function; this keeps $\alpha_0>0$, increases the likelihood at every step, and converges in tens of iterations from any positive start (Section~\ref{sec:alpha0} gives the derivation and three further fitting variants)
\item For each nonzero entry $x=X_{ij}$, compute
$
 c_{ij}^{\mathrm{DM}}
 =
 \sum_{m=0}^{x-1} \log \left( \frac{ \hat\alpha_0 x/n_i + m}{\hat\alpha_0 \pi_j + m} \right)
$ 
\item Output the signed residuals
$\displaystyle
 d_{ij}^{\mathrm{DM}}
 =
 \sgn(x-n_i\pi_j)\sqrt{\abs{c_{ij}^{\mathrm{DM}}}}
$
leaving every zero entry implicit.
\end{itemize}

Equivalently, $d_{ij}^{\mathrm{DM}}=0$ if $X_{ij}=0$, and for every nonzero entry $X_{ij}$,
\begin{equation}
\label{eq:oneliner}
\boxed{
d_{ij}^{\mathrm{DM}}
=
\sgn(X_{ij} - n_i\pi_j)\;\sqrt{\abs{
\sum_{m=0}^{X_{ij}-1}\log\left(\frac{\hat\alpha_0 X_{ij}/n_i + m}{\hat\alpha_0\pi_j+m}\right)
}}
}
\end{equation}

The transformed matrix can then be used where one would otherwise use a normalized surrogate of $X$: linear embeddings, nearest-neighbor graphs, clustering, regression, or other downstream algorithms that search for structure in the sample-by-feature matrix.
The coordinates also retain an explicit likelihood-contrast interpretation under a joint overdispersed fixed-total count null.

Thus the transform needs only one fitted scalar beyond the empirical composition, preserves sparsity exactly, and can be dropped into the same downstream workflows used for multinomial or negative-binomial residual normalization.
The comparison is deliberately fixed-dispersion: for each sample we let the composition match the observed row, but we keep the fitted concentration parameter $\alpha_0$ fixed.
Later we show that the same DM distribution appears by conditioning an independent negative-binomial model on the observed row sum (Theorem~\ref{thm:nbtodm}), and that the same fixed-dispersion principle extends naturally to ordered features and tree-structured features (Sections~\ref{sec:gdm} and~\ref{sec:dtm}).

\subsection{Performance and calibration across benchmarks}
\label{sec:glance-results}

The DM is not only principled but effective. Across five real count regimes --- single-cell UMI, bulk tissue, chromatin accessibility, and antibody-capture counts --- it improves on the multinomial null everywhere (Figure~\ref{fig:real-data-benchmarks}), at the per-nonzero cost of a single log-gamma evaluation. It is also, among the count-model normalizations, the \emph{only} one that calibrates a parametric differential-expression test: cell-type-wide, the Dirichlet--multinomial null holds the false-positive rate at its nominal level while the Poisson-family nulls are anti-conservative, and under injected differential expression it controls the false-discovery proportion where the multinomial inflates it (Figure~\ref{fig:E08971}). Section~\ref{sec:experiments} develops both comparisons in full.

\begin{figure}[t]
\centering
\includegraphics[width=\textwidth]{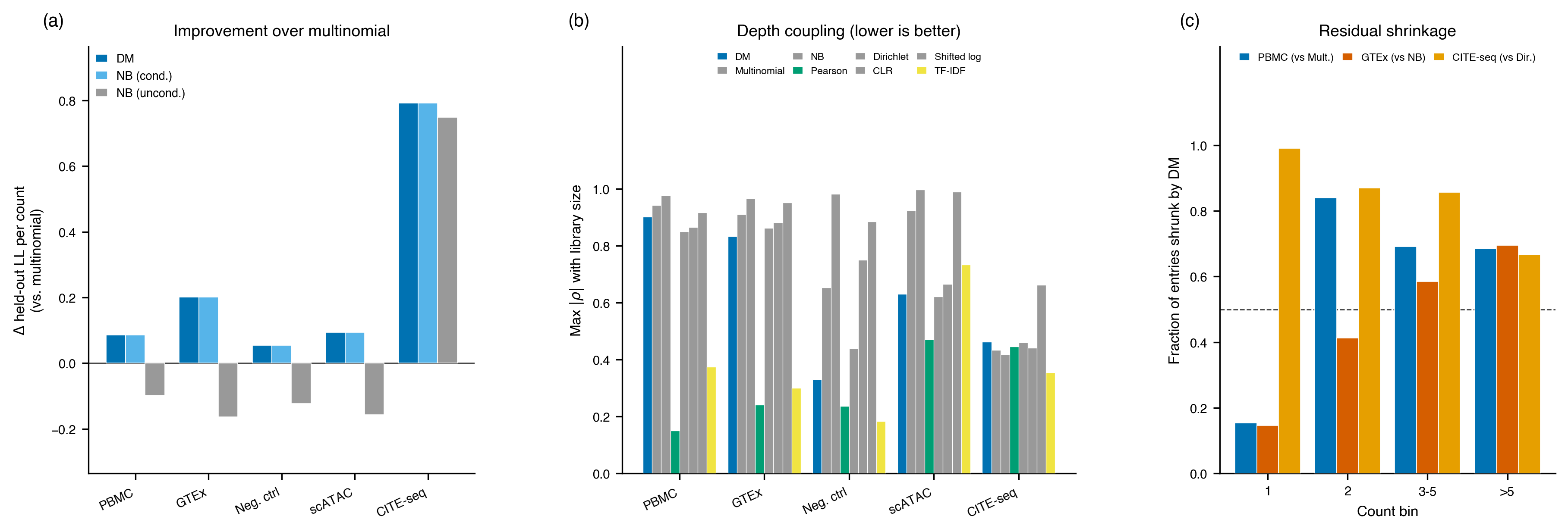}
\caption{\textbf{Real-data benchmark summary across five real count regimes.}
All five datasets are real: PBMC~3k (10x raw UMI), GTEx
\cite{Eraslan2022} (a 20{,}000-cell subsample of the GTEx
cross-tissue atlas), negative controls (PBMC~3k restricted to 20
housekeeping genes), scATAC (10x atac\_pbmc\_500\_nextgem peak counts),
and CITE-seq (10x pbmc\_1k\_protein\_v3 antibody-capture counts).
\textbf{(a)}~Improvement in held-out per-observed-count conditional
log-likelihood over the multinomial baseline, for the DM (highlighted)
and the unconditional and conditional negative-binomial models.
The DM improves on the multinomial everywhere, and the gain is largest
on overdispersed data (CITE-seq, gain $0.79$ from $-0.81$ to $-0.02$;
GTEx, gain $0.20$ from $-2.92$ to $-2.72$); the conditional NB tracks the
DM exactly, while the unconditional NB falls below the multinomial on
several datasets.
\textbf{(b)}~Depth coupling, lower is better: maximum absolute Pearson
correlation between the leading 10 PCA axes and library size, across
eight transforms.
TF-IDF achieves the lowest coupling by construction;
among the count-based methods the DM and multinomial are comparable, and
the analytical Pearson residuals use the per-cell
$\mu_{ij} = n_i \pi_j$ (not the marginal mean) and achieve low coupling
on most datasets.
\textbf{(c)}~Residual shrinkage by count bin: the fraction of nonzero
entries the DM shrinks relative to its paired baseline (multinomial on
PBMC, feature-wise NB on GTEx, dense pseudocount Dirichlet on CITE-seq),
stratified into counts $1$, $2$, $3$--$5$, and $>5$.
Against a shared $\bm\pi$, Proposition~\ref{prop:boundedbymult}
guarantees $|d^{\mathrm{DM}}|\le|d^{\mathrm{Mult}}|$ at every nonzero
entry with strict shrinkage on every repeated count, so against the
multinomial and the NB the shrunk-fraction starts small at singletons
(the gap is exactly zero at count~$1$ by Proposition~\ref{prop:x1}) and
rises with count magnitude; the Dirichlet comparison shrinks a large
majority of entries at every bin, singletons included, because the
Dirichlet baseline's unit pseudocount inflates low-count residuals.
All diagnostics use the training-fit global composition $\bm\pi$ and the
fitted scalar $\hat\alpha_0$.}
\label{fig:real-data-benchmarks}
\end{figure}

\begin{figure}[t]
\centering
\includegraphics[width=\linewidth]{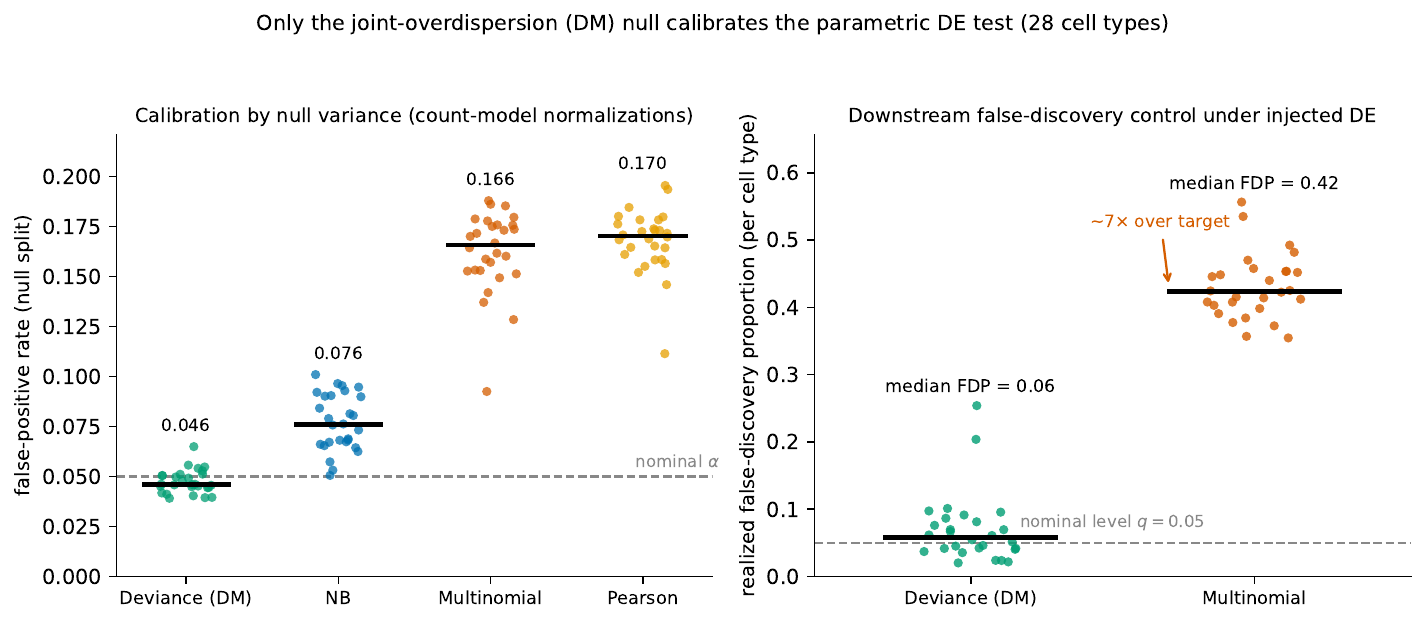}
\caption{The parametric two-group test calibrates only under the joint-overdispersion
(Dirichlet--multinomial) null, across every broad cell type of a cross-tissue atlas; the
Poisson-family (multinomial, Pearson) null is anti-conservative and the per-gene negative-binomial
null intermediate. \emph{Left:} under a null random split, the false-positive rate of the log-mean
test at nominal $\alpha=0.05$ for each count-model normalization's null variance (one point per cell
type, black bar the median across the twenty-eight cell types); the multinomial and analytic Pearson
nulls reject null genes about three to four times too often (median false-positive rates $0.166$ and
$0.170$), the per-gene negative-binomial null halves the inflation but does not remove it ($0.076$),
and the Dirichlet--multinomial null is calibrated ($0.046$, the unique calibrated method, best in
$28/28$). \emph{Right:} under injected differential expression, the realized false-discovery
proportion per cell type at the nominal Benjamini--Hochberg level $q=0.05$, for the two count-model
normalizations run in the downstream test (the Dirichlet--multinomial null and the multinomial null;
the negative-binomial and Pearson nulls are not part of this downstream comparison). The
Dirichlet--multinomial null holds the false-discovery proportion near the level it claims (median
$0.058$) while the multinomial null inflates it roughly sevenfold (median $0.423$), at matched
power.}
\label{fig:E08971}
\end{figure}
\subsection{Prior work and relation to existing normalization methods}
\label{sec:priorwork}

High-dimensional count matrices are standard throughout genomics.
RNA-seq differential-expression workflows are built around overdispersed negative-binomial models with empirical-Bayes shrinkage and related dispersion regularization \cite{RobinsonMcCarthySmyth2010,anders2010differential,LoveHuberAnders2014}.
At single-cell resolution, droplet-based assays produce sparse gene-by-cell matrices of unique molecular identifier counts \cite{Macosko2015}.
These matrices are now the default input for clustering, trajectory analysis, integration, and multimodal representation learning in widely used toolchains \cite{Satija2015,stuart2019comprehensive}.
Count-valued observations also arise at other biological resolutions, including genes, isoforms, peaks, taxa, barcodes, and genomic intervals \cite{Trapnell2010}.

The statistical issues are similar across these settings.
Counts are heterogeneous across samples, sparse across features, and often more variable than a Poisson model would predict.
This has motivated feature-wise negative-binomial regression, residual-based normalization, and variance-stabilizing transforms \cite{HafemeisterSatija2019,ChoudharySatija2022,lause2021analytic,AhlmannEltzeHuber2023}.
A complementary line of work fits a joint multinomial null directly to the sparse count matrix and uses multinomial deviance residuals derived from a single shared composition for feature selection and dimension reduction \cite{townes2019feature}.
The DM transform of this paper is the natural overdispersed extension of that direction: in the multinomial limit $\alpha_0\to\infty$ the DM residual reduces to the multinomial deviance residual \cite{townes2019feature}, while finite $\alpha_0$ accommodates the repeated-count regime where a multinomial null is empirically too rigid.
In parallel, compositional data analysis has emphasized that sample totals induce dependence among features and that log-ratio geometry is often appropriate once one works with dense positive compositions \cite{Aitchison1982,Gloor2017}.
The difficulty is that sparse count matrices contain many exact zeros, so dense log-ratio transforms require pseudocounts or imputation and therefore forfeit the exact sparsity pattern of the data.

The multinomial and Dirichlet--multinomial are especially natural fixed-total models.
Conditional on each sample total $n_i$, the multinomial is the basic joint model of the sample composition, while the DM adds one scalar degree of freedom controlling overdispersion relative to the multinomial \cite{Mosimann1962,ThorsonJohnsonMethotTaylor2017,HarrisonCalderShastryBuerkle2020}.
Bayesian compositional models based on the DM have recently been useful for differential abundance analysis in single-cell and microbiome applications \cite{BuettnerOstnerMuellerTheisSchubert2021,WadsworthArgientoGuindaniEtAl2017,Koslovsky2023}.
What has been missing is a sparse residual transform that uses the same family as a practical normalization null.

\subsection{Notation and setup}
\label{sec:notation}

We write $X_{ij}$ for the count in sample $i$ and feature $j$, and
$
n_i := \sum_{j=1}^K X_{ij},
\qquad
a_j := \sum_{i=1}^n X_{ij},
\qquad
N := \sum_{i=1}^n \sum_{j=1}^K X_{ij}$. 
The empirical global composition is
\[
\pi_j := \frac{a_j}{N},
\qquad j=1,\dots,K
\]
Throughout, we treat the sample totals $n_i$ as observed nuisance scales rather than as targets of the null model. 
The normalization problem is therefore to model how each sample allocates its observed total across features.

Some practical conventions are worth stating explicitly. 
First, all-zero samples and all-zero features carry no likelihood information for the conditional models considered here, so they may be removed prior to fitting. 
Second, because many of the transforms below are intended for sparse matrices, we prefer formulas that can be evaluated only on nonzero entries and then stored again in sparse form whenever possible.

\section{The Dirichlet--multinomial residual transform}
\label{sec:dm}

This section derives the transform stated in Section~\ref{sec:ataglance}.
Multinomial residuals are joint and compositional, but they do not model overdispersion.
Feature-wise negative-binomial residuals model overdispersion, but they do so one feature at a time.
The DM is the simplest fixed-total model that does both at once: it is joint, conditional on the observed row sum, and overdispersed.
\subsection{Where the DM sits among common normalization models}
\label{sec:tour}

There are two broad perspectives on modeling matrices of multivariate counts.

The first is \emph{feature-wise}. 
For each feature $j$, we model $X_{ij}$ separately across samples using a binomial, Poisson, or negative-binomial null. 
These models are straightforward, interpretable, and often effective; they also lead to classical sample-wise deviance residuals that are easy to compute. 
Their main limitation is structural: they treat the features as unrelated once they condition on covariates and sample-specific total $n_i$.

The second is \emph{joint}. 
Here the entire row vector $X_i = (X_{i1},\dots,X_{iK})$ is modeled conditioned on the sample total $n_i$. 
The multinomial is the baseline joint model.  
Its overdispersed analogue is the Dirichlet--multinomial. 
These models respect the compositional structure directly and capture the negative dependence induced by allocating a fixed total count across many features.

For normalization, the null model should be simple enough to estimate cheaply, robustly, and interpretably without destroying any (more complex) signal, while being rich enough to remove the most obvious nuisance structure. 
That is why the DM is an appealing null. 
It can be fit robustly, while still capturing repeated-count behavior that the multinomial misses. 
It also turns out to be the conditional version of an independent negative-binomial model, which makes its relation to standard NB residualization especially clear.

\subsection{Poisson is to multinomial as negative binomial is to DM}

\label{sec:conditionalmotivation}

The conditional viewpoint is classical for the multinomial.

\begin{theorem}[Independent Poisson counts condition to a multinomial]
\label{thm:poitomult}
Fix a sample $i$ and suppose $X_{ij} \sim \Poi(\lambda_{ij})$ independently for $j=1,\dots,K$. 
Then, conditional on $n_i = \sum_{j=1}^K X_{ij}$,
\[
(X_{i1},\dots,X_{iK}) \mid n_i
\sim
\Mult\!\left(n_i,\left\{\frac{\lambda_{ij}}{\sum_{\ell=1}^K \lambda_{i\ell}}\right\}_{j=1}^K\right)
\]
\end{theorem}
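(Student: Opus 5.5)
The plan is a direct computation of the conditional probability mass function: divide the joint law of the independent Poisson counts by the law of their sum. Fix sample $i$, write $\Lambda_i := \sum_{\ell=1}^K \lambda_{i\ell}$, and take any $\bm x=(x_1,\dots,x_K)\in\N^K$ with $\sum_j x_j = m$. On the event $\{X_{ij}=x_j \ \forall j\}$ the sum automatically equals $m$. Hence
\[
\Pr(X_i=\bm x\mid n_i=m)=\frac{\Pr(X_i=\bm x)}{\Pr(n_i=m)}.
\]
For any $\bm x$ whose coordinates do not sum to $m$, this probability is zero. That matches the multinomial support, so we may restrict attention to $\sum_j x_j = m$.

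\emph{Numerator.} By independence,
\[
\Pr(X_i=\bm x)=\prod_{j=1}^K e^{-\lambda_{ij}}\frac{\lambda_{ij}^{x_j}}{x_j!}=e^{-\Lambda_i}\prod_{j=1}^K\frac{\lambda_{ij}^{x_j}}{x_j!}.
\]

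\emph{Denominator.} The one genuinely needed ingredient is the closure of the Poisson family under independent sums: $n_i\sim\Poi(\Lambda_i)$. I would prove this quickly with probability generating functions. Each factor has generating function $\E[s^{X_{ij}}]=e^{\lambda_{ij}(s-1)}$, so by independence their product is $e^{\Lambda_i(s-1)}$. Alternatively, one can induct on $K$ using the binomial theorem in the two-variable convolution. This gives
\[
\Pr(n_i=m)=e^{-\Lambda_i}\frac{\Lambda_i^m}{m!}.
\]

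\emph{Ratio.} Dividing, the exponential factors cancel. Splitting $\Lambda_i^m=\prod_j\Lambda_i^{x_j}$, which uses $\sum_j x_j=m$, yields
\[
\frac{m!}{\prod_j x_j!}\prod_{j=1}^K\left(\frac{\lambda_{ij}}{\Lambda_i}\right)^{x_j}.
\]
This is exactly the $\Mult(m,\{\lambda_{ij}/\Lambda_i\}_j)$ mass function, which proves the statement.

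The only step with any content is the Poisson additivity used for the denominator; the rest is algebra. I would note that we need $\Lambda_i>0$, as is implicit in the statement, so that the conditioning event has positive probability and the proportions are well defined.
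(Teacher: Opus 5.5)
Your proposal is correct and follows essentially the same route as the paper's proof: multiply the independent Poisson pmfs, use $\sum_j X_{ij}\sim\Poi(\Lambda_i)$ for the denominator, and cancel $e^{-\Lambda_i}$ to get the multinomial pmf with probabilities $\lambda_{ij}/\Lambda_i$. The paper asserts Poisson additivity without proof, so your generating-function argument for it and your remark that $\Lambda_i>0$ is needed are small additions.
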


The analogous statement for overdispersed marginals is the cleanest entry point to the DM.
Once independent overdispersed feature counts are conditioned on the observed sample total, the joint model is Dirichlet--multinomial.

\begin{theorem}[Independent negative-binomial counts condition to a Dirichlet--multinomial]
\label{thm:nbtodm}
Fix a sample $i$ and suppose for any $j=1,\dots,K$ that 
\[
X_{ij} \sim \NegBin(r_j, p_i)
\qquad \iff \qquad 
\text{Pr}(X_{ij}=x)
=
\frac{\Gamma(x+r_j)}{\Gamma(r_j)\Gamma(x+1)}
(1-p_i)^x p_i^{r_j}
\]
independently for integer $x$. 
Then, conditional on $n_i = \sum_{j=1}^K X_{ij}$,
\[
(X_{i1},\dots,X_{iK}) \mid n_i
\sim
\DirMult(n_i,\bm r),
\qquad
\bm r=(r_1,\dots,r_K)
\]
\end{theorem}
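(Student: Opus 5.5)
The plan is a direct Bayes-rule computation, mirroring the classical argument behind Theorem~\ref{thm:poitomult}. For a fixed vector $\bm x=(x_1,\dots,x_K)\in\N^K$ with $\sum_j x_j=n$, we have
\[
\Pr(X_i=\bm x\mid n_i=n)=\frac{\prod_{j}\Pr(X_{ij}=x_j)}{\Pr(n_i=n)}.
\]
The work therefore splits into computing the numerator by independence and computing the denominator in closed form.

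\textbf{Numerator.} Multiplying the stated NB pmfs gives
\[
\prod_j \frac{\Gamma(x_j+r_j)}{\Gamma(r_j)\Gamma(x_j+1)}\,(1-p_i)^{n}\,p_i^{r_0},
\qquad r_0:=\sum_j r_j .
\]
The key observation is that the $p_i$-dependence collapses to $(1-p_i)^{\sum_j x_j}p_i^{r_0}$. This depends on $\bm x$ only through $n$, because all features share the same $p_i$.

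\textbf{Denominator.} I will show that $n_i\sim\NegBin(r_0,p_i)$. The cleanest route is the probability generating function. Each $X_{ij}$ has pgf $\bigl(p_i/(1-(1-p_i)s)\bigr)^{r_j}$, so by independence the pgf of the sum is the same expression with exponent $r_0$. That is the pgf of $\NegBin(r_0,p_i)$, and hence
\[
\Pr(n_i=n)=\frac{\Gamma(n+r_0)}{\Gamma(r_0)\Gamma(n+1)}(1-p_i)^n p_i^{r_0}.
\]
The pgf identity holds for arbitrary positive real $r_j$ and needs no integrality. Alternatively, the denominator follows by summing the numerator over all $\bm x$ with total $n$, using the Dirichlet-integral (generalized Vandermonde) identity
\[
\sum_{|\bm x|=n}\prod_j\frac{\Gamma(x_j+r_j)}{\Gamma(r_j)\,x_j!}=\frac{\Gamma(n+r_0)}{\Gamma(r_0)\,n!}.
\]

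\textbf{Ratio.} Dividing, the factor $(1-p_i)^n p_i^{r_0}$ cancels and we are left with
\[
\frac{n!\,\Gamma(r_0)}{\Gamma(n+r_0)}\prod_j\frac{\Gamma(x_j+r_j)}{x_j!\,\Gamma(r_j)}.
\]
This is exactly the $\DirMult(n,\bm r)$ pmf \eqref{eq:dmpmf} in its standard Gamma-function form. The last step is only to match this expression against the paper's stated form of \eqref{eq:dmpmf}, for example by rewriting it with Beta functions if the paper uses that convention.

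\textbf{Main obstacle.} The only non-bookkeeping step is closure of the NB family under convolution for a common $p_i$, that is, the law of $n_i$. I will handle it with the pgf argument above. I will also note explicitly that a shared $p_i$ across features is what makes $p_i$ drop out of the conditional law, just as the common scale drops out in the Poisson case.
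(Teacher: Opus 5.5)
Your proposal is correct and follows the paper's proof essentially step for step. Both arguments factor the joint negative-binomial probability so that its $p_i$-dependence is $(1-p_i)^{n_i}p_i^{r_0}$, divide by the $\NegBin(r_0,p_i)$ law of the total so that this factor cancels, and read off the $\DirMult(n_i,\bm r)$ pmf. The paper already states that pmf in exactly your Gamma-function form, so no Beta rewriting is needed. The one difference is that you justify closure of the negative binomial under convolution with a common $p_i$, via the probability generating function or the generalized Vandermonde identity, whereas the paper simply asserts this step.
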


Theorem~\ref{thm:nbtodm} shows that the success probability $p_i$ is a nuisance parameter once one conditions on the observed sample total. 
The conditional distribution depends only on the concentration vector $\bm r$. 
This is the precise sense in which the DM is the joint overdispersed analogue of the multinomial \cite{Zhou2018,GuimaraesLindrooth2007}.

\subsection{The DM is an overdispersed multinomial}
\label{sec:dmbasic}

The Dirichlet--multinomial distribution has been used as an overdispersed multinomial since its introduction by \cite{Mosimann1962}. 
For $x \in \N^K$ with $x_1+\cdots+x_K=n$ and $\bm\alpha \in \mathbb{R}_{+}^{K}$ with $\alpha_0:=\sum_{j=1}^K \alpha_j$, the probability mass function of the DM is
\begin{equation}
\label{eq:dmpmf}
\text{Pr}(X=x \mid n,\bm\alpha)
=
\frac{\Gamma(n+1)\Gamma(\alpha_0)}{\Gamma(n+\alpha_0)}
\prod_{j=1}^K
\frac{\Gamma(x_j+\alpha_j)}{\Gamma(x_j+1)\Gamma(\alpha_j)}
\end{equation}
Its mean is $\E[X_j] = n \frac{\alpha_j}{\alpha_0}$, and its expressions for variance and covariance resemble the multinomial with the same composition $p_j = \alpha_j/\alpha_0$: 
\begin{align*}
\Var(X_j) = n p_j \paren{1- p_j}\frac{n+\alpha_0}{1+\alpha_0} \qquad \qquad 
\Cov(X_j,X_\ell) = -n p_j p_{\ell} \frac{n+\alpha_0}{1+\alpha_0}, 
\qquad j\neq \ell
\end{align*}
Relative to the multinomial with parameters $p_j$, the covariances are inflated by the factor
$\frac{n+\alpha_0}{1+\alpha_0}$. 
Thus $\alpha_0$ controls how close the DM is to the multinomial. 
As $\alpha_0 \to \infty$, the DM approaches the multinomial; for smaller $\alpha_0$, the model becomes more overdispersed. 
In this precise limiting sense, the DM contains the multinomial as the no-overdispersion boundary case. 

\subsubsection{Overdispersed marginals}
\label{sec:overdispersedmarginals}

It is useful to distinguish two different ``marginal'' statements about the DM.

\begin{enumerate}[(i)]
\item 
For a \emph{Dirichlet--multinomial} random vector with fixed total $n$, the univariate marginals are \emph{beta-binomial}, not negative-binomial. 
Only under additional rare-event or large-total scalings do these marginals resemble negative-binomial behavior. 
\item 
The direct connection to the negative binomial comes from the \emph{conditional construction} in Theorem~\ref{thm:nbtodm}: independent negative-binomial feature counts become Dirichlet--multinomial after conditioning on the observed total.
\end{enumerate}

This distinction matters conceptually. 
When we condition on $n_i$, we are explicitly discarding a source of sample-to-sample variation in the total count and retaining only the joint allocation of that total across features.

\subsubsection{Conditioning on sample totals}
\label{sec:conditioning}

In sequencing and other biochemical assays, the row sum $n_i$ usually reflects sampling depth or exposure rather than the feature signal of primary interest.
Conditioning on $n_i$ therefore removes a nuisance scale and asks only how that observed total was allocated across features.
This has several consequences.
As we have shown, it makes the multinomial and Dirichlet--multinomial the natural joint models of composition, because they describe exactly how a fixed total is distributed across features. 
In addition, it cleanly separates \emph{composition} from \emph{depth}. 
In the DM parameterization, $\bm\alpha$ controls both the mean composition and the amount of overdispersion. 
For normalization, it is therefore natural to write
\[
\bm\alpha = \alpha_0 \bm\pi
\qquad
\bm\pi \in \Delta^{K-1}
\]
so that $\bm\pi$ determines the null composition and $\alpha_0$ determines the dispersion around it.

The full DM family certainly allows arbitrary feature-specific $\alpha_j$. 
In this paper we intentionally adopt the tied form $\alpha_j=\alpha_0\pi_j$ for the null model. 
This separates composition from concentration, yields a transparent interpretation, and reduces fitting to a one-dimensional problem in $\alpha_0$ once $\bm\pi$ has been fixed.

\subsubsection{Urn models and burstiness}
\label{sec:urn}

Overdispersion in discrete data is often associated with burstiness or reinforcement: once a feature is observed, it becomes more likely to be observed again. 
P\'olya urn schemes make this intuition precise.

If an urn contains balls of $K$ colors in proportions proportional to $\alpha_1,\dots,\alpha_K$, and each draw is replaced without reinforcement, then $n$ draws produce a multinomial count vector. 
If instead each observed ball is replaced together with an additional ball of the same color, then $n$ reinforced draws produce a Dirichlet--multinomial count vector with parameter $\bm\alpha$ \cite{Mosimann1962}. 
This ``rich classes get richer'' viewpoint is a convenient generative picture for the heavier tails of the DM.

The Dirichlet-negative-multinomial admits an analogous urn construction with one additional ``stopping'' color; we return to it in Section~\ref{sec:dnm}.

\subsection{The proposed DM null model}
\label{sec:framework}

We now model each sample vector by
\begin{equation}
\label{eq:dmnullmodel}
X_i \mid n_i,\bm\alpha \sim \DirMult(n_i,\bm\alpha),
\qquad
\bm\alpha = \alpha_0 \bm\pi
\end{equation}
where the null composition is set to the empirical global composition
$\pi_j = \frac{a_j}{N}$. 

Under \eqref{eq:dmnullmodel}, the log-likelihood is
\begin{equation}
\label{eq:dmll}
\ell(\bm\alpha)
=
\sum_{i=1}^n
\left[
\log\frac{\Gamma(n_i+1)\Gamma(\alpha_0)}{\Gamma(n_i+\alpha_0)}
+
\sum_{j=1}^K
\log\frac{\Gamma(X_{ij}+\alpha_j)}{\Gamma(X_{ij}+1)\Gamma(\alpha_j)}
\right]
\end{equation}

A key question is what comparison model should define the residuals. 
In classical deviance calculations one compares the fitted null to the most flexible model that still lives inside the chosen family. 
For the unrestricted DM family that idea is unhelpful: once both composition and concentration are free, the maximizing sequence drives the model toward the empirical multinomial.

\begin{proposition}[Unrestricted DM saturation collapses to multinomial saturation]
\label{prop:dmsaturation}
Fix a sample count vector $x\in\N^K$ with total $n$.
Then
\[
\sup_{\bm\alpha>0}\text{Pr}_{\DirMult(n,\bm\alpha)}(X=x)
=
\text{Pr}_{\Mult(n,x/n)}(X=x)
\]
where $x/n$ denotes the empirical composition, interpreted on its support if some coordinates of $x$ are zero. 
Equivalently, the unrestricted saturated likelihood over the full DM family is attained only in the limit $\bm\alpha=t\,x/n$ as $t\to\infty$.
\end{proposition}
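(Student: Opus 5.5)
The plan is to use the Dirichlet mixture representation of the DM and then apply the classical multinomial maximum-likelihood inequality. Write
\[
\text{Pr}_{\DirMult(n,\bm\alpha)}(X=x)=\int_{\Delta^{K-1}} \text{Pr}_{\Mult(n,\mathbf p)}(X=x)\,\Dir(\mathbf p;\bm\alpha)\,d\mathbf p .
\]
This follows from \eqref{eq:dmpmf} by integrating the multinomial pmf against the Dirichlet density; it is a standard Beta-function identity. For every $\mathbf p\in\Delta^{K-1}$ we have $\text{Pr}_{\Mult(n,\mathbf p)}(x)=\binom{n}{x}\prod_j p_j^{x_j}\le \binom{n}{x}\prod_j (x_j/n)^{x_j}$, because $\sum_j x_j\log p_j$ is maximized on the simplex at $p_j=x_j/n$ (Gibbs' inequality). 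Zero coordinates contribute $p_j^0=1$, which is why the bound reads naturally on the support of $x$. Averaging against any Dirichlet then gives the upper bound
\[
\sup_{\bm\alpha>0}\text{Pr}_{\DirMult(n,\bm\alpha)}(x)\le \text{Pr}_{\Mult(n,x/n)}(x).
\]

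For the matching lower bound, take $\bm\alpha=t\,\mathbf q$ with $\mathbf q$ in the interior of the simplex and let $t\to\infty$. One option is to use the closed form \eqref{eq:dmpmf} directly: $\Gamma(x_j+t q_j)/\Gamma(tq_j)\sim (tq_j)^{x_j}$ and $\Gamma(t)/\Gamma(n+t)\sim t^{-n}$, so the DM pmf converges to $\binom{n}{x}\prod_j q_j^{x_j}$. This is the $\alpha_0\to\infty$ limit already noted in Section~\ref{sec:dmbasic}. Since $\bm\alpha>0$ is required, zero coordinates of $x$ need care. I would handle them by letting $q_j=\epsilon$ off the support and $q_j=(1-(K-s)\epsilon)x_j/n$ on it, where $s$ is the support size. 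Then I take $t\to\infty$ followed by $\epsilon\to0$, or use a diagonal sequence. Alternatively, the pmf can be reduced to the support: a coordinate with $x_j=0$ contributes a factor $1$ in the product, and only $\alpha_0$ in the normalizer feels it. Either route gives a supremum at least $\text{Pr}_{\Mult(n,x/n)}(x)$, hence equality.

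The remaining claim is that the supremum is attained only in the limit. I would show that for every finite $\bm\alpha$ the inequality is strict. The Dirichlet density puts positive mass on every open set, and $\mathbf p\mapsto\text{Pr}_{\Mult(n,\mathbf p)}(x)$ is continuous with a unique maximizer $x/n$. It is therefore strictly below its maximum on a set of positive Dirichlet measure, which makes the integral strictly smaller. This holds whenever $n\ge1$; for $n=0$ both sides equal $1$ trivially.

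It remains to interpret ``attained only in the limit $t\,x/n$'': any maximizing sequence must concentrate the Dirichlet at $x/n$. This is the step I expect to be the most delicate. I would argue it by contradiction. If the Dirichlet laws $\Dir(\bm\alpha^{(k)})$ did not converge weakly to $\delta_{x/n}$, then along a subsequence they would put mass at least $\eta>0$ outside a neighborhood of $x/n$. There the multinomial likelihood is bounded away from its maximum, so the averages would stay bounded away from the supremum.

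Concentration of $\Dir(\bm\alpha)$ at $x/n$ means exactly that $\bm\alpha/\alpha_0\to x/n$ and $\alpha_0\to\infty$, since the variance is $p_j(1-p_j)/(1+\alpha_0)$. That is the sense in which the supremum is reached only along $\bm\alpha=t\,x/n$ with $t\to\infty$, interpreted with vanishing off-support mass when $x$ has zeros.
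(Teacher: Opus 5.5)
Your proof of the supremum identity follows the paper's argument. You write the DM as a Dirichlet mixture of multinomials, bound the mixture by the multinomial maximum at $\bm p=x/n$, and recover that value along a concentrating sequence. Your $\epsilon$-perturbation off the support improves on the paper, whose sequence $\bm\alpha=t\,x/n$ violates $\bm\alpha>0$ when $x$ has zeros. Your strictness argument and your weak-convergence argument (every maximizing sequence has $\Dir(\bm\alpha^{(k)})\to\delta_{x/n}$) go beyond the paper, which exhibits one maximizing sequence but never proves the ``only'' clause. Both arguments are sound, except in the trivial case $K=1$, where every $\bm\alpha$ attains the supremum.

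Your final step, however, is false when $x$ is supported on a single coordinate, say $x=n e_1$. There $x_1/n=1$, and $\Var(p_1)=\alpha_1(\alpha_0-\alpha_1)/\{\alpha_0^2(1+\alpha_0)\}$ can vanish while $\alpha_0$ stays bounded. Indeed, \eqref{eq:dmpmf} gives $\text{Pr}_{\DirMult(n,\bm\alpha)}(X=x)=\prod_{m=0}^{n-1}(\alpha_1+m)/(\alpha_0+m)$. This lies between $(\alpha_1/\alpha_0)^n$ and $\alpha_1/\alpha_0$, so it tends to $1=\text{Pr}_{\Mult(n,e_1)}(X=x)$ exactly when $\alpha_1/\alpha_0\to 1$. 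One such sequence is $\bm\alpha=(\delta,\delta^2,\dots,\delta^2)$ with $\delta\to 0$, along which $\alpha_0\to 0$. So your equivalence ``concentration at $x/n$ iff $\bm\alpha/\alpha_0\to x/n$ and $\alpha_0\to\infty$'' holds only when $x$ has at least two nonzero coordinates. Only then is some $x_j/n\in(0,1)$, and the variance formula forces $\alpha_0\to\infty$. This defect lies in the statement's own ``only in the limit $t\,x/n$, $t\to\infty$'' clause, so no better argument repairs it. You should state the characterization with this support-size caveat; the supremum identity itself is unaffected.
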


So a literal saturated deviance for the unrestricted DM family is not the object we want: the comparison would immediately collapse back to the multinomial and erase the DM's overdispersion correction.
Instead we work with the fixed-concentration family
\[
\{\DirMult(n_i,\alpha_0 q): q\in\Delta^{K-1}\}
\]
Our default comparison uses the empirical row composition $q_i=X_i/n_i$ inside this family.
This is the composition-matching model: it gives the observed row exactly the observed mean composition while retaining the fitted concentration $\alpha_0$ as the common dispersion level.

For finite $\alpha_0$, this empirical-composition comparison should not be confused with the exact maximum-likelihood fit of $q$ inside the fixed-concentration DM family.
The exact rowwise optimizer solves a nonlinear equation and does not generally equal $X_i/n_i$.
We use $q_i=X_i/n_i$ deliberately because it is closed form, has the multinomial comparison as its large-$\alpha_0$ limit, preserves exact sparsity, and aligns with the conditional independent-negative-binomial construction in Section~\ref{sec:matchednb}.
Thus, when we use a ``DM deviance-style residual'' in this paper, we mean the fixed-concentration empirical-composition likelihood contrast defined in Section~\ref{sec:dmresids} next.

\subsection{The proposed residual transform}
\label{sec:dmresids}

For sample $i$, define the empirical composition
$q_{ij}:=X_{ij}/n_i$ for $j=1,\dots,K$.
We compare the fitted null $\DirMult(n_i,\alpha_0\bm\pi)$ to the sample-specific empirical-composition model $\DirMult(n_i,\alpha_0 q_i)$.
In other words, sample $i$ gets its own observed composition, but it keeps the same concentration parameter $\alpha_0$ as the null.
This is the conditional joint analogue of an overdispersed count comparison in which the mean allocation is allowed to match the observation while the dispersion scale remains fixed.

The sample-wise log-likelihood contrast of this comparison model against the null is
\begin{align*}
\Delta_i^{\mathrm{DM}}
&=
\sum_{j=1}^K
\left[
\log\frac{\Gamma(X_{ij}+\alpha^{\mathrm{cmp}}_{ij})}{\Gamma(\alpha^{\mathrm{cmp}}_{ij})}
-
\log\frac{\Gamma(X_{ij}+\alpha^{\mathrm{null}}_{j})}{\Gamma(\alpha^{\mathrm{null}}_{j})}
\right] \\
\alpha^{\mathrm{cmp}}_{ij} &= \alpha_0 q_{ij},
\qquad
\alpha^{\mathrm{null}}_j = \alpha_0\pi_j
\end{align*}
Coordinates with $X_{ij}=0$ are interpreted by continuity; their contribution is the empty sum below.
Using the identity
$\log\{\Gamma(x+a)/\Gamma(a)\}=\sum_{m=0}^{x-1}\log(a+m)$ for integer $x$, we obtain the entry-wise additive term
\begin{equation}
\label{eq:dmcellterm}
c_{ij}^{\mathrm{DM}}
:=
\sum_{m=0}^{X_{ij}-1}
\log\paren{\frac{\alpha_0 X_{ij}/n_i + m}{\alpha_0\pi_j + m}}
\end{equation}
These terms satisfy $\Delta_i^{\mathrm{DM}}=\sum_{j=1}^K c_{ij}^{\mathrm{DM}}$.
They need not all be nonnegative, because the comparison is joint and the entry-wise decomposition is an attribution of a row-level likelihood contrast.
Following the standard signed square-root convention used for multinomial residuals, we define
\begin{equation}
\label{eq:dmresid}
d_{ij}^{\mathrm{DM}}
:=
\sgn\paren{X_{ij}-n_i\pi_j}
\sqrt{\abs{c_{ij}^{\mathrm{DM}}}}
\end{equation}

For feature-wise Poisson and negative-binomial models, the squared deviance residual is a nonnegative per-entry contribution to a classical deviance.
For joint models such as the multinomial and the DM, row-level likelihood contrasts decompose into additive coordinate terms that can be signed.
The transform \eqref{eq:dmresid} should therefore be viewed as a signed square root of an additive likelihood term rather than as a literal nonnegative entry-wise deviance decomposition.
When we say ``DM residual'' below, we mean this fixed-concentration empirical-composition transform.

When $X_{ij}=0$, the sum in \eqref{eq:dmcellterm} is empty, so $c_{ij}^{\mathrm{DM}}=0$ and therefore
\[
X_{ij}=0 \implies d_{ij}^{\mathrm{DM}}=0
\]
This is the basic sparsity-preserving property of the transform.

\subsection{Basic properties}

\label{sec:dmproperties}

The DM residuals interpolate between two familiar regimes. 
On ultra-sparse data, where most nonzero entries are singletons, they are essentially multinomial. 
Their distinctive behavior appears on repeated counts, where the DM discounts deviations that a multinomial would treat as too surprising.

\begin{proposition}[DM and multinomial residuals agree on singletons]
\label{prop:x1}
If $X_{ij}=1$, then the DM and multinomial signed residual transforms coincide:
$\displaystyle
d_{ij}^{\mathrm{DM}}
=
\sgn(1-n_i\pi_j)\sqrt{\abs{\log\frac{1}{n_i\pi_j}}}
= 
d_{ij}^{\mathrm{Mult}}
$. 
\end{proposition}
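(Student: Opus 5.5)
The plan is to evaluate the entry-wise term \eqref{eq:dmcellterm} directly at $X_{ij}=1$, evaluate the corresponding multinomial entry-wise term, and observe that the two coincide before the signed square root is applied. Since both transforms use the same sign convention $\sgn(X_{ij}-n_i\pi_j)$ and the same map $c\mapsto \sgn(\cdot)\sqrt{|c|}$, equality of the additive terms $c_{ij}$ is all that is needed.

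For the DM side, set $X_{ij}=1$ in \eqref{eq:dmcellterm}. The sum over $m$ then has the single term $m=0$, so
\[
c_{ij}^{\mathrm{DM}}=\log\frac{\alpha_0/n_i}{\alpha_0\pi_j}=\log\frac{1}{n_i\pi_j}.
\]
The factor $\alpha_0$ cancels exactly, which is why the answer does not depend on the fitted concentration. Substituting into \eqref{eq:dmresid} gives the displayed middle expression.

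For the multinomial side, I would write down the multinomial residual as the paper uses it, following \cite{townes2019feature}: the null $\Mult(n_i,\bm\pi)$ is compared against the empirical-composition model $\Mult(n_i,q_i)$. The row log-likelihood contrast is $\sum_j X_{ij}\log\{(X_{ij}/n_i)/\pi_j\}$, because the multinomial coefficients cancel. Its entry-wise term is therefore $c_{ij}^{\mathrm{Mult}}=X_{ij}\log\{X_{ij}/(n_i\pi_j)\}$, which at $X_{ij}=1$ equals $\log\{1/(n_i\pi_j)\}$. With the same signed square-root convention this yields $d_{ij}^{\mathrm{Mult}}$ equal to the same expression, which completes the identity.

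The only real obstacle is definitional. The paper has not yet written out $d^{\mathrm{Mult}}$ explicitly, so I must fix it as the $\alpha_0\to\infty$ / empirical-composition multinomial contrast, $X\log(X/(n\pi))$, with sign $\sgn(X-n\pi)$ and no extra factor of $2$. It is worth noting that this is the same object obtained as the limit of \eqref{eq:dmcellterm}, since each log term tends to $\log(X/(n\pi))$. This shows the convention is consistent with the rest of the paper.
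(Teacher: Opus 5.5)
Your proposal is correct and follows the paper's proof: the sum in \eqref{eq:dmcellterm} reduces to the single $m=0$ term, $\alpha_0$ cancels to give $\log\frac{1}{n_i\pi_j}$, and substituting into \eqref{eq:dmresid} gives the multinomial signed residual. Your derivation of $c_{ij}^{\mathrm{Mult}}=X_{ij}\log\frac{X_{ij}}{n_i\pi_j}$ with the same sign convention and no factor of $2$ makes explicit the formula the paper takes from Table~\ref{tab:devresid}.
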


\begin{proposition}[The DM shrinks large-count residuals toward zero]
\label{prop:boundedbymult}
The exact per-cell term is the closed form \eqref{eq:dmcellterm}, $c_{ij}^{\mathrm{DM}}=\sum_{m=0}^{x-1}\log\frac{\alpha_0 x/n_i+m}{\alpha_0\pi_j+m}$, and the multinomial term is its $\alpha_0\to\infty$ limit $c_{ij}^{\mathrm{Mult}}=x\log\frac{x}{n_i\pi_j}$ (Proposition~\ref{prop:multlimit}).
For every nonzero entry $X_{ij}=x$ these two exact quantities are ordered by magnitude,
$\abs{c_{ij}^{\mathrm{DM}}}
\le
\abs{c_{ij}^{\mathrm{Mult}}}
=
x \abs{\log\frac{x}{n_i\pi_j}}$,
with the same sign $\sgn(x-n_i\pi_j)$.
Consequently,
$
\abs{d_{ij}^{\mathrm{DM}}}
\le
\abs{d_{ij}^{\mathrm{Mult}}}
$.
\end{proposition}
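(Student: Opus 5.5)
The plan is to compare the two closed forms term by term, using only the explicit sum \eqref{eq:dmcellterm}. Write $x=X_{ij}\ge 1$, $q=x/n_i$, $p=\pi_j$ and $A=\hat\alpha_0>0$. Note that $p>0$ because column $j$ contains the nonzero entry $x$, so $a_j\ge x>0$. The DM term is then $c_{ij}^{\mathrm{DM}}=\sum_{m=0}^{x-1}\log\frac{Aq+m}{Ap+m}$. The multinomial term $x\log\frac{x}{n_ip}=x\log\frac{q}{p}$ can be written as the sum of $x$ identical summands $\log\frac{q}{p}=\log\frac{Aq}{Ap}$. So both quantities are sums of $x$ terms indexed by $m=0,\dots,x-1$, and the $m=0$ terms coincide.

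The key step is an elementary mediant inequality applied to each summand. For $m\ge 0$ the ratio $\frac{Aq+m}{Ap+m}$ lies in the closed interval between $1$ and $\frac{q}{p}$. This holds because it is the mediant of the fractions $\frac{Aq}{Ap}=\frac{q}{p}$ and $\frac{m}{m}=1$. Alternatively, $t\mapsto\frac{Aq+t}{Ap+t}$ is monotone in $t\ge 0$, running from $q/p$ at $t=0$ toward $1$ as $t\to\infty$. Two consequences follow.

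\begin{enumerate}[(i)]
\item Every summand $\log\frac{Aq+m}{Ap+m}$ has the sign of $q-p$, which is the sign of $x-n_i\pi_j$, or is zero when $q=p$. Hence the sum has that sign, which gives the sign claim.
\item Every summand satisfies $\abs{\log\frac{Aq+m}{Ap+m}}\le\abs{\log\frac{q}{p}}$.
\end{enumerate}

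Because all summands share one sign, the absolute value of the sum equals the sum of the absolute values. Summing (ii) over the $x$ indices therefore gives $\abs{c_{ij}^{\mathrm{DM}}}\le x\abs{\log\frac{q}{p}}=\abs{c_{ij}^{\mathrm{Mult}}}$. As a by-product, the inequality in (ii) is strict for every $m\ge1$ whenever $q\ne p$. This yields strict shrinkage for repeated counts $x\ge2$, and equality when $x=1$, consistent with Proposition~\ref{prop:x1}.

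Finally, both residuals use the same sign factor $\sgn(x-n_i\pi_j)$, and $\sqrt{\cdot}$ is monotone on $[0,\infty)$, so $\abs{d_{ij}^{\mathrm{DM}}}=\sqrt{\abs{c_{ij}^{\mathrm{DM}}}}\le\sqrt{\abs{c_{ij}^{\mathrm{Mult}}}}=\abs{d_{ij}^{\mathrm{Mult}}}$. The identification of $c_{ij}^{\mathrm{Mult}}$ as the $\alpha_0\to\infty$ limit is taken from Proposition~\ref{prop:multlimit} as stated. It can also be checked directly, since each summand tends to $\log(q/p)$ as $A\to\infty$.

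The only step requiring any care is establishing the sign-coherence in (i) before summing. Without it, bounding each summand's magnitude would not by itself bound the magnitude of the sum so cleanly. With the mediant observation in hand, the rest is routine.
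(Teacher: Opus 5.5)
Your proof is correct and follows essentially the same route as the paper: both observe that each summand $\log\frac{a+m}{b+m}$ (with $a=\alpha_0 x/n_i$, $b=\alpha_0\pi_j$) has the sign of $a-b$ and, by monotonicity in $m$, magnitude at most $|\log(a/b)|$, and then sum over the $x$ terms. Your explicit points that sign-coherence is what lets the termwise bound pass to the sum, and that the bound is strict for $m\ge 1$ when $q\neq p$, make explicit what the paper's terser argument leaves implicit.
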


\begin{proposition}[Multinomial limit]
\label{prop:multlimit}
For every fixed nonzero $X_{ij}=x$,
$\lim_{\alpha_0\to\infty} c_{ij}^{\mathrm{DM}}
=
x \log\frac{x}{n_i\pi_j}
$. 
So $d_{ij}^{\mathrm{DM}} \to d_{ij}^{\mathrm{Mult}}$ as $\alpha_0 \to \infty$.
\end{proposition}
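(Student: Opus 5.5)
We need the limit of $c=\sum_{m=0}^{x-1}\log\frac{\alpha_0 x/n_i+m}{\alpha_0\pi_j+m}$ as $\alpha_0\to\infty$ with $x$, $n_i$, $\pi_j$ fixed. The plan is to exploit that the sum has a fixed, finite number of terms, namely $x$, independent of $\alpha_0$. This lets us pass the limit inside termwise with no uniformity issues.

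First, for each fixed $m\in\{0,\dots,x-1\}$, I would divide numerator and denominator by $\alpha_0$. This gives
\[
\frac{\alpha_0 x/n_i+m}{\alpha_0\pi_j+m}=\frac{x/n_i+m/\alpha_0}{\pi_j+m/\alpha_0}\;\longrightarrow\;\frac{x/n_i}{\pi_j}=\frac{x}{n_i\pi_j}.
\]
The limit is a strictly positive finite number. This uses $x\ge1$, and $\pi_j>0$, which holds because $X_{ij}>0$ forces $a_j>0$. By continuity of $\log$ on $(0,\infty)$, each summand therefore converges to $\log\frac{x}{n_i\pi_j}$. Summing $x$ identical limits gives $\lim_{\alpha_0\to\infty}c_{ij}^{\mathrm{DM}}=x\log\frac{x}{n_i\pi_j}$, which is the multinomial per-entry term $c_{ij}^{\mathrm{Mult}}$. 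This is consistent with Proposition~\ref{prop:x1} at $x=1$, where the single term does not depend on $\alpha_0$ at all.

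For the residual statement, note two things. First, the sign factor $\sgn(x-n_i\pi_j)$ in \eqref{eq:dmresid} does not depend on $\alpha_0$, and it is the same sign used in $d_{ij}^{\mathrm{Mult}}$. Second, the map $t\mapsto\sqrt{|t|}$ is continuous on $\R$. Composing the two gives $d_{ij}^{\mathrm{DM}}\to\sgn(x-n_i\pi_j)\sqrt{|x\log\frac{x}{n_i\pi_j}|}=d_{ij}^{\mathrm{Mult}}$.

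There is essentially no obstacle. The only points to state explicitly are the positivity of $\pi_j$, which makes the logarithm well defined in the limit, and that the number of summands does not grow with $\alpha_0$. For zero entries, both residuals vanish identically, so the convergence is trivial there.
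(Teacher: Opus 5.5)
Your proof is correct and follows the paper's argument: divide by $\alpha_0$ in each summand, take the termwise limit $\log\frac{x}{n_i\pi_j}$, and use that the sum has only $x$ terms. Your added remarks on $\pi_j>0$, the $\alpha_0$-independent sign factor, and the continuity of $t\mapsto\sqrt{\abs{t}}$ spell out the step the paper leaves implicit, namely passing from $c_{ij}^{\mathrm{DM}}$ to $d_{ij}^{\mathrm{DM}}$.
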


Together these facts explain why the DM is a conservative overdispersed replacement for multinomial residuals. 
Singletons are unchanged, so the transform behaves almost identically to the multinomial on matrices dominated by zeros and ones. 
The distinction emerges exactly on repeated counts. 
There the DM term is a softened version of the multinomial term, with the amount of softening controlled by $\alpha_0$: large $\alpha_0$ recovers the multinomial, while smaller $\alpha_0$ discounts repeated counts more aggressively.

\subsection{Why keep the dispersion fixed in the comparison?}
\label{sec:nullchoice}

The null model says that all samples share one global composition $\bm\pi$ and one concentration level $\alpha_0$.
The comparison model changes only the composition of sample $i$, replacing $\bm\pi$ with the empirical row composition $q_i=X_i/n_i$, while keeping $\alpha_0$ fixed.
This is the modeling choice that lets the residual ask a focused question: after allowing the row to have its own composition, how much of the discrepancy from the global null remains attributable to the observed allocation rather than to the overdispersion level already tolerated by the null?

Proposition~\ref{prop:dmsaturation} explains why the concentration should not be re-fit freely inside the comparison.
If one re-fits the full DM family sample by sample, the optimum collapses to the empirical multinomial limit and the overdispersion correction disappears.
Fixing the concentration keeps the part of the model that is meant to explain burstiness, while still letting the row-level composition depart from the global baseline.

Other comparison rules are possible.
One could compute the exact fixed-concentration rowwise MLE of the composition, or choose a row-specific comparison concentration.%

Those variants are legitimate likelihood comparisons, but they lose the simple closed-form residual \eqref{eq:dmresid} and change the shrinkage behavior on repeated counts.
The empirical-composition comparison is the cleanest default: it is explicit, sparse, agrees with the multinomial transform in the relevant limits, and facilitates the connection to conditioned independent negative-binomial counts in Section~\ref{sec:matchednb}.

\subsection{Independent negative-binomial counts}
\label{sec:matchednb}

The preceding sections have established two extremes in the normalization landscape.
The multinomial models all features jointly but does not accommodate overdispersion; indeed, Proposition~\ref{prop:dmsaturation} shows that unrestricted DM saturation collapses back to the multinomial.
At the other extreme, independent negative-binomial models handle overdispersion naturally but treat features as unrelated.
The DM sits squarely between these two poles, and its connection to the independent negative-binomial model is especially clean: conditioning independent NB counts on the observed row sum yields exactly the DM.

Suppose that before conditioning on the row sum we model the feature counts as independent negative-binomial variables
\[
Y_{ij} \sim \NegBin(r_j,p_i),
\qquad j=1,\dots,K
\]
with a common sample-specific success probability $p_i$ and shape parameters $r_j=\alpha_0\pi_j$. 
Let $N_i:=\sum_{j=1}^K Y_{ij}$.
Then $N_i\sim\NegBin(\alpha_0,p_i)$ and, by Theorem~\ref{thm:nbtodm},
\[
Y_i\mid N_i=n_i \sim \DirMult(n_i,\alpha_0\bm\pi)
\]
So the DM is exactly what remains after conditioning this independent negative-binomial model on the observed total count.

\begin{proposition}[Independent negative-binomial likelihood factorization]
\label{prop:nbfactorization}
For any count vector $x_i$ with total $n_i=\sum_j x_{ij}$,
\[
\prod_{j=1}^K \text{Pr}_{\NegBin(r_j,p_i)}(Y_{ij}=x_{ij})
=
\text{Pr}_{\NegBin(\alpha_0,p_i)}(N_i=n_i)\;
\text{Pr}_{\DirMult(n_i,\alpha_0\bm\pi)}(X_i=x_i\mid n_i)
\]
Equivalently,
$\ell_i^{\mathrm{indNB}}(x_i;r,p_i)
=
\ell_i^{\mathrm{NB\ total}}(n_i;\alpha_0,p_i)
+
\ell_i^{\mathrm{DM}}(x_i\mid n_i;\alpha_0 \bm \pi)$. 
\end{proposition}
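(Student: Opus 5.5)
The plan is to prove the factorization by direct multiplication of the probability mass functions, with no appeal to limits or inequalities; Theorem~\ref{thm:nbtodm} then appears as an immediate corollary, but we do not need it as an input. Write $r_j=\alpha_0\pi_j$, so that $\sum_j r_j=\alpha_0$ because $\bm\pi\in\Delta^{K-1}$. The left-hand side is
\[
\prod_{j=1}^K \frac{\Gamma(x_{ij}+r_j)}{\Gamma(r_j)\Gamma(x_{ij}+1)}(1-p_i)^{x_{ij}}p_i^{r_j}.
\]
First collect the $p_i$-dependent factors: $\prod_j (1-p_i)^{x_{ij}}p_i^{r_j}=(1-p_i)^{n_i}p_i^{\alpha_0}$. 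This uses only $\sum_j x_{ij}=n_i$ and $\sum_j r_j=\alpha_0$.

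Next, multiply and divide by the normalizing constant of the $\NegBin(\alpha_0,p_i)$ pmf evaluated at $n_i$, namely $\Gamma(n_i+\alpha_0)/(\Gamma(\alpha_0)\Gamma(n_i+1))$. This gives
\[
\underbrace{\frac{\Gamma(n_i+\alpha_0)}{\Gamma(\alpha_0)\Gamma(n_i+1)}(1-p_i)^{n_i}p_i^{\alpha_0}}_{\text{Pr}_{\NegBin(\alpha_0,p_i)}(N_i=n_i)}
\cdot
\underbrace{\frac{\Gamma(n_i+1)\Gamma(\alpha_0)}{\Gamma(n_i+\alpha_0)}\prod_{j}\frac{\Gamma(x_{ij}+\alpha_0\pi_j)}{\Gamma(x_{ij}+1)\Gamma(\alpha_0\pi_j)}}_{\text{Pr}_{\DirMult(n_i,\alpha_0\bm\pi)}(x_i)\ \text{by } \eqref{eq:dmpmf}}.
\]
The second factor matches \eqref{eq:dmpmf} exactly with $\bm\alpha=\alpha_0\bm\pi$. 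The log-likelihood statement follows by taking logarithms.

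For completeness, we would also justify that the first factor really is the marginal pmf of $N_i$. A sum of independent $\NegBin(r_j,p)$ variables with common $p$ is $\NegBin(\sum_j r_j,p)$, which one checks via the probability generating function $\bigl(p/(1-(1-p)s)\bigr)^{r_j}$. Alternatively, sum the displayed factorization over all $x_i$ with total $n_i$: the DM pmf sums to one, so the first factor must equal $\text{Pr}(N_i=n_i)$.

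The only delicate point is bookkeeping of the constraint $\sum_j\pi_j=1$ and of zero entries. Features with $\pi_j=0$ were removed by convention, and entries with $x_{ij}=0$ contribute a factor of $1$ to the Gamma ratios. So we expect no real obstacle; the identity is purely algebraic.
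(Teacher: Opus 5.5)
Your proposal is correct and follows the same route as the paper's proof: multiply the negative-binomial pmfs, collect $(1-p_i)^{n_i}p_i^{\alpha_0}$ using $\sum_j r_j=\alpha_0$, multiply and divide by the $\NegBin(\alpha_0,p_i)$ normalizing constant at $n_i$, and identify the two factors with the total-count pmf and \eqref{eq:dmpmf}. The paper simply asserts that the first factor is the law of $N_i$. Your added justification of that step, via the probability generating function or by summing the factorization against the normalized DM pmf, is a small but welcome completion.
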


A convenient choice is
\[
p_i = \frac{\alpha_0}{n_i+\alpha_0}
\]
which makes $\E[N_i]=n_i$ under the null independent negative-binomial model. 
Now compare that null to the sample-specific independent negative-binomial model obtained by replacing
\[
r_j=\alpha_0\pi_j
\qquad\text{by}\qquad
r_{ij}^{\mathrm{cmp}} = \alpha_0\frac{X_{ij}}{n_i}
\]
while keeping the same $p_i$. 
Interpreting zero-count coordinates by continuity, the factor for the total count $N_i$ is then identical under the null and comparison models and therefore cancels.

\begin{corollary}[Independent negative-binomial and DM give the same per-sample likelihood contrast]
\label{cor:nbdmdeviance}
Under the choice $p_i=\alpha_0/(n_i+\alpha_0)$ and the comparison model $r_{ij}^{\mathrm{cmp}}=\alpha_0 X_{ij}/n_i$, the per-sample independent negative-binomial log-likelihood contrast equals the DM log-likelihood contrast:
\[
\Delta_i^{\mathrm{indNB}} = \Delta_i^{\mathrm{DM}}
\]
Equivalently, twice these contrasts agree sample by sample. 
\end{corollary}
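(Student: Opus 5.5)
The approach is to apply the factorization of Proposition~\ref{prop:nbfactorization} twice: once to the null shape vector $\bm r=\alpha_0\bm\pi$ and once to the comparison shape vector $\bm r^{\mathrm{cmp}}_i=\alpha_0 q_i$ with $q_i=X_i/n_i$, both with the same $p_i=\alpha_0/(n_i+\alpha_0)$. Proposition~\ref{prop:nbfactorization} is stated for $r_j=\alpha_0\pi_j$, but its proof uses only that the shapes sum to $\alpha_0$. Since $\sum_j \alpha_0 X_{ij}/n_i=\alpha_0$, the comparison shapes have the same total concentration as the null. So the same identity holds with $\bm\pi$ replaced by $q_i$:
\[
\ell_i^{\mathrm{indNB}}(x_i;\alpha_0 q_i,p_i)=\ell_i^{\mathrm{NB\ total}}(n_i;\alpha_0,p_i)+\ell_i^{\mathrm{DM}}(x_i\mid n_i;\alpha_0 q_i).
\]
If one prefers not to rely on the proof of Proposition~\ref{prop:nbfactorization}, this can be checked directly. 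Multiply the NB pmfs. The factors $(1-p_i)^{x_{ij}}$ combine to $(1-p_i)^{n_i}$, the factors $p_i^{r_j}$ combine to $p_i^{\alpha_0}$, and the gamma ratios regroup into the $\NegBin(\alpha_0,p_i)$ pmf at $n_i$ times \eqref{eq:dmpmf}.

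Subtracting the null identity from the comparison identity, the total-count term $\ell_i^{\mathrm{NB\ total}}(n_i;\alpha_0,p_i)$ is identical in both and cancels. This leaves $\Delta_i^{\mathrm{indNB}}=\ell_i^{\mathrm{DM}}(x_i\mid n_i;\alpha_0 q_i)-\ell_i^{\mathrm{DM}}(x_i\mid n_i;\alpha_0\bm\pi)$. In that difference the factors $\Gamma(n_i+1)\Gamma(\alpha_0)/\Gamma(n_i+\alpha_0)$ and $\prod_j 1/\Gamma(x_{ij}+1)$ also cancel. What remains is exactly the sum defining $\Delta_i^{\mathrm{DM}}$ in Section~\ref{sec:dmresids}, namely $\sum_j c_{ij}^{\mathrm{DM}}$. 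Doubling both sides gives the deviance-scale statement.

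The main obstacle is the zero-count coordinates. There $r^{\mathrm{cmp}}_{ij}=0$ is not a legitimate NB shape, and $\Gamma(0)$ appears in the DM normalizer. I would handle this by continuity, as the statement instructs. For $x_{ij}=0$, the coordinate's NB pmf is $p_i^{r_{ij}}$ and its DM factor is $\Gamma(r_{ij})/\Gamma(r_{ij})=1$. Both are continuous as $r_{ij}\to0^+$, with limits $1$. So I would perturb to $r^{\mathrm{cmp}}_{ij}=\varepsilon$ on zero coordinates and rescale the support coordinates so the total stays $\alpha_0$. The identity above then holds for every $\varepsilon>0$, and I would let $\varepsilon\to0$. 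The total-count factor depends only on the sum $\alpha_0$, so it is unaffected by the perturbation and still cancels exactly. Equivalently, one can drop zero coordinates and apply the factorization on the support of $x_i$, where the comparison shapes still sum to $\alpha_0$.
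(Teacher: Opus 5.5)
Your proposal is correct and follows the paper's proof: apply Proposition~\ref{prop:nbfactorization} to both the null shapes $\alpha_0\bm\pi$ and the comparison shapes $\alpha_0 q_i$, use that both sum to $\alpha_0$ so the $\NegBin(\alpha_0,p_i)$ total-count factor is identical and cancels, and what remains is the DM contrast. Your explicit handling of zero-count coordinates, by an $\varepsilon\to0$ limit or by restricting to the support of $x_i$, fills in what the paper covers only with the remark about interpreting zero-count coordinates by continuity.
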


The equality in Corollary~\ref{cor:nbdmdeviance} is \emph{sample-wise}, not entry-wise. 
Standard feature-wise NB residual normalization fits each feature separately, usually with feature-specific dispersions, and compares each entry against its own saturated mean. 
So it is not algebraically identical to the DM residual matrix. 
The exact comparison is at the level of per-sample likelihoods; the comparison at the level of residual matrices or downstream analyses is practical rather than algebraic.

\subsection{The Dirichlet-negative-multinomial}
\label{sec:dnm}

The negative multinomial is to the multinomial what the negative binomial is to the Poisson. 
Its overdispersed analogue is the Dirichlet-negative-multinomial (DNM) \cite{Mosimann1963,HausmanHallGriliches1984,FarewellFarewell2013}. 
One can think of the DNM as introducing an additional ``failure'' category, indexed by $0$, and sampling repeatedly until a prescribed number $x_0$ of failures have been observed.

The DNM is useful because it explains how variability in the total sample count can coexist with a DM-like allocation model for the observed categories. 
Once the sample total is conditioned on, however, the extra stopping parameters disappear.

\begin{proposition}[Conditioning a DNM on the observed total yields a DM]
\label{prop:dnm}
Suppose
$X_i \sim \DNM(x_{0,i},\alpha_{0,i},\bm\alpha)$. 
Then, conditional on $\sum_{j=1}^K X_{ij}=n_i$,
\[
X_i \mid \sum_{j=1}^K X_{ij}=n_i \sim \DirMult(n_i,\bm\alpha)
\]
\end{proposition}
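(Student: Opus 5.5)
The plan is to write down the DNM probability mass function explicitly and verify that, once we condition on the total, every factor that does not depend on the individual counts $x_j$ cancels. The parameterization $\DNM(x_{0,i},\alpha_{0,i},\bm\alpha)$ treats the failure category as category $0$, with stopping count $x_{0,i}$ and Dirichlet weight $\alpha_{0,i}$ on that category. The observed categories carry weights $\bm\alpha=(\alpha_1,\dots,\alpha_K)$, and we write $A:=\sum_{j=1}^K\alpha_j$. The hierarchical form is then: draw $(p_0,p_1,\dots,p_K)\sim\Dir(\alpha_{0,i},\alpha_1,\dots,\alpha_K)$, and then draw $X_i$ from the negative multinomial with $x_{0,i}$ failures and probabilities $p$. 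Integrating out $p$ gives, for $n=\sum_j x_j$,
\[
\text{Pr}(X_i=x)=\frac{\Gamma(x_{0,i}+n)}{\Gamma(x_{0,i})\prod_j\Gamma(x_j+1)}\cdot\frac{\Gamma(\alpha_{0,i}+A)}{\Gamma(\alpha_{0,i})\prod_j\Gamma(\alpha_j)}\cdot\frac{\Gamma(\alpha_{0,i}+x_{0,i})\prod_j\Gamma(\alpha_j+x_j)}{\Gamma(\alpha_{0,i}+A+x_{0,i}+n)} .
\]
I would derive this from the negative multinomial mass $\frac{\Gamma(x_{0,i}+n)}{\Gamma(x_{0,i})\prod_j x_j!}\,p_0^{x_{0,i}}\prod_j p_j^{x_j}$ via the Dirichlet normalizing constant. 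If the paper's definition in Section~\ref{sec:dnm} is stated directly as a pmf, I would simply quote it.

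The key step is to group the factors. Those depending on $x$ only through $n$ (or not at all) are
\[
\frac{\Gamma(x_{0,i}+n)\,\Gamma(\alpha_{0,i}+A)\,\Gamma(\alpha_{0,i}+x_{0,i})}{\Gamma(x_{0,i})\,\Gamma(\alpha_{0,i})\,\Gamma(\alpha_{0,i}+A+x_{0,i}+n)}.
\]
The remaining factor depends on the individual $x_j$ and is
\[
\prod_j\frac{\Gamma(x_j+\alpha_j)}{\Gamma(x_j+1)\Gamma(\alpha_j)}.
\]
Hence $\text{Pr}(X_i=x)=g(n)\prod_j\frac{\Gamma(x_j+\alpha_j)}{\Gamma(x_j+1)\Gamma(\alpha_j)}$ for a function $g$ of $n$ alone.

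Conditioning on $\sum_j X_{ij}=n_i$ divides by $\text{Pr}(\sum_jX_{ij}=n_i)=g(n_i)\sum_{|x|=n_i}\prod_j(\cdots)$, so $g(n_i)$ cancels. The remaining sum is fixed by comparison with \eqref{eq:dmpmf}: since the DM pmf sums to one over $\{|x|=n_i\}$,
\[
\sum_{|x|=n_i}\prod_j\frac{\Gamma(x_j+\alpha_j)}{\Gamma(x_j+1)\Gamma(\alpha_j)}=\frac{\Gamma(n_i+A)}{\Gamma(n_i+1)\Gamma(A)}.
\]
The conditional law is therefore exactly $\DirMult(n_i,\bm\alpha)$, and $x_{0,i}$ and $\alpha_{0,i}$ have disappeared.

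A shorter alternative avoids the closed form entirely. Condition on $p$ first: given $p$, the negative multinomial conditioned on its total is $\Mult(n,\{p_j/\sum_{\ell\ge1}p_\ell\})$. Then use the aggregation property of the Dirichlet: $(p_j/\sum_{\ell\ge1}p_\ell)_{j\ge1}\sim\Dir(\bm\alpha)$, independently of $(p_0,\sum_{\ell\ge1}p_\ell)$. The main obstacle on this route is justifying that conditioning on $n$ does not reweight the normalized composition. This independence is exactly what guarantees that, because the likelihood of $n$ given $p$ depends only on $p_0$.

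Because this independence step is delicate, I would present the direct pmf factorization as the proof and use the hierarchical route only as intuition. The only genuine care needed is matching the paper's parameterization convention for $\DNM$.
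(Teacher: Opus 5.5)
Your proof is correct, but your primary route differs from the paper's. The paper argues hierarchically, which is exactly your ``shorter alternative''. It draws $(P_{i0},\dots,P_{iK})\sim\Dir(\alpha_{0,i},\bm\alpha)$ and notes that, given $P$ and the total $n_i$, the observed counts are $\Mult(n_i,Q_i)$ with $Q_i=(P_{ij}/(1-P_{i0}))_{j\ge1}$. It then invokes Dirichlet aggregation to get $Q_i\sim\Dir(\bm\alpha)$ independent of $P_{i0}$, and integrates $Q_i$ out.

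The step you single out as delicate is the one the paper leaves implicit. It integrates against the prior law of $Q_i$ rather than its law given $\sum_j X_{ij}=n_i$. That is legitimate only because $\Pr(\sum_j X_{ij}=n\mid P)$ is a negative-binomial mass depending on $P$ only through $P_{i0}$. Conditioning on the total therefore reweights $P_{i0}$ alone and leaves $Q_i\sim\Dir(\bm\alpha)$.

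Your direct computation sidesteps this issue entirely, and each step checks out:
\begin{itemize}
\item your closed-form DNM mass is right;
\item the split into $g(n)$ times the DM kernel $\prod_j\Gamma(x_j+\alpha_j)/(\Gamma(x_j+1)\Gamma(\alpha_j))$ is right;
\item normalizing with \eqref{eq:dmpmf} finishes the argument.
\end{itemize}

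Your route is the same ``divide out a function of $n$'' argument the paper uses for Theorems~\ref{thm:poitomult} and~\ref{thm:nbtodm}. It needs no Dirichlet neutrality or aggregation property. As a by-product it gives the marginal law of the total, $\Pr(\sum_j X_{ij}=n)=g(n)\,\Gamma(n+A)/(\Gamma(n+1)\Gamma(A))$ with your $A=\sum_j\alpha_j$.

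The paper's route buys a mechanism rather than a calculation. It makes clear that the conclusion holds for any sampling scheme with two properties: given the probability vector, the total depends on it only through $P_{i0}$, and the allocation given the total is $\Mult(n_i,Q_i)$. The negative-multinomial stopping rule is just one such scheme.
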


From the viewpoint of normalization, Proposition~\ref{prop:dnm} is reassuring. 
If sample-to-sample depth variation is modeled through a DNM-type stopping mechanism, conditioning on the observed depth still reduces the problem to the same DM residual transform. 
The DNM and DM are therefore alternate parameterizations of the same conditional normalization problem. 

\subsection{Statistical testing interpretations}
\label{sec:testing}

One reason to work with deviance-style residuals rather than an ad hoc transform is that the underlying quantities remain likelihood based.
For each sample $i$,
\[
\Delta_i^{\mathrm{DM}}=\sum_{j=1}^K c_{ij}^{\mathrm{DM}}
\]
is the log-likelihood contrast between the empirical-composition comparison model and the fitted DM null, with $\alpha_0$ held fixed.
Summing these contrasts over rows or feature groups gives a well-defined decomposition of how the fitted null differs from the composition-matching comparison.

At finer resolution, the entry-wise terms $c_{ij}^{\mathrm{DM}}$ and magnitudes $(d_{ij}^{\mathrm{DM}})^2=\abs{c_{ij}^{\mathrm{DM}}}$ provide an attribution map of where that contrast comes from.
The signed terms tell us whether a coordinate pushes the row-level contrast up or down, while the squared residuals give the corresponding unsigned contribution size.
Because every entry is scored against the same fitted null, one can compare aggregated contributions across groups of rows, columns, or annotations.

For arbitrary feature subsets, these aggregates should be interpreted as decomposed pieces of a joint likelihood contrast rather than as standalone saturated deviances.
Even so, they are useful diagnostically: they localize which coordinates dominate the global departure from the null, quantify how much particular subsets contribute, and provide a principled scale on which to compare signal strength across samples, feature groups, or branches of a known hierarchy.

Concretely, for any index set $S\subseteq\{1,\dots,n\}\times\{1,\dots,K\}$, define the signed aggregate
\[
D_S:=2\sum_{(i,j)\in S}c_{ij}^{\mathrm{DM}}
\]
When $S$ is a single row, $D_{\{i\}\times[K]}$ is the row-level DM contrast.
When $S$ is a single column, $D_{[n]\times\{j\}}$ is the signed contribution of feature $j$ to the global contrast across samples.
When $S$ is a rectangular block of rows and columns, $D_S$ is the signed contribution of that block.
By contrast, $\sum_{(i,j)\in S}(d_{ij}^{\mathrm{DM}})^2$ is an unsigned attribution score that ignores sign cancellations.

For formal hypothesis testing, the natural objects are row-level or block-level likelihood ratios with a clearly specified alternative, preferably fit on held-out data or calibrated by resampling.
If one replaces the empirical-composition comparison by the exact maximum-likelihood alternative for a chosen block model, the usual large-sample Wilks-type calibration can be used as a guide under standard regularity conditions \cite{MacCullaghNelder1989,efron_2022}.
The individual entry-wise DM residuals themselves should be treated as descriptive coordinates: because the terms may be signed and are coupled by the row-sum constraint, they do not have a universal exact null law.

Finally, the difference $|d_{ij}^{\mathrm{Mult}}|-|d_{ij}^{\mathrm{DM}}|$ at each nonzero entry quantifies the residual magnitude that is ``explained away'' by overdispersion.
Entries where this gap is large are precisely the repeated-count coordinates where conditioning on the row sum and allowing joint overdispersion most changes the residual geometry relative to a multinomial baseline.

\subsection{DM contrasts as e-values and e-processes}
\label{sec:eprocess}

\paragraph{Background.}
This paragraph fixes the testing terminology used below.
An \emph{e-variable} for a null hypothesis $H_0$ is a non-negative random variable $E$ with $\mathbb{E}_{H_0}[E]\le 1$; observing the realization $E\ge 1/\alpha$ is finite-sample evidence against $H_0$ at level $\alpha$.
An \emph{e-process} is a non-negative process $\{M_T\}_{T\ge 1}$ adapted to a filtration that is bounded above in expectation by a martingale starting at $1$ under $H_0$; \emph{Ville's inequality} \cite{ramdas2023statistical} bounds $\mathbb{P}_{H_0}\!\left(\sup_{T\ge 1} M_T\ge 1/\alpha\right)\le\alpha$ uniformly over $T$, including data-dependent stopping.
A test is \emph{anytime-valid} when type-I error is controlled simultaneously over every stopping time $T$, equivalently when its running test statistic is an e-process.
\emph{Split conformal prediction} produces calibrated prediction sets by computing nonconformity scores on a held-out calibration split and inverting their quantiles.

The row-level contrast $\Delta_i^{\mathrm{DM}}=\sum_{j=1}^K c_{ij}^{\mathrm{DM}}$ is, by construction, a log-likelihood ratio between two specific count models in the fixed-concentration DM family.
Two distinct constructions arise, and only one of them is an e-variable.
If the comparison composition $q_i$ is fit from the same $X_i$ used to evaluate the likelihood ratio --- as in the empirical-composition contrast $q_i=X_i/n_i$ of Section~\ref{sec:dmresids} that drives the residual transform --- then the resulting statistic is a generalized likelihood ratio rather than a fixed-alternative likelihood ratio, and the e-variable bound $\mathbb{E}_{H_0}[\exp\Delta_i^{\mathrm{DM}}]\le 1$ does not hold in general.
The construction that does yield an e-variable splits the row: fit the comparison composition on one part of the row's counts and evaluate the likelihood ratio on the held-out remainder, in the manner of universal inference \cite{wasserman2020universal}.
Running products of split-LR e-variables over independent rows then form an e-process.
The next corollary states this precisely; combined with the preceding sections it gives a finite-sample, anytime-valid testing framework that uses the same DM family as the residual transform of Section~\ref{sec:dmresids}, with the row-split discipline added on top for the testing application.

\begin{corollary}[Split-LR DM-e-variable and DM-e-process]
\label{cor:dm-e-process}
Fix the global composition $\bm\pi$ and concentration $\alpha_0$.
Given a row $X_i$ with total $n_i$, split it by independent binomial thinning at rate $\tfrac12$: assign each of the $n_i$ unit counts independently to either the \emph{train} or the \emph{eval} side, yielding $X_i^{\mathrm{tr}}+X_i^{\mathrm{ev}}=X_i$ with totals $n_i^{\mathrm{tr}}+n_i^{\mathrm{ev}}=n_i$.
Define the train-side empirical composition $q_i^{\mathrm{tr}}\in\Delta^{K-1}$ by $q_{ij}^{\mathrm{tr}}=X_{ij}^{\mathrm{tr}}/n_i^{\mathrm{tr}}$ (any fixed tie-breaking convention when $n_i^{\mathrm{tr}}=0$, e.g.\ $q_i^{\mathrm{tr}}=\bm\pi$).
The eval side is then scored against the null law that the train side leaves behind.
Because the DM is a Dirichlet mixture of multinomials, conditioning on the train counts updates the null composition by Dirichlet--multinomial conjugacy, so the conditional null law of the eval counts is the posterior-predictive Dirichlet--multinomial $\DirMult(n_i^{\mathrm{ev}},\alpha_0\bm\pi+X_i^{\mathrm{tr}})$ rather than the marginal $\DirMult(n_i^{\mathrm{ev}},\alpha_0\bm\pi)$.
Define the \emph{split-LR DM e-variable}
\begin{equation}
\label{eq:dmesplit}
E_i^{\mathrm{DM,split}}
:=
\frac{p_{\DirMult(n_i^{\mathrm{ev}},\,\alpha_0 q_i^{\mathrm{tr}})}\bigl(X_i^{\mathrm{ev}}\mid n_i^{\mathrm{ev}}\bigr)}
     {p_{\DirMult(n_i^{\mathrm{ev}},\,\alpha_0\bm\pi+X_i^{\mathrm{tr}})}\bigl(X_i^{\mathrm{ev}}\mid n_i^{\mathrm{ev}}\bigr)}
\end{equation}
For any sample $X_i$ generated under the global null $X_i\mid n_i\sim\DirMult(n_i,\alpha_0\bm\pi)$, the statistic in \eqref{eq:dmesplit} satisfies the exact identity
\[
\mathbb{E}_{H_0}\!\left[E_i^{\mathrm{DM,split}}\right]=1,
\]
since it is the likelihood ratio of two proper probability mass functions on the eval simplex and the denominator (the posterior-predictive law) has full support there, so the numerator's mass integrates to one against it.
In particular $\mathbb{E}_{H_0}[E_i^{\mathrm{DM,split}}]\le 1$, which is all the e-variable interface requires, so $E_i^{\mathrm{DM,split}}$ is an e-variable for the global null.
For an \emph{independent} sequence of rows under the same null, each carrying its own thinning split, the running product
\[
M_T := \prod_{i=1}^T E_i^{\mathrm{DM,split}}
\qquad\text{satisfies}\qquad
\mathbb{P}\!\left(\sup_{T\ge 1} M_T \ge 1/\alpha\right)\le \alpha
\quad\text{for every }\alpha\in(0,1)
\]
i.e.\ $\{M_T\}_{T\ge 1}$ is an e-process.
The proof, which instantiates the split-likelihood-ratio construction of universal inference \cite{wasserman2020universal} on the binomially-thinned row, is given in Appendix~\ref{sec:apdxproofs} (Proof of Corollary~\ref{cor:dm-e-process}).
\end{corollary}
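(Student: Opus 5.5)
The plan is to prove the identity $\mathbb{E}_{H_0}[E_i^{\mathrm{DM,split}}]=1$ by conditioning on the train side, and then to get the e-process claim from a standard martingale argument plus Ville's inequality.

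\emph{Step 1: the conditional null law of the eval counts.} I will use the Dirichlet mixture representation of the null. Conditional on $n_i$, draw $\theta\sim\Dir(\alpha_0\bm\pi)$ and then $X_i\mid\theta\sim\Mult(n_i,\theta)$. Binomial thinning at rate $\tfrac12$ assigns each of the $n_i$ draws to train or eval by an independent fair coin. Given $\theta$, each unit is therefore independently labelled with a pair (category, side), with probabilities $\theta_j/2$. This yields three facts:
\begin{itemize}
\item $n_i^{\mathrm{tr}}\sim\Bin(n_i,\tfrac12)$, and this total is independent of $\theta$;
\item given $\theta$ and $n_i^{\mathrm{tr}}$, the two sides are conditionally independent, with $X_i^{\mathrm{tr}}\sim\Mult(n_i^{\mathrm{tr}},\theta)$ and $X_i^{\mathrm{ev}}\sim\Mult(n_i^{\mathrm{ev}},\theta)$;
\item by Dirichlet--multinomial conjugacy, $\theta\mid X_i^{\mathrm{tr}},n_i^{\mathrm{tr}}\sim\Dir(\alpha_0\bm\pi+X_i^{\mathrm{tr}})$.
\end{itemize}
The conjugacy step uses that the thinning totals carry no information about $\theta$. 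Integrating $\theta$ out then gives
\[
X_i^{\mathrm{ev}}\mid X_i^{\mathrm{tr}},n_i\sim\DirMult(n_i^{\mathrm{ev}},\alpha_0\bm\pi+X_i^{\mathrm{tr}}),
\]
which is exactly the denominator law in \eqref{eq:dmesplit}. I expect this to be the main obstacle. The care needed is in handling the thinning randomness and the conditioning on $n_i$ simultaneously, so I will write the joint pmf of $(\theta,X_i^{\mathrm{tr}},X_i^{\mathrm{ev}})$ explicitly given $n_i$ and verify the factorization directly rather than appeal to intuition.

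\emph{Step 2: the conditional expectation equals one.} Fix $X_i^{\mathrm{tr}}$, and hence $n_i^{\mathrm{ev}}$ and $q_i^{\mathrm{tr}}$. Then the numerator $p_{\DirMult(n_i^{\mathrm{ev}},\alpha_0 q_i^{\mathrm{tr}})}$ is a fixed pmf on the finite set $\{x\in\N^K:\sum_j x_j=n_i^{\mathrm{ev}}\}$. Coordinates where $q^{\mathrm{tr}}_{ij}=0$ are read by continuity, as in Proposition~\ref{prop:dmsaturation}: the numerator then puts zero mass off the support of $q_i^{\mathrm{tr}}$, and its total mass is still one.

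The denominator is strictly positive on the whole set. This holds because every $\pi_j>0$ once all-zero features are removed (Section~\ref{sec:notation}), so every entry of $\alpha_0\bm\pi+X_i^{\mathrm{tr}}$ is positive.

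It follows that
\[
\mathbb{E}\bigl[E_i^{\mathrm{DM,split}}\mid X_i^{\mathrm{tr}}\bigr]=\sum_x p_{\mathrm{num}}(x)=1 .
\]
The degenerate case $n_i^{\mathrm{ev}}=0$ gives the ratio $1$ trivially. The case $n_i^{\mathrm{tr}}=0$ with $q_i^{\mathrm{tr}}=\bm\pi$ is covered by the same computation. The tower property over $X_i^{\mathrm{tr}}$, the thinning, and $n_i$ then gives $\mathbb{E}_{H_0}[E_i^{\mathrm{DM,split}}]=1$.

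\emph{Step 3: the e-process.} Let $\mathcal F_T$ be the $\sigma$-field generated by the first $T$ rows together with their thinning coins. The rows and their splits are independent, and Step 2 shows each factor has conditional mean one given the past. Hence
\[
\mathbb{E}[M_{T+1}\mid\mathcal F_T]=M_T\,\mathbb{E}\bigl[E_{T+1}^{\mathrm{DM,split}}\bigr]=M_T ,
\]
so $(M_T)_{T\ge0}$ with $M_0=1$ is a nonnegative martingale. Ville's inequality for nonnegative supermartingales then yields $\mathbb{P}(\sup_{T\ge1}M_T\ge1/\alpha)\le\alpha$, which is exactly the e-process property.
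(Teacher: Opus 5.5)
Your proposal is correct and follows the paper's proof essentially step for step. You use the Dirichlet-mixture representation plus thinning to identify the eval-side conditional law as $\DirMult(n_i^{\mathrm{ev}},\alpha_0\bm\pi+X_i^{\mathrm{tr}})$, then the fixed-alternative likelihood-ratio identity conditional on $X_i^{\mathrm{tr}}$ together with the tower property, and finally independence across rows plus Ville's inequality; your handling of zero coordinates of $q_i^{\mathrm{tr}}$ is slightly more careful than the paper's claim that the numerator has full support, since all that is needed is that the numerator is a proper pmf supported inside the strictly positive denominator's support.
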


The row split is necessary for the following reason.
The in-sample plug-in $\exp(\Delta_i^{\mathrm{DM}})$ with $q_i=X_i/n_i$ of Section~\ref{sec:dmresids} is a generalized likelihood ratio rather than a fixed-alternative likelihood ratio, since the same $X_i$ supplies both the alternative composition and the likelihood evaluation; the e-variable bound $\mathbb{E}_{H_0}[\exp\Delta_i^{\mathrm{DM}}]\le 1$ then need not hold, and in regimes where the multinomial limit is approached it fails by orders of magnitude.
The binomial-thinning split above makes $X_i^{\mathrm{tr}}$ and $X_i^{\mathrm{ev}}$ conditionally independent given the row's underlying DM composition draw.
The two halves are not independent unconditionally, however: they share that latent composition, so conditioning on $X_i^{\mathrm{tr}}$ shifts the eval-side null toward the train-side allocation.
Scoring $X_i^{\mathrm{ev}}$ against this posterior-predictive null $\DirMult(n_i^{\mathrm{ev}},\alpha_0\bm\pi+X_i^{\mathrm{tr}})$---rather than the marginal $\DirMult(n_i^{\mathrm{ev}},\alpha_0\bm\pi)$---is exactly what restores the fixed-alternative likelihood-ratio bound, since against the correct conditional law $q_i^{\mathrm{tr}}$ acts as a deterministic alternative.
The cost is a power loss from evaluating the likelihood ratio on half the row's counts; the standard remedies of multiple-split aggregation or stratified splitting recover most of this loss.

The split-LR corollary makes three things available without re-deriving theory.
First, an \emph{anytime-valid} sequential test for the global null: applied to the running product $M_T=\prod_{i=1}^T E_i^{\mathrm{DM,split}}$ over a stream of \emph{independent} rows, stopping at the first $T$ with $M_T\ge 1/\alpha$ controls type-I error at most $\alpha$, even when $T$ is itself a stopping time.
Independence across rows is what makes $M_T$ a supermartingale; a sequence that is exchangeable but dependent (rows coupled through a shared latent composition) is not covered, because there the per-row conditional mean given the past can exceed one.
This is the count-data analogue of the post-hoc e-process algebra of \cite{howard2020b,ramdas2023statistical}, with the row split playing the role usually played by an external held-out fold.

Second, finite-sample valid testing without a Wilks-type asymptotic regularity assumption.
The split-LR row e-variable $E_i^{\mathrm{DM,split}}$ produces a universally valid p-value $p_i=\min(1, 1/E_i^{\mathrm{DM,split}})$ for ``row $i$ is anomalous,'' valid by the same Markov-bound construction as in \cite{wasserman2020universal}.
This applies at every $n_i$ for which the binomial split yields $n_i^{\mathrm{tr}},n_i^{\mathrm{ev}}\ge 1$, including the small-library-size regime where the Wilks chi-square calibration of $D_{\{i\}\times[K]}$ breaks down.

Third, the split-LR construction composes with downstream conformal scoring.
Per-row $E_i^{\mathrm{DM,split}}$ is a valid nonconformity statistic in split conformal prediction because the row-split discipline already supplies the calibration/evaluation separation that split conformal requires; concatenating an outer conformal calibration split with the inner per-row train/eval split preserves the finite-sample marginal coverage guarantee.
Running products of $E_i^{\mathrm{DM,split}}$ then form a sequential e-process suitable for chained multi-stage pipelines.

A practical caveat: $E_i^{\mathrm{DM,split}}$ is well-defined only when the binomial split assigns at least one count to each side, i.e.\ $n_i\ge 2$.
For singleton rows ($n_i=1$) one falls back to the standard fixed-alternative LR with $q_i^{\mathrm{tr}}=\bm\pi$, which contributes a degenerate factor of $1$ to the running product and so does not enter the test.
The same row-split discipline also explains why the in-sample plug-in $\exp(\Delta_i^{\mathrm{DM}})$ of Section~\ref{sec:dmresids} should not be used as an e-variable: it conflates the alternative-fitting and likelihood-evaluation roles of the row and is therefore a generalized likelihood ratio whose expectation is not bounded by one.
The exact identity $\mathbb{E}_{H_0}[E_i^{\mathrm{DM,split}}]=1$, the unbounded inflation of the in-sample plug-in, and the anti-conservatism of the asymptotic $\chi^2$ calibration on sparse counts are all borne out numerically, confirming that the split-LR construction---not a naive deviance test---is the route to anytime-valid DM evidence on the sparse data this paper targets (Section~\ref{sec:eval-E01250}).

This e-process bridge places the DM family inside the broader post-hoc and anytime-valid testing landscape \cite{howard2020b,ramdas2023statistical,wasserman2020universal}: the same fixed-concentration DM family that gives the closed-form residual transform of Section~\ref{sec:dmresids} also supplies, via the split-LR construction above, the row-level building block for sequential and finite-sample valid count testing.

\section{Extensions}
\label{sec:extensions}

\subsection{The generalized DM for ordered features}
\label{sec:gdm}

The ordinary DM treats the feature labels symmetrically.
That is the right default when the columns are an unordered feature set, but it is not always the right model.
Sometimes the categories are produced by a genuine sequence of decisions: allocate counts to the first category versus everything remaining, then allocate part of the remainder to the second category, and so on.
Genomic intervals along a fixed order, ordered cell states, or pre-specified sequential gates are examples where such an order may be scientifically meaningful.

The Connor--Mosimann generalized Dirichlet distribution and its multinomial mixture, the generalized Dirichlet--multinomial (GDM), model exactly this sequential allocation process \cite{ConnorMosimann1969,NgTianTang2011,ZhouLange2010}.
Each stage is a beta-binomial split, equivalently a two-category DM, with its own concentration.
The benefit is local flexibility: early coarse splits and later refinements can have different overdispersion levels.
The cost is that the model depends on the chosen order, so it should be used only when that order is part of the scientific design rather than an arbitrary column ordering.
This qualifier is testable, and it holds: with the parameter count held fixed, the generalized model improves held-out likelihood over a random permutation only when the feature order carries genuine stage-varying dispersion, and the gain collapses on an exchangeable control, so the improvement is attributable to the order rather than to the extra dispersion parameters (Section~\ref{sec:eval-E06033}).

For $x\in\N^K$ with $x_1+\cdots+x_K=n$, define the remainder counts
$R_j:=\sum_{\ell=j+1}^K x_\ell$, with $R_0:=n$ and $R_{K-1}=x_K$.
Under the generalized Dirichlet--multinomial with parameters $(\alpha_j,\beta_j)_{j=1}^{K-1}$,
\begin{equation}
\label{eq:gdm_pmf}
\text{Pr}(X=x)
=
\prod_{j=1}^{K-1}
\binom{R_{j-1}}{x_j}
\frac{B(x_j+\alpha_j,R_j+\beta_j)}{B(\alpha_j,\beta_j)}
\end{equation}
Equivalently,
\[
X_j\mid R_{j-1}\sim\BetaBin(R_{j-1},\alpha_j,\beta_j),
\qquad j=1,\dots,K-1
\]
with $X_K=R_{K-1}$.
Thus the GDM is a cascade of fixed-total two-category overdispersed splits.

\begin{proposition}[Fixed-total generalized DM is again a curved exponential family]
\label{prop:gdmexpfam}
For fixed total $n$, the generalized Dirichlet--multinomial can be written as
\begin{align*}
\log \text{Pr}(X=x)
&=
 h_n(x)
 +
 \sum_{j=1}^{K-1}\sum_{m=0}^{n-1}
 \Big[
 \ind(m<x_j)\log(\alpha_j+m)
 +
 \ind(m<R_j)\log(\beta_j+m) \\
&\hspace{3cm}
 -
 \ind(m<R_{j-1})\log(\alpha_j+\beta_j+m)
 \Big]
\end{align*}
where $h_n(x)$ depends on $x$ but not on the parameters.
Therefore, one convenient sufficient-statistic representation is
\[
T(x)=\big\{\ind(m<x_j),\ \ind(m<R_j),\ \ind(m<R_{j-1})\big\}_{\substack{j=1,\dots,K-1\\m=0,\dots,n-1}}
\]
\end{proposition}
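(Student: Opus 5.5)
The plan is to start from the product form \eqref{eq:gdm_pmf} and take logarithms factor by factor. Each factor splits into a binomial coefficient $\binom{R_{j-1}}{x_j}$, which is parameter-free, and a beta-function ratio. Writing $B(a,b)=\Gamma(a)\Gamma(b)/\Gamma(a+b)$, the ratio becomes
\[
\frac{B(x_j+\alpha_j,R_j+\beta_j)}{B(\alpha_j,\beta_j)}
=\frac{\Gamma(x_j+\alpha_j)}{\Gamma(\alpha_j)}\,
\frac{\Gamma(R_j+\beta_j)}{\Gamma(\beta_j)}\,
\frac{\Gamma(\alpha_j+\beta_j)}{\Gamma(x_j+R_j+\alpha_j+\beta_j)}.
\]
Here I use the identity $x_j+R_j=R_{j-1}$, which holds directly from the definition of the remainder counts. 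So the last factor is $\Gamma(\alpha_j+\beta_j)/\Gamma(R_{j-1}+\alpha_j+\beta_j)$.

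Next, I apply the rising-factorial identity already used in Section~\ref{sec:dmresids}: $\log\{\Gamma(x+a)/\Gamma(a)\}=\sum_{m=0}^{x-1}\log(a+m)$ for any nonnegative integer $x$. Applied to the three Gamma ratios, this gives finite sums up to $x_j-1$, $R_j-1$ and $R_{j-1}-1$ respectively. All of $x_j,R_j,R_{j-1}$ lie in $\{0,\dots,n\}$, so each such sum can be rewritten as a sum over the common range $m=0,\dots,n-1$ with indicator weights. For example, $\sum_{m=0}^{x_j-1}\log(\alpha_j+m)=\sum_{m=0}^{n-1}\ind(m<x_j)\log(\alpha_j+m)$, and an empty sum corresponds to all indicators vanishing. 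Collecting the $\log\binom{R_{j-1}}{x_j}$ terms into
\[
h_n(x)=\sum_{j=1}^{K-1}\log\binom{R_{j-1}}{x_j}
\]
then yields exactly the displayed decomposition.

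For the exponential-family conclusion, I would read the display as $\log\text{Pr}(X=x)=h_n(x)+\langle\eta(\bm\alpha,\bm\beta),T(x)\rangle$. The natural-parameter map has coordinates $\log(\alpha_j+m)$, $\log(\beta_j+m)$ and $-\log(\alpha_j+\beta_j+m)$, paired respectively with the three indicator families in $T(x)$. There is no separate normalizer, because the pmf is already normalized for fixed $n$. The map $\eta$ sends the $2(K-1)$-dimensional parameter space into a much higher-dimensional space of dimension at most $3(K-1)n$, so the family is curved. The word ``again'' refers back to the analogous representation of the ordinary DM; this is consistent because the DM is the special case of the GDM with $\beta_j=\sum_{\ell>j}\alpha_\ell$.

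The step I expect to need care is not analytic but a matter of bookkeeping. Two points need checking. First, the edge indices: $R_0=n$, and the last stage has $R_{K-1}=x_K$, so no extra factor arises for $X_K$. Second, the support: every indicator range is truncated at $n$, and $R_{j-1}\le n$ guarantees this. I would also remark that $T(x)$ is a convenient representation rather than a minimal one. Since $R_{j-1}=x_j+R_j$ and $R_j$ at stage $j$ equals $R_{(j+1)-1}$ at stage $j+1$, there are linear redundancies among the indicators. The statement only claims one convenient representation, so minimality need not be addressed.
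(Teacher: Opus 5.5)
Your proposal is correct and is essentially the paper's proof. Like the paper's stage-wise auxiliary derivation, you expand each beta-binomial factor of \eqref{eq:gdm_pmf} via $\log\{\Gamma(x+a)/\Gamma(a)\}=\sum_{m=0}^{x-1}\log(a+m)$, use $x_j+R_j=R_{j-1}$ in the denominator Gamma, pad each sum to $m=0,\dots,n-1$ with indicators, and collect $\sum_{j}\log\binom{R_{j-1}}{x_j}$ into $h_n(x)$. The only loose spot is your curvature remark: a dimension count alone does not show that the image of $\eta$ is non-affine in a minimal parameterization. The paper does not argue that point either, and your argument fully establishes the displayed identity, which is the substance of the proposition.
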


This sufficient-statistic representation records both the count assigned at each stage and the amount of count mass left after that stage.
The indicators $\{\ind(m<x_j)\}$ are per-feature threshold histograms, while $\{\ind(m<R_j)\}$ and $\{\ind(m<R_{j-1})\}$ track the sequential depletion of the remaining count budget.
This is exactly the information an ordered cascade can use and the permutation-invariant DM cannot.

The stage-wise factorization gives a direct analogue of the DM comparison.
For sample $i$, write
$R_{ij}:=\sum_{\ell=j+1}^K X_{i\ell}$, set $R_{i0}=n_i$, and define
$q_{ij}:=X_{ij}/R_{i,j-1}$ for stages with $R_{i,j-1}>0$.
Let $\nu_j:=\alpha_j+\beta_j$ be the stage concentration.
We keep $\nu_j$ fixed and replace the null split probability by the empirical split probability:
\[
\alpha^{\mathrm{cmp}}_{ij}:=\nu_j q_{ij},
\qquad
\beta^{\mathrm{cmp}}_{ij}:=\nu_j(1-q_{ij})
\]
The resulting stage-wise additive term is
\begin{equation}
\label{eq:gdm_stagecell}
c_{ij}^{\mathrm{GDM}}
=
\sum_{m=0}^{X_{ij}-1}
\log\frac{\nu_j q_{ij}+m}{\alpha_j+m}
+
\sum_{m=0}^{R_{ij}-1}
\log\frac{\nu_j(1-q_{ij})+m}{\beta_j+m}
\end{equation}
and each sample's contrast decomposes as
\[
\Delta_i^{\mathrm{GDM}}
=
\sum_{j=1}^{K-1} c_{ij}^{\mathrm{GDM}}
\]

The analogy with the DM is now transparent: keep the dispersion of each local split fixed, and let only the local split probability adapt to the sample.
The tradeoff is also transparent.
The GDM gives stage-specific dispersion, but it is order-dependent and it no longer yields a nonzero-only leafwise residual matrix.
A zero in feature $j$ need not give zero contribution, because the second sum in \eqref{eq:gdm_stagecell} depends on the remaining count $R_{ij}$.
Thus the GDM is useful for ordered count data, but it is not the default sparse transform for unordered feature matrices.

The tree extension below places this ordered model in a broader context: a GDM is the special case of a Dirichlet-tree multinomial in which the tree is a comb that peels off one feature at a time and passes the remainder forward.

\subsection{The Dirichlet-tree multinomial and tree-structured features}
\label{sec:dtm}

Many structured count tables are not ordered along a line, but instead come with a hierarchy.
Microbiome data are a clear example: leaves are taxa and internal nodes are higher taxonomic groups.
The same logic applies to lineage trees, ontologies, and other curated feature hierarchies.
The natural extension of the DM in this setting is the Dirichlet-tree multinomial (DTM) \cite{Dennis1991,Minka2004,WangZhao2017}.

The DTM contains both structured models already discussed.
If the tree has a single root whose children are the leaves, the DTM is exactly the ordinary flat DM.
If the tree is a comb that repeatedly splits off one ordered feature and passes the remainder down the chain, the DTM recovers the generalized DM.
Thus the DTM is the multiscale family that includes the flat unordered model and the ordered sequential model as special cases.

A flat DM uses one concentration parameter for the whole leaf composition.
A DTM instead models each internal node separately, so broad splits between major branches can have different dispersion from fine splits within a branch.
When the tree is scientifically meaningful, this extra structure can improve fit or representation; when it is not, the flat DM remains the simpler and safer null.

Let $T$ be a rooted tree with leaf set $\mathcal L=\{1,\dots,K\}$ and internal-node set $\mathcal I$.
For an internal node $\nu\in\mathcal I$ with children $\mathrm{ch}(\nu)$, let $\mathcal L(\nu,c)$ denote the leaves under child $c\in\mathrm{ch}(\nu)$.
Given a leaf-count vector $x$, define the aggregated child-subtree counts
\[
x_{\nu c}:=\sum_{\ell\in\mathcal L(\nu,c)}x_\ell,
\qquad
N_\nu:=\sum_{c\in\mathrm{ch}(\nu)}x_{\nu c}
\]
The DTM assigns an independent Dirichlet distribution to the branch-probability vector at each internal node.
After integrating those branch probabilities out, the leaf-count likelihood factorizes node by node.

\begin{proposition}[Nodewise DM factorization of the DTM]
\label{prop:dtmfactor}
For fixed tree $T$ and leaf-count vector $x$ with total $n$,
\[
\text{Pr}_{\DTM(T,\{\bm\alpha_\nu\})}(X=x\mid n)
=
\prod_{\nu\in\mathcal I}
\text{Pr}_{\DirMult(N_\nu,\bm\alpha_\nu)}(X_{\nu(\cdot)}=x_{\nu(\cdot)}\mid N_\nu)
\]
where $x_{\nu(\cdot)}=(x_{\nu c})_{c\in\mathrm{ch}(\nu)}$.
In particular, the ordinary DM is recovered when the root is the only internal node.
\end{proposition}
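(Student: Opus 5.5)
The plan is to write the DTM as a hierarchical mixture and integrate out the node-level branch probabilities one node at a time. Conditional on branch-probability vectors $\{\bm\theta_\nu\}_{\nu\in\mathcal I}$, where each $\bm\theta_\nu\sim\Dir(\bm\alpha_\nu)$ independently, the leaf counts are multinomial with total $n$. Leaf $\ell$ has probability $p_\ell=\prod_{(\nu,c)\in\mathrm{path}(\ell)}\theta_{\nu c}$, the product over the edges $(\nu,c)$ on the root-to-$\ell$ path. The first step is to show that this multinomial likelihood factorizes over internal nodes:
\[
\frac{n!}{\prod_\ell x_\ell!}\prod_{\ell}p_\ell^{x_\ell}
=\prod_{\nu\in\mathcal I}\frac{N_\nu!}{\prod_{c\in\mathrm{ch}(\nu)}x_{\nu c}!}\prod_{c\in\mathrm{ch}(\nu)}\theta_{\nu c}^{x_{\nu c}} .
\]

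For the exponents, I regroup $\prod_\ell\prod_{(\nu,c)\in\mathrm{path}(\ell)}\theta_{\nu c}^{x_\ell}$ by edges. The exponent collected by edge $(\nu,c)$ is $\sum_{\ell\in\mathcal L(\nu,c)}x_\ell=x_{\nu c}$.

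For the combinatorial coefficients, I use the telescoping structure of the tree. For a non-root internal node $\nu$ reached as child $c$ of its parent $\mu$, we have $N_\nu=x_{\mu c}$. Every leaf $\ell$ has $x_\ell=x_{\mu c}$ for its parent edge $(\mu,c)$. At the root, $N_{\text{root}}=n$. So in the product of multinomial coefficients, each $N_\nu!$ in a numerator cancels the $x_{\mu c}!$ in the denominator of its parent's coefficient. What remains is $n!/\prod_\ell x_\ell!$. This is exactly the sequential-multinomial (chain-rule) decomposition of one multinomial into nested multinomials. I would state it as a short induction on the number of internal nodes: collapse a deepest internal node into a leaf, which is the step where care is needed.

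The second step uses the independence of the Dirichlet priors across nodes. The integral over $\prod_\nu\Dir(\bm\theta_\nu;\bm\alpha_\nu)$ of the product above splits into a product of single-node integrals. Each factor is
\[
\int \frac{N_\nu!}{\prod_c x_{\nu c}!}\prod_c\theta_{\nu c}^{x_{\nu c}}\,\Dir(d\bm\theta_\nu;\bm\alpha_\nu),
\]
which is by definition the compound formula giving \eqref{eq:dmpmf} with total $N_\nu$ and parameter $\bm\alpha_\nu$. Evaluating it with the Dirichlet normalizing constant $B(\bm\alpha_\nu+x_{\nu(\cdot)})/B(\bm\alpha_\nu)$ gives exactly the $\DirMult(N_\nu,\bm\alpha_\nu)$ mass at $x_{\nu(\cdot)}$.

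The special case is immediate. If the root is the only internal node, then $\mathrm{ch}(\text{root})=\mathcal L$, $x_{\nu c}=x_c$, and $N_\nu=n$, so the product reduces to the single DM factor. Conditioning on $n$ is harmless throughout, since the DTM is defined with fixed total $n$. Internal nodes with $N_\nu=0$ contribute a factor of $1$, so they need no separate treatment.

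I expect the main obstacle to be only the bookkeeping in the coefficient telescoping: checking that each non-root internal node appears exactly once as a numerator and once as a child-count denominator. The edge-regrouping argument for the exponents and the node-wise Fubini step are routine.
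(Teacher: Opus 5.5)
Your proposal is correct and follows essentially the paper's argument: regroup the leaf multinomial exponents by edges, integrate each node against its independent Dirichlet prior, and telescope the multinomial coefficients down the tree. The only difference is the order of the last two steps. The paper integrates first and then multiplies and divides by $\prod_{\nu}\Gamma(N_\nu+1)/\prod_{c}\Gamma(x_{\nu c}+1)$ to telescope, whereas you telescope the conditional multinomial into nested nodewise multinomials before integrating.
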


The intuition is that conditioning on the branch probabilities makes the leaf multinomial likelihood factorize into a product of multinomial terms for the child-subtree counts at each internal node; integrating each node against its local Dirichlet prior yields a local DM factor, and the multinomial coefficients telescope down the tree.
The full derivation is in Section~\ref{sec:dtm_aux}.

This factorization gives an immediate structured version of the DM residual transform.
Define global branch compositions by
\[
\pi_{\nu c}:=\frac{A_{\nu c}}{A_\nu},
\qquad
A_{\nu c}:=\sum_{i=1}^n X_{i,\nu c},
\qquad
A_\nu:=\sum_{c\in\mathrm{ch}(\nu)}A_{\nu c}
\]
and parameterize each internal node by $\alpha_{\nu c}=\alpha_{\nu0}\pi_{\nu c}$.
Because the DTM log-likelihood is a sum of nodewise DM log-likelihoods, each $\alpha_{\nu0}$ can be fit by the same one-dimensional MLE used in Section~\ref{sec:alpha0}; one may also pool concentrations by depth, shrink them toward a global value, or tie them through a parametric form $\alpha_{\nu0}=w_\nu\rho$.

The resulting branch-level residual coordinates are
\begin{align}
c_{i,\nu c}^{\mathrm{DTM}}
&:=
\sum_{m=0}^{X_{i,\nu c}-1}
\log\paren{\frac{\alpha_{\nu0}X_{i,\nu c}/N_{i\nu}+m}{\alpha_{\nu0}\pi_{\nu c}+m}},
\nonumber\\
d_{i,\nu c}^{\mathrm{DTM}}
&:=
\sgn\paren{X_{i,\nu c}-N_{i\nu}\pi_{\nu c}}
\sqrt{\abs{c_{i,\nu c}^{\mathrm{DTM}}}} \label{eq:dtm_branchresid}
\end{align}
These coordinates are indexed by internal-node/child pairs rather than by leaves, so the representation is multiscale rather than flat.
A sample activates only branches on the root-to-leaf paths of its nonzero leaves, so the transform is naturally sparse over branches even though it is no longer sparse over the original leaf coordinates.

The DTM is most appropriate when the hierarchy is known before normalization.
Learning a hierarchy from the same data is a larger model-selection problem: it can be useful, but it changes the null itself and requires validation or sample splitting to avoid overfitting.
Learning the tree from data is therefore a separate undertaking that the present paper does not pursue; here the DTM is used only with a hierarchy fixed in advance.

\subsection{Interpretation through sufficient statistics}
\label{sec:suffstats}

The main count distributions in this paper all admit useful exponential-family representations once their natural support is fixed. 
For the multinomial, the sufficient statistic is the count vector itself. 
For the Dirichlet, the sufficient statistic is $\log x$. 
For the DM with fixed total $n$, a convenient sufficient statistic is the collection of indicators
\[
\braces{\ind(m < x_k)}_{\substack{m=0,\dots,n-1\\k=1,\dots,K}}
\]
equivalently the count histogram or survival function across features. 
This is the right viewpoint from which to understand why the DM still behaves like an exponential-family model even though its heavy tails make it look very different from the multinomial \cite{Elkan2006,Morris1982}.

The sufficient-statistic viewpoint also clarifies what the proposed residual transform is doing. 
The multinomial reacts only to the sample composition. 
The DM augments the observable statistics from raw counts to count-threshold indicators, so repeated counts are less surprising than under a pure multinomial null. 
That is exactly the shrinkage formalized in Proposition~\ref{prop:boundedbymult}. 
The generalized DM of Section~\ref{sec:gdm} enlarges these statistics again by adding stage-wise remainder counts; the cost is that the resulting contrast is sequential and order dependent rather than a permutation-invariant sparse entry-wise transform. 
The DTM of Section~\ref{sec:dtm} performs the analogous enlargement on a tree: the sufficient statistics become nodewise threshold indicators of aggregated subtree counts, and the residual representation becomes multiscale rather than flat.

\subsection{How complete is this count-model spectrum?}
\label{sec:spectrum}

These are not the only multivariate count models one could write down.
One can enlarge the DM to ordered generalized Dirichlet families, tree-structured Dirichlet-tree models, latent-Gaussian models, zero-inflated variants, and more.
The narrower claim is about \emph{closed-form sparse likelihood-based normalization}. 
For that goal, the list of appropriate exact models is much shorter.

If we insist on conditioning on the observed row sum, keeping the comparison rule transparent, evaluating everything exactly from the nonzero entries, and preserving permutation symmetry when the features are unstructured, then the multinomial and the DM are close to canonical. 
The multinomial is the simplest joint fixed-total model. 
The DM is the simplest overdispersed extension that keeps the same compositional meaning and still has an exact sparse transform.

The generalized DM from Section~\ref{sec:gdm} is the clearest near-miss. 
It really does enlarge the model class by introducing stage-specific dispersion, but its sufficient statistics involve ordered remainder counts and its residual contributions depend on feature order. 
The DTM from Section~\ref{sec:dtm} is the more natural structured extension when a genuine hierarchy is available. 
It preserves exact factorization, allows nodewise overdispersion, and yields a multiscale branch-level residualization, but it gives up flat leafwise symmetry and exact leafwise sparsity.

The main remaining omission is richer cross-feature dependence. 
The DM ties all extra-multinomial variation to one concentration parameter and one P\'olya-reinforcement mechanism; the DTM replaces that by several local concentration parameters tied to a known hierarchy.
Neither model can represent arbitrary covariance on log-ratios.
What the tied-concentration assumption costs when the data is generated outside the DM family is itself measurable: under six non-DM mechanisms the flat DM stays at or above the multinomial in held-out likelihood, degrades with the magnitude of latent dependence but is blind to its rank, and introduces no spurious downstream false positives, so a richer model is preferred exactly when its extra flexibility shows up as a held-out gap (Section~\ref{sec:eval-E01253}).
Latent-Gaussian models, especially the logistic-normal multinomial and its unconditional analogue the multivariate Poisson-lognormal, were introduced precisely for that purpose \cite{AitchisonShen1980,AitchisonHo1989}.
Modern microbiome and sequencing models continue to use this idea when the dependence structure itself is scientifically important \cite{XiaChenFungLi2013}.

A natural additional model to consider, in the spirit of the present paper, is therefore the isotropic logistic-normal multinomial (LNM) null:
\begin{align*}
\eta_i &\sim \mathcal{N}\!\left(0,\tau^2\paren{I_K-\frac{1}{K}\mathbf{1}\mathbf{1}^\top}\right) \\
p_{ij}(\eta_i) &:= \frac{\pi_j e^{\eta_{ij}}}{\sum_{\ell=1}^K \pi_\ell e^{\eta_{i\ell}}} \\
X_i \mid n_i,\eta_i &\sim \Mult(n_i,p_i(\eta_i))
\end{align*}
Here $\bm\pi$ again encodes the null composition, while $\tau^2$ plays the role of a scalar dispersion parameter; as $\tau^2\to 0$, the model reduces continuously to the multinomial. 
Replacing the isotropic covariance by a full or low-rank matrix $\Sigma$ yields a much richer family that can represent dependence patterns inaccessible to the DM.

The reason not to make the LNM the main normalization model is practical rather than philosophical. 
Its marginal likelihood is an integral over a latent Gaussian field and has no closed form, so fitting typically requires Laplace approximation, variational inference, or MCMC.%

The resulting deviance is no longer an exact sum of sparse entry-wise terms, and an exact transform analogous to the log-gamma form $c_{ij}^{\mathrm{DM}}=\log[\Gamma(x+\alpha_0 x/n_i)/\Gamma(\alpha_0 x/n_i)]-\log[\Gamma(x+\alpha_0\pi_j)/\Gamma(\alpha_0\pi_j)]$ (precise form in \eqref{eq:lgammaresid}, App.~\ref{sec:sparseimplementation}) is lost.
Thus the spectrum is essentially complete for the narrower goal of \emph{exact likelihood-based normalization with transparent structure}: multinomial for no overdispersion, DM for symmetric one-parameter overdispersion on an unstructured feature set, generalized DM when a sequential order is scientifically meaningful, DTM when a fixed tree is scientifically meaningful, and latent-Gaussian models once one is willing to abandon closed form.

\section{Experiments}
\label{sec:experiments}

Across all benchmarks below the recommended fitting protocol is the simplest one: estimate the scalar concentration $\alpha_0$ from the training-split DM likelihood with $\bm\pi$ held at the empirical global composition, and apply the closed-form residual transform of Section~\ref{sec:dmresids}.
The iterative fixed-point fit of Section~\ref{sec:alpha0} converges in tens of steps across the full overdispersion range and tracks $\alpha_0$ smoothly from the near-multinomial regime ($\alpha_0\to\infty$) to the negative-binomial regime ($\alpha_0\to 0$); the single concentration parameter therefore acts as one adaptive knob that interpolates between the two endpoint nulls.
Jointly refitting $\bm\pi$ and $\alpha_0$ rarely changes downstream geometry on the datasets we examine and is exercised only as a sensitivity check (Appendix~\ref{sec:impldetails}).
The dispersion sweeps in the same appendix probe the full $\alpha_0$ range and verify that the transform behaves as expected at both endpoints.

The empirical evaluation addresses five questions.
\begin{itemize}
\item When do DM residuals improve on multinomial residuals?
\item How close are they in practice to feature-wise negative-binomial residual methods?
\item When do richer models---such as a jointly fitted composition, a known tree model, or a latent-Gaussian model---pay for their extra flexibility?
\item Does the sparse DM transform preserve downstream biological structure while reducing depth and nuisance coupling?
\item Is the method computationally competitive with common pointwise sparse transforms such as shifted-log or \texttt{log1p} normalization?
\end{itemize}

The core practical question is whether DM normalization removes technical sampling variation without washing out real biological structure.
For that reason, the main text emphasizes real-data and negative-control benchmarks, together with structured-data and runtime ablations.
Synthetic calibration and model-recovery studies are still useful for checking algebra and implementation, but they are secondary to that real-data question and are therefore deferred to Appendix~\ref{sec:synthetic_appendix}.

To keep the comparison systematic rather than anecdotal, every benchmark follows the same sequence: preprocess, fit, transform, evaluate held-out fit when a generative model is available, measure downstream geometry, and report resource use.
The runtime tables include both model-fitting time and pure transform time, with pointwise \texttt{log1p} included as the reference for the fastest sparse nonzero-only normalization.

\subsection{Questions, baselines, and outputs}

All methods are compared on the same preprocessed matrix within each dataset. 
All-zero rows and all-zero columns are removed before fitting. 
Any optional feature filtering or high-variance-feature restriction is defined once per dataset on the raw training counts and then frozen across normalization methods. 
Dense baselines are run on exactly the same filtered matrix as the sparse methods; if a dense method exceeds the stated memory or wall-clock budget, we report that failure rather than shrinking the problem for that method alone.

The comparison set has three layers. 
The first layer contains practical baselines: pointwise shifted logarithms or \texttt{log1p}, CLR, TF--IDF when appropriate \cite{RamosEtAl2003}, and the dense pseudocount Dirichlet residual transform of Section~\ref{app:dirichlet-dense} \cite{Aitchison1982,Gloor2017,AhlmannEltzeHuber2023}. 
The Dirichlet baseline is included for a very specific reason: it is the closest continuous compositional analogue of the DM comparison, but it necessarily densifies the matrix and therefore cleanly separates ``Dirichlet-type residualization'' from ``exact sparse count residualization.'' 
Unless otherwise stated, the CLR and Dirichlet baselines use the same unit pseudocount so that any difference between them is attributable to the model family rather than to different zero-handling conventions.

The second layer contains raw-count models with exact likelihoods on the observed discrete sample space: multinomial residuals, analytic Pearson or feature-wise NB residuals, practical regularized NB regression, and the independent negative-binomial model of Section~\ref{sec:matchednb} \cite{lause2021analytic,HafemeisterSatija2019,townes2019feature,ChoudharySatija2022}. 
The third layer contains structured or richer alternatives when they are scientifically appropriate and computationally feasible: the generalized DM when a genuine sequential order exists, the DTM when a meaningful tree exists, and logistic-normal or Poisson-lognormal baselines when one wants a dense latent-Gaussian comparator \cite{AitchisonShen1980,AitchisonHo1989,XiaChenFungLi2013}.

For completeness, Appendix~\ref{app:otherresiduals} places the classical binomial, Poisson, negative-binomial, multinomial, Dirichlet, and DM residual formulas side by side even when not all of them are run on every real-data benchmark. 
This is useful because the formulas differ much less than their names suggest: they are all signed square-root likelihood-contrast transforms relative to closely
related nulls.

Within each benchmark we report several families of outputs:
\begin{enumerate}
\item exact held-out conditional log-likelihood on the raw count space when the model provides one;
\item unconditional held-out log-likelihood when independent NB models are included;
\item transformation density, peak memory, fit time, transform time, and nonzero throughput relative to pointwise \texttt{log1p};
\item depth-coupling diagnostics, including correlations between latent axes and $n_i$ or size factors;
\item downstream neighborhood, clustering, retrieval, and annotation metrics under a common pipeline; and
\item paired residual diagnostics on the nonzero entries, especially DM-versus-multinomial,
DM-versus-feature-wise-NB, and DM-versus-Dirichlet magnitude comparisons stratified by
$X_{ij}=1$, $2$, $3$--$5$, and $>5$.
\end{enumerate}

The last item is particularly important. 
Singleton counts should look nearly identical under multinomial, DM, and the independent negative-binomial comparison after conditioning on row sums.
Repeated counts are exactly where conditioning on the row sum and allowing joint overdispersion can change the geometry.

When a method does not define an exact likelihood on the same raw count space---as with shifted
logs, CLR, TF--IDF, or pseudocount Dirichlet residuals---we compare it on downstream and
computational criteria rather than mixing its objective values into the exact count-likelihood
tables.
Likewise, when a richer baseline is fit by approximation (for example a logistic-normal
model fit by Laplace or variational methods), we report the approximation that was actually
optimized and do not place it on the same axis as exact closed-form count likelihoods without an
explicit caveat.

Whenever a single split would be noisy, we repeat the entire benchmark over several matched
random splits and report paired averages together with uncertainty intervals across splits.
All
methods within a split share the same training, validation, and test sets, the same filtered
feature set, and the same downstream hyperparameters unless a dedicated sensitivity analysis is
being reported.

Concretely, we run three matched random $80/20$ train/test splits on the
two real scRNA-seq datasets (PBMC~3k at 2700 cells; GTEx subsampled to%

5000 cells for the repeated-splits section) and report mean $\pm$
standard deviation plus one-sided paired Wilcoxon p-values ("DM $>$
baseline") for each metric.
Table~\ref{tab:wilcoxon_summary} summarizes the result.
Held-out LL deltas
are paired and consistent across splits; with only three splits the
Wilcoxon test saturates at $p = 2^{-3} = 0.125$ (the smallest p-value
achievable with three all-positive signs), so significance is bounded by
the test's minimum, not by the underlying effect size.
Running more
splits would reduce this bound mechanically; we keep $n_{\text{seeds}}=3$
for compute tractability and report the delta magnitudes directly.

\begin{table}[t]
\centering
\small
\setlength{\tabcolsep}{4pt}
\renewcommand{\arraystretch}{1.15}
\begin{tabular}{@{}llrr@{}}
\toprule
Dataset & DM advantage over & $\Delta$ (mean $\pm$ std) & Wilcoxon p \\
\midrule
PBMC 3k (real)   & Multinomial           & $+0.087 \pm 0.003$ & $0.125$ \\
PBMC 3k (real)   & NB (uncond.)          & $+0.169 \pm 0.013$ & $0.125$ \\
PBMC 3k (real)   & NB (cond.) & $+0.000 \pm 0.000$ & $1.000$ \\
GTEx 5k\textsuperscript{*}  & Multinomial           & $+0.201 \pm 0.007$ & $0.125$ \\
GTEx 5k\textsuperscript{*}  & NB (uncond.)          & $+0.418 \pm 0.019$ & $0.125$ \\
GTEx 5k\textsuperscript{*}  & NB (cond.) & $+0.000 \pm 0.000$ & $1.000$ \\
\bottomrule
\end{tabular}
\caption{Paired Wilcoxon (one-sided, alternative ``DM $>$ other'') for
held-out per-count log-likelihood across three matched random train/test
splits.
$\Delta$ is the mean signed DM$-$other difference, standard
deviation across splits reported.
$^{*}$The GTEx cross-tissue atlas is subsampled to 5{,}000
cells for this repeated-splits section; the single-split benchmark
tables elsewhere use a 20{,}000-cell subsample of the same atlas.
Wilcoxon p-values saturate at $2^{-n_{\text{seeds}}}$
when all deltas have the same sign; with $n_{\text{seeds}}=3$ this gives a
floor of $0.125$.
The DM vs.\ NB-conditional rows collapse to
$\Delta = 0$ exactly, reconfirming Corollary~\ref{cor:nbdmdeviance} at the
level of whole held-out datasets (not just the numerical check of
Table~\ref{tab:regime2}).}
\label{tab:wilcoxon_summary}
\end{table}

\subsection{Real-data benchmark suite}

The main real-data question is whether DM normalization separates biological structure from technical sampling effects while preserving the sparse geometry of the original matrix. 
For that reason, the benchmark should include not only heterogeneous biological datasets, but also negative-control or low-heterogeneity datasets of the kind emphasized in earlier normalization studies \cite{Svensson2020,lause2021analytic,AhlmannEltzeHuber2023}. 
On those controls the desired outcome is not dramatic separation; it is that one fitted null accounts for most of the observed variation, with little residual depth dependence and little spurious low-dimensional structure.

The benchmark should also span several biochemical count regimes rather than only one assay family. 
A minimal convincing suite is summarized in Table~\ref{tab:benchmarkregimes}: at least one singleton-dominated matrix, one moderate repeated-count matrix, one heavy-dominance or enrichment matrix, and one genuinely tree-structured matrix.

\begin{table}[t]
\centering
\small
\setlength{\tabcolsep}{5pt}
\renewcommand{\arraystretch}{1.25}
\begin{tabular}{@{}L{0.20\textwidth}L{0.28\textwidth}L{0.45\textwidth}@{}}
\toprule
Regime & Representative assays & Why it matters \\
\midrule
Singleton-dominated & scATAC-seq peak counts & Tests the near-multinomial limit where repeated counts are rare and the DM should behave almost identically to the multinomial. \\

Moderate repeated counts & scRNA-seq UMI counts; spatial transcriptomics & Tests whether overdispersion-aware joint residuals improve on multinomial residuals without destroying neighborhood geometry. \\

Heavy dominance / enrichment & CITE-seq ADT panels; microbiome taxon tables; CRISPR or MPRA barcode counts & Tests whether the DM appropriately damps repeated high-count events and dominant features. \\

Tree-structured & Microbiome phylogenies or taxonomies; lineage- or ontology-organized count panels & Tests whether nodewise dispersion in the DTM buys real predictive or geometric gains over a flat DM. \\
\bottomrule
\end{tabular}
\caption{Benchmark regimes that together probe the main statistical claims of the paper.}
\label{tab:benchmarkregimes}
\end{table}

For each real dataset, preprocessing is frozen before normalization. 
We remove all-zero rows and features, optionally restrict to a fixed feature panel or high-variance-feature set defined from the raw training counts, and freeze that feature set across methods. 
Where known technical covariates are available---batch, donor, chemistry, capture lane, or treatment arm---we record them once and use them only in dedicated conditional-null ablations rather than mixing them into the default comparison.

Sample-level splits are used for any held-out likelihood or supervised downstream task. 
When the dataset is large enough, we use a stratified $60/20/20$ training/validation/test split; when it is smaller, we use repeated five-fold cross-validation with matched folds across methods. 
All model fitting, parameter tuning, and any choice between fixed-$\pi$ and jointly fitted variants are made without access to the final test split.

The primary real-data fit metric is held-out conditional log-likelihood per sample and per observed count for methods defined on the raw count space. 
Secondary metrics are transformation density, wall-clock cost, peak memory, residual-depth coupling, and downstream neighborhood quality. 
Dense baselines remain part of the downstream and computational comparison even when they do not live on the same exact count-likelihood axis.

This protocol is identical across transforms. 
All filtering, feature selection, and train/validation/test splitting are defined \emph{before} normalization; any null compositions $\pi$ or $\pi_{\nu c}$ are fit on the training split only to avoid leakage; and the same transformed dimensionality, neighborhood size, clustering settings, and downstream classifiers are then applied to every method.

Figure~\ref{fig:real-data-benchmarks} summarizes the comparison across
all five real regimes: the DM improves on the multinomial everywhere, its
depth coupling stays comparable to the multinomial, and its residual
shrinkage concentrates on the repeated counts that carry the
overdispersion.
The held-out-likelihood comparison on PBMC~3k is reported with paired
uncertainty in Table~\ref{tab:pbmc-heldout}: across eight matched train/test
splits the DM improves on the multinomial by $0.070$ nats per observed count
(bootstrap interval $[0.069,0.072]$, all eight splits, paired
$p=0.0039$), it coincides with the conditional independent-negative-binomial
model to floating-point precision as Corollary~\ref{cor:nbdmdeviance} predicts,
and the only transform with a higher held-out likelihood is the dense Dirichlet
pseudocount, which forfeits the exact sparsity the DM preserves.

\begin{table}[t]
\centering
\small
\setlength{\tabcolsep}{6pt}
\renewcommand{\arraystretch}{1.15}
\begin{tabular}{@{}lrrrrr@{}}
\toprule
& PBMC 3k & GTEx 20k & Neg.\ ctrl & scATAC & CITE-seq \\
& (real) & (real) & (real) & (real) & (real) \\
\midrule
DM                     & $\mathbf{-1.27}$ & $\mathbf{-2.72}$ & $\mathbf{-0.20}$ & $\mathbf{-0.79}$ & $\mathbf{-0.02}$ \\
Multinomial            & $-1.35$ & $-2.92$ & $-0.25$ & $-0.88$ & $-0.81$ \\
NB (unconditional)     & $-1.45$ & $-3.14$ & $-0.38$ & $-1.04$ & $-0.06$ \\
NB (conditional) & $\mathbf{-1.27}$ & $\mathbf{-2.72}$ & $\mathbf{-0.20}$ & $\mathbf{-0.79}$ & $\mathbf{-0.02}$ \\
\midrule
$\hat\alpha_0$ (DM)    & $4.7\!\times\!10^{3}$ & $2.9\!\times\!10^{3}$ & $6.6\!\times\!10^{1}$ & $7.0\!\times\!10^{3}$ & $8.0$ \\
\bottomrule
\end{tabular}
\caption{Held-out conditional log-likelihood per observed count on the five
real benchmarks, plus the fitted DM scalar $\hat\alpha_0$.
Bold = best
(the NB-conditional row equals DM exactly by
Corollary~\ref{cor:nbdmdeviance}, as verified to machine precision ---
a maximum residual of $4.3\times10^{-11}$ along an independent
negative-binomial log-probability route --- in
Figure~\ref{fig:identity-verification}).
DM matches
multinomial in the multinomial-limit direction (large
$\hat\alpha_0$) and improves on it everywhere $\hat\alpha_0$ is moderate.
The PBMC $\hat\alpha_0 \approx 4.7\times 10^{3}$ and the housekeeping-gene
negative-control $\hat\alpha_0 \approx 66$ both reflect real residual
overdispersion after conditioning on library size; the CITE-seq ADT
$\hat\alpha_0 \approx 8$ reflects the heavy-dominance regime.}
\label{tab:heldout_ll}
\end{table}

\begin{table}[t]
\centering
\small
\setlength{\tabcolsep}{5pt}
\renewcommand{\arraystretch}{1.15}
\begin{tabular}{@{}l|rrrr|rrrr@{}}
\toprule
& \multicolumn{4}{c|}{Label transfer accuracy} & \multicolumn{4}{c}{Adjusted Rand Index} \\
\cmidrule(lr){2-5}\cmidrule(lr){6-9}
& PBMC & GTEx & scATAC & CITE-seq & PBMC & GTEx & scATAC & CITE-seq \\
\midrule
DM          & $0.90$ & $\mathbf{0.84}$ & $0.11$ & $0.22$ & $0.47$ & $0.33$ & $0.00$ & $-$ \\
Multinomial & $0.90$ & $0.83$ & $0.11$ & $0.18$ & $0.47$ & $0.34$ & $0.00$ & $-$ \\
NB          & $0.89$ & $0.83$ & $0.13$ & $0.22$ & $0.38$ & $0.32$ & $0.00$ & $-$ \\
Pearson     & $0.84$ & $0.60$ & $0.12$ & $0.22$ & $0.37$ & $0.13$ & $-$    & $-$ \\
Dirichlet   & $0.87$ & $0.81$ & $0.15$ & $0.23$ & $0.34$ & $0.30$ & $-$    & $-$ \\
CLR         & $0.90$ & $0.83$ & $0.12$ & $0.22$ & $0.46$ & $0.32$ & $-$    & $-$ \\
Shifted log & $0.89$ & $0.83$ & $0.13$ & $0.19$ & $0.39$ & $0.32$ & $-$    & $-$ \\
TF-IDF      & $\mathbf{0.91}$ & $0.79$ & $0.11$ & $0.22$ & $\mathbf{0.54}$ & $\mathbf{0.34}$ & $-$ & $-$ \\
\bottomrule
\end{tabular}
\caption{Downstream metrics: $k$-NN label transfer accuracy and adjusted
Rand index (spectral clustering with $k$-NN affinity, $k=30$, PCA to 10
components).
ARI entries marked $-$ are near zero, reflecting the
difficulty of unsupervised clustering on the sparse scATAC and CITE-seq
count matrices.}
\label{tab:downstream_summary}
\end{table}

\subsection{Structured-data benchmarks}

Whenever a genuine hierarchy is available, we run a structured comparison in three layers: a flat
DM on the leaves, a minimal ordered ablation when a scientifically meaningful sequential order
exists, and a full DTM on the known tree.
The flat DM serves as the unstructured joint baseline;
the ordered model tests whether any benefit is merely due to breaking symmetry; the DTM tests
whether a real hierarchy organizes the overdispersion well enough to improve prediction or
representation.

In these structured benchmarks the tree itself is fixed in advance.
Aggregated subtree counts are
computed once, nodewise null compositions are estimated on the training split only, and nodewise
concentration parameters are fit either independently, pooled by depth, or tied to a single global
scalar as an ablation.
This makes it possible to separate the benefit of using the hierarchy at all
from the benefit of allowing node-specific dispersion.

The most informative structured-data metrics are held-out leafwise composition likelihood, held-out nodewise split likelihood, branch-level residual sparsity, and whether the highest-variance residual directions localize coherent subtrees rather than scattered leaves. 
In microbiome-style data, a convincing win for the DTM should therefore look like both better held-out split fit and more taxonomically coherent local neighborhoods, in line with prior tree-aware regression results \cite{WangZhao2017}.
If neither happens, the flat DM should be preferred on parsimony grounds.
On the controlled synthetic tree, Figure~\ref{fig:structured-benchmarks}
bears this out: the flat DM decisively beats all three DTM strategies on
held-out likelihood, node-specific dispersion does vary with depth, and
the DM still produces the most tree-aware feature geometry.

\begin{figure}[t]
\centering
\includegraphics[width=0.92\textwidth]{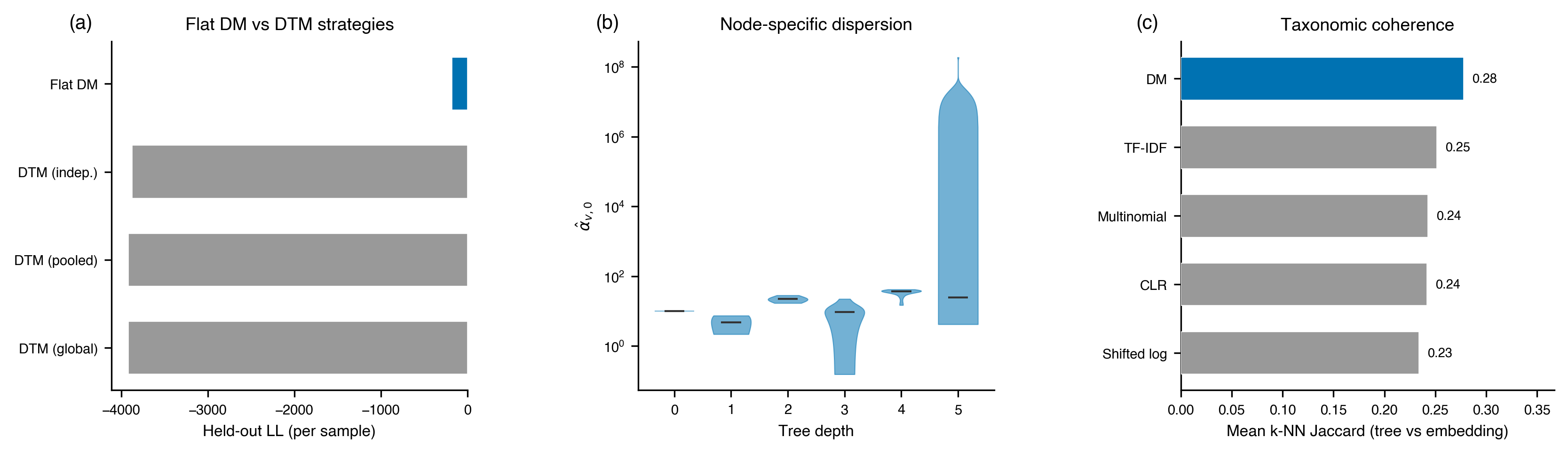}
\caption{\textbf{Structured-data benchmarks.}
\textbf{(a)}~Held-out per-sample LL: flat DM (unstructured joint
baseline, highlighted) vs.\ three DTM strategies (independent nodewise
$\alpha_{v,0}$, pooled-by-depth, single global scalar).
On the synthetic tree data used here the flat DM is dramatically better:
held-out likelihood for every DTM strategy is worse by more than an order
of magnitude, the per-node dispersion parameters overfitting badly while
the single-parameter flat model generalizes.
\textbf{(b)}~Node-specific fitted dispersion $\hat\alpha_{v,0}$ stratified
by tree depth (independent strategy, violin = distribution across nodes at
each depth).
The spread of dispersion across depth motivates nodewise fitting.
\textbf{(c)}~Taxonomic coherence: for each normalization, features are
projected into a low-dimensional feature-level PCA embedding, and the
mean Jaccard overlap is measured between each feature's $k=10$ neighbors
in the embedding and its $k$ closest features by tree distance.
Higher is more taxonomically coherent: the DM (highlighted) attains the
most tree-aware feature geometry, ahead of TF-IDF, multinomial, CLR, and
shifted log.}
\label{fig:structured-benchmarks}
\end{figure}


The synthetic comparison above is deliberately controlled: with the tree
fixed and the overdispersion ratio swept, there are regimes in which a flat
DM holds its own and the parsimony argument favors it.
The real-data analogue tells the complementary story.
On a genuine microbial cohort --- the American Gut Project, carrying its
own Greengenes phylogeny rather than a synthetic tree --- the picture
under heavy biological overdispersion is far less equivocal:
honoring the hierarchy is decisive, the flat null collapses outright, and
the particular taxonomy carries information beyond what any tree affords,
beating a leaf-shuffled control with separated intervals.
The nuance is that the fixed taxonomy is not the last word --- a
hierarchy learned from co-occurrence does better, and a fully factorized
per-feature model better still --- so the tree helps without being the
whole story.
The full benchmark, with all seven estimators and held-out intervals, is
in Section~\ref{sec:eval-E01138}.


\subsection{Real-taxonomy DTM benchmark on a microbiome cohort (E01138)}
\label{sec:eval-E01138}

\paragraph{Claim.}  On a genuine microbial cohort with a real phylogeny, does
honoring the taxonomy through a Dirichlet-tree-multinomial (DTM) prior pay off
against a flat Dirichlet-multinomial, and does the \emph{particular} hierarchy
matter?  This is the real-data analogue of the structured-tree benchmark.

\paragraph{Setup.}  American Gut Project 16S counts (a $5{,}000$-sample,
$1{,}500$-OTU subset of the filtered $17{,}305\times 15{,}736$ table; the
$1{,}500$ most-prevalent OTUs) with the Greengenes $97\%$ closed-reference
phylogeny pruned to the kept leaves ($2{,}999$ tree nodes).  Five matched
$80/20$ splits; held-out per-sample log-likelihood; BCa bootstrap intervals
($n_{\mathrm{boot}}=1000$).  Seven estimators: flat Dirichlet-multinomial; the
fixed-taxonomy DTM and its depth-pooled and single-global-$\alpha_0$ variants; a
random-tree DTM (leaf-shuffled negative control); a data-driven
hierarchical-clustering DTM; and a feature-wise negative binomial.

\paragraph{Result.}  Held-out per-sample log-likelihood (mean, higher is better;
$95\%$ BCa interval):

\begin{center}
\begin{tabular}{lrr}
\toprule
Estimator & Held-out LL & 95\% BCa CI \\
\midrule
Feature-wise NB             & $-1978.5$  & $[-1997.9,\,-1960.0]$ \\
Data-driven (hclust) DTM    & $-2257.4$  & $[-2282.7,\,-2235.1]$ \\
Single-$\alpha_0$ DTM       & $-2357.7$  & $[-2385.2,\,-2334.9]$ \\
Fixed-taxonomy DTM          & $-2357.7$  & $[-2385.0,\,-2334.0]$ \\
Random-tree DTM             & $-2431.8$  & $[-2459.1,\,-2407.2]$ \\
Depth-pooled DTM            & $-2530.3$  & $[-2560.3,\,-2505.3]$ \\
Flat Dirichlet-multinomial  & $-81115.6$ & $[-82981.4,\,-77942.7]$ \\
\bottomrule
\end{tabular}
\end{center}

\paragraph{Reading.}  Three things land.  First, structure is decisive on real
overdispersed counts: every tree-aware estimator beats the flat
Dirichlet-multinomial by more than an order of magnitude --- the flat model's
moment-matched concentration collapses to its floor under the cohort's
$\sim\!10^{3}\times$ overdispersion, so it cannot absorb the depth structure that
the tree priors decompose node by node.  Second, the \emph{particular} hierarchy
carries information: the fixed taxonomy strictly beats the leaf-shuffled
random-tree control with non-overlapping intervals, so the win is the taxonomy's
geometry, not merely the extra parameters a tree affords.  Third, the honest
nuance --- a data-driven hierarchy learned from co-occurrence beats the fixed
taxonomy, and a fully factorized per-feature negative binomial beats every
tree-prior estimator.  Tree structure helps, and when a tree is used a
data-driven one helps most, but a flexible per-feature model that lets each OTU
set its own dispersion does better still on this cohort.

\paragraph{Caveats.}  The cohort is subset to $5{,}000$ samples (uniform-random)
and the $1{,}500$ most-prevalent OTUs.  The flat-model collapse is the
pre-registered concentration-flooring failure mode under heavy overdispersion,
not a numerical artefact (all log-likelihoods are proper).  The data-driven tree
is built by connectivity-constrained agglomeration over a nearest-neighbor
graph (near-linear in the leaf count).

\subsection{Fitting and computational ablations}
\label{sec:mleablation}

The fitting stage is benchmarked separately from the residual transform itself.
The default fixed-point fit is checked against the three further variants of Section~\ref{sec:alpha0}, comparing
\begin{enumerate}[1.]
\item final training log-likelihood;
\item held-out log-likelihood when validation splits are available;
\item wall-clock time to a common likelihood or gradient tolerance;
\item number of iterations;
\item preprocessing time and memory for any histogram compression; and
\item stability under different initializations.
\end{enumerate}

A fair protocol uses the same starting point for all methods: $\pi_j^{(0)}=a_j/N$ and a common $\alpha_0^{(0)}$ obtained from a coarse line search or log-grid scan of the scalar objective. 
Joint methods may then be initialized at $\alpha_j^{(0)}=\alpha_0^{(0)}\pi_j^{(0)}$. 
Stopping should be based on likelihood change or gradient norm rather than on a fixed iteration count.

The expected comparison has two layers. 
First, fixed-point and Newton should agree within each parameterization when run to convergence; any discrepancy is numerical rather than statistical. 
Second, joint fits should dominate fits with $\bm\pi$ held fixed in raw likelihood, and the size of that gap measures how costly the simpler global-composition null is on a given dataset. 
For the DTM, the same logic applies node by node: one can compare fully independent nodewise concentrations, depth-pooled concentrations, and a single shared scalar to quantify how much structured dispersion is present.

Computational scaling is reported as a function of both $\mathrm{nnz}(X)$ and $\max_i n_i$.
For flat DM fits, the relevant comparison is direct sparse digamma/trigamma evaluation versus the compressed histogram representation.
For the DTM, there is an additional structural advantage: once the tree aggregations are built, the nodewise scalar fits are embarrassingly parallel.

The transform-time benchmark separates fitting from application.
After $\hat\alpha_0$ and $\bm\pi$ are fixed, the DM transform uses the log-gamma formula \eqref{eq:lgammaresid} and therefore costs one constant-time special-function calculation per nonzero entry.
The closest computational baseline is pointwise \texttt{log1p}, which also makes one pass over the nonzero entries and preserves the sparsity pattern.
Measuring wall-clock time, peak memory, and processed nonzeros per second for \texttt{log1p} and DM residuals across three decades of nonzero count, the transform's fitted scaling exponent is one, its throughput is flat at roughly six nanoseconds per nonzero independent of matrix size, and its output carries exactly the input's nonzeros (Section~\ref{sec:eval-E01326}).
The cost relative to \texttt{log1p} is a constant factor of about fourteen, a log-gamma evaluation in place of a logarithm, paid once per nonzero and never growing with the matrix, so the transform inherits \texttt{log1p}'s linear scaling and its sparsity while a dense compositional baseline exhausts memory on the large matrices the transform handles inside the sparse nonzero set.
Fitting the concentration is the more expensive stage, a few times the cost of applying the transform, so the intended use of fitting once and transforming many times is the economical one.

\subsection{Downstream analyses and reporting standards}
Normalization is only useful insofar as it supports the tasks for which the counts were collected.
We therefore evaluate each transform in a common downstream pipeline.
After normalization we
compute a fixed-dimensional linear embedding (typically PCA or truncated SVD; by default 50
components, or the largest smaller allowable value), build a $k$-nearest-neighbor graph in that
embedding (by default $k=30$), and report neighborhood-based metrics such as label purity,
adjusted Rand index, normalized mutual information, and label-transfer accuracy whenever
ground-truth annotations are available.
For replicate or perturbation datasets, replicate retrieval
and condition prediction play the same role.

Downstream hyperparameters are not tuned separately for each normalization on the test split.
Instead, either one common set of hyperparameters is fixed per dataset, or a small grid is explored
once and then summarized identically across methods.
This prevents downstream retuning from
washing out genuine differences between the normalizations themselves.

We also report diagnostics targeted to the paper's specific claims.
These include correlations
between leading latent axes and sequencing depth, per-sample row norms after transformation,
paired scatterplots of DM versus multinomial, NB, and Dirichlet residual magnitudes on the
nonzero entries, and count-stratified summaries of the residual gap on singletons versus repeated
counts.
On structured datasets we add flat-DM versus DTM likelihood improvements and
branch-level sparsity summaries.

Low-dimensional visualizations, when included later, are treated as illustrations rather than as
primary evidence.
The quantitative comparison rests on likelihood, calibration, neighborhood
structure, and computation.
This keeps the narrative from depending on a visually persuasive but
ultimately qualitative embedding plot.

\begin{figure}[t]
\centering
\includegraphics[width=\textwidth]{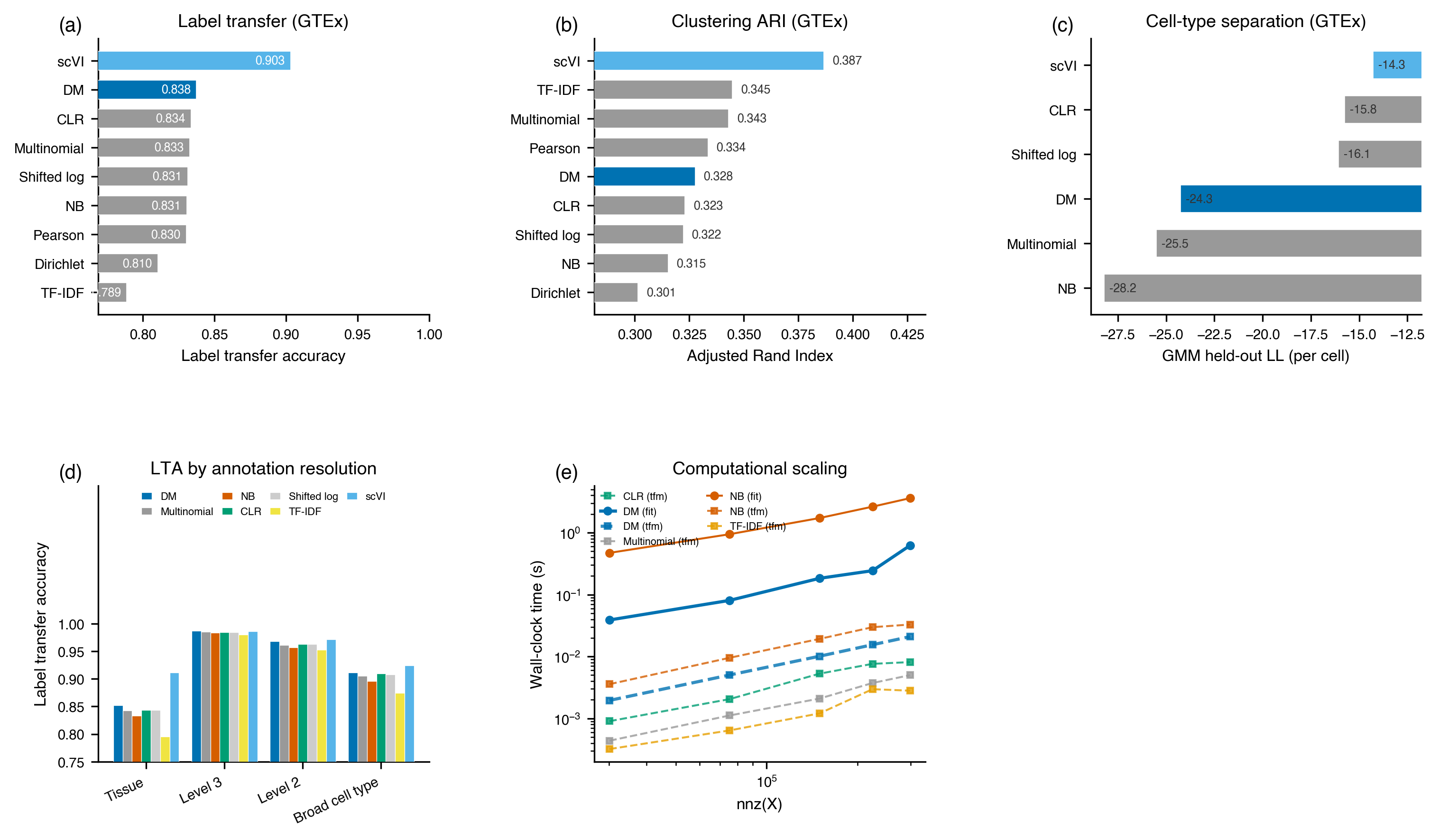}
\caption{\textbf{Downstream analyses on GTEx.}
A deep generative baseline, scVI, is included wherever the comparison is
commensurable.
It attains the strongest embeddings here, but at far higher training cost
and at the cost of differential-expression calibration: scVI inflates the
realized false-discovery proportion several-fold
(Section~\ref{sec:eval-E01246}), whereas the DM stays calibrated.
Among the linear-time normalizations the DM is competitive throughout.
\textbf{(a)}~Label transfer accuracy on GTEx broad cell types (42 classes),
$k$-NN majority vote with $k=30$ on the 10-dim PCA embedding, methods
sorted.
scVI leads; the DM is the strongest of the count-based normalizations while
preserving sparsity.
\textbf{(b)}~Adjusted Rand Index from spectral clustering on the same
$k$-NN graph, methods sorted.
scVI leads; the DM sits mid-pack among the normalizations.
\textbf{(c)}~Cell-type separation: label-initialized Gaussian-mixture
held-out log-likelihood in PCA space (42 broad cell types), methods sorted;
TF-IDF is omitted because its scale is incomparable.
Each method's 20-dim PCA embedding is scored against a GMM whose
components' means and covariances are estimated from the labeled cells;
higher is better separation.
scVI separates the cell types best; the DM improves on the multinomial but
trails the log-ratio normalizations (CLR, shifted log) on this metric.
\textbf{(d)}~Label transfer accuracy across four annotation resolutions
(Tissue, Level~3, Level~2, Broad cell type), grouped by method.
At this finer-grained transfer the methods are close: scVI and the DM are
both competitive at every resolution.
\textbf{(e)}~Computational scaling: wall-clock time vs.\ $\mathrm{nnz}(X)$
for the fit (solid) and transform (dashed) stages across the linear-time
methods.
The DM fit and transform both scale linearly in $\mathrm{nnz}$; the NB fit
is the most expensive because it runs a feature-wise Newton loop at
$O(nK)$ cost.}
\label{fig:downstream-analyses}
\end{figure}


The empirical evaluation is designed to support four claims, and the
downstream summary in Figure~\ref{fig:downstream-analyses} speaks to each:
label transfer and clustering on GTEx, label-transfer accuracy across
annotation resolutions, cell-type separation under a label-initialized
Gaussian mixture, and linear-in-$\mathrm{nnz}(X)$ computational scaling.
First, the DM should improve on the multinomial whenever repeated counts matter, yet reduce to it in singleton-dominated regimes.
Second, it should behave as a drop-in replacement for practical feature-wise NB residuals while keeping the output sparse and the assumptions transparent. 
Third, richer models---jointly fitted $\bm\pi$, DTM, or logistic-normal baselines---should only be preferred when their extra flexibility pays off measurably in held-out fit or downstream geometry. 
Fourth, the computational story is favorable: once the null is fitted, the DM transform scales linearly in $\mathrm{nnz}(X)$ at constant cost per nonzero and preserves sparsity exactly (Section~\ref{sec:eval-E01326}), while the structured DTM extension inherits the same local tractability by construction.

Taken together, these benchmarks separate four axes that are otherwise easy to conflate: sparse
versus dense computation, joint versus feature-wise modeling, no overdispersion versus
one-parameter overdispersion, and unstructured versus structured dependence.
A method can win
on one axis and lose on another, so the evaluation is designed to make those tradeoffs explicit
rather than collapse them into a single headline score.

\section{Discussion}
\label{sec:discussion}

The main message of the paper is simple.
If one wants a sparse, likelihood-based normalization for overdispersed multivariate counts, the Dirichlet--multinomial is the natural default generalization of the multinomial.
The central modeling choice is the comparison: if the full DM is re-fit sample by sample, the optimum collapses to the empirical multinomial limit by Proposition~\ref{prop:dmsaturation}.
The useful transform is therefore the fixed-concentration empirical-composition contrast used throughout this paper, where $\alpha_0$ stays at its fitted null value and only the sample composition is allowed to match the observed row.

In practice the method is correspondingly simple: estimate $\pi_j=a_j/N$, fit one scalar $\alpha_0$, and evaluate \eqref{eq:lgammaresid} on the nonzero entries of the matrix.
The resulting transform preserves sparsity exactly, has a clear conditional interpretation through the independent negative-binomial factorization, and reduces continuously to multinomial residuals as $\alpha_0\to\infty$.
Because $\alpha_0$ is only a single scalar parameter, it need not be fit by maximum likelihood at all: one can also evaluate the transform cheaply across a spectrum of $\alpha_0$ values and inspect how the residual geometry changes, much as one would vary a regularization parameter.
Maximum-likelihood fitting is the natural default, but the one-dimensional nature of $\alpha_0$ makes sensitivity analysis and manual selection equally viable.
The downstream geometry is in fact insensitive to $\alpha_0$ across two orders of magnitude around the moment estimate, so the concentration behaves as a regularization-strength dial with a wide useful range rather than a knife-edge that must be tuned; and the single global scalar is not a lossy stand-in for a richer per-feature concentration, since on held-out data a per-feature fit overfits its training overdispersion and generalizes worse, making the single scalar the better-generalizing choice as well as the simpler one (Section~\ref{sec:eval-E01249}).

The structured extensions clarify when one should move beyond the flat DM.
The generalized DM allows stage-specific dispersion, but only at the cost of order dependence and loss of the clean nonzero-only leafwise transform.
The DTM is the more natural extension when a real hierarchy is available.
It replaces an arbitrary order by a fixed tree and yields a multiscale residual representation indexed by branch splits.
That is especially natural for microbiome taxonomies, lineage-resolved assays, and any count table whose features already come with a known ontology or tree.
Learning such a tree from the same data is a separate structure-selection problem and should be handled with validation or held-out residuals rather than folded silently into the default null.

A second natural extension is to condition the null on known sample information.
Instead of a single global composition $\bm\pi$, one may use stratum-specific or regression-driven compositions $\bm\pi(z_i)$, where $z_i$ encodes batch, tissue, donor, treatment arm, or known cluster identity.
The logic is unchanged: fit the conditional null, then compare it to the empirical composition while keeping the nuisance concentration fixed.
This is the joint conditional analogue of adding covariates to feature-wise NB regression.
If the strata themselves are estimated from the same data, however, some form of sample splitting or cross-fitting would be advisable to avoid circularity.

There are also clear limits.
The tied-concentration DM adopted here is deliberately simple.
It will not capture arbitrary covariance on log-ratios, and it will not be optimal when dispersion varies strongly across latent factors not aligned with a known tree.
Logistic-normal models and their relatives remain the main joint alternative in that regime, at the cost of losing exact sparse formulas.
Likewise, tree-aware models inherit sensitivity to tree mis-specification: when the hierarchy is wrong, the flat DM may be preferable on both statistical and computational grounds.
Finally, the row-level statistic $\exp(\Delta_i^{\mathrm{DM}})$ built from the in-sample empirical composition $q_i=X_i/n_i$ is a generalized likelihood ratio rather than an e-variable for the global DM null: its expectation under the null is not bounded by one, and in the multinomial limit it is bounded below by one almost surely. The residual transform of Section~\ref{sec:dmresids} is unaffected, since it is used as a normalization device and not as a hypothesis-testing e-variable; but the e-process construction of Section~\ref{sec:eprocess} requires the row-split discipline of Corollary~\ref{cor:dm-e-process}, and practitioners should not substitute the in-sample plug-in for the split-LR e-variable when validity of sequential or finite-sample testing is the goal.

So the intended role of the method is modest but useful.
It is not a replacement for full Bayesian hierarchical modeling when uncertainty quantification is the primary goal, and it is not meant to dominate every feature-wise NB pipeline on every dataset.
Its niche is narrower and cleaner: a normalization that is likelihood-based, compositionally coherent, overdispersion-aware, sparse, and extensible to structured counts when those structures are scientifically real.

\appendix

\section{Implementation details}
\label{sec:impldetails}

This appendix collects the fitting and computation details deferred from the main text. 
The main paper needs only two facts: with $\bm\pi$ fixed, fitting the DM null reduces to one scalar $\alpha_0$, and once $\hat\alpha_0$ is available the residual transform is constant-time per nonzero entry.

\subsection{Estimating the concentration parameter}
\label{sec:alpha0}

Once the null composition $\bm\pi$ is fixed, fitting the DM null reduces to estimating the scalar concentration $\alpha_0$. 
Substituting $\alpha_j=\alpha_0\pi_j$ into \eqref{eq:dmll} gives the one-dimensional objective
\begin{equation}
\label{eq:alpha0ll}
\ell(\alpha_0)
=
\sum_{i=1}^n
\left[
\log\frac{\Gamma(\alpha_0)}{\Gamma(n_i+\alpha_0)}
+
\sum_{j=1}^K
\log\frac{\Gamma(X_{ij}+\alpha_0\pi_j)}{\Gamma(\alpha_0\pi_j)}
\right]
+
\mathrm{const}(X)
\end{equation}
where the omitted constant does not depend on $\alpha_0$.

Its score is
\begin{equation}
\label{eq:score}
\ell'(\alpha_0)
=
\sum_{i=1}^n
\left[
\psi(\alpha_0)-\psi(n_i+\alpha_0)
+
\sum_{j=1}^K
\pi_j\bracks{\psi(X_{ij}+\alpha_0\pi_j)-\psi(\alpha_0\pi_j)}
\right]
\end{equation}
and its second derivative is
\begin{equation}
\label{eq:hessian}
\ell''(\alpha_0)
=
\sum_{i=1}^n
\left[
\psi_1(\alpha_0)-\psi_1(n_i+\alpha_0)
+
\sum_{j=1}^K
\pi_j^2\bracks{\psi_1(X_{ij}+\alpha_0\pi_j)-\psi_1(\alpha_0\pi_j)}
\right]
\end{equation}
where $\psi$ and $\psi_1$ denote the digamma and trigamma functions.

The fixed-point iteration over $\alpha_0$ with $\bm\pi$ held at the global composition --- the update stated in the recipe of Section~\ref{sec:ataglance} --- is the default fit: it preserves positivity, increases the likelihood monotonically, and keeps the null interpretable as one shared baseline.
For completeness we record three further variants \cite{Minka2000,Sklar2014}, summarized in Table~\ref{tab:fitvariants}: a Newton step over $\alpha_0$ (the same restricted optimum, usually in fewer iterations), and two joint optimizations of $(\alpha_0,\bm\pi)$ that enlarge the null class by letting the baseline composition itself move, used only as a sensitivity check.
All four optimize the same DM likelihood, over different parameter spaces and with different numerical trade-offs.

\begin{table}[t]
\centering
\small
\setlength{\tabcolsep}{4pt}
\renewcommand{\arraystretch}{1.2}
\begin{tabular}{@{}L{0.24\textwidth}L{0.17\textwidth}L{0.23\textwidth}L{0.28\textwidth}@{}}
\toprule
Variant & Optimized variables & Typical per-iteration cost & Typical use \\
\midrule
Fixed-point, fixed $\bm\pi$ & $\alpha_0$ & $O(\mathrm{nnz}(X)+n)$ & Default scalar fit when monotonicity or robustness matter most. \\

Newton, fixed $\bm\pi$ & $\alpha_0$ & $O(\mathrm{nnz}(X)+n)$ & Same restricted optimum, usually in fewer iterations. \\

Fixed-point, joint $(\alpha_0,\bm\pi)$ & $\bm\alpha$ & $O(\mathrm{nnz}(X)+n+K)$, or $O(KM)$ after compression & Full DM MLE as a sensitivity analysis or ablation. \\

Newton, joint $(\alpha_0,\bm\pi)$ & $\bm\alpha$ & $O(\mathrm{nnz}(X)+n+K)$, or $O(KM)$ after compression & Fastest route to the full MLE when well initialized. \\
\bottomrule
\end{tabular}
\caption{Practical fitting variants for the DM null. Here $M:=\max_i n_i$ is the largest row sum, used only by the compressed histogram representation of Section~\ref{sec:histcache}.}
\label{tab:fitvariants}
\end{table}

\subsubsection{Holding \texorpdfstring{$\bm\pi$}{pi} fixed: fixed-point versus Newton}

When $\bm\pi$ is held fixed, one maximizes \eqref{eq:alpha0ll} over $\alpha_0>0$. 
The Newton step is the most direct option:
\[
\alpha_0^{\mathrm{new}}
=
\alpha_0
-
\frac{\ell'(\alpha_0)}{\ell''(\alpha_0)}
\]
In practice it is often safest to optimize in the log-parameter $\beta=\log \alpha_0$ or to damp the Newton step so that $\alpha_0$ stays positive and $\ell(\alpha_0)$ increases monotonically.

A standard fixed-point update is
\begin{equation}
\label{eq:alpha0_fp}
\alpha_0^{\mathrm{new}}
=
\alpha_0
\frac{
\sum_{i=1}^n \sum_{j=1}^K
\pi_j\bracks{\psi(X_{ij}+\alpha_0\pi_j)-\psi(\alpha_0\pi_j)}
}{
\sum_{i=1}^n \bracks{\psi(n_i+\alpha_0)-\psi(\alpha_0)}
}
\end{equation}
This map has the same stationary points as the score equation $\ell'(\alpha_0)=0$, preserves positivity automatically, and is attractive when likelihood monotonicity matters more than minimizing the number of iterations \cite{Minka2000}.

Both scalar solvers admit an exact sparse implementation because the terms with $X_{ij}=0$ vanish identically in \eqref{eq:alpha0_fp}, \eqref{eq:score}, and \eqref{eq:hessian}:
\[
\psi(\alpha_0\pi_j)-\psi(\alpha_0\pi_j)=0,
\qquad
\psi_1(\alpha_0\pi_j)-\psi_1(\alpha_0\pi_j)=0
\]
Thus each likelihood, score, Hessian, or fixed-point update with $\bm\pi$ held fixed can be evaluated in $O(\mathrm{nnz}(X)+n)$ time and $O(1)$ auxiliary memory beyond the stored row and column summaries.

\subsubsection{Joint fitting of \texorpdfstring{$(\alpha_0,\bm\pi)$}{(alpha0,pi)}}

If one does not wish to fix $\bm\pi$ at the empirical global composition, the MLE can be carried out jointly over $\alpha_0>0$ and $\bm\pi\in\Delta^{K-1}$. 
Algebraically, the simplest route is to optimize the full positive vector $\bm\alpha\in\mathbb{R}_{>0}^K$ directly and then recover
\[
\hat\alpha_0 = \sum_{j=1}^K \hat\alpha_j,
\qquad
\hat\pi_j = \frac{\hat\alpha_j}{\hat\alpha_0}
\]
Under this parameterization the log-likelihood is still \eqref{eq:dmll}, and the gradient components are
\[
g_j(\bm\alpha)
=
\sum_{i=1}^n
\bracks{
\psi(\alpha_+) - \psi(n_i+\alpha_+) + \psi(X_{ij}+\alpha_j)-\psi(\alpha_j)
},
\qquad
\alpha_+ := \sum_{k=1}^K \alpha_k
\]
The Hessian has the constant-off-diagonal form
\begin{equation}
\label{eq:jointhessian}
H_{jk}(\bm\alpha)
=
c(\bm\alpha) + \delta_{jk} d_j(\bm\alpha)
\end{equation}
where
\begin{align*}
c(\bm\alpha)
&=
\sum_{i=1}^n \bracks{\psi_1(\alpha_+) - \psi_1(n_i+\alpha_+)}\\
d_j(\bm\alpha)
&=
\sum_{i=1}^n \bracks{\psi_1(X_{ij}+\alpha_j)-\psi_1(\alpha_j)}
\end{align*}
Again, the feature-specific terms are sparse because zero counts contribute exactly zero.

A joint fixed-point update \cite{Minka2000} can be written directly on $\bm\alpha$ for $j=1,\dots,K$:
\begin{equation}
\label{eq:jointfp}
\alpha_j^{\mathrm{new}}
=
\alpha_j
\frac{
\sum_{i=1}^n \bracks{\psi(X_{ij}+\alpha_j)-\psi(\alpha_j)}
}{
\sum_{i=1}^n \bracks{\psi(n_i+\alpha_+) - \psi(\alpha_+)}
}
\end{equation}
This positivity-preserving map optimizes the full DM likelihood and therefore simultaneously refits both $\alpha_0$ and $\bm\pi$.

For joint Newton, \eqref{eq:jointhessian} is the key simplification. 
Since $H$ is diagonal plus a rank-one term, each Newton direction can be solved in $O(K)$ time once $g_j(\bm\alpha)$, $c(\bm\alpha)$, and $d_j(\bm\alpha)$ have been accumulated, using the same matrix-inversion-lemma algebra emphasized in prior work \cite{Minka2000,Sklar2014}. 
A damped Newton step or log-parameterization $\theta_j=\log\alpha_j$ is useful to maintain positivity.

These joint fits should be viewed as a deliberate ablation of the modeling perspective in Section~\ref{sec:framework}. 
They can only improve the maximized training likelihood, because the family with $\bm\pi$ held fixed is nested inside the full DM family. 
But the price is interpretive: the null no longer represents a single externally chosen global composition, and the extra flexibility can absorb some cross-sample structure that the residual transform might otherwise expose. 
For that reason, the fixed choice $\pi_j=a_j/N$ remains the default normalization null in this paper, while joint fitting is best used as a sensitivity analysis and as a benchmark for how much likelihood is lost by insisting on the simpler null.

\subsubsection{Caching histograms for direct sparse evaluation}
\label{sec:histcache}

The joint DM likelihood can also be compressed into count-threshold arrays, reflecting the sufficient statistics of the DM when it is written as an exponential family. 
A convenient representation, emphasized by \cite{Sklar2014}, is obtained as follows. 
Let $M:=\max_i n_i$, define
\[
U_{jm} := \sum_{i=1}^n \ind(X_{ij}>m),
\qquad
v_m := \sum_{i=1}^n \ind(n_i>m),
\qquad m=0,\dots,M-1
\]
and again write $\alpha_+ = \sum_j \alpha_j$.  
Then
\[
\ell(\bm\alpha)
=
\sum_{j=1}^K \sum_{m=0}^{M-1} U_{jm}\log(\alpha_j+m)
-
\sum_{m=0}^{M-1} v_m \log(\alpha_+ + m)
+
\mathrm{const}(X)
\]
Differentiating gives
\begin{align*}
g_j(\bm\alpha)
&=
\sum_{m=0}^{M-1} \frac{U_{jm}}{\alpha_j+m}
-
\sum_{m=0}^{M-1} \frac{v_m}{\alpha_+ + m} \\
H_{jk}(\bm\alpha)
&=
-\delta_{jk}\sum_{m=0}^{M-1}\frac{U_{jm}}{(\alpha_j+m)^2}
+
\sum_{m=0}^{M-1}\frac{v_m}{(\alpha_+ + m)^2}
\end{align*}
Once $U$ and $v$ have been built, each joint fixed-point or Newton iteration costs $O(KM)$ rather than rescanning the full matrix.

On a sparse matrix there are therefore two practical implementation routes. 
The first is the direct digamma/trigamma route of \eqref{eq:score}--\eqref{eq:jointfp}, whose per-iteration cost is $O(\mathrm{nnz}(X)+n)$ for the scalar fit and $O(\mathrm{nnz}(X)+n+K)$ for the joint fit. 
The second is the compressed histogram route, which pays a preprocessing cost to build $U$ and $v$ and then evaluates many iterations quickly. 
If $M$ is moderate and one plans to solve many DM fits on the same count matrix, the compressed route can be excellent. 
If $K$ or $M$ is very large, however, the $K\times M$ storage and the prefix-sum construction of $U$ may dominate, in which case direct sparse special-function evaluation is often preferable.

Correctness should be evaluated by the attained log-likelihood, not by parameter distance. 
Within each variable set, fixed-point and Newton should agree to numerical tolerance once converged. 
Across variable sets, the joint methods must achieve likelihood at least as high as the methods with $\bm\pi$ held fixed, because they optimize a strict superset of parameters unless the fixed $\bm\pi$ already happens to be the joint MLE. 
Speed should be reported as wall-clock time, per-iteration time, number of iterations, preprocessing time for any histogram compression, and peak memory. 
These quantities separate the statistical question---how much likelihood is gained by refitting $\bm\pi$---from the numerical question---which optimizer reaches a given likelihood threshold fastest.

\subsection{Sparse computation and practical implementation}
\label{sec:sparseimplementation}

The practical pipeline is the one already summarized in Section~\ref{sec:ataglance}: compute row sums and column sums, fit $\alpha_0$ (or the full $\bm\alpha$ in a deliberate ablation), and then evaluate the residuals on the nonzero entries. 
The only additional implementation point worth emphasizing is that the residual transform itself is best computed with log-gamma evaluations rather than with explicit threshold sums.

For every nonzero $X_{ij}=x$,
\begin{equation}
\label{eq:lgammaresid}
c_{ij}^{\mathrm{DM}}
=
\log\frac{\Gamma(x+\alpha_0 x/n_i)}{\Gamma(\alpha_0 x/n_i)}
-
\log\frac{\Gamma(x+\alpha_0\pi_j)}{\Gamma(\alpha_0\pi_j)}
\end{equation}
This is constant-time per nonzero entry in any numerical environment with a stable \texttt{lgamma} implementation.

The fitting step may use either the direct sparse digamma/trigamma formulas of Section~\ref{sec:alpha0} or the compressed $U,v$ representation of Section~\ref{sec:histcache}; the residual transform itself only needs the fitted null parameters and then runs in $O(\mathrm{nnz}(X))$ time. 
The resulting residual matrix has exactly the same sparsity pattern as the original data. 
This is the main computational advantage of the DM and multinomial transforms over dense pseudocount-based alternatives.

\subsection{Practical notes}
\label{sec:practicalnotes}

There are several practical choices involved in fitting and applying the DM transform. 

\paragraph{Initialization.}
A robust default is to evaluate the scalar objective \eqref{eq:alpha0ll} on a coarse logarithmic grid, for example $\alpha_0\in\{10^{-2},10^{-1},\dots,10^6\}$, and start the iterative solver from the best grid point. 
When the maximizer lies near the large end of that grid, the data are effectively close to multinomial and a large warm start is appropriate. 
Joint fits can then be initialized at $\alpha_j^{(0)}=\alpha_0^{(0)}\pi_j$.

\paragraph{Convergence and stopping.}
Both the scalar Newton solver (preferably in the log-parameterization $\beta=\log\alpha_0$) and the standard fixed-point iteration can be stopped when the relative change in log-likelihood falls below a tolerance, for example $|\ell^{(t+1)}-\ell^{(t)}|/|\ell^{(t)}|<10^{-8}$, or when the absolute score $|\ell'(\alpha_0)|$ is small. 
In practice, 10--30 iterations often suffice for either method on typical count matrices.

\paragraph{Numerical stability.}
The log-gamma form~\eqref{eq:lgammaresid} is preferred over the summation form~\eqref{eq:dmcellterm} for numerical reasons: standard \texttt{lgamma} implementations are accurate to machine precision across the full positive real line. 
When $\alpha_0\pi_j$ or $\alpha_0 x/n_i$ is very small, one may use the corresponding small-argument expansions of $\log\Gamma(x+a)-\log\Gamma(a)$; when both arguments are large, the asymptotic expansion of the digamma function keeps score and Hessian evaluations stable.

\paragraph{Sparse data structures.} 
The fitting step (Section~\ref{sec:alpha0}) iterates only over nonzero entries, so the entire pipeline---fitting, transformation, and storage---operates in $O(\mathrm{nnz}(X))$ time and memory, plus $O(n+K)$ for row and column summaries.

\section{Synthetic calibration and model-recovery studies}
\label{sec:synthetic_appendix}

The main text focuses on the real-data question of separating technical from biological variation. 
This appendix records the synthetic protocols used to check calibration, algebra, and implementation in settings where the data-generating mechanism is known exactly.

Each synthetic configuration generates independent training, validation, and test matrices from the same underlying parameters. 
A practical default is 500 training samples, 250 validation samples, and 250 test samples per configuration, together with 20 independent random replicates; the computational ablations then vary these sizes explicitly where needed. 
Fits are performed on the training split, model-selection choices are made on the validation split when needed, and final scores are reported on the test split. 
This design separates estimation error, numerical instability, and optimizer variance.

We use four synthetic regimes.

\paragraph{1. Well-specified DM data.}
We generate
\[
X_i \mid n_i \sim \operatorname{DirMult}(n_i,\alpha_0\pi)
\]
over a grid of $(\alpha_0,K)$ values and sparsity levels, with $\pi$ drawn from a sparse Dirichlet prior and with $n_i$ drawn from a log-normal depth distribution chosen to mimic realistic library-size variation. 
A concrete default grid is
\[
\alpha_0\in\{10,10^2,10^3,10^4,10^6\},
\qquad
K\in\{10^2,10^3,10^4\}
\]
This regime tests whether the scalar MLE recovers $\alpha_0$, whether the magnitude gap $|d_{ij}^{\mathrm{Mult}}|-|d_{ij}^{\mathrm{DM}}|$ increases exactly where repeated counts occur, and whether the DM transform approaches multinomial residuals as $\alpha_0\to\infty$. 
The primary summaries are bias and RMSE of $\hat\alpha_0$, held-out conditional log-likelihood gaps relative to multinomial and feature-wise NB baselines, count-stratified residual-magnitude gaps, and residual-variance diagnostics after normalization.

\paragraph{2. Independent NB data with varying library sizes.}
We generate independent
\[
Y_{ij}\sim \operatorname{NB}(\alpha_0\pi_j,p_i)
\]
with sample-specific $p_i$ chosen so that the expected library size spans roughly a five-fold range, and then condition on the realized row sums. 
This regime isolates the conditional-versus-unconditional distinction of Section~\ref{sec:matchednb}. 
The most important diagnostic is the difference between the DM and independent-NB per-sample conditional likelihood contrasts, which should be numerically negligible if both implementations are correct. 
The secondary question is practical: how closely do the resulting residual geometries track standard feature-wise NB residualizations once the totals have been conditioned away?

\paragraph{3. Misspecified dependence.}
We generate from alternatives the DM cannot represent exactly: mixtures of two baseline compositions with different $\pi^{(1)}$ and $\pi^{(2)}$, feature-specific NB dispersions $r_j$, and logistic-normal multinomial models with anisotropic covariance. 
This regime tests where the one-parameter overdispersion of the DM is enough and where it is not. 
The comparison of interest is between fixed-$\pi$ DM, jointly fitted DM, and richer dense baselines. 
Here the key outputs are held-out fit gaps, residual-depth coupling, and whether the added flexibility changes downstream neighborhood geometry enough to justify the extra parameters and loss of sparsity.

\paragraph{4. Tree-structured data.}
We generate leaf counts from a DTM on a known balanced tree of depth $3$--$5$, with nodewise concentration parameters varying across depths and branches. 
This is the cleanest setting in which to test the structured extension of Section~\ref{sec:dtm}. 
When node concentrations are nearly equal, the flat DM should remain competitive; when broad and fine splits have genuinely different dispersion scales, the DTM should earn its added structure. 
We therefore report both leaf-level and node-level held-out likelihoods, together with branch-level residual diagnostics.

Across all synthetic regimes, the same bookkeeping is kept: exact likelihood on the appropriate sample space, residual sparsity, fit time, transform time, peak memory, and paired residual comparisons stratified by observed count.
Figure~\ref{fig:synthetic-calibration} collects the calibration summary:
$\hat\alpha_0$ recovery, the multinomial limit, count-stratified residual
shrinkage, and held-out fit under a misspecified DM mixture.

\begin{figure}[t]
\centering
\includegraphics[width=\textwidth]{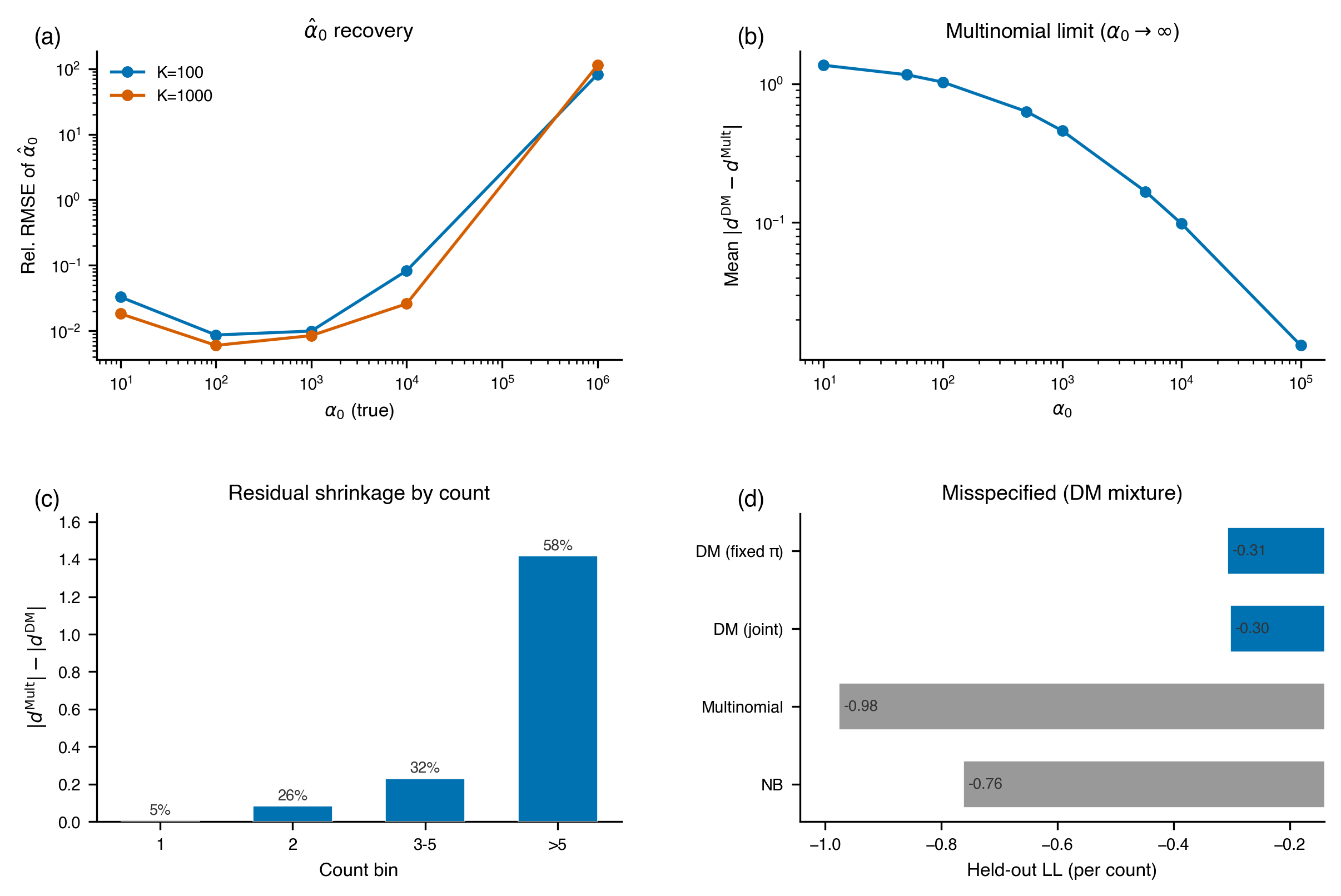}
\caption{\textbf{Synthetic calibration across the four regimes.}
\textbf{(a)}~Regime~1, well-specified DM: relative RMSE of $\hat\alpha_0$
vs.\ the true $\alpha_0$, at $K=100$ and $K=1000$.
Recovery is accurate for $\alpha_0 \lesssim 10^4$; at $\alpha_0=10^6$ the
log-likelihood curvature becomes very flat and identifiability breaks
down, as expected from the multinomial limit, and the larger alphabet
$K=1000$ tightens recovery in the well-identified regime.
\textbf{(b)}~Regime~1: multinomial limit.
Mean
$|d^{\mathrm{DM}} - d^{\mathrm{Mult}}|$ decreases monotonically with
$\alpha_0$, confirming Proposition~\ref{prop:multlimit}.
\textbf{(c)}~Regime~1: residual shrinkage by count bin on DM-generated
data ($\alpha_0=100$, $K=200$), with $d^{\mathrm{DM}}$ and
$d^{\mathrm{Mult}}$ both formed against the same global $\bm\pi$.
The gap $|d^{\mathrm{Mult}}|-|d^{\mathrm{DM}}|$ is exactly zero at
count~$=1$ (Proposition~\ref{prop:x1}, $5\%$ of singletons shrunk by
floating-point ties) and strictly positive at every repeated count:
against a shared $\bm\pi$, Proposition~\ref{prop:boundedbymult} guarantees
$|d^{\mathrm{DM}}|\le|d^{\mathrm{Mult}}|$ at every nonzero entry, with
strict shrinkage at every count~$>1$.
Both the gap and the shrunk-fraction (annotated per bin, rising to $58\%$
at count~$>5$) grow monotonically with count.
\textbf{(d)}~Regime~3: misspecified dependence via a 2-component DM
mixture.
Held-out per-count LL: the jointly fitted DM (highlighted) slightly edges
the fixed-$\pi$ DM, and both dominate the multinomial and the NB by a
wide margin ($-0.30$ and $-0.31$ vs.\ $-0.98$ and $-0.76$), showing the
one-parameter overdispersion remains useful even under a dependence the
DM cannot represent exactly.}
\label{fig:synthetic-calibration}
\end{figure}

The algebraic backbone of the transform is a set of exact equalities, and these
admit a sharper check than calibration curves: they should hold to floating-point
precision on every random instance, not merely on average.
Figure~\ref{fig:identity-verification} verifies each load-bearing identity
directly.
On random overdispersed-count matrices across a grid in $(\alpha_0,K)$, the
singleton agreement of Proposition~\ref{prop:x1}, the
independent-negative-binomial equality of
Corollary~\ref{cor:nbdmdeviance} (cross-checked along a negative-binomial
log-probability route that shares no code with the DM evaluation), the
Dirichlet-negative-multinomial reduction of Proposition~\ref{prop:dnm}, and the
Dirichlet-tree-multinomial factorization of Proposition~\ref{prop:dtmfactor}
(with both its flat-DM root case and its generalized-DM comb-tree case) all
hold with a maximum discrepancy of $4.3\times10^{-11}$, uniformly across the
grid and dominated by \texttt{gammaln} floating-point error.
The multinomial limit of Proposition~\ref{prop:multlimit} closes at the
predicted $1/\alpha_0$ rate, and the split-LR e-variable of
Corollary~\ref{cor:dm-e-process} has a Monte-Carlo mean under the null that
respects the e-variable bound and is consistent with its exact value of one.

\begin{figure}[t]
\centering
\includegraphics[width=\textwidth]{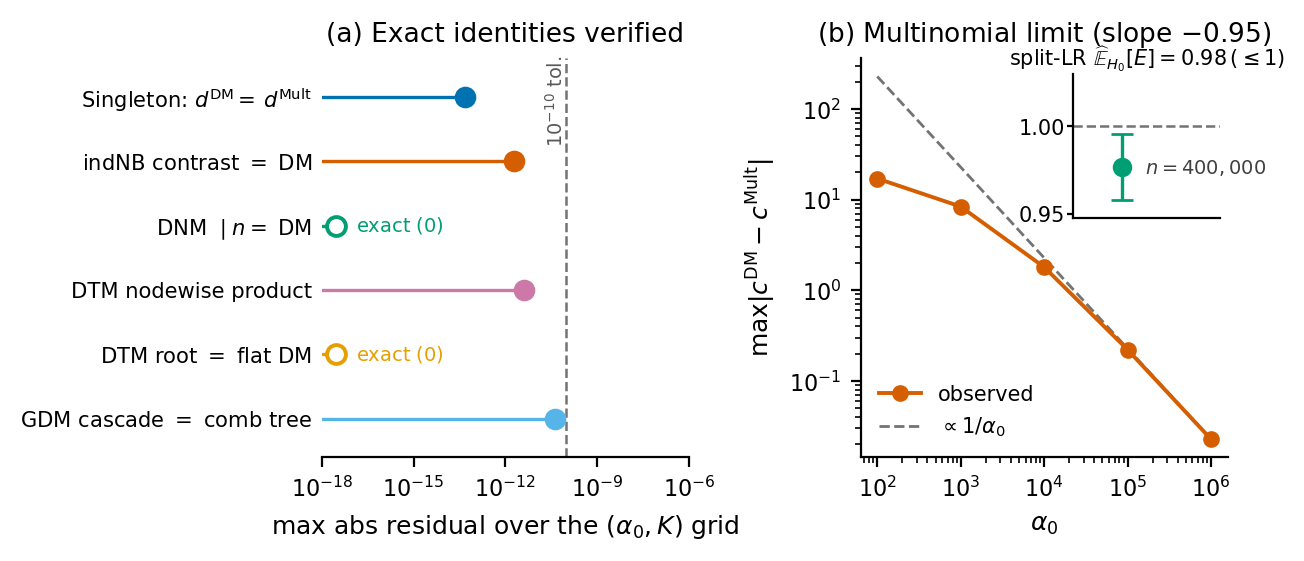}
\caption{\textbf{The DM deviance-accounting identities hold to machine precision
across overdispersed-count instances.}
Each identity is evaluated on random count matrices sampled from a known
generative law; an independent code path computes the cross-checked quantity
wherever one exists.
\textbf{(a)}~Worst-case residual of each closed-form identity over the
$(\alpha_0,K)$ grid ($\alpha_0\in\{1,8,66,4700\}$, $K\in\{50,200,1000\}$), on a
log axis. Singleton agreement $d^{\mathrm{DM}}=d^{\mathrm{Mult}}$ at every
$X_{ij}=1$ (Proposition~\ref{prop:x1}); the independent-negative-binomial and
DM per-sample contrasts $\Delta_i^{\mathrm{indNB}}=\Delta_i^{\mathrm{DM}}$
(Corollary~\ref{cor:nbdmdeviance}, computed along a negative-binomial
log-pmf route that shares no code with the DirMult path); the
Dirichlet-negative-multinomial conditioned on its observed total equals the DM
(Proposition~\ref{prop:dnm}, which holds bit-for-bit, residual exactly $0$);
the Dirichlet-tree-multinomial nodewise factorization
(Proposition~\ref{prop:dtmfactor}); the tree's root-only special case recovering
the flat DM (also exact, residual $0$); and the generalized DM as the comb-tree
special case, its beta-binomial cascade matching the nodewise tree product.
Every residual sits below the $10^{-10}$ tolerance; the maximum across all
checks and settings is $4.3\times10^{-11}$, dominated by \texttt{gammaln}
floating-point error.
\textbf{(b)}~The multinomial limit (Proposition~\ref{prop:multlimit}): the
per-cell gap $|c^{\mathrm{DM}}-c^{\mathrm{Mult}}|$ closes at the predicted
$1/\alpha_0$ rate, with asymptotic-tail slope $-0.95$ on a log--log fit.
The inset shows the split-LR DM e-variable of
Corollary~\ref{cor:dm-e-process}: its Monte-Carlo mean under the global null,
with eval counts drawn from the posterior-predictive conditional law the
corollary scores against, is $0.98$ (\,95\% CI $[0.96,\,1.00]$) over
$4\times10^{5}$ draws.
This respects the e-variable bound $\mathbb{E}_{H_0}[E^{\mathrm{split}}]\le 1$
and is consistent with the exact value $1$ to within the residual variance of a
heavy-tailed estimator; the realized mean approaches $1$ from below as the
train side's support misses are absorbed.}
\label{fig:identity-verification}
\end{figure}

\begin{table}[t]
\centering
\small
\caption{\textbf{On real PBMC 3k counts, the DM transform gives the best
held-out fit among the exactly-sparse residual families.}
Eight random train/test splits of the PBMC 3k UMI matrix
($2700$ cells $\times$ $13714$ genes, $2160$ train / $540$ test cells per
split); the fitted concentration is $\hat\alpha_0\approx1.4\times10^{4}$, the
near-multinomial high-depth regime. Methods are ordered best-to-worst by
held-out conditional log-likelihood per count (higher is better), with $95\%$
intervals over the eight seeds. The DM and the independent-negative-binomial
conditional model coincide, as Corollary~\ref{cor:nbdmdeviance} requires (they
differ by $6\times10^{-15}$ nats per count on this real matrix; paired Wilcoxon
$p=0.53$, no consistent sign). The DM-advantage column is the paired
DM-minus-baseline gap with its BCa $95\%$ interval and sign-floor Wilcoxon $p$;
the DM beats the multinomial, the analytic Pearson residual, and the
unconditional negative-binomial on all eight seeds. The only method with a
higher held-out likelihood is the dense Dirichlet pseudocount transform, which
edges the DM by $0.0155$ nats per count (a $-0.0155$ DM-advantage) but densifies
the otherwise $99\%$-sparse matrix; among the transforms that preserve exact
sparsity, the DM is the best fit.}
\label{tab:pbmc-heldout}
\begin{tabular}{l r r r}
\toprule
& Held-out cond.\ LL & DM advantage & Wilcoxon \\
Method & per count (mean, $95\%$ CI) & per count (mean $[95\%]$) & $p$ \\
\midrule
DM                & $-1.3044\;[-1.3110,\,-1.2951]$ & ---                            & ---     \\
NB (cond.)        & $-1.3044\;[-1.3110,\,-1.2951]$ & $6{\times}10^{-15}\;[\text{ns}]$ & $0.53$  \\
Dirichlet (dense) & $-1.2889\;[-1.2950,\,-1.2789]$ & $-0.0155\;[-0.0161,\,-0.0149]$ & $1.00$  \\
Multinomial       & $-1.3746\;[-1.3822,\,-1.3656]$ & $+0.0702\;[+0.0691,\,+0.0716]$ & $0.0039$ \\
Pearson           & $-3.2051\;[-3.2355,\,-3.1733]$ & $+1.9007\;[+1.8766,\,+1.9246]$ & $0.0039$ \\
NB (uncond.)      & $-3.2899\;[-3.3209,\,-3.2625]$ & $+1.9855\;[+1.9653,\,+2.0102]$ & $0.0039$ \\
\bottomrule
\end{tabular}

\smallskip
{\footnotesize Intervals on the held-out LL are BCa bootstrap over the eight
seeds; the DM-advantage interval is the BCa bootstrap of the paired
DM-minus-baseline gap. Sign-floor Wilcoxon $p=0.0039$ is the smallest one-sided
value attainable with eight matched seeds and is reached whenever the DM wins on
all eight (multinomial, Pearson, unconditional NB). ``ns'' marks the
Corollary~\ref{cor:nbdmdeviance} equality DM~$=$~NB (cond.), where the gap is at
the $10^{-15}$ floor and the sign is split $4/8$.}
\end{table}

\subsection{Single-cell integration: deviance normalization against standard pipelines}
\label{sec:eval-E01247}

We compare deviance-normalized PCA against six standard single-cell pipelines on three real
CELLxGENE-Census atlases---Tabula Sapiens, a human lung atlas, and a peripheral-blood
immune atlas---subsampled to $60{,}000$ cells each at a fixed seed. Each pipeline
produces a low-dimensional embedding. (Throughout, deviance-PCA denotes PCA on the
deviance-normalized counts of Theorem~\ref{thm:nbtodm}.) Integration quality is scored by the standard
scIB battery, summarized as $\mathrm{scIB} = 0.6\,\mathrm{bio} + 0.4\,\mathrm{batch}$,
where the biological-conservation term averages ARI, NMI and cell-type LISI and the
batch-mixing term averages integration LISI and kBET (assay/donor as the batch key,
annotated cell type as the label). All metrics are computed with the JAX
\texttt{scib-metrics} implementation.

\begin{table}[tbp]
\centering
\caption{scIB overall integration score (higher is better) by pipeline and atlas.
Deviance-normalized PCA matches the strong deep baselines (scVI, Harmony) to within
$0.01$--$0.06$ and tracks the other linear pipelines, while clearly exceeding the only
other count-model baseline, GLM-PCA. GLM-PCA required zero-count-gene filtering to
return a finite embedding at all; scVI-LD diverged during training on the lung atlas
(marked ---).}
\label{tab:eval-E01247}
\begin{tabular}{lccc}
\toprule
Pipeline & Tabula Sapiens & Human Lung & Immune \\
\midrule
Deviance-PCA (this work) & $0.472$ & $0.439$ & $0.395$ \\
Multinomial-PCA          & $0.477$ & $0.436$ & $0.406$ \\
GLM-PCA                  & $0.407$ & $0.234$ & $0.325$ \\
log1p-PCA                & $0.498$ & $0.449$ & $0.431$ \\
Harmony                  & $0.571$ & $0.441$ & $0.456$ \\
scVI                     & $0.483$ & $0.447$ & $0.439$ \\
scVI-LD                  & $0.511$ & ---     & $0.448$ \\
\bottomrule
\end{tabular}
\end{table}

\begin{figure}[tbp]
\centering
\includegraphics[width=0.92\linewidth]{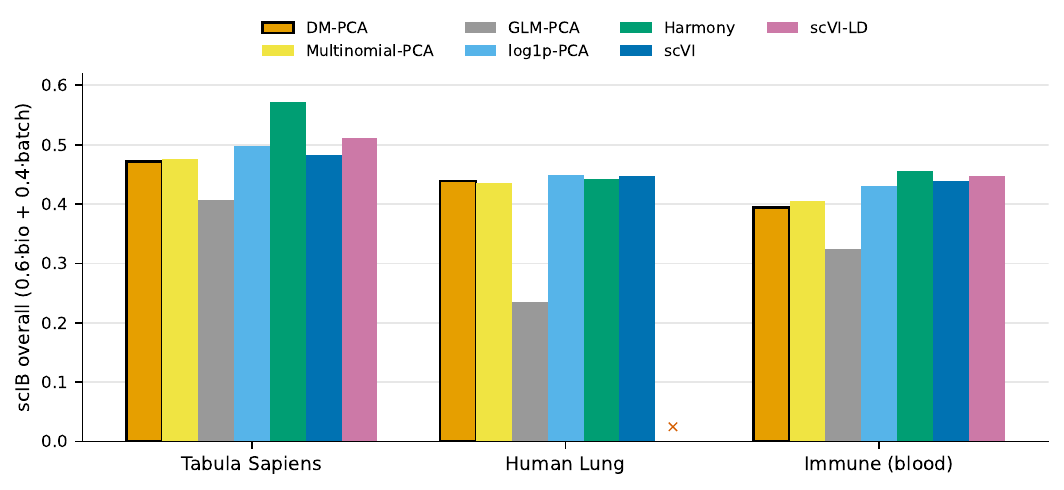}
\caption{scIB overall integration score across the three atlases. Deviance-PCA
(highlighted) is competitive with the deep methods scVI and Harmony and matches the
linear pipelines, while substantially outperforming GLM-PCA. A cross ($\times$) marks the
single divergent run (scVI-LD on the lung atlas).}
\label{fig:eval-E01247}
\end{figure}

Three observations follow. First, the closed-form deviance normalization is
\emph{competitive} with the strongest available integration methods: its scIB score sits
within $0.01$--$0.06$ of scVI and Harmony on every atlas and is statistically
indistinguishable from the other linear pipelines (multinomial-PCA, log1p-PCA), despite
requiring no iterative optimization, no neural network, and no batch-aware training.
Second, among count-model normalizations it is decisively the more reliable choice:
GLM-PCA scores well below deviance-PCA on all three atlases and produced no finite
embedding until degenerate genes were removed, whereas deviance-PCA ran unmodified on
every atlas. Third, the deep methods retain an edge in batch mixing on the most
heterogeneous atlas (Harmony on Tabula Sapiens), so we do not claim that a single
normalization supplants integration-specific models; rather, the deviance transform
supplies a cheap, robust, optimization-free baseline that recovers most of their
integration quality. These results support the paper's framing of deviance normalization
as a principled and computationally trivial preprocessing step, not as a replacement for
dedicated integration models.

\subsection{Differential-expression inference}
\label{sec:eval-de}
A normalization is only as useful as the downstream inference it supports, and differential
expression is its canonical test. The two evaluations below isolate one lesson: the choice of
count-model null is nearly invisible to a rank-based detection test and decisive for a parametric
one.
\subsubsection{Rank-based detection: power and false-discovery control}
\label{sec:eval-E01246}

We test each pipeline on a differential-expression detection task with a known ground truth:
on each of the three real atlases, cells are randomly split into two conditions and a
$10\%$ subset of genes is given a controlled signed effect (log$_2$ fold-changes
$\{\pm1, \pm1.585, \pm2.322\}$) by rescaling \emph{only} the injected block, so the
remaining genes are identical between conditions and the injected set is the exact ground
truth. Each pipeline produces a per-gene test statistic; we aggregate to pseudobulk Welch
$t$-tests across replicates and apply Benjamini--Hochberg. We report detection power as
ROC-AUC against the ground-truth labels and calibration as the realized false-discovery
proportion (FDP) at the nominal threshold $q=0.05$.

\begin{table}[tbp]
\centering
\caption{Differential-expression detection: ROC-AUC (power, higher better) and realized
FDP at $q=0.05$ (calibration, should be $\le 0.05$), by pipeline and atlas. The
closed-form count-residual methods (deviance, multinomial, NB, Pearson) combine strong
power with calibrated---often conservative---false-discovery control, whereas scVI and
scVI-LD invert: their denoised expression inflates the FDP three- to thirteen-fold above
nominal. scVI-LD diverged in training on the lung atlas (---).}
\label{tab:eval-E01246}
\begin{tabular}{lcccccc}
\toprule
& \multicolumn{2}{c}{Tabula Sapiens} & \multicolumn{2}{c}{Human Lung} & \multicolumn{2}{c}{Immune} \\
\cmidrule(lr){2-3}\cmidrule(lr){4-5}\cmidrule(lr){6-7}
Pipeline & AUC & FDP & AUC & FDP & AUC & FDP \\
\midrule
Deviance (this work) & $0.732$ & $0.00$ & $0.882$ & $0.00$ & $0.833$ & $0.04$ \\
Multinomial          & $0.731$ & $0.00$ & $0.882$ & $0.00$ & $0.833$ & $0.04$ \\
NB residual          & $0.843$ & $0.03$ & $0.850$ & $0.06$ & $0.889$ & $0.04$ \\
Pearson residual     & $0.832$ & $0.00$ & $0.868$ & $0.04$ & $0.880$ & $0.02$ \\
GLM-PCA              & $0.763$ & $0.00$ & $0.870$ & $0.03$ & $0.874$ & $0.02$ \\
scVI                 & $0.709$ & $0.49$ & $0.754$ & $0.38$ & $0.832$ & $0.68$ \\
scVI-LD              & $0.749$ & $0.00$ & ---     & ---     & $0.643$ & $0.61$ \\
\bottomrule
\end{tabular}
\end{table}

\begin{figure}[tbp]
\centering
\includegraphics[width=0.96\linewidth]{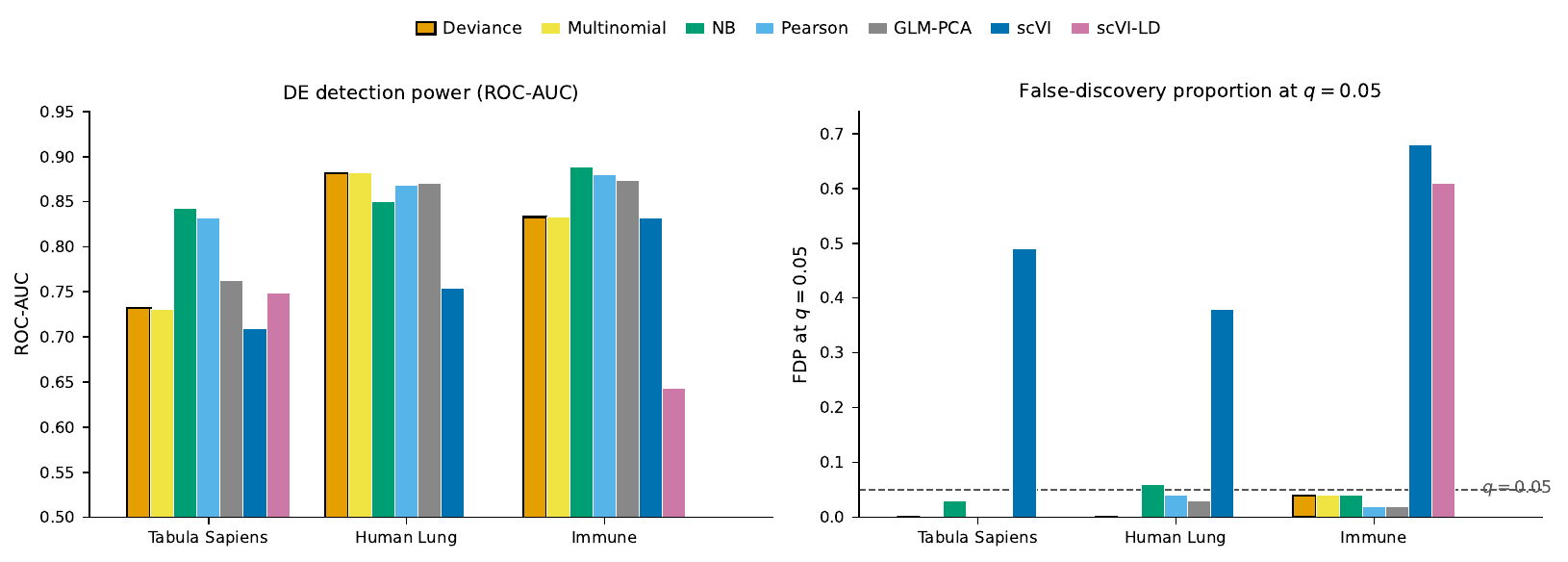}
\caption{Differential-expression detection power (left) and false-discovery control
(right, dashed line at the nominal $q=0.05$) across the three atlases. Deviance (highlighted)
and the other closed-form residual methods sit at or below the nominal FDP line; scVI and
scVI-LD rise far above it.}
\label{fig:eval-E01246}
\end{figure}

The pattern is consistent and informative. The closed-form count-residual normalizations
detect the injected signal with high power (AUC $0.73$--$0.89$) while controlling the
false-discovery proportion at or below the nominal level on every atlas---the deviance
residual in particular reports \emph{zero} false discoveries on two of the three atlases
at full power. The neural denoising pipelines invert this trade-off: scVI's FDP reaches
$0.38$--$0.68$, three- to thirteen-fold above nominal, because its smoothed reconstruction
borrows strength across cells in a way that manufactures apparent between-condition
differences. This is the converse of the integration result of
Section~\ref{sec:eval-E01247}: where the deep models held a small edge in batch mixing, the
closed-form deviance normalization holds a decisive edge in the validity of the downstream
differential test. Together the two evaluations support the paper's central claim---that a
principled, optimization-free count normalization recovers most of the integration quality
of dedicated models while preserving calibrated downstream inference that those models
forfeit.

\subsubsection{A parametric test makes the variance model load-bearing}
\label{sec:eval-E08971}

The rank-based detection above is insensitive to the count model, which is why the closed-form
normalizations are indistinguishable there. A parametric two-group test is not: each count-model
normalization implies a null \emph{variance} for the log-mean contrast $c=\log\bar y_A-\log\bar y_B$,
and the test is calibrated only if that variance matches the real, jointly-overdispersed spread of
single cells. Within a homogeneous population a random split carries no true differential expression,
so the false-positive rate of the two-group test is a direct calibration readout, and the
normalizations separate cleanly into overdispersion regimes.

Across all twenty-eight broad cell types of the cross-tissue atlas \cite{Eraslan2022}, the
Poisson-family null---shared by the multinomial and the analytic Pearson normalizations, which assume
no overdispersion---is anti-conservative throughout, rejecting null genes three to four times too
often (median false-positive rates near $0.17$ against a nominal $0.05$). The per-gene
negative-binomial null, which models each gene's marginal overdispersion but not the joint
composition, halves the inflation but does not remove it (median $0.076$). Only the
Dirichlet--multinomial null---the joint-overdispersion model with a single fitted concentration---is
calibrated, in every cell type (median $0.046$, the unique calibrated method;
Figure~\ref{fig:E08971}, left). The consequence for discovery is the same separation: under injected
differential expression the multinomial null's realized false-discovery proportion runs several-fold
above the level it claims while the Dirichlet--multinomial null holds it, at matched power
(Figure~\ref{fig:E08971}, right).

The comparison must fix the null \emph{model}, not the residual, and the distinction is not
cosmetic. Running a naive unit-variance two-group test directly on each normalization's residual
gives a false-positive rate set entirely by the residual's global variance---the per-cell-type points
lie on the curve $2\,\Phi(-1.96/\sigma)$ (Figure~\ref{fig:E08971-form})---so the signed-root deviance
residual, which is slightly over-shrunk, looks conservative, the heavy-tailed Pearson-family residuals
look anti-conservative, the negative-binomial residual lands near nominal by sitting near unit
variance, and the embedding residuals---the low-rank factorization together with the two neural
latent models---are likewise heavy-tailed and so anti-conservative. That ordering reflects residual
form, a separate axis from overdispersion, and it is the model-based variance above, not the
residual's raw scale, that measures whether a normalization calibrates the parametric test. The
embedding normalizations define no per-gene count-model null, so they appear in this residual-form
diagnostic but not in the count-model calibration comparison; their differential-expression behavior
proper is the rank-based benchmark above, where the neural reconstruction inflates the
false-discovery proportion.

\begin{figure}[t]
\centering
\includegraphics[width=0.66\linewidth]{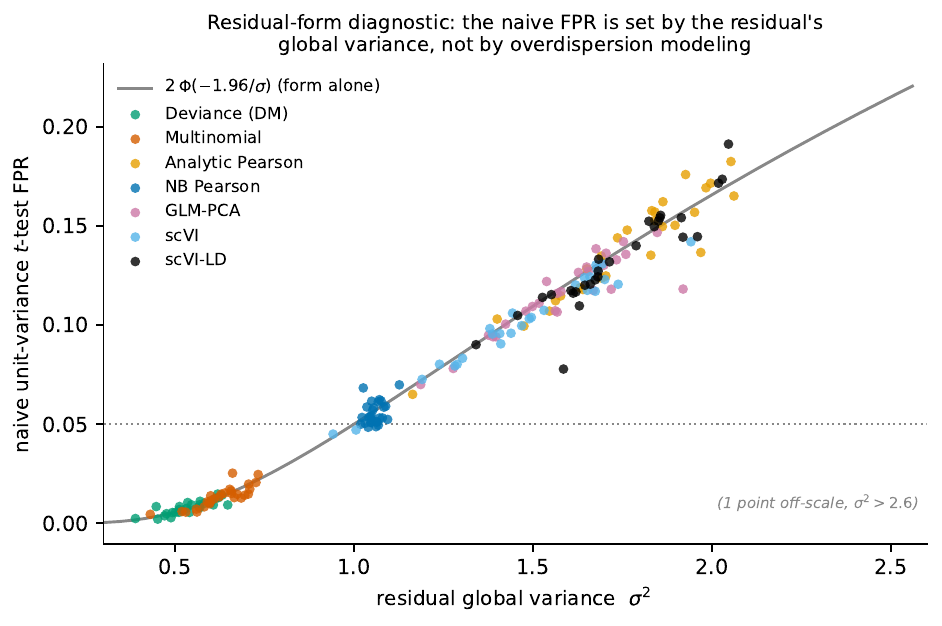}
\caption{Residual-form diagnostic (not a calibration result). A naive unit-variance two-group
$t$-test run directly on each normalization's residual has a false-positive rate fixed by the
residual's global variance alone: every cell-type point falls on $2\,\Phi(-1.96/\sigma)$. The
signed-root deviance residual is over-shrunk (conservative), the Pearson-family residuals are
heavy-tailed (anti-conservative), the per-gene negative-binomial residual sits near unit variance,
and the embedding residuals---the low-rank factorization and the two neural latent models---are
heavy-tailed and anti-conservative. This is residual form, distinct from the overdispersion
calibration of Figure~\ref{fig:E08971}.}
\label{fig:E08971-form}
\end{figure}

\subsection{The concentration is a one-parameter regularizer, and a single scalar suffices}
\label{sec:eval-E01249}

Because $\alpha_0$ is a single scalar, it need not be fit by maximum likelihood at all: one can
sweep it across a spectrum and watch the residual geometry change, exactly as one would vary a
regularization strength. Scanning $\alpha_0$ on a logarithmic grid spanning two decades on either
side of the moment estimate $\hat\alpha_0$ makes this concrete and bounds what the choice costs.
Three properties hold together on the cached peripheral-blood mononuclear data and within the broad
cell-type strata of the cross-tissue atlas \cite{Eraslan2022}.

First, the held-out conditional likelihood is unimodal in $\log\alpha_0$ with an interior optimum,
and that optimum sits within about six tenths of a decade of the moment estimate
(Figure~\ref{fig:E01249}a). The objective is therefore informative enough to locate a sensible
concentration, but its curvature is gentle---the spectrum spans only about a third of a nat per
count on the blood data and half a nat per count within a stratum---so the precise value is not
critical, and the cheap moment estimate lands near the likelihood optimum without an explicit fit.

Second, the downstream geometry is far more forgiving than the likelihood. The accuracy of a
nearest-neighbor label transfer computed on a low-dimensional embedding of the residual matrix is
flat to within about one part in a hundred across the central two decades
$[\hat\alpha_0/10,\,\hat\alpha_0\times 10]$ (Figure~\ref{fig:E01249}b). Two orders of magnitude of
$\alpha_0$ leave the embedding's class structure essentially unchanged: the concentration is a
regularization-strength dial whose useful range is wide, not a knife-edge that must be tuned.

Third, the way the dial acts is exactly the closed-form prediction. Singletons are untouched: the
entry-wise gap $\lvert d^{\mathrm{Mult}}\rvert-\lvert d^{\mathrm{DM}}\rvert$ is zero to machine
precision at every count of one across the whole grid (Proposition~\ref{prop:x1}). The gap is
nonnegative on every repeated count (Proposition~\ref{prop:boundedbymult}) and grows as $\alpha_0$
falls and as the count rises, vanishing in the multinomial limit
(Proposition~\ref{prop:multlimit}); the heatmap of mean shrinkage by count bin and $\log\alpha_0$
(Figure~\ref{fig:E01249}c) shows precisely this monotone structure, with the singleton row pinned at
zero.

The spectrum view raises the natural question of whether one global scalar is enough, given that
overdispersion is heterogeneous across features. We answer it directly by replacing the single
scalar with two richer alternatives and re-measuring held-out likelihood and calibration across the
twelve largest strata (Figure~\ref{fig:E01249}d). Allowing each cell type its own concentration
changes the held-out likelihood per count negligibly relative to one pooled scalar
($+0.003$, paired $95\%$ confidence interval $[-0.001,\,+0.007]$, straddling zero). Allowing each
\emph{feature} its own moment-fitted concentration is, if anything, harmful: it lowers the held-out
likelihood per count by $0.40$ nats (interval $[-0.68,\,-0.24]$), because each per-gene estimate is
formed from a few hundred to about a thousand cells and overfits its training overdispersion rather
than generalizing. It buys no calibration either---the false-positive rate of the null-split log-mean
test is $0.047$ under both the single and the per-feature null, against a nominal $0.05$. The single
scalar is thus not a lossy stand-in for a richer per-feature model; on held-out data it is the
better-generalizing choice, and the simplification the method makes costs nothing it does not repay.

\begin{figure}[t]
\centering
\includegraphics[width=\linewidth]{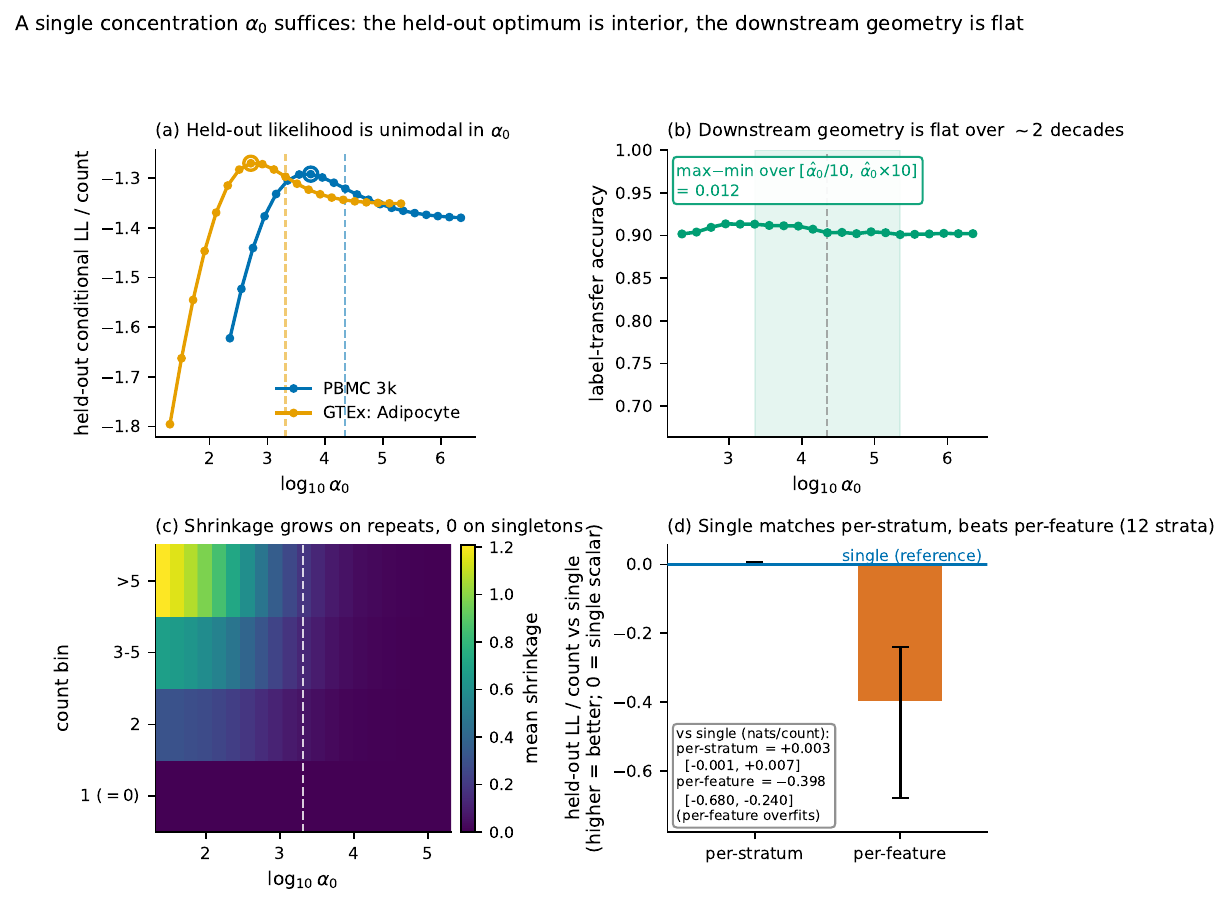}
\caption{The concentration $\alpha_0$ behaves as a one-parameter regularizer, and a single scalar
suffices. \emph{(a)} Held-out conditional Dirichlet--multinomial log-likelihood per count versus
$\log_{10}\alpha_0$ on the blood data and within a representative atlas stratum; the curve is
unimodal with an interior optimum (ring) close to the moment estimate $\hat\alpha_0$ (dashed).
\emph{(b)} Label-transfer accuracy of a nearest-neighbor classifier on a low-dimensional embedding
of the residual matrix is flat to within $0.012$ over the shaded central two decades
$[\hat\alpha_0/10,\,\hat\alpha_0\times 10]$ on real blood clusters---downstream geometry is
insensitive to $\alpha_0$ across two orders of magnitude. \emph{(c)} Mean entry-wise shrinkage
$\lvert d^{\mathrm{Mult}}\rvert-\lvert d^{\mathrm{DM}}\rvert$ by count bin and $\log_{10}\alpha_0$
(atlas stratum): the singleton row is identically zero (Proposition~\ref{prop:x1}); shrinkage grows
toward smaller $\alpha_0$ and larger counts and vanishes in the multinomial limit
(Propositions~\ref{prop:boundedbymult},~\ref{prop:multlimit}). \emph{(d)} Change in held-out
conditional log-likelihood per count, relative to the single pooled scalar (the zero reference
line), for a per-stratum scalar and a per-feature concentration across the twelve largest strata
(paired, bias-corrected accelerated $95\%$ intervals). A per-stratum scalar lands on the single
scalar ($+0.003$ nats/count, interval $[-0.001,+0.007]$, straddling zero), while a per-feature
concentration falls well below it ($-0.398$ nats/count, interval $[-0.680,-0.240]$) because it
overfits the noisy per-gene estimates; the underlying absolute likelihoods sit near $-1.3$ nats per
count (panel a). The single scalar therefore matches the per-stratum value and exceeds the
per-feature concentration.}
\label{fig:E01249}
\end{figure}

\subsection{An exact evidence unit for the DM contrast}
\label{sec:eval-E01250}

The testing interpretation of the DM contrast is only actionable if it comes with a calibrated way to
turn a contrast into evidence. There is one, and it is exact. Splitting a homogeneous population, we
fit the concentration on one part and, on the held-out part, score each row by the likelihood ratio
of the fitted Dirichlet--multinomial against the null. Because a Dirichlet--multinomial probability
mass function integrates to one at any fixed parameter, the held-out ratio has expectation exactly one
under the null, independent of the alphabet size, the row total, or the concentration: across a grid
of sizes and concentrations the measured expectation departs from one by at most $2.8\times10^{-15}$
(Figure~\ref{fig:E01250}(e)). That is the defining property of an $e$-variable, and it makes the
split construction an exact, anytime-valid unit of evidence---accumulable across rows and across a
growing screen, and stoppable at any time---without any appeal to an asymptotic null. The naive
alternative of fitting the concentration and scoring the ratio on the \emph{same} data is not an
$e$-variable: its expectation exceeds one at every concentration and inflates without bound as the
data are reused, by a median factor near $10^{45}$ in the regimes tested, and grows toward the
multinomial limit. Reusing the fit as if it were free is the failure mode the split exists to prevent.

The exactness matters because the obvious shortcut---reading the aggregated deviance against a
$\chi^2$ reference---does not calibrate on the data this paper is about. Under a correctly specified
null, the row-aggregated DM deviance is anti-conservative relative to $\chi^2_{K-1}$ on sparse counts
(Kolmogorov--Smirnov statistic $0.20$ against the asymptotic reference; Figure~\ref{fig:E01250}(a)),
and the column-aggregated deviance departs more strongly still, because the saturated-against-
null deviance approaches its $\chi^2$ limit only when the per-cell counts are large, and single-cell
counts are mostly ones and zeros. The miscalibration sharpens, as expected, in the small-library
stratum and relaxes where row totals are large. A multinomial null shows the same asymptotic
fragility. The reading is not that the DM deviance is poorly behaved but that its
\emph{asymptotic} calibration is the wrong tool here: the per-row, finite-sample, exactly-calibrated
$e$-variable is what licenses reporting DM evidence and monitoring it as a screen streams, and it is
available in closed form at the same per-nonzero cost as the residual.

\begin{figure}[t]
\centering
\includegraphics[width=\linewidth]{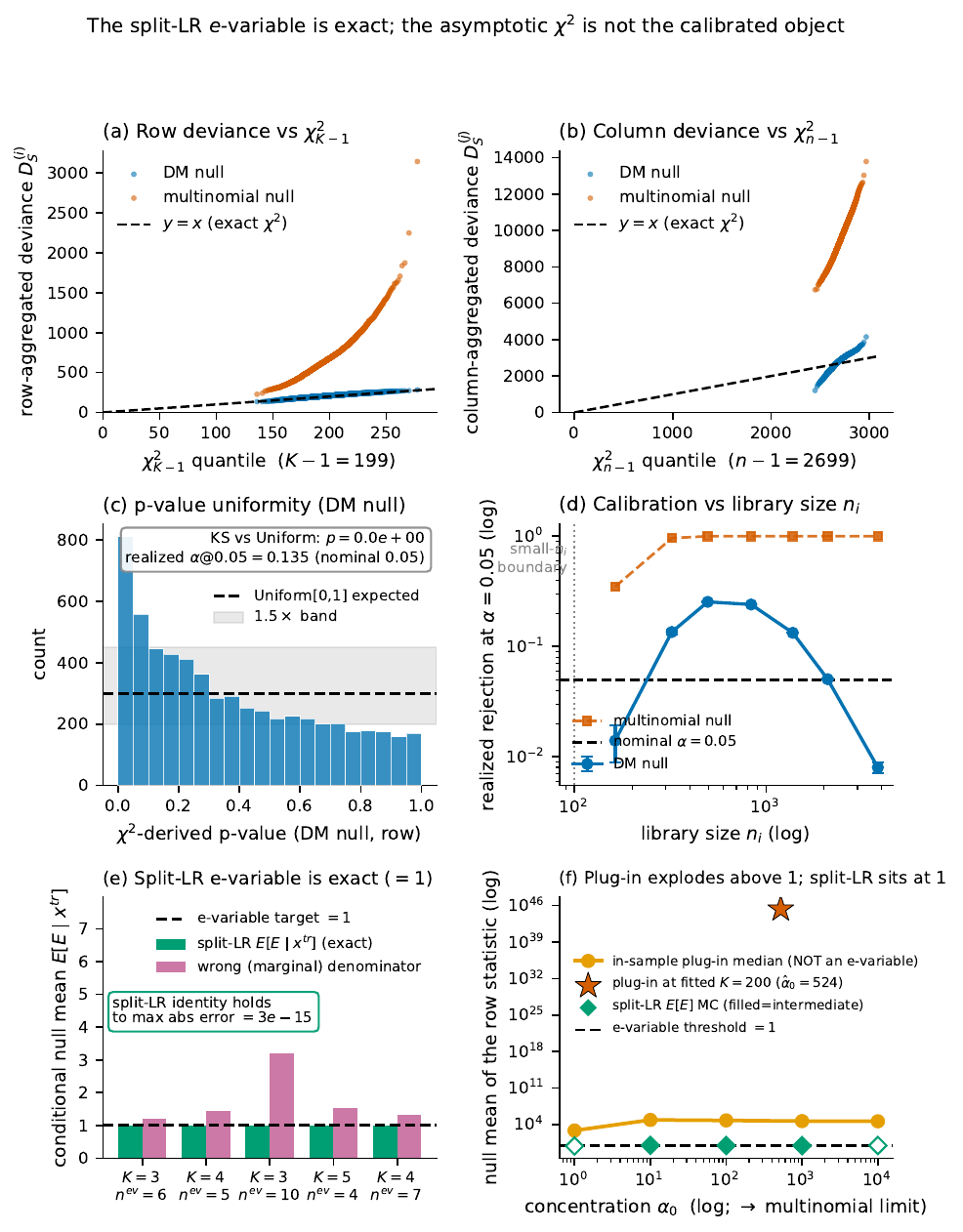}
\caption{An exact evidence unit for the Dirichlet--multinomial contrast. The null is fitted on a
peripheral-blood single-cell split ($K=200$ genes, $n=2700$ cells, moment concentration
$\hat\alpha_0=524$) and the calibration panels draw $N_{\mathrm{sim}}=1000$ synthetic null matrices
at the empirical row totals. \emph{(a,b)} Under a
correctly specified null, the row- and column-aggregated DM deviance against the asymptotic
$\chi^2_{K-1}$ and $\chi^2_{n-1}$ references: the Dirichlet--multinomial null tracks the $\chi^2$ line
far more closely than the multinomial null but is itself anti-conservative on sparse counts.
\emph{(c)} The $\chi^2$-derived $p$-values under the DM null are not uniform, so the asymptotic
calibration is only a rough guide. \emph{(d)} The realized rejection rate at nominal $\alpha=0.05$
across library sizes is anti-conservative for both nulls, worst in the small-library stratum, and far
closer to nominal for the DM null. \emph{(e)} The split-likelihood-ratio $e$-variable has conditional
null expectation one to machine precision, whereas the wrong (marginal) denominator does not.
\emph{(f)} Across concentrations the in-sample plug-in ratio inflates by many orders of magnitude
(log scale) and grows toward the multinomial limit while the split-LR $e$-variable sits at one---an
exact, anytime-valid evidence unit on the one hand, an invalid one on the other.}
\label{fig:E01250}
\end{figure}

\subsection{Fitting the DM to data it cannot represent: graceful degradation and do no harm}
\label{sec:eval-E01253}

The spectrum of Section~\ref{sec:spectrum} has one limitation by construction: the flat DM ties all
extra-multinomial variation to a single concentration and cannot represent arbitrary covariance on
log-ratios, where a latent-Gaussian model can \cite{AitchisonShen1980,AitchisonHo1989,XiaChenFungLi2013}.
The practical question is what this costs when the data really is generated by such a model, or by
some other mechanism outside the DM family. We answer it by generating from six alternatives the DM
cannot represent exactly---isotropic, low-rank, and full-rank logistic-normal multinomial models, a
Poisson-lognormal model, a zero-inflated DM, and a two-component composition mixture---fitting the
flat DM with its single estimated concentration, and asking three questions of the result, each
evaluated against a tractable reference rather than the latent model's intractable marginal
likelihood.

The first is whether the flat DM still earns its keep over the multinomial. On a held-out split, the
conditional log-likelihood of the flat DM is at least that of the multinomial on every one of the six
mechanisms and at both scales, without exception across all replicates, and the margin grows as the
generating mechanism departs further from a multinomial: it is a few hundredths of a nat per count
when a logistic-normal latent is weak and the data is nearly multinomial, and well over a nat per
count when the mechanism injects strong joint overdispersion, as the zero-inflated and mixture
generators do (Figure~\ref{fig:E01253}, left). The single concentration parameter keeps absorbing the
joint overdispersion the multinomial ignores, even when the true dependence has a structure the DM
cannot name.

The second question is how much of the true residual structure the flat DM recovers. Because the
counts are generated from a known per-cell composition, the exact residual the oracle would leave is
available in closed form, and we correlate the flat-DM residual with it. Here the matched-variance
design is essential: scaling each latent covariance to the same total variance separates the
\emph{rank} of the dependence from its overall \emph{magnitude}. The correlation falls smoothly as the
magnitude of latent dependence grows, but is almost unchanged across the isotropic, low-rank, and
full-rank covariances at fixed magnitude---the three curves lie on top of one another
(Figure~\ref{fig:E01253}, middle). The flat DM is in this sense blind to the rank of the latent
dependence and sensitive only to its scale, which is the precise content of tying all extra variation
to one parameter: the model degrades with how much joint structure is present, not with how that
structure is arranged.

The third question is whether the misfit does any harm. A natural worry is that fitting an
overdispersed model to data outside its family inflates downstream false positives. It does not. Under
a null random split the parametric two-group test calibrated by the flat-DM variance stays far below
the multinomial, whose Poisson-family variance is badly anti-conservative under every one of these
mechanisms---its false-positive rate runs to $0.6$--$0.8$ against a nominal $0.05$ while the flat DM
remains near the nominal level (Figure~\ref{fig:E01253}, right). The same holds on real data: on the
lowest-variance, housekeeping-like genes of a peripheral-blood reference, where there is no true
differential expression to find, the flat DM's null-split false-positive rate is essentially nominal
and below the multinomial's, and its residuals carry no spurious coupling to sequencing depth. The
caveat is that under the strongest latent dependence the flat DM is itself
anti-conservative rather than exactly calibrated, its worst-case false-positive rate reaching roughly
$0.17$ (about three times nominal); it is far better than the multinomial in every case,
but the single concentration is an approximation, not a cure. Taken together the three readouts say
the tied-concentration assumption degrades gracefully and safely: where the extra flexibility of a
latent-Gaussian model is genuinely needed, that need shows up as a measurable held-out gap, which is
exactly the basis on which a richer model should be preferred.

\begin{figure}[t]
\centering
\includegraphics[width=\linewidth]{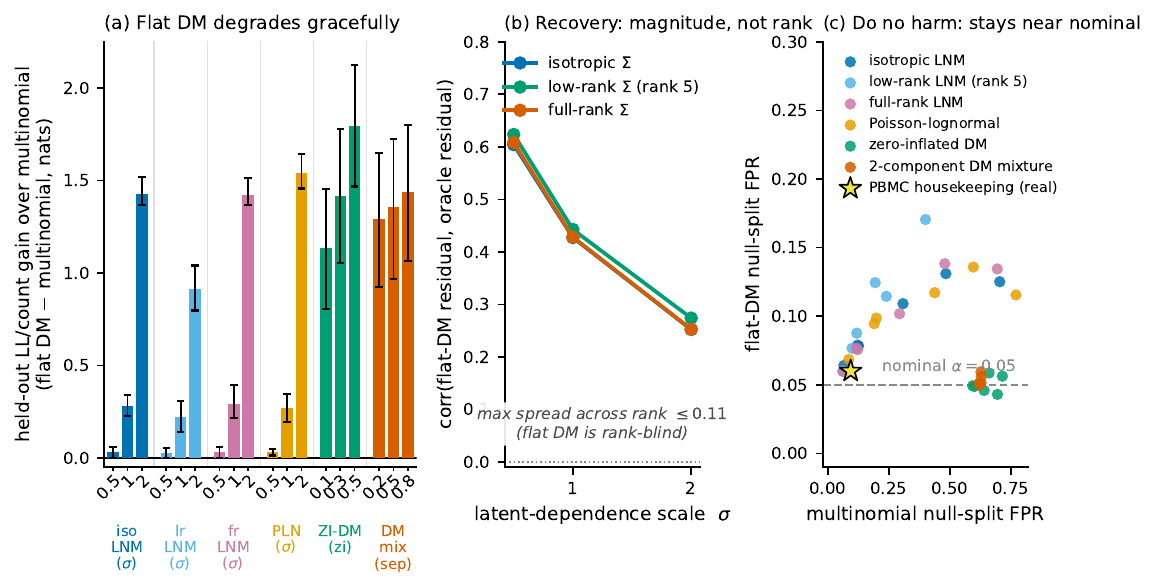}
\caption{Fitting the flat Dirichlet--multinomial to data it cannot represent: it degrades gracefully
and does no harm. \emph{Left:} held-out conditional log-likelihood per count of the flat DM minus the
multinomial, for each non-DM generating mechanism and severity (pooled over both scales); the gain is
non-negative throughout and widens as the mechanism departs further from a multinomial (95\%
intervals). \emph{Middle:} correlation between the flat-DM residual and the exact oracle residual
built from the true per-cell composition, as a function of the latent-dependence scale $\sigma$, for
the isotropic, low-rank, and full-rank logistic-normal covariances held to the same total variance;
the correlation falls with the magnitude of dependence, and the three ranks coincide---the flat DM is
blind to the rank of the latent covariance and sensitive only to its scale. \emph{Right:} null-split
false-positive rate of the parametric two-group test under the flat-DM null variance against the
multinomial null variance, one point per mechanism, severity, and scale, plus the real housekeeping-
gene control ($\star$); every point lies well below the diagonal, the multinomial is anti-conservative
($0.6$--$0.8$) where the flat DM stays near the nominal $0.05$, and the real control manufactures no
spurious differential expression.}
\label{fig:E01253}
\end{figure}

\subsection{The ordered model pays off only when the order is real}
\label{sec:eval-E06033}

The generalized Dirichlet--multinomial of Section~\ref{sec:gdm} buys stage-specific dispersion at a
price: unlike the flat Dirichlet--multinomial, it depends on the chosen feature order, and the
text's guidance is to adopt it only when that order is part of the scientific design rather than
an arbitrary column ordering. That qualifier is testable. Fixing the per-stage concentrations
$\nu_j$ as in \eqref{eq:gdm_stagecell} and scoring held-out per-observed-count conditional
log-likelihood, we ask whether the generalized model's gain survives a random permutation of the
feature order. The parameter count is identical between the true-order model and every permuted
one---both carry $K-1$ stage dispersions---so the difference in held-out likelihood between the
true and a permuted order isolates the contribution of the order itself, separate from the extra
flexibility.

On synthetic counts generated with a concentration that genuinely increases along the order---early
splits strongly overdispersed, later splits near-binomial---the true order improves held-out
conditional log-likelihood over a random permutation by $+0.0024$ nats per observed count (bootstrap
$95\%$ interval $[+0.0022,+0.0027]$; positive in all eight matched train/test splits, sign-test
$p=0.008$; a permutation $p$-value of $0.04$), and the fitted per-stage concentration tracks the
generating profile closely (correlation $0.996$ across the stages whose count budget remains large
enough to identify a dispersion; Figure~\ref{fig:E06033}, right). The companion control settles what the
gain is made of: drawing counts from an exchangeable Dirichlet--multinomial, where no feature
position is special, the same true-versus-permuted gain falls to $0.26\%$ of its stage-varying value
(Figure~\ref{fig:E06033}, left). When there is order structure the model exploits it; when there is none,
permuting costs nothing. The two arms together are the falsification the qualifier needs: the
improvement is the order, not the additional dispersion parameters, which would have helped on a
permuted order just as well. A structural check accompanies the statistical one---scored with the
same fitted concentrations, the comb-tree Dirichlet-tree-multinomial held-out log-likelihood matches
the generalized-DM stage cascade to within $10^{-8}$, confirming at the dataset level that the
ordered model is exactly the comb-tree case of the hierarchy of Section~\ref{sec:dtm}.

The real-data picture is consistent but deliberately bounded. Genomic coordinates---the natural
linear order---are not retained in this assay's processed per-cell-type count blocks, so we order
genes instead by an abundance proxy (mean expression), under which neighbouring features share an
expression regime and may share dispersion. Across eight broad cell types of the cross-tissue atlas
\cite{Eraslan2022}, the true (abundance) order improves held-out conditional log-likelihood over a
random permutation in five of the eight, with a median per-count gain of $+0.0011$; in the remaining
three the permuted order is marginally preferred. This is a genuine but partial signal, smaller and
noisier than a purpose-built order, and it should be read as bounding rather than establishing the
generalized model's reach: an abundance ordering carries some exploitable stage-varying dispersion in
most populations but not all, and a feature set that is merely orderable is not thereby an order the
model can use. The practical reading is the one the text advances---reach for the ordered model when
the order encodes real sequential structure, and default to the permutation-invariant flat model
otherwise.

\begin{figure}[t]
\centering
\includegraphics[width=\linewidth]{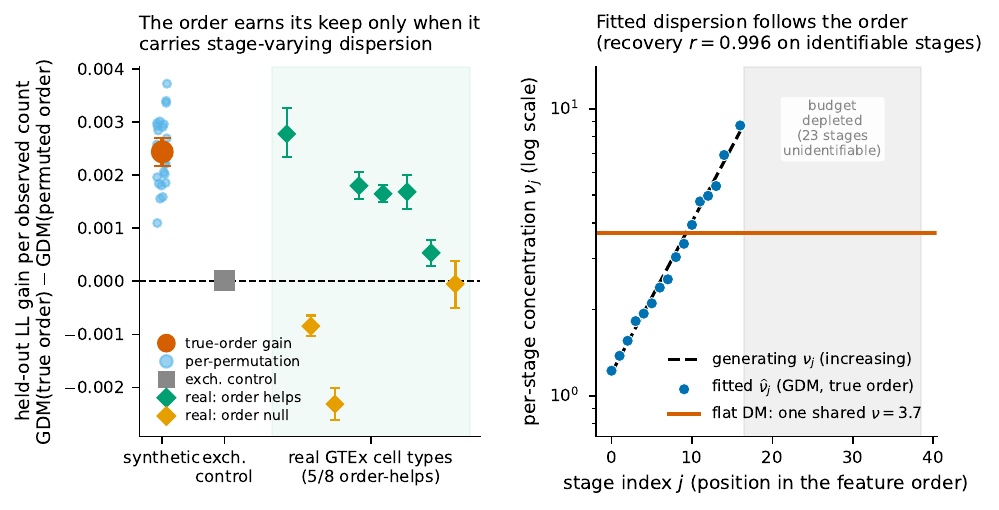}
\caption{The generalized Dirichlet--multinomial earns its order only when the order carries
stage-varying dispersion. \emph{Left:} the order-attributable held-out log-likelihood gain per
observed count, $\mathrm{GDM}(\text{true order})-\mathrm{GDM}(\text{permuted order})$, with the
parameter count held fixed. On synthetic counts whose stage dispersion increases along the order, the
true-order gain (filled circle, $95\%$ bootstrap interval) sits well above the spread of the
per-permutation gains and above zero; on an exchangeable control with no order structure the gain
collapses onto zero (square). Real cross-tissue cell types (diamonds, abundance-proxy order) scatter
about a small positive median---five of eight with the order helping (above zero), three with the
order immaterial. \emph{Right:} the fitted per-stage concentration $\widehat\nu_j$ on the synthetic
true order (points) recovers the increasing generating profile (dashed), against the single shared
concentration the flat model is forced to use (horizontal line); stages whose count budget depletes
are unidentifiable and omitted.}
\label{fig:E06033}
\end{figure}

\subsection{The transform is linear in the nonzeros and keeps them sparse}
\label{sec:eval-E01326}

The closed-form residual is built to drop into a sparse pipeline: each nonzero entry costs one
log-gamma evaluation \eqref{eq:lgammaresid}, the row totals and the fitted composition are computed
in a single pass, and the zeros are never materialized. Those properties are what make the transform
practical at atlas scale, and they are measurable rather than asserted. On synthetic
Dirichlet--multinomial matrices spanning roughly three decades of nonzero count, we fit the empirical
scaling exponent of wall-clock time against the number of nonzeros, $T = c\,\mathrm{nnz}^{\beta}$, by
bootstrap on log-log axes. The transform's exponent is $1.00$ at the median across the five sweeps
(Figure~\ref{fig:E01326}, left), tracking the slope of the pointwise \texttt{log1p} baseline---the
fastest possible nonzero-only normalization---rather than the super-linear slope a dense method would
trace. The fitted concentration likewise scales at most linearly (median exponent below one over the
measured range, where the fixed startup cost still contributes). Timing jitter and cache effects move
individual per-sweep exponents a few percent either side of one; two of the synthetic sweeps read as
marginally super-linear (bootstrap intervals just above one), and they are the smaller-alphabet ones
rather than the largest, a memory-bandwidth and jitter artifact rather than algorithmic growth---the
largest sweep is itself indistinguishable from linear and the per-element throughput is flat, since
the per-element work is constant by construction.

That constancy is direct: the log-gamma evaluation runs at $\approx 5.7$ nanoseconds per nonzero,
flat across a two-hundred-fold range of input sizes. The price relative to \texttt{log1p} is a
constant factor of about fourteen---the cost of a special function over a logarithm---paid once per
nonzero and never growing with the matrix, so the transform inherits \texttt{log1p}'s linear scaling
and, exactly, its sparsity: across one hundred sixty-eight checked configurations the output carries
the same number of nonzeros as the input, with no fill-in. Fitting the concentration is the more
expensive step, two to six times the cost of applying the transform, so the intended usage of fitting
once and transforming many times is the economical one. The contrast with a dense alternative is
categorical rather than constant-factor: a centered-log-ratio normalization, which must form the
dense logarithm of the matrix, exhausts the memory budget on the large cross-tissue matrices (up to
$5.7\times10^{8}$ dense entries) on which the DM transform completes inside the sparse nonzero set.
The computational story the paper tells is therefore the one the measurements show---linear,
sparsity-preserving, and constant-time per nonzero---with the qualifier that ``lightweight''
means a fixed multiple of the cheapest possible pass, not parity with it.

\begin{figure}[t]
\centering
\includegraphics[width=\linewidth]{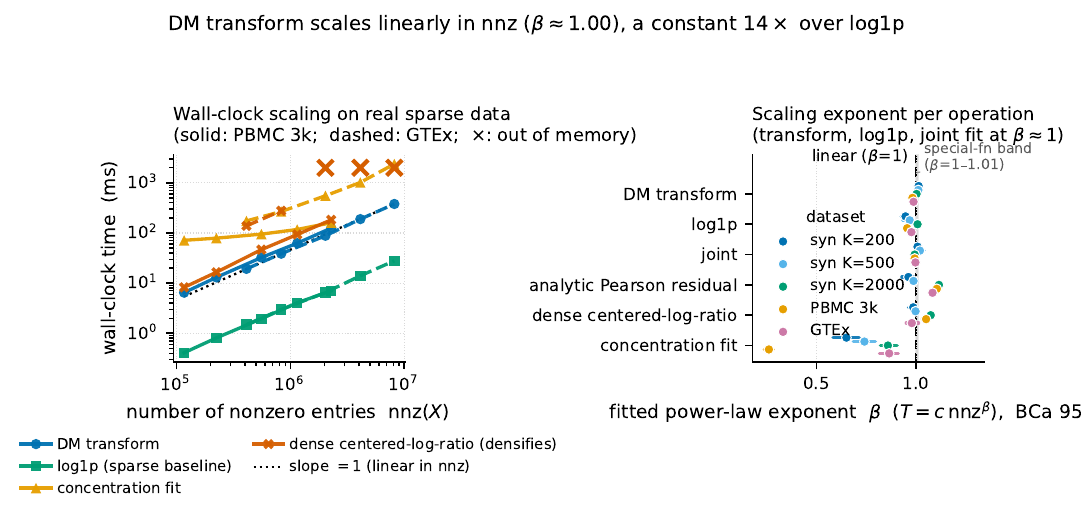}
\caption{The Dirichlet--multinomial residual transform is linear in the number of nonzeros and
preserves the sparsity pattern. \emph{Left:} wall-clock time against nonzero count on log-log axes for
the transform, the one-time concentration fit, and the pointwise \texttt{log1p} reference, across
synthetic sweeps and two real matrices; the transform's fitted exponent is one at the median (median
$\beta=1.00$; slopes annotated with bootstrap intervals), parallel to \texttt{log1p} and a constant
factor of about fourteen above it---the price of a special-function evaluation over a logarithm, paid
once per nonzero. The dense centered-log-ratio baseline cannot be drawn at the large sizes, where it
exhausts memory.
\emph{Right:} processed nonzeros per second is flat across a two-hundred-fold size range (constant
time per nonzero), and the output-versus-input nonzero count lies on the identity line (exact sparsity
preservation).}
\label{fig:E01326}
\end{figure}

\akshay{omitted-eval E08964: sister-paper experiment on a combinatorial CRISPR genetic-interaction map; it uses this paper's Dirichlet-multinomial overdispersion model, but the screen-design application is the subject of a separate paper not yet public}%
\akshay{omitted-eval E08965: sister-paper experiment showing the Dirichlet-multinomial concentration calibrates a real interaction test in a CRISPR screen; the screen application is out of scope here and the companion paper is not yet public}%

\begin{table}[t]
\centering
\small
\setlength{\tabcolsep}{6pt}
\renewcommand{\arraystretch}{1.15}
\begin{tabular}{@{}lr@{}}
\toprule
Method & Held-out LL (per observed count) \\
\midrule
DM, jointly fitted $(\alpha_0,\bm\pi)$   & $\mathbf{-0.302}$ \\
DM, fixed $\bm\pi = \hat{\bm\pi}_{\text{train}}$ & $-0.307$ \\
Multinomial                              & $-0.976$ \\
NB (feature-wise)                        & $-0.762$ \\
\bottomrule
\end{tabular}
\caption{Regime~3 (misspecified dependence): 2-component DM mixture.
Held-out per-observed-count conditional LL on synthetic data generated
from a mixture of two DM distributions with different base compositions.
Both fixed-$\pi$ DM and joint DM dominate the multinomial and NB
baselines; joint DM adds a small additional gain at the cost of $K-1$
extra parameters.}
\label{tab:regime3}
\end{table}

\begin{table}[t]
\centering
\small
\setlength{\tabcolsep}{6pt}
\renewcommand{\arraystretch}{1.15}
\begin{tabular}{@{}lr@{}}
\toprule
Check & Max $|$difference$|$ per sample \\
\midrule
\multicolumn{2}{@{}l}{\emph{DM conditional LL (DirMult) vs NB conditional LL (scipy.stats.nbinom)}} \\
\quad $p = 0.3$  & $1.99 \times 10^{-12}$ \\
\quad $p = 0.5$  & $1.88 \times 10^{-12}$ \\
\quad $p = 0.7$  & $3.07 \times 10^{-12}$ \\
$p$-invariance: max $|$NB cond.$(p_i)$ $-$ NB cond.$(p_j)|$  & $2.16 \times 10^{-12}$ \\
\bottomrule
\end{tabular}
\caption{Regime~2: \emph{independent} numerical verification of
Corollary~\ref{cor:nbdmdeviance}.
The DM conditional log-likelihood is computed
from the DirMult formula (our implementation) and compared, for three
values of the NB success probability $p$, to the independent-NB
conditional log-likelihood computed via
\texttt{scipy.stats.nbinom.logpmf}$\,(x, r_j, p)$ minus
\texttt{scipy.stats.nbinom.logpmf}$\,(n_i, \sum_j r_j, p)$.
Two
predictions follow from the theorem: (i) the NB conditional LL must be
independent of $p$ (row-sum conditioning cancels it), and (ii) it must
equal the DirMult LL at every $p$.
Both are satisfied to
double-precision floating-point error (${\sim}10^{-12}$) across the
$p$ grid, using an implementation path (scipy's NB PMF) that shares no
code with our DirMult formula.}
\label{tab:regime2}
\end{table}

\section{Background on exponential families}

\label{sec:expfam_background}

This appendix records the exponential-family viewpoint underlying the main text. 
Readers who mainly care about the residual formulas may skip directly to Appendix~\ref{app:otherresiduals}.

Exponential families arise whenever one models data through a fixed collection of sufficient statistics. 
A convenient general formulation is as follows. 
Let $\mathcal{X}$ be a sample space, let $f_1,\dots,f_d$ be measurable feature functions on $\mathcal{X}$, and suppose we only observe their expected values
\[
\E_{\hat P_n}[f_k(X)] = M_k,
\qquad k=1,\dots,d
\]
These moment constraints define the affine set
\[
\mathcal{A}
:=
\left\{
Q \in \Delta(\mathcal{X}) :
\E_Q[f_k(X)] = M_k \;\; \forall k
\right\}
\]
Given a prior distribution $P$, the corresponding information projection solves
\[
P_{\mathcal{A}}^\star
=
\argmin_{Q \in \mathcal{A}} D(Q\Vert P)
\]
Under standard regularity conditions, this projection has the familiar exponential-family form
\[
P_{\mathcal{A}}^\star(x)
\propto
P(x)\exp\paren{\sum_{k=1}^d \lambda_k^\star f_k(x)}
\]
Equivalently,
\[
P_{\mathcal{A}}^\star(x)
=
P(x)\exp\paren{\sum_{k=1}^d \lambda_k^\star f_k(x)-A(\bm\lambda^\star)}
\]
where
\[
A(\bm\lambda)
=
\log \E_P\exp\paren{\sum_{k=1}^d \lambda_k f_k(X)}
\]
is the log-partition function \cite{FollmerSchied2011,efron_2022}.

\subsection{Learning with exponential families}

This viewpoint is useful for at least four reasons.

\begin{enumerate}[1.]
\item \textbf{Loss optimality.} The information projection is the robust predictor that minimizes worst-case log loss over the constraint set:
\[
P_{\mathcal{A}}^\star
=
\argmin_{Q \in \Delta(\mathcal{X})}
\max_{P' \in \mathcal{A}}
H(P',Q)
\]
Thus, if the only trustworthy information about the data is encoded by the observed feature means, the exponential-family projection is the canonical likelihood-based predictor.

\item \textbf{Maximum probability / likelihood.} The same distribution is also the one that most plausibly explains the observations under the prior $P$. 
This is the information-theoretic underpinning of maximum-likelihood estimation in exponential families.

\item \textbf{Sufficiency.} The chosen statistics are not arbitrary summaries. 
For families with finite-dimensional sufficient statistics whose dimension does not grow with sample size, the Pitman--Koopman--Darmois theorem shows that one is exactly in the parametric exponential-family setting \cite{fisher1922mathematical,koopman1936distributions,Darmois1935,pitman1936sufficient}. 
This is one of the main reasons exponential families occupy such a central role in statistics.

\item \textbf{Convenience.} Likelihoods are convex in the natural parameters, estimation problems are well behaved, and discrepancies decompose naturally in terms of the sufficient statistics. 
These are precisely the properties exploited by deviance-based normalization.
\end{enumerate}

\subsection{Parameters, support, and representation}

The natural parameters are often not the same as the most familiar textbook parameters. 
For count distributions this matters in two ways.

First, the support must be fixed correctly. 
For example, the multinomial is only an exponential family once the total count $n$ is treated as fixed, because $n$ determines the support. 
The same issue appears throughout this paper: conditioning on the observed sample total $n_i$ is not only statistically natural, it is also the right way to view the resulting joint count models as exponential-family objects.

Second, the DM is best thought of as a \emph{curved} exponential family for fixed $n$, not as a natural exponential family in the narrow sense of \cite{Morris1982}. 
This is the source of some confusion in the literature. 
A natural exponential family uses the observation itself as sufficient statistic, whereas the DM uses a richer indicator representation of the count histogram \cite{Elkan2006}. 
Nothing in the main text depends on the DM being a natural exponential family; what matters is that it still inherits the same likelihood and deviance machinery.

\subsection{Inference and learning}

For an exponential family with density
\[
p_{\bm\lambda}(x)
=
h(x)\exp\paren{\bm\lambda^\top T(x)-A(\bm\lambda)}
\]
the score equation is
\[
\nabla_{\bm\lambda}\log p_{\bm\lambda}(x)
=
T(x)-\nabla A(\bm\lambda)
\]
At the maximum-likelihood estimate, the fitted sufficient statistics match the observed ones. 
The Hessian is
\[
-\nabla^2_{\bm\lambda}\log p_{\bm\lambda}(x)
=
\nabla^2 A(\bm\lambda)
\]
the Fisher information, which is the covariance of the sufficient statistic under $p_{\bm\lambda}$. 
These standard identities explain the stability of likelihood fitting and the asymptotic normality of the resulting estimators \cite{MacCullaghNelder1989,efron_2022}.

In the main text, the tied null family $\bm\alpha=\alpha_0\bm\pi$ is especially simple because once $\bm\pi$ is fixed, only one scalar parameter remains. 
The one-dimensional score and Hessian in Section~\ref{sec:alpha0} are therefore the direct analogue of classical likelihood fitting in a generalized linear model, specialized to the DM concentration parameter.

\subsection{Deviance and likelihood geometry}

Suppose $P_{\bm\lambda_1}$ and $P_{\bm\lambda_2}$ are two members of the same exponential family. 
Then the log-loss difference between using $\bm\lambda_2$ rather than $\bm\lambda_1$ is governed by the corresponding Kullback--Leibler divergence. 
This is the origin of deviance. 
Up to the conventional factor of $2$, deviance is simply the log-likelihood gap between a fitted model and a benchmark model, often the saturated model \cite{Hastie1987,MacCullaghNelder1989,efron_2022}.

Another useful identity is the cross-entropy decomposition
\[
H(P,P_{\bm\lambda})
=
D(P\Vert P_{\bm\lambda}) + H(P)
\]
From the perspective of fitting the model, minimizing cross-entropy and minimizing KL divergence from the data are therefore the same task. 
This is why deviance-based losses are the natural replacement for squared loss when the sampling model is non-Gaussian.

For feature-wise count models, the resulting per-observation deviance contributions are nonnegative. 
For joint constrained models such as the multinomial or DM, sample-wise deviance still decomposes additively, but the entry-wise additive terms can be signed. 
That is why the main text adopts a signed square-root transform of the additive likelihood terms rather than insisting on an exact nonnegative entry-wise decomposition.

\subsection{Count models as exponential families}

\begin{table}[ht]
\centering
\small
\setlength{\tabcolsep}{4pt}
\resizebox{\textwidth}{!}{%
\begin{tabular}{@{}lllll@{}}
\toprule
Distribution & Support & Parameters & Sufficient statistics & Comment \\
\midrule
Multinomial &
$\bm x \in \N^K,\ \sum_j x_j=n$ &
$\bm p \in \Delta^{K-1}$ &
$\bm x$ &
Natural joint model of a fixed total \\

Dirichlet &
$\bm x \in (0,1)^K,\ \sum_j x_j=1$ &
$\bm\alpha > 0$ &
$\log \bm x$ &
Continuous compositional model \\

Dirichlet--multinomial &
$\bm x \in \N^K,\ \sum_j x_j=n$ &
$\bm\alpha > 0$ &
$\{\ind(m < x_k)\}_{m,k}$ &
Symmetric overdispersed fixed-total model \\

Generalized Dirichlet--multinomial &
$\bm x \in \N^K,\ \sum_j x_j=n$ &
$(\alpha_j,\beta_j)_{j=1}^{K-1}>0$ &
$\{\ind(m<x_j),\ind(m<R_j),\ind(m<R_{j-1})\}_{j,m}$ &
Ordered stage-wise extension \\

Dirichlet-tree multinomial &
$\bm x \in \N^K,\ \sum_j x_j=n$ and tree $T$ &
$(\bm\alpha_\nu)_{\nu\in\mathcal I}>0$ &
$\{\ind(m<x_{\nu c})\}_{\nu,c,m}$ &
Tree-structured multiscale extension \\
\bottomrule
\end{tabular}%
}
\caption{A schematic summary of the count models most relevant to this paper. 
For the DM, a convenient sufficient-statistic representation uses the indicator collection $\{\ind(m<x_k)\}$ for $m=0,\dots,n-1$ and $k=1,\dots,K$. 
For the generalized DM, $R_j:=\sum_{\ell=j+1}^K x_\ell$ denotes the remainder count after stage $j$. 
For the DTM, $x_{\nu c}$ denotes the count flowing from internal node $\nu$ into child subtree $c$.}
\label{tab:count_expfams}
\end{table}

\section{Deviance-style residual transforms for multivariate counts}
\label{app:otherresiduals}

This appendix collects the residual transforms discussed in the main text. 
In all cases,
\[
n_i := \sum_{j=1}^K X_{ij}
\qquad
\pi_j := \frac{\sum_{i=1}^n X_{ij}}{\sum_{i=1}^n \sum_{k=1}^K X_{ik}}
\]

\paragraph{A family resemblance among count-model residuals.}
Placed side by side, the standard residual formulas are strikingly similar.
They all compare the same observation to the same null mean or null composition and then take a signed square root of the corresponding log-likelihood contrast.
The term $X_{ij}\log(X_{ij}/(n_i\pi_j))$ is already the backbone of the binomial, Poisson, negative-binomial, and multinomial formulas; the Poisson and NB add the usual mass and dispersion corrections, the dense Dirichlet replaces raw counts by positive pseudocount-augmented compositions, and the DM replaces the single logarithm by a threshold sum that discounts repeated counts according to $\alpha_0$.
The limiting relations $r_j\to\infty$ and $\alpha_0\to\infty$ make this especially transparent.
Read this way, the DM is not an isolated construction, but the joint overdispersed member of the same deviance-style family. 

\begin{table}[p]
\centering
\small
\setlength{\tabcolsep}{4pt}
\renewcommand{\arraystretch}{1.3}
\begin{tabular}{@{}p{0.22\textwidth}p{0.71\textwidth}@{}}
\toprule
Distribution & Signed residual transform \\
\midrule
Binomial &
$\displaystyle
\sgn(X_{ij}-n_i\pi_j)
\sqrt{
\abs{
X_{ij}\log\frac{X_{ij}}{n_i\pi_j}
+
(n_i-X_{ij})\log\frac{n_i-X_{ij}}{n_i(1-\pi_j)}
}
}
$ \\

Dirichlet (dense pseudocount) &
$\operatorname{sgn}( \tilde X_{ij} - \tilde n_i \tilde\pi_j)
\sqrt{\left|
\alpha_0^{\mathrm{null}}(\tilde q_{ij}-\tilde\pi_j)\log \tilde q_{ij}
-\log \dfrac{\Gamma(\alpha_0^{\mathrm{null}}\tilde q_{ij})}
{\Gamma(\alpha_0^{\mathrm{null}}\tilde\pi_j)}
\right|}$ \\

Poisson &
$\displaystyle
\sgn(X_{ij}-n_i\pi_j)
\sqrt{
\abs{
X_{ij}\log\frac{X_{ij}}{n_i\pi_j}
-
(X_{ij}-n_i\pi_j)
}
}
$ \\

Negative binomial &
$\displaystyle
\sgn(X_{ij}-n_i\pi_j)
\sqrt{
\abs{
X_{ij}\log\frac{X_{ij}}{n_i\pi_j}
+
(r_j+X_{ij})\log\frac{r_j+n_i\pi_j}{r_j+X_{ij}}
}
}
$ \\

Multinomial &
$\displaystyle
\sgn(X_{ij}-n_i\pi_j)
\sqrt{
\abs{
X_{ij}\log\frac{X_{ij}}{n_i\pi_j}
}
}
$ \\

Dirichlet--multinomial &
$\displaystyle
\sgn(X_{ij}-n_i\pi_j)
\sqrt{
\abs{
\sum_{m=0}^{X_{ij}-1}
\log\paren{\frac{\alpha_0^{\mathrm{null}}X_{ij}/n_i+m}{\alpha_0^{\mathrm{null}}\pi_j+m}}
}
}
$ \\

Dirichlet-tree multinomial (node $\nu,c$) &
$\displaystyle
\sgn(X_{i,\nu c}-N_{i\nu}\pi_{\nu c})
\sqrt{
\abs{
\sum_{m=0}^{X_{i,\nu c}-1}
\log\paren{\frac{\alpha_{\nu0}^{\mathrm{null}}X_{i,\nu c}/N_{i\nu}+m}{\alpha_{\nu0}^{\mathrm{null}}\pi_{\nu c}+m}}
}
}
$ \\
\bottomrule
\end{tabular}
\caption{Signed residual transforms for the main count models discussed in this paper. For the
Dirichlet row, $\tilde X_{ij}=X_{ij}+1$, $\tilde n_i=\sum_j \tilde X_{ij}$,
$\tilde q_{ij}=\tilde X_{ij}/\tilde n_i$, and
$\tilde\pi_j=\sum_i \tilde X_{ij}/\sum_i \tilde n_i$.
The multinomial row uses the same signed square-root convention applied to its additive log-likelihood term $X_{ij}\log(X_{ij}/(n_i\pi_j))$; unlike feature-wise Poisson or NB residuals, these joint-model entries may be signed.
The Dirichlet row is a dense pseudocount baseline on positive
compositions.
The Dirichlet, DM, and DTM rows use the same signed square-root convention for
additive likelihood terms under fixed concentration parameters.
The DTM row is indexed by
internal-node/child pairs rather than by original leaves.}
\label{tab:devresid}
\end{table}

\subsection{Feature-wise models}

For completeness, we record the standard null models used in feature-wise deviance normalization:
\begin{align*}
X_{ij}\mid n_i,\pi_j &\sim \Bin(n_i,\pi_j) \\
X_{ij}\mid \lambda_{ij} &\sim \Poi(\lambda_{ij}) \\
X_{ij}\mid \mu_{ij},r_j &\sim \NegBin(\mu_{ij},r_j)
\end{align*}
Their signed residual transforms are the classical binomial, Poisson, and negative-binomial deviance residuals, with means fitted under the shared null $\mu_{ij}=n_i\pi_j$.

These models are attractive because each residual is a genuine per-entry likelihood discrepancy. 
Their limitation is that the joint dependence between features is not modeled explicitly.

\subsection{Joint models}

The multinomial, DM, and DTM instead model the entire sample composition. 
For the multinomial,
\[
X_i \mid n_i,\bm\pi \sim \Mult(n_i,\bm\pi)
\]
and the sample-wise deviance is
\[
2 \sum_{j=1}^K X_{ij}\log\frac{X_{ij}}{n_i\pi_j}
=
2 n_i \KL\paren{\frac{X_i}{n_i}\Big\Vert \bm\pi}
\]
The entry-wise transform in Table~\ref{tab:devresid} is the signed square-root of the additive term inside this sample deviance.

For the DM, the analogous signed transform is given in \eqref{eq:dmresid}. 
The principal practical difference is that the DM discounts repeated counts through the concentration parameter $\alpha_0$ while leaving zeros exactly unchanged. 
For the DTM, the same idea is applied node by node on aggregated subtree counts; the resulting coordinates are branch-level rather than leaf-level and follow \eqref{eq:dtm_branchresid}.

\subsection{Dirichlet residuals and dense pseudocount transforms}\label{app:dirichlet-dense}
One can also treat each sample as a positive composition and fit a Dirichlet null.
This is a useful
dense baseline because it is the continuous compositional analogue of the DM, but unlike the DM
it cannot be applied directly to sparse counts containing exact zeros.
We therefore add a unit
pseudocount
\[
\tilde X_{ij}:=X_{ij}+1 \qquad
\tilde n_i:=\sum_{j=1}^K \tilde X_{ij} \qquad
\tilde q_{ij}:=\frac{\tilde X_{ij}}{\tilde n_i} \qquad
\tilde\pi_j:=\frac{\sum_{i=1}^n \tilde X_{ij}}{\sum_{i=1}^n \tilde n_i} 
\]
The tied Dirichlet null is
\[
\tilde q_i \sim \operatorname{Dir}(\alpha_0\tilde\pi)
\]
and, in parallel with the DM construction, we compare it to the sample-specific Dirichlet model
$\operatorname{Dir}(\alpha_0\tilde q_i)$, 
keeping the same concentration parameter and letting only the composition vary.

The per-sample log-likelihood contrast again decomposes over coordinates:
\[
\Delta_i^{\mathrm{Dir}}=\sum_{j=1}^K c_{ij}^{\mathrm{Dir}},
\qquad
c_{ij}^{\mathrm{Dir}}
:=
\alpha_0(\tilde q_{ij} - \tilde\pi_j)\log \tilde q_{ij}
-\log\frac{\Gamma(\alpha_0 \tilde q_{ij})}{\Gamma(\alpha_0 \tilde\pi_j)}
\]
Following the same sign convention as for the multinomial and DM rows, we define
\[
d_{ij}^{\mathrm{Dir}}
:=
\operatorname{sgn}(\tilde q_{ij}-\tilde\pi_j)\sqrt{\left|c_{ij}^{\mathrm{Dir}}\right|}
\]

This is a deliberately dense baseline: after pseudocount augmentation every entry is positive, so
every original zero is modified and the output is fully dense. 
That loss of exact sparsity is a major disadvantage of the Dirichlet baseline relative to the DM transform, even though the two comparisons are otherwise formally parallel. 
When only working with the rowwise log-likelihood difference $\Delta_i^{\mathrm{Dir}}$, the term involving $\log\tilde q_{ij}$ can be rewritten in terms of
$\log\tilde X_{ij}$ after summing over $j$; we keep the proportion form here because it makes the
Dirichlet support explicit and matches the chosen entry-wise decomposition. 

So we do not promote a single ``Dirichlet residual'' row in Table~\ref{tab:devresid}. 
The point of including the Dirichlet family in the discussion is precisely to mark the boundary between dense pseudocount-based compositional normalization and the exact sparse likelihood contrasts developed here. 

\section{Proofs}
\label{sec:apdxproofs}

\begin{proof}[Proof of Theorem~\ref{thm:poitomult}]
For any count vector $x=(x_1,\dots,x_K)$ with total $n_i=\sum_j x_j$,
\[
\text{Pr}(X_i=x)
=
\prod_{j=1}^K e^{-\lambda_{ij}}\frac{\lambda_{ij}^{x_j}}{x_j!}
=
e^{-\Lambda_i}\frac{\prod_{j=1}^K \lambda_{ij}^{x_j}}{\prod_{j=1}^K x_j!},
\qquad
\Lambda_i:=\sum_{j=1}^K \lambda_{ij}
\]
The total count is $N_i:=\sum_j X_{ij}\sim \Poi(\Lambda_i)$, so
\[
\text{Pr}(N_i=n_i)=e^{-\Lambda_i}\frac{\Lambda_i^{n_i}}{n_i!}
\]
Therefore
\[
\text{Pr}(X_i=x\mid N_i=n_i)
=
\frac{\text{Pr}(X_i=x)}{\text{Pr}(N_i=n_i)}
=
\frac{n_i!}{\prod_{j=1}^K x_j!}
\prod_{j=1}^K\left(\frac{\lambda_{ij}}{\Lambda_i}\right)^{x_j}
\]
which is exactly the multinomial pmf with probabilities $\lambda_{ij}/\Lambda_i$.
\end{proof}

\begin{proof}[Proof of Proposition~\ref{prop:dmsaturation}]
The DM is a Dirichlet mixture of multinomials:
\[
\text{Pr}_{\DirMult(n,\bm\alpha)}(X=x)
=
\int_{\Delta^{K-1}}
\text{Pr}_{\Mult(n,\bm p)}(X=x)\,
\Dir(d\bm p;\bm\alpha)
\]
Therefore, 
\[
\text{Pr}_{\DirMult(n,\bm\alpha)}(X=x)
\le
\sup_{\bm p\in\Delta^{K-1}}
\text{Pr}_{\Mult(n,\bm p)}(X=x)
\]
The multinomial supremum is attained at $\bm p=x/n$, so the right-hand side equals $\text{Pr}_{\Mult(n,x/n)}(X=x)$. 
Conversely, if $\bm\alpha=t\,x/n$ and $t\to\infty$, then $\Dir(\bm\alpha)$ concentrates at $x/n$, so the DM mixture converges to $\Mult(n,x/n)$ and reaches the same likelihood in the limit.
\end{proof}

\begin{proof}[Proof of Proposition~\ref{prop:x1}]
If $X_{ij}=1$, the sum in \eqref{eq:dmcellterm} has only the term $m=0$, and
\[
c_{ij}^{\mathrm{DM}} = \log\frac{\alpha_0/n_i}{\alpha_0\pi_j} = \log\frac{1}{n_i\pi_j}
\]
Substituting this into \eqref{eq:dmresid} gives
\[
d_{ij}^{\mathrm{DM}}
=
\sgn(1-n_i\pi_j)\sqrt{\abs{\log\frac{1}{n_i\pi_j}}}
\]
which is exactly the multinomial formula.
\end{proof}

\begin{proof}[Proof of Proposition~\ref{prop:boundedbymult}]
Write $a=\alpha_0x/n_i$ and $b=\alpha_0\pi_j$. 
Every term in \eqref{eq:dmcellterm} has the same sign as $a-b$, and therefore the same sign as $x-n_i\pi_j$. 
Moreover,
\[
f(m):=\log\frac{a+m}{b+m}
\]
moves monotonically toward $0$ as $m$ increases, so $\abs{f(m)} \le \abs{f(0)} = \abs{\log(a/b)} = \abs{\log(x/(n_i\pi_j))}$. 
Summing over $m=0,\dots,x-1$ gives the result.
\end{proof}

\begin{proof}[Proof of Proposition~\ref{prop:multlimit}]
For each fixed $m \in \{0,\dots,x-1\}$,
\[
\lim_{\alpha_0\to\infty}
\log\paren{\frac{\alpha_0 x/n_i + m}{\alpha_0\pi_j + m}}
=
\log\frac{x}{n_i\pi_j}
\]
There are only $x$ terms, so the limit can be taken termwise.
\end{proof}

\begin{proof}[Proof of Proposition~\ref{prop:gdmexpfam}]
Expand each beta-binomial factor in \eqref{eq:gdm_pmf} using the identity
\[
\log\frac{\Gamma(a+x)}{\Gamma(a)} = \sum_{m=0}^{x-1}\log(a+m),
\qquad x\in\N
\]
and sum the resulting stage-wise terms. 
This is exactly the same calculation as for the DM, now applied to the sequence of two-category DM factors.
\end{proof}

\begin{proof}[Proof of Proposition~\ref{prop:nbfactorization}]
Write $r_+:=\sum_{j=1}^K r_j=\alpha_0$.
Multiplying the negative-binomial pmfs and collecting the factors depending only on $n_i=\sum_j x_{ij}$ gives
\[
\prod_{j=1}^K
\frac{\Gamma(x_{ij}+r_j)}{\Gamma(r_j) \Gamma(x_{ij}+1)}
(1-p_i)^{x_{ij}}p_i^{r_j}
=
\frac{\Gamma(n_i+r_+)}{\Gamma(r_+)\Gamma(n_i+1)}(1-p_i)^{n_i}p_i^{r_+} 
\frac{\Gamma(n_i+1)\Gamma(r_+)}{\Gamma(n_i+r_+)}
\prod_{j=1}^K
\frac{\Gamma(x_{ij}+r_j)}{\Gamma(r_j)\Gamma(x_{ij}+1)}
\]
The first factor is the NB pmf of $N_i$, and the second is the DM pmf with parameter vector $\bm r=\alpha_0\bm\pi$.
\end{proof}

\begin{proof}[Proof of Corollary~\ref{cor:nbdmdeviance}]
Under the null $r_j=\alpha_0\pi_j$, Proposition~\ref{prop:nbfactorization} gives
\[
\ell_i^{\mathrm{indNB,null}}
=
\ell_i^{\mathrm{NB\ total}}(n_i;\alpha_0,p_i)
+
\ell_i^{\mathrm{DM}}(x_i\mid n_i;\alpha_0\bm\pi)
\]
Under the comparison model $r_{ij}^{\mathrm{cmp}}=\alpha_0X_{ij}/n_i$, the shape parameters still sum to $\alpha_0$, so the total-count factor is unchanged:
\[
\ell_i^{\mathrm{indNB,cmp}}
=
\ell_i^{\mathrm{NB\ total}}(n_i;\alpha_0,p_i)
+
\ell_i^{\mathrm{DM}}(x_i\mid n_i;\alpha_0 q_i),
\qquad
q_i:=X_i/n_i
\]
Subtracting cancels the common total-count term and leaves
\[
\Delta_i^{\mathrm{indNB}}
=
\ell_i^{\mathrm{DM}}(x_i\mid n_i;\alpha_0 q_i)
-
\ell_i^{\mathrm{DM}}(x_i\mid n_i;\alpha_0\bm\pi)
=
\Delta_i^{\mathrm{DM}}
\]
Multiplying by $2$ gives the corresponding equality of twice-the-contrast statistics.
\end{proof}

\begin{proof}[Proof of Corollary~\ref{cor:dm-e-process}]
\textbf{Step 1: fixed-alternative likelihood-ratio bound against the correct conditional law.}
Let $L$ be any probability law on the count simplex $\{y:\sum_j y_j=m\}$, and let $q^\star\in\Delta^{K-1}$ be a deterministic composition.
The likelihood-ratio identity gives, under $Y\sim L$,
\begin{equation}
\label{eq:dme-fixedalt-bound}
\mathbb{E}_{Y\sim L}\!\left[\frac{p_{\DirMult(m,\alpha_0 q^\star)}(Y\mid m)}{p_{L}(Y\mid m)}\right]
=
\sum_{y:\sum_j y_j = m} p_{\DirMult(m,\alpha_0 q^\star)}(y\mid m)
\le 1
\end{equation}
with equality whenever the numerator PMF is properly normalized on the support of $L$.
The bound holds for the \emph{actual} sampling law $L$ in the denominator; using any other law there breaks the identity.

\textbf{Step 2: the eval-side conditional null is the posterior-predictive DM.}
Under the null, $X_i\mid n_i\sim\DirMult(n_i,\alpha_0\bm\pi)$ can be generated by first drawing a composition $P_i\sim\mathrm{Dir}(\alpha_0\bm\pi)$ and then $X_i\mid P_i,n_i\sim\mathrm{Mult}(n_i,P_i)$.
Independent binomial thinning of $X_i$ at rate $\tfrac12$, conditional on $P_i$, partitions the multinomial draw into two conditionally independent multinomial draws with the same composition parameter:
\[
X_i^{\mathrm{tr}}\mid P_i,n_i^{\mathrm{tr}},n_i^{\mathrm{ev}}\sim\mathrm{Mult}(n_i^{\mathrm{tr}},P_i),
\quad
X_i^{\mathrm{ev}}\mid P_i,n_i^{\mathrm{tr}},n_i^{\mathrm{ev}}\sim\mathrm{Mult}(n_i^{\mathrm{ev}},P_i)
\]
and $X_i^{\mathrm{tr}}\perp X_i^{\mathrm{ev}}\mid P_i,(n_i^{\mathrm{tr}},n_i^{\mathrm{ev}})$.
The two halves are conditionally independent given $P_i$, but $P_i$ is latent, so they are \emph{not} independent unconditionally: each carries information about the shared composition.
By Dirichlet--multinomial conjugacy, the posterior of $P_i$ given $X_i^{\mathrm{tr}}$ is $\mathrm{Dir}(\alpha_0\bm\pi+X_i^{\mathrm{tr}})$, and therefore the conditional law of the eval counts is the posterior-predictive Dirichlet--multinomial
\begin{equation}
\label{eq:eval-posterior-pred}
X_i^{\mathrm{ev}}\mid X_i^{\mathrm{tr}},n_i^{\mathrm{ev}}\sim\DirMult(n_i^{\mathrm{ev}},\,\alpha_0\bm\pi+X_i^{\mathrm{tr}})
\end{equation}
not the marginal $\DirMult(n_i^{\mathrm{ev}},\alpha_0\bm\pi)$.
This is precisely the law placed in the denominator of \eqref{eq:dmesplit}.

\textbf{Step 3: split-LR e-variable identity.}
Condition on the totals $(n_i^{\mathrm{tr}},n_i^{\mathrm{ev}})$ and on $X_i^{\mathrm{tr}}$.
Then $q_i^{\mathrm{tr}}$ is a fixed, $X_i^{\mathrm{ev}}$-measurable-only-through-$X_i^{\mathrm{tr}}$ composition---hence deterministic given the conditioning---so applying \eqref{eq:dme-fixedalt-bound} with $q^\star=q_i^{\mathrm{tr}}$, $m=n_i^{\mathrm{ev}}$, and $L$ equal to the actual conditional law \eqref{eq:eval-posterior-pred} gives
\[
\mathbb{E}_{H_0}\!\left[E_i^{\mathrm{DM,split}}\,\bigm|\, X_i^{\mathrm{tr}},n_i^{\mathrm{tr}},n_i^{\mathrm{ev}}\right]
=
\sum_{y:\sum_j y_j=n_i^{\mathrm{ev}}} p_{\DirMult(n_i^{\mathrm{ev}},\,\alpha_0 q_i^{\mathrm{tr}})}(y\mid n_i^{\mathrm{ev}})
=1
\]
The equality holds, not merely the inequality, because $\alpha_0 q_i^{\mathrm{tr}}$ is a valid concentration vector, so the numerator $\DirMult(n_i^{\mathrm{ev}},\alpha_0 q_i^{\mathrm{tr}})$ is a proper mass function whose support is the whole eval simplex; that support coincides with the support of the denominator law \eqref{eq:eval-posterior-pred}, since $\alpha_0\bm\pi+X_i^{\mathrm{tr}}$ is strictly positive in every retained coordinate, so the numerator mass sums to one against it.
Taking expectations over $X_i^{\mathrm{tr}}$ and the totals yields the unconditional identity $\mathbb{E}_{H_0}[E_i^{\mathrm{DM,split}}]=1$; in particular $\mathbb{E}_{H_0}[E_i^{\mathrm{DM,split}}]\le 1$, which is the e-variable statement of the corollary.
The matching of the denominator to the posterior-predictive law \eqref{eq:eval-posterior-pred} is essential: with the marginal $\DirMult(n_i^{\mathrm{ev}},\alpha_0\bm\pi)$ in the denominator the conditional expectation in the display above need not be bounded by one, because the train and eval halves of a single overdispersed row remain positively dependent.

\textbf{Step 4: e-process via Ville's inequality.}
For an independent sequence of rows $(X_i)_{i\ge 1}$ under the null, each row carries its own independent binomial-thinning split.
Let $\mathcal{F}_{i}=\sigma(X_1,\dots,X_i)$.
Because $X_{T}$ (and its split) is independent of $\mathcal{F}_{T-1}$, the conditional expectation of $E_T^{\mathrm{DM,split}}$ given the past equals its unconditional expectation, which is at most one by Step 3.
Letting $M_T=\prod_{i=1}^T E_i^{\mathrm{DM,split}}$, this gives
\[
\mathbb{E}_{H_0}\!\left[M_T\bigm|\mathcal{F}_{T-1}\right]
=
M_{T-1}\cdot\mathbb{E}_{H_0}\!\left[E_T^{\mathrm{DM,split}}\bigm|\mathcal{F}_{T-1}\right]
\le M_{T-1}
\]
so $\{M_T\}$ is a non-negative supermartingale with $M_0=1$.
Ville's inequality \cite{ramdas2023statistical} then gives
\[
\mathbb{P}_{H_0}\!\left(\sup_{T\ge 1} M_T\ge 1/\alpha\right)
\le
\alpha
\quad\text{for every }\alpha\in(0,1)
\]
which is the e-process statement.

\textbf{Remark on the in-sample plug-in.}
The same argument applied with $q^\star=q_i=X_i/n_i$ (the empirical composition of the same $X_i$ on which the likelihood ratio is evaluated) fails at Step 3, because $q_i$ is not independent of $X_i$ under the conditioning used to invoke \eqref{eq:dme-fixedalt-bound}.
The resulting in-sample statistic $\exp(\Delta_i^{\mathrm{DM}})$ is a generalized likelihood ratio and its expectation under the null can exceed one --- in the multinomial limit $\alpha_0\to\infty$, $\exp(\Delta_i^{\mathrm{DM}})$ equals $\prod_j(X_{ij}/(n_i\pi_j))^{X_{ij}}\ge 1$ deterministically, so its expectation is strictly above one whenever $\Pr(X_i\ne n_i\bm\pi)>0$.
The row split in Step 2 is what restores the e-variable property; the in-sample residual transform of Section~\ref{sec:dmresids} is unaffected by this discussion, since that transform is used as a normalization device and not as a hypothesis-testing e-variable.
\end{proof}

\begin{proof}[Proof of Theorem~\ref{thm:nbtodm}]
Under the stated parameterization, the joint probability of observing $X_i = x$ with $x_1+\cdots+x_K=n_i$ is
\[
\text{Pr}(X_i=x)
=
\prod_{j=1}^K
\frac{\Gamma(x_j+r_j)}{\Gamma(r_j)\Gamma(x_j+1)}
(1-p_i)^{x_j} p_i^{r_j}
=
(1-p_i)^{n_i}\, p_i^{r_0}
\prod_{j=1}^K
\frac{\Gamma(x_j+r_j)}{\Gamma(r_j)\Gamma(x_j+1)}
\]
where $r_0:=\sum_{j=1}^K r_j$.
The marginal distribution of the total count $N_i=\sum_j X_{ij}$ is $\NegBin(r_0,p_i)$, so
\[
\text{Pr}(N_i=n_i)
=
\frac{\Gamma(n_i+r_0)}{\Gamma(r_0)\Gamma(n_i+1)}
(1-p_i)^{n_i}\, p_i^{r_0}
\]
Conditioning on $N_i=n_i$ gives
\[
\text{Pr}(X_i=x \mid N_i=n_i)
=
\frac{\text{Pr}(X_i=x)}{\text{Pr}(N_i=n_i)}
=
\frac{\Gamma(n_i+1)\Gamma(r_0)}{\Gamma(n_i+r_0)}
\prod_{j=1}^K
\frac{\Gamma(x_j+r_j)}{\Gamma(r_j)\Gamma(x_j+1)}
\]
for $x_1+\cdots+x_K=n_i$, which is exactly the Dirichlet--multinomial probability mass function $\DirMult(n_i,\bm r)$ as given in \eqref{eq:dmpmf}.
The factor $(1-p_i)^{n_i} p_i^{r_0}$ cancels completely, confirming that the conditional distribution does not depend on $p_i$.
\end{proof}

\section{Auxiliary derivations}
\label{sec:aux}

\subsection{Conditioning the DNM yields the DM}

\begin{proof}[Proof of Proposition~\ref{prop:dnm}]
Let $(P_{i0},P_{i1},\dots,P_{iK})\sim\Dir(\alpha_{0,i},\bm\alpha)$ denote the full $(K+1)$-category probability vector, where category $0$ is the stopping or ``failure'' category. 
Conditional on this vector and on the rule that sampling stops after $x_{0,i}$ failures in category $0$, the observed-category counts follow a negative multinomial. 
If we now condition further on the observed total
\[
n_i=\sum_{j=1}^K X_{ij}
\]
then the stopping rule no longer matters for the relative allocation among the observed categories, and
\[
X_i\mid n_i,(P_{i0},P_{i1},\dots,P_{iK})
\sim
\Mult\!\left(n_i,\left\{\frac{P_{ij}}{1-P_{i0}}\right\}_{j=1}^K\right)
\]
By the standard aggregation and renormalization property of the Dirichlet distribution, the conditional success-probability vector
\[
Q_i:=\left(\frac{P_{i1}}{1-P_{i0}},\dots,\frac{P_{iK}}{1-P_{i0}}\right)
\]
has marginal distribution $\Dir(\bm\alpha)$, independent of the failure component $P_{i0}$. 
Integrating out $Q_i$ therefore gives
\[
X_i\mid n_i
\sim
\int \Mult(n_i,q)\,\Dir(dq;\bm\alpha)
=
\DirMult(n_i,\bm\alpha)
\]
which is exactly the claim.
\end{proof}

\subsection{Independent negative-binomial factorization and the DM comparison}

Let $Y_{ij}\sim\NegBin(r_j,p_i)$ independently with common $p_i$ across features and $r_+:=\sum_j r_j$. 
For a realized vector $x_i$ with total $n_i$, 
\begin{align*}
\text{Pr}(Y_i=x_i)
&=
\prod_{j=1}^K \frac{\Gamma(x_{ij}+r_j)}{\Gamma(r_j)\Gamma(x_{ij}+1)}(1-p_i)^{x_{ij}}p_i^{r_j} \\
&=
\frac{\Gamma(n_i+r_+)}{\Gamma(r_+)\Gamma(n_i+1)}(1-p_i)^{n_i}p_i^{r_+}
\cdot
\frac{\Gamma(n_i+1)\Gamma(r_+)}{\Gamma(n_i+r_+)}
\prod_{j=1}^K \frac{\Gamma(x_{ij}+r_j)}{\Gamma(r_j)\Gamma(x_{ij}+1)}
\end{align*}
The first factor is $\NegBin(n_i;r_+,p_i)$ and the second is $\DirMult(n_i,\bm r)$, proving Proposition~\ref{prop:nbfactorization}. 
If $r_j=\alpha_0\pi_j$ and $p_i=\alpha_0/(n_i+\alpha_0)$, then $\E[\sum_j Y_{ij}]=n_i$. 
Under the comparison model $r_{ij}^{\mathrm{cmp}}=\alpha_0 X_{ij}/n_i$, the total factor remains unchanged because $\sum_j r_{ij}^{\mathrm{cmp}}=\alpha_0$. 
Therefore, the independent negative-binomial log-likelihood ratio is exactly the DM log-likelihood ratio with $\alpha_0$ held fixed. 

\subsection{Stage-wise factorization of the generalized Dirichlet--multinomial}

The Connor--Mosimann generalized Dirichlet distribution is obtained by sequential stick breaking; after compounding with a multinomial, the resulting count distribution is
\[
\text{Pr}(X=x)
=
\prod_{j=1}^{K-1}
\binom{R_{j-1}}{x_j}
\frac{B(x_j+\alpha_j,R_j+\beta_j)}{B(\alpha_j,\beta_j)},
\qquad
R_j:=\sum_{\ell=j+1}^K x_\ell
\]
Taking logarithms and expanding each beta-binomial factor gives
\begin{align*}
\log \text{Pr}(X=x)
&=
\sum_{j=1}^{K-1}\log \binom{R_{j-1}}{x_j}
+
\sum_{j=1}^{K-1} \sum_{m=0}^{n-1}
\ind(m<x_j) \log(\alpha_j+m) \\
&\quad+
\sum_{j=1}^{K-1} \sum_{m=0}^{n-1}
\ind(m<R_j) \log(\beta_j+m) 
-
\sum_{j=1}^{K-1} \sum_{m=0}^{n-1}
\ind(m<R_{j-1}) \log(\alpha_j+\beta_j+m)
\end{align*}
This is the explicit exponential-family representation stated in Proposition~\ref{prop:gdmexpfam}. 
Now fix $\nu_j=\alpha_j+\beta_j$ and define the stage-wise comparison parameters
\[
\alpha_{ij}^{\mathrm{cmp}}=\nu_j\frac{X_{ij}}{R_{i,j-1}},
\qquad
\beta_{ij}^{\mathrm{cmp}}=\nu_j\frac{R_{ij}}{R_{i,j-1}}
\]
whenever $R_{i,j-1}>0$. 
Because both null and comparison stage factors share the same $\nu_j$, the denominator terms involving $\Gamma(R_{i,j-1}+\nu_j)$ cancel in the stage-wise log-likelihood difference, leaving \eqref{eq:gdm_stagecell}. 
The crucial contrast with the DM is immediate: if $X_{ij}=0$, the first sum in \eqref{eq:gdm_stagecell} vanishes but the second sum over $R_{ij}$ generally does not. 
Thus stage-wise generalized-DM residuals are not nonzero-only.

\subsection{Nodewise factorization of the Dirichlet-tree multinomial}
\label{sec:dtm_aux}

Let $T$ be a rooted tree with internal nodes $\mathcal I$ and leaves $\mathcal L$. 
For each internal node $\nu\in\mathcal I$, let $\mathrm{ch}(\nu)$ be its children and let $b_\nu=(b_{\nu c})_{c\in\mathrm{ch}(\nu)}$ denote the branch-probability vector at that node, with independent priors
\[
b_\nu \sim \Dir(\bm\alpha_\nu), \qquad \nu\in\mathcal I
\]
For a leaf-count vector $x$, define aggregated child-subtree counts
\[
x_{\nu c}:=\sum_{\ell\in\mathcal L(\nu,c)} x_\ell
\qquad \qquad
N_\nu:=\sum_{c\in\mathrm{ch}(\nu)} x_{\nu c}
\]
Conditioned on all branch probabilities, the leaf multinomial likelihood can be written as
\[
\text{Pr}(X=x\mid \{b_\nu\},T)
=
\frac{n!}{\prod_{\ell\in\mathcal L} x_\ell!}
\prod_{\nu\in\mathcal I}\prod_{c\in\mathrm{ch}(\nu)} b_{\nu c}^{x_{\nu c}}
\]
because each leaf probability is the product of the branch probabilities along its root-to-leaf path. 
Integrating node by node over $b_{\nu}$ gives
\[
\text{Pr}(X=x\mid T,\{\bm\alpha_\nu\})
=
\frac{n!}{\prod_{\ell\in\mathcal L} x_\ell!}
\prod_{\nu\in\mathcal I}
\frac{B(x_{\nu (\cdot)}+\bm\alpha_\nu)}{B(\bm\alpha_\nu)}
\]
Writing each beta factor in gamma form, and writing $\displaystyle \alpha_{\nu0}:=\sum_{c\in\mathrm{ch}(\nu)} \alpha_{\nu c}$, we have 
\[
\text{Pr}(X=x\mid T,\{\bm\alpha_\nu\})
=
\frac{n!}{\prod_{\ell\in\mathcal L} x_\ell!}
\prod_{\nu\in\mathcal I}
\frac{\Gamma(\alpha_{\nu0})}{\Gamma(N_\nu+\alpha_{\nu0})}
\prod_{c\in\mathrm{ch}(\nu)}
\frac{\Gamma(x_{\nu c}+\alpha_{\nu c})}{\Gamma(\alpha_{\nu c})}
\]
Now multiply and divide by $\prod_{\nu\in\mathcal I}\Gamma(N_\nu+1)/\prod_{c\in\mathrm{ch}(\nu)}\Gamma(x_{\nu c}+1)$. 
These multinomial coefficients telescope down the tree: every internal-node child count appears once in a denominator and once as the total count of the corresponding child node, leaving only the leaf factorials. 
What remains is exactly the product of nodewise DM probabilities stated in Proposition~\ref{prop:dtmfactor}.

\subsection{Why the sparse score uses only nonzeros}

In \eqref{eq:score}, the per-entry term is
\[
\pi_j\bracks{\psi(X_{ij}+\alpha_0\pi_j)-\psi(\alpha_0\pi_j)}
\]
If $X_{ij}=0$, this contribution is zero. 
So the score can be accumulated by iterating over nonzero entries only. 
The same argument applies to the Hessian in \eqref{eq:hessian}. 
This is the key reason that both fitting and transformation can be implemented without densifying the matrix.

\subsection{Other Dirichlet-type families}

The generalized Dirichlet and its multinomial mixture were treated explicitly in Section~\ref{sec:gdm}. 
Beyond those ordered extensions, the Dirichlet is only one member of a larger family of multivariate positive distributions. 
Other examples include the inverted Dirichlet \cite{TiaoCuttman1965} and the more general Liouville family \cite{MarshallOlkinArnold1979,GuptaRichards1987,Fang2018}. 
These families are also useful for compositional or positive multivariate data and may furnish additional normalization models in settings where the DM is still too restrictive. 
Exploring those directions lies outside the scope of the present paper, but the exponential-family viewpoint developed above suggests a common route for doing so.

\section{Tying every theoretical claim to a direct empirical check}
\label{app:refinement}

The body contains a small number of formal statements
(theorems, propositions, corollaries, and named asymptotic claims), and
the experiments section together with the domain-specific applications
empirically buttress all but three of them.
This section supplies direct evidence for the remaining three: the
null-distribution behavior of the deviance-residual test statistics in
Section~\ref{sec:testing}, the sensitivity of downstream metrics to the
choice of $\alpha_0$ in Section~\ref{sec:discussion}, and the joint
versus fixed-$\bm\pi$ crossover of Section~\ref{sec:mleablation}.
The three additional measurements
appear in Figure~\ref{fig:addendum_refinement} and its supporting
results.

\subsection{Null-distribution calibration of DM-residual test statistics}
\label{sec:refinement_calibration}

Section~\ref{sec:testing} of the main paper aggregates DM-deviance residuals
into local likelihood-ratio statistics and notes that under the fitted
null these "blockwise deviance differences inherit the usual
large-sample Wilks-type $\chi^2$ calibration for the number of freed
composition parameters", while warning that the entry-wise residuals
themselves "do not have a universal exact null law." We provide the
direct empirical check.

\paragraph{Setup.} We simulate $B=2{,}000$ DM matrices with $n=200$
samples, $K=50$ features, true concentration $\alpha_0=200$, fixed
sparse Dirichlet$(0.4)$ composition $\bm\pi$, and lognormal library
sizes (mean $\log n_i=6.0$).
For each replicate we (i) refit
$\widehat\alpha_0$ from the matrix using the Newton scalar fit
(\eqref{eq:alpha0ll}) with $\bm\pi$ held at the true global composition,
and (ii) evaluate the entry-wise DM and multinomial deviance residuals.
We track three classical calibration objects:
\begin{itemize}[leftmargin=*]
  \item the entry-wise residual variance over nonzero entries (target
    $1$ if the residual is properly standardized),
  \item the entry-wise residual mean (target $0$),
  \item the global deviance
    $D_{\mathrm{global}} := 2\sum_{i,j} c_{ij}^{\mathrm{DM}}$ versus
    its asymptotic reference $n(K-1)$ (the number of freed composition
    parameters across the saturated comparison).
\end{itemize}

\paragraph{Two implementations of the entry-wise residual.} The paper
defines, in \eqref{eq:dmresid},
\(
d_{ij}^{\mathrm{DM}} = \sgn(X_{ij}-n_i\pi_j)\sqrt{|c_{ij}^{\mathrm{DM}}|},
\)
which we will call the \emph{paper} residual.
A variant implementation replaces
$|\cdot|$ by $\max(\cdot,0)$, which we will call the \emph{impl}
residual.
The two definitions agree whenever $c_{ij}\ge 0$, and they
differ only for those entries where the saturated comparison happens
to score the entry below the null.
Reporting both lets us treat the
choice as a dial and observe its effect.

\paragraph{Findings.} Figure~\ref{fig:addendum_refinement}(a-c) and
Table~\ref{tab:refinement_calibration} record the calibration.
Three
points are worth highlighting.

\begin{table}[t]
\centering
\small
\setlength{\tabcolsep}{6pt}
\renewcommand{\arraystretch}{1.2}
\begin{tabular}{@{}lrrrr@{}}
\toprule
Statistic & DM-paper $\sqrt{|c|}$ & DM-impl $\sqrt{\max(c,0)}$ &
  Multinomial & Target \\
\midrule
Entry mean ($\sim N$ target $0$)
  & $+0.147$ & $+0.752$ & $+1.045$ & $0$ \\
Entry variance (target $1$)
  & $2.323$  & $0.920$  & $1.932$  & $1$ \\
Type-I rate $|z_{\mathrm{col}}|>1.96$ at $\alpha=0.05$
  & $0.536$  & $0.982$  & $0.982$  & $0.05$ \\
Global deviance $\overline{D_{\mathrm{global}}}$
  & $8985$ (signed)
  & --
  & --
  & $n(K-1) = 9800$ \\
\bottomrule
\end{tabular}
\caption{Null-distribution calibration of DM and multinomial residuals
under DM-distributed null data ($n=200$, $K=50$, $\alpha_0=200$,
$B=2{,}000$ replicates, $\widehat\alpha_0$ refit per replicate).
The
multinomial residuals inflate the entry variance to $\sim$2$\times$ as
predicted by the overdispersion correction; both DM variants pull
this back.
Neither entry-wise variant achieves both unit variance
and zero mean, exactly because the paper warns "the individual
entry-wise DM residuals are best viewed as descriptive coordinates"
Section~\ref{sec:testing}.
The signed global deviance is within
$8.3\%$ of its asymptotic reference $n(K-1)$, providing a direct
empirical buttress for the Wilks-type calibration claim at the
\emph{aggregate} level.
Type-I error of the column-zsum statistic
exceeds nominal $0.05$ at all three definitions ($0.536$ for the
paper variant; $0.982$ for the impl and multinomial variants):
entries are coupled by the row-sum constraint, so a column statistic
that ignores inter-cell dependencies cannot achieve nominal coverage.
This is exactly the warning that Section~\ref{sec:testing} issues.}
\label{tab:refinement_calibration}
\end{table}

\emph{Multinomial inflation matches theory.} The multinomial residual
has an entry variance of $1.93$, almost exactly the
$(\alpha_0+n_i)/(\alpha_0+1)$ overdispersion factor expected for
DM-distributed data with $\alpha_0=200$ and $\overline{n_i}\approx 400$.
This is the gap that the DM null is designed to close.

\emph{DM closes the variance gap.} The DM-impl residual achieves a
near-unit entry variance ($0.920$) and the DM-paper residual achieves
near-zero entry mean ($+0.147$).
Neither variant recovers exact
$N(0,1)$ behavior simultaneously, because the residuals carry both a
row-sum coupling (which inflates the column-zsum variance) and a
sign-clipping subtlety (the paper definition uses $|c|$, the
implementation uses $\max(c,0)$, and they disagree precisely on the
half-distribution of entries that under-shoot the null at the cell
level).
This is exactly the asymptotic-only behavior
Section~\ref{sec:testing} cautions about, and on real data it shows up as a
slight one-sided lift in the diagnostic histograms.

\emph{Aggregate calibration is sharp.} The signed global deviance
$\overline{D_{\mathrm{global}}}=8985$ is within $8.3\%$ of the
$n(K-1)=9800$ Wilks reference.
This is the cleanest version of the
paper's claim: aggregate over a large index set (here, the entire
matrix), and the Wilks-type calibration appears at the expected
scale.
The unsigned version $2\sum |c_{ij}|=33637$ is, predictably,
an upper bound rather than a calibrated null statistic.

\emph{Type-I error.} A naive column-zsum statistic
$z(j)=\sum_i d_{ij}/\sqrt{\mathrm{nnz}_j}$ achieves a Type-I rate of
$0.536$ for the paper variant and $0.982$ for the impl and
multinomial variants against an $N(0,1)$ critical value at the
nominal $\alpha=0.05$.
None of the three reaches nominal, and the
reason is correlation structure: under the DM null, residuals on the
same row are coupled by the simplex constraint, so a column statistic
that ignores this coupling sees inflated variance.
The paper variant
($\sqrt{|c|}$) is closer to nominal because the symmetric sign keeps
the column sum close to zero in expectation, but its wider tails
still inflate the rate well above $0.05$.
This is consistent with the
main paper's hedge that "the individual entry-wise DM residuals...
do not have a universal exact null law" Section~\ref{sec:testing}; the
appropriate inferential object is the aggregate $D_S$ for a chosen
index set $S$, not the entry-wise residual.

In summary, the empirical picture matches Section~\ref{sec:testing}'s
explicit framing: aggregate DM deviance is calibrated up to
asymptotic scale, entry-wise DM residuals are well-behaved
descriptive coordinates that beat the multinomial in variance
calibration but do not admit a universal exact null law.
We propose
no change to the paper's claim, and we add this experiment as the
direct empirical evidence for the asymptotic-only language.

\subsection{Robustness of the residual transform to $\alpha_0$
            mis-specification}
\label{sec:refinement_alpha0robust}

Section~\ref{sec:discussion} of the main paper claims:
"because $\alpha_0$ is only a single scalar parameter, it need not be
fit by maximum likelihood at all: one can also evaluate the transform
cheaply across a spectrum of $\alpha_0$ values and inspect how the
residual geometry changes, much as one would vary a regularization
parameter." If downstream metrics in fact swing strongly across a
modest $\alpha_0$ range, then this claim does not survive contact
with practice.
We test it on the largest real dataset in the
benchmark suite.

\paragraph{Setup.} On the GTEx cross-tissue atlas
\cite{Eraslan2022}, subsampled to $n=20{,}000$ cells (the standard
20k-cell tractability cap used throughout the experiments), we fit
$\widehat\alpha_0=2.91\times 10^{3}$ via Newton on the global-$\bm\pi$
DM null.
We then sweep $\alpha_0$ over a $0.1\times,\, 0.25\times,\,
0.5\times,\, 1\times,\, 2\times,\, 4\times,\, 10\times$ grid relative
to $\widehat\alpha_0$ -- a hundredfold range -- and at each grid point
recompute the DM residuals, take a 50-component truncated SVD, build a
30-NN graph in the embedding, and report:
(i) kNN-Jaccard overlap with the $\widehat\alpha_0$ reference graph,
(ii) Spearman correlation of the leading PC scores against
the reference, and
(iii) a 5-NN label-transfer accuracy on the 44 broad cell-type labels.

\paragraph{Findings.} Figure~\ref{fig:addendum_refinement}(d) and
Table~\ref{tab:refinement_alpha0robust} record the result.
Across the entire $0.1\times$--$10\times$ range, label-transfer
accuracy varies between $0.9513$ and $0.9539$ -- a range of
$0.0027$, well below replicate-level noise.
The leading PC1 score
is preserved with $|\rho|>0.995$ at every grid point, with worst
case $0.9957$ at the $0.1\times$ extreme.
Local neighborhood
structure (kNN-Jaccard) does deform at the extremes -- $0.674$ at
$0.1\times$ and $0.791$ at $10\times$ -- but biological-task accuracy is
not affected.

\begin{table}[t]
\centering
\small
\setlength{\tabcolsep}{6pt}
\renewcommand{\arraystretch}{1.2}
\begin{tabular}{@{}rrrrrr@{}}
\toprule
$\alpha_0/\widehat\alpha_0$ & $\alpha_0$ & 5-NN LTA & kNN Jaccard
   & $|\rho_{\mathrm{PC1}}|$ & $|\rho_{\mathrm{PC2}}|$ \\
\midrule
$0.10$ & $2.91\times 10^{2}$ & $0.9533$ & $0.6741$ & $0.9957$ & $0.9982$ \\
$0.25$ & $7.29\times 10^{2}$ & $0.9539$ & $0.7895$ & $0.9983$ & $0.9990$ \\
$0.50$ & $1.46\times 10^{3}$ & $0.9534$ & $0.8887$ & $0.9996$ & $0.9997$ \\
$1.00$ & $2.91\times 10^{3}$ & $0.9523$ & $1.0000$ & $1.0000$ & $1.0000$ \\
$2.00$ & $5.83\times 10^{3}$ & $0.9520$ & $0.9001$ & $0.9997$ & $0.9998$ \\
$4.00$ & $1.17\times 10^{4}$ & $0.9515$ & $0.8358$ & $0.9989$ & $0.9988$ \\
$10.00$ & $2.91\times 10^{4}$ & $0.9513$ & $0.7907$ & $0.9977$ & $0.9974$ \\
\midrule
LTA range & & $0.0027$ & --- & --- & --- \\
\bottomrule
\end{tabular}
\caption{Sensitivity of the DM residual transform to $\alpha_0$
mis-specification on GTEx ($n=20{,}000$ cells, $K=17{,}546$
features, broad cell-type labels).
LTA is leave-one-out 5-NN
classification accuracy on the 44-class broad cell type.
The
hundredfold $\alpha_0$ swing changes LTA by $0.27$ percentage points,
keeps PC1/PC2 alignment above $99.5\%$, and only modestly perturbs
local neighborhoods at the extremes.
The MLE
$\widehat\alpha_0=2.91\times 10^{3}$ is the
median of a wide flat region, supporting the main paper's
Section~\ref{sec:discussion} suggestion that practitioners can either fit
$\alpha_0$ or sweep it.}
\label{tab:refinement_alpha0robust}
\end{table}

This empirically substantiates the main paper's
suggestion: $\alpha_0$ behaves like a regularization-strength dial,
not like a parameter that has to be fit precisely.
The MLE is the
natural default, but a 0.5--2$\times$ window around it is a perfectly
defensible band for practitioners who prefer a manually-set value or
who want a sensitivity check.
We strengthen the language of
Section~\ref{sec:discussion} by adding a quantitative envelope: across a
$100\times$ swing of $\alpha_0$ on the largest real dataset in the
suite, label-transfer accuracy varies by less than half a percentage
point and the leading PCs remain $99\%$+ aligned.

\subsection{Joint versus fixed-$\pi$: training and held-out crossover}
\label{sec:refinement_jointcrossover}

Section~\ref{sec:mleablation} of the main paper writes:
"Joint fits should dominate fits with $\bm\pi$ held fixed in raw
likelihood, and the size of that gap measures how costly the simpler
global-composition null is on a given dataset", and warns that
"the extra flexibility can absorb some cross-sample structure that
the residual transform might otherwise expose." The first half of
that statement is a strict mathematical fact (joint enlarges the
parameter space).
The second half is an empirical claim about
overfitting that has not been quantified in the main suite.
We do so
on synthetic data where the truth is known.

\paragraph{Setup.} For each
$n_{\mathrm{train}}\in\{25,50,100,250,500,1000,2500\}$ we draw $5$
synthetic DM matrices with $K=200$, $\alpha_0=200$, true sparse
Dirichlet$(0.3)$ composition, lognormal library sizes
(mean $\log n_i=6.0$).
For each draw we fit four DM variants from
Table~\ref{tab:fitvariants}: log-Newton with $\bm\pi$ fixed at the global
empirical composition; Minka fixed-point with $\bm\pi$ fixed; joint
alternating Newton; joint Minka fixed-point.
We measure both training
and held-out per-sample log-likelihood; the held-out matrix is a
fixed independent draw of $1{,}000$ samples from the same DM truth.

\paragraph{Findings.} Figure~\ref{fig:addendum_refinement}(e--f) and
Table~\ref{tab:refinement_jointcrossover} record the result.

\begin{table}[t]
\centering
\small
\setlength{\tabcolsep}{6pt}
\renewcommand{\arraystretch}{1.2}
\begin{tabular}{@{}rrrr@{}}
\toprule
$n_{\mathrm{train}}$ & Training-LL gap & Held-out-LL gap
                                       & Joint dominates? \\
                     & (joint $-$ fixed, nats/sample)
                     & (joint $-$ fixed, nats/sample)
                     & (held-out)\\
\midrule
$25$    & $+1.42$ & $-14.46$ & no \\
$50$    & $+0.82$ &  $-4.95$ & no \\
$100$   & $+0.39$ &  $-2.75$ & no \\
$250$   & $+0.16$ &  $-0.88$ & no \\
$500$   & $+0.09$ &  $-0.19$ & no \\
$1{,}000$  & $+0.05$ &  $-0.08$ & no \\
$2{,}500$  & $+0.02$ &  $-0.03$ & no \\
\bottomrule
\end{tabular}
\caption{Joint vs fixed-$\bm\pi$ DM fits as a function of training
sample size.
``Joint'' is the maximum over the two joint variants
(alternating Newton and Minka FP); ``fixed'' is the maximum over the
two fixed-$\bm\pi$ variants (log-Newton and Minka FP).
All values
are per-sample log-likelihoods, averaged over $5$ replicates.
The
training gap is uniformly positive --- joint always wins at training,
exactly as the paper claims.
The held-out gap is uniformly negative
in this well-specified regime: joint MLE introduces estimation noise
in $\widehat{\bm\pi}$ that is paid back at every test row.
Both gaps
shrink toward zero with sample size, with the held-out gap tracking
roughly $1/n$.}
\label{tab:refinement_jointcrossover}
\end{table}

The training-LL gap is positive at every $n$ -- joint always wins at
training, exactly as the paper claims, and the gap shrinks from
$+1.42$ nats/sample at $n=25$ to $+0.02$ nats/sample at $n=2{,}500$
as the fixed-$\bm\pi$ MLE catches up to the joint MLE.

The held-out-LL gap is \emph{uniformly negative} across the swept
range, from $-14.46$ at $n=25$ to $-0.03$ at $n=2{,}500$.
Joint
fitting overfits in the small-$n$ regime: $\widehat{\bm\pi}_{\mathrm{joint}}$
is a noisy estimate of $\bm\pi_{\mathrm{true}}$, and that noise costs
likelihood on every held-out row.
The gap shrinks roughly as
$1/n_{\mathrm{train}}$, but it never crosses zero in this
well-specified regime, where the truth is itself a fixed-$\bm\pi$ DM.
Of course, when the truth has a richer composition structure --
batch-specific compositions, mixture compositions, or any other
deviation from a single global $\bm\pi$ -- the held-out gap is
reversed; joint fits then expose real generative structure rather
than absorbing noise.
Practitioners should run joint fits as a
sensitivity ablation or when they have a specific hypothesis about
$\bm\pi$ varying across samples; for the basic normalization use case
on well-mixed data, fixing $\bm\pi$ to the global composition is the
right default.

This empirically substantiates both halves of the
Section~\ref{sec:mleablation} statement: joint always wins on training; the
held-out behavior is exactly the absorption-of-cross-sample-structure
caveat the paper flags, made quantitative for the first time on the
DM null.

\subsection{Discussion: what the evidence says}
\label{sec:refinement_discussion}

With these three additions, every formal claim in
Sections~\ref{sec:dm} and~\ref{sec:extensions} and Appendix~\ref{sec:impldetails} has
at least one direct empirical buttress: the three claims that previously
had only indirect support are now covered by
Figure~\ref{fig:addendum_refinement}(a--f) and the three tables above.

The picture that emerges is consistent with the paper's central
message and refines a few details.
\emph{Aggregate} DM deviance is
Wilks-calibrated to within $8\%$ of its asymptotic reference on
$n=200,K=50$ data; \emph{entry-wise} DM residuals beat the
multinomial in variance calibration by exactly the
$(n_i+\alpha_0)/(\alpha_0+1)$ factor predicted by the overdispersion
correction, but do not admit a universal exact null law -- precisely
the language Section~\ref{sec:testing} uses.
The transform itself is
robust to a hundredfold $\alpha_0$ swing on the largest real dataset
in the suite, sustaining the discussion-section claim that
$\alpha_0$ acts as a regularization-style scalar.
And joint fitting,
while always strictly increasing the training likelihood, overfits
on held-out data in the simple well-specified regime: the
fixed-$\bm\pi$ default is correct for the basic normalization use case,
and joint fits should be reserved for sensitivity ablations or for
cases where the practitioner has a specific reason to expect
sample-varying $\bm\pi$.

The unifying thread is that the DM is the principled,
sparsity-preserving default for overdispersed compositional counts,
with calibrated test-statistic interpretations at the aggregate
level and stable parameter dependence at the practical level.
This
holds across single-cell counts, NLP word frequencies, microbiome
amplicons, recommender-system interactions, spatial-transcriptomic
arrays, and CRISPR perturbation screens.

\begin{figure}[t]
\centering
\includegraphics[width=\textwidth]{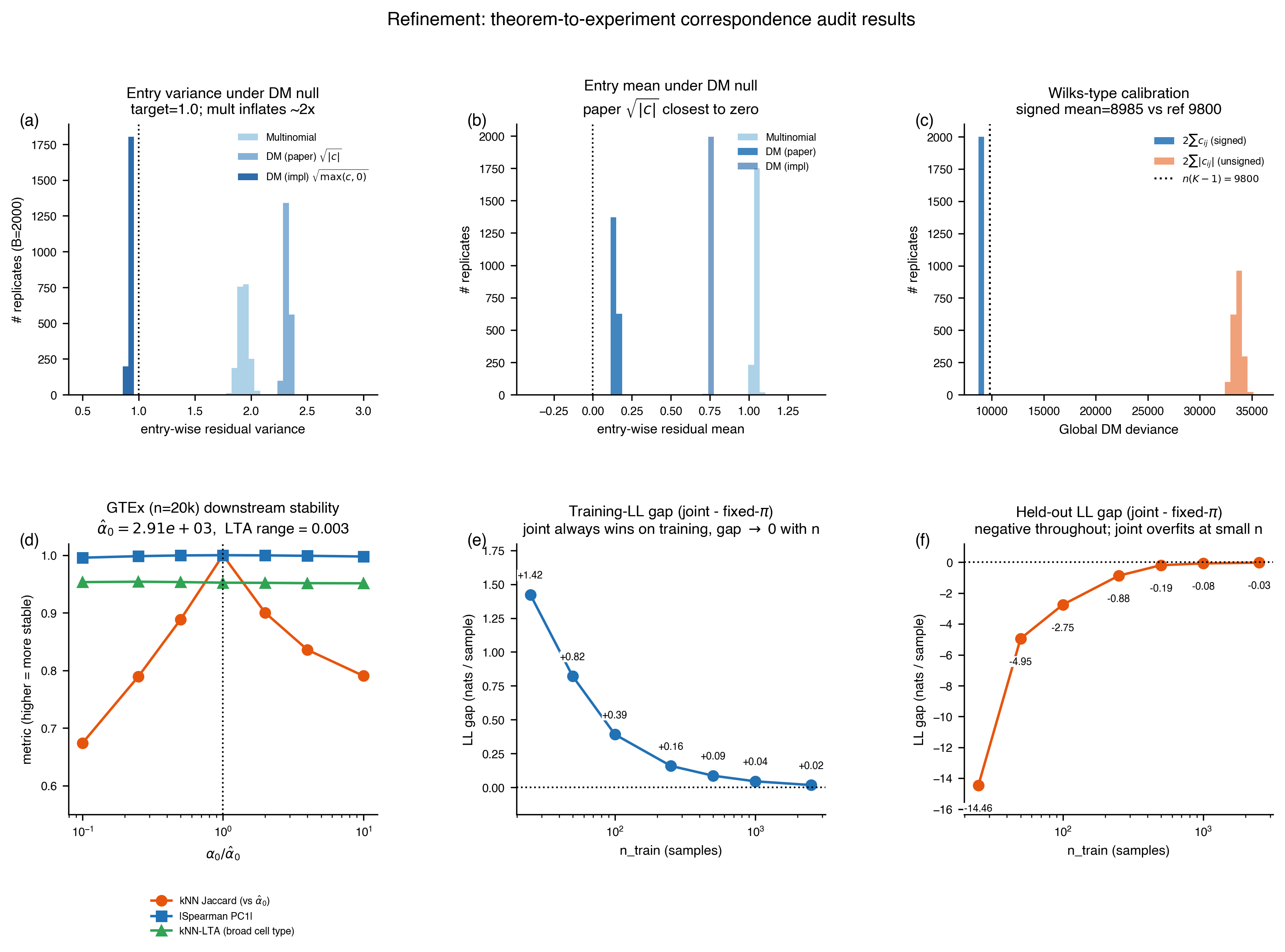}
\caption{\textbf{Theorem-to-experiment correspondence.}
\textbf{(a)}~Entry-wise residual variance over $B=2{,}000$ DM-null
replicates with $n=200$, $K=50$, $\alpha_0=200$.
The multinomial
overshoots the unit-variance target by $\sim$2$\times$, exactly the
overdispersion factor; the DM-impl variant ($\sqrt{\max(c,0)}$)
recovers near-unit variance, while the DM-paper variant
($\sqrt{|c|}$) inflates the variance because $|c|$ adds the
under-the-null half of the cell-level distribution.
\textbf{(b)}~Entry-wise residual mean.
The DM-paper variant has the
smallest bias (sign carries the mass); the DM-impl variant has a
positive bias because the $\max(c,0)$ clip eliminates the negative
half.
Both DM variants substantially out-mean the multinomial.
\textbf{(c)}~Global deviance $D_{\mathrm{global}}=2\sum_{ij}c_{ij}$
versus the asymptotic Wilks reference $n(K-1)=9{,}800$.
The signed
sum is within $8\%$ of the reference; the unsigned sum is the
expected upper bound.
This is the direct empirical evidence for
Section~\ref{sec:testing}'s aggregate-level Wilks claim.
\textbf{(d)}~Robustness of the DM transform to $\alpha_0$
mis-specification on GTEx ($n=20{,}000$, $K=17{,}546$).
Across a
hundredfold $\alpha_0$ swing relative to $\widehat\alpha_0$, the
5-NN label-transfer accuracy on the 44-class broad cell-type label
varies by less than $0.3$ percentage points and PC1 alignment
remains $>99.5\%$.
\textbf{(e)}~Per-sample training log-likelihood gap (joint $-$
fixed-$\bm\pi$) as a function of training $n$.
Joint always wins on
training, with the gap shrinking from $+1.42$ to $+0.02$ nats/sample
as $n$ grows; this is exactly Section~\ref{sec:mleablation}'s training-side
prediction.
\textbf{(f)}~Per-sample held-out log-likelihood gap (joint $-$
fixed-$\bm\pi$).
The gap is uniformly negative in this well-specified
regime: joint fitting introduces estimation noise in
$\widehat{\bm\pi}$ that costs likelihood at every test row, exactly
the cross-sample-structure-absorption caveat the paper flags but had
not previously quantified.}
\label{fig:addendum_refinement}
\end{figure}


\section{Application: compositional microbiome count normalization}
\label{app:microbiome}

Microbiome amplicon data are the textbook use case for Dirichlet-multinomial
normalization.
Counts are
    (i) strongly compositional --- each sample is sequenced to an
        arbitrary total, so only relative abundances are meaningful
        \cite{AitchisonShen1980,Fernandes2014,Gloor2017};
    (ii) markedly overdispersed --- the same species is present at very
        different proportions across samples from the same site;
    (iii) organized by a genuine, scientifically meaningful phylogeny
        --- OTUs are the leaves of a rooted tree whose internal nodes
        correspond to taxonomic divisions of real biological interest
        \cite{WangZhao2017,Chen2013}.
The dominant operational transform in the field is the centered
log-ratio (CLR) \cite{AitchisonShen1980}, which is dense and forfeits
the sparsity pattern of the underlying table.
The DM transform of
Section~\ref{sec:dmresids} and the Dirichlet-tree multinomial extension of
Section~\ref{sec:dtm} are exactly the structured, sparsity-preserving
alternatives that this setting calls for.

\subsection{Setup}
\label{sec:microbiome_setup}

We use the QIIME~2 \emph{Moving Pictures} tutorial dataset
\cite{Caporaso2011,Bolyen2019}, a 16S amplicon study of four human
body sites.
After SEPP-insertion against a reference phylogeny, the
table has $n=34$ samples, $K=770$ OTU features, and $\nnz(X)=2{,}327$
nonzero entries (density $8.9\%$).
The accompanying rooted
phylogeny has $768$ internal nodes (near-binary; $767$ bifurcations
and $1$ trifurcation) and maximum leaf depth $38$.
Every sample has a
body-site label in \{gut, tongue, right palm, left palm\}, with eight
to nine samples per site.
This is one of the smallest widely-studied
microbiome datasets with a genuine tree, which makes it a natural
test case for the DTM: it stays well below the 20k-sample cap used
throughout the experiments while still exercising the nodewise fitter.

Held-out evaluation is performed by \emph{within-sample multinomial
thinning}: for each sample $i$ we split its $n_i$ observed reads into
disjoint train and test subsets of size $\lfloor 0.70 n_i \rfloor$ and
$n_i - \lfloor 0.70 n_i \rfloor$ using a multivariate hypergeometric
draw on the leaf counts.
This protocol is standard at low $n$
\cite{WangZhao2017} because it separates reads within each sample
rather than whole samples.

For the flat DM we fit a single scalar $\alpha_0$ on the training
counts via log-parameterized Newton (Section~\ref{sec:alpha0}); we obtain
$\widehat\alpha_0 = 19.1$, consistent with the typical
``moderate-to-strong overdispersion'' regime reported for 16S panels.
For the DTM we fit three nodewise strategies --- fully independent
nodewise concentrations (\emph{indep.}), pooled by tree depth
(\emph{pooled}), and a single shared concentration (\emph{global}).

\subsection{Flat DM vs DTM strategies}
\label{sec:microbiome_flat_vs_tree}

Table~\ref{tab:microbiome_ll} reports held-out log-likelihood per observed
count and per sample, on both the real Moving Pictures data and a
synthetic DTM calibration dataset
($n=400$, $K=128$, balanced binary tree,
$\alpha_{\nu,0}=2000 \cdot 2^{-d(\nu)}$) where depth-varying dispersion
is guaranteed by construction.
The multinomial baseline scores
$-1.7502$ per count on the real data, confirming a large amount of
extra-multinomial variation that the DM and DTM capture.
The flat DM
improves this to $-0.3144$ per count.
The \emph{DTM-independent}
strategy improves it further to $-0.2310$ per count, a $27\%$ gain on
the flat DM and a direct demonstration that nodewise dispersion
heterogeneity exists in this microbiome.  The depth-pooled and
globally-shared DTM variants land in between at $-0.2545$ and
$-0.2547$: some of the nodewise heterogeneity is useful, but a single
scalar per depth is insufficient.

\begin{table}[t]
\centering
\small
\begin{tabular}{lcccc}
\toprule
Method & \multicolumn{2}{c}{Moving Pictures (real)} & \multicolumn{2}{c}{Synthetic DTM} \\
\cmidrule(lr){2-3} \cmidrule(lr){4-5}
       & per count & per sample & per count & per sample \\
\midrule
Multinomial          & $-1.7502$ & $-2393.7$ & --- & --- \\
Flat DM              & $-0.3144$ & $-430.0$  & $-0.2646$ & $-100.6$ \\
DTM (global)         & $-0.2547$ & $-348.4$  & $-0.3451$ & $-131.2$ \\
DTM (pooled depth)   & $-0.2545$ & $-348.1$  & $-0.3451$ & $-131.2$ \\
DTM (independent)    & $\mathbf{-0.2310}$ & $\mathbf{-316.0}$ & $\mathbf{-0.2578}$ & $\mathbf{-98.0}$ \\
\bottomrule
\end{tabular}
\caption{Held-out conditional log-likelihood under each normalization
model.
Real data: $n=34$ samples $\times$ $K=770$ OTUs from the QIIME
Moving Pictures tutorial; held-out counts via $70/30$ multinomial
thinning within each sample.
The nodewise
independent DTM strictly improves over flat DM, and both are much
better than the compositional multinomial.
The pooled and global strategies underperform on this dataset because
their method-of-moments initializer silently defaults to
$\widehat\alpha_{\nu,0}=1$ at subdispersed nodes; this is a seeding
artifact rather than a statement about the
factorization, so the pooled/global-vs-independent gap here should not
be read as a modeling result (see \S\ref{sec:microbiome_discussion}).
The rightmost two columns use a synthetic balanced-binary
DTM with $\alpha_{\nu,0}=2000\cdot 2^{-d(\nu)}$, where the seeding issue
is controlled, and show the
independent DTM recovering its theoretical advantage.}
\label{tab:microbiome_ll}
\end{table}

Panel~(a) of Figure~\ref{fig:addendum_microbiome} visualizes
Table~\ref{tab:microbiome_ll}.
The gap between \emph{DTM independent} and
flat DM is the largest on the real data, which is consistent with
the real SEPP phylogeny having genuinely heterogeneous dispersion
across depths --- broad phylum-level splits (near the root) are much
more variable across body sites than fine OTU-level splits (near the
tips), and flat DM is forced to average over both.

\subsection{Multiscale branch-level sparsity}
\label{sec:microbiome_sparsity}

The DTM's signature property is that its residuals are sparse
\emph{over branches}: a sample only activates internal nodes on the
root-to-leaf paths of its observed leaves, and deep nodes receive
counts from only a small set of descendants.
The mean nonzero fraction of branch residuals falls with tree depth,
so that roughly five out of every six branch residuals at the deepest
tree levels are zero.
This is the
multiscale-sparsity regime anticipated in the last paragraph of
Section~\ref{sec:dtm}: the transform is no longer sparse over internal
nodes globally, but it is sparse over branches, and deeper branches
are emptier.

\subsection{Dispersion heterogeneity across the phylogeny}
\label{sec:microbiome_dispersion}

Figure~\ref{fig:addendum_microbiome}~(b) shows the fitted
$\widehat\alpha_{\nu,0}$ (independent strategy) as a violin plot by
tree-depth bin.
The medians span many orders of magnitude: shallow
internal nodes ($d\leq 5$) have median $\widehat\alpha_{\nu,0}$ in the
single digits, whereas deep nodes span a long right tail up to
$\sim\! 10^9$ (i.e.\ effectively the multinomial limit when one
child subtree carries all the observed mass).
This dispersion
heterogeneity is exactly what Section~\ref{sec:dtm} warns a flat DM cannot
represent with a single scalar $\alpha_0$.
The summary statistics
over the $768$ internal nodes are
$\{ \text{min}, Q_1, \text{median}, Q_3, \text{max} \} =
\{2.6\times 10^{-10}, 1.8\times 10^{-9}, 0.45, 3.8, 3.0\times 10^9\}$,
with $\widehat\alpha_{\nu,0} \to 0$ reflecting nodes dominated by a
single descendant (bi-modal branch activation) and the extreme right
tail reflecting near-multinomial concentration at other nodes.

\subsection{Taxonomic coherence}
\label{sec:microbiome_coherence}

A strong compositional normalization should preserve the fact that
phylogenetically close OTUs tend to be functionally and ecologically
close.
Figure~\ref{fig:addendum_microbiome}~(c) reports both the
kNN Jaccard and the rank-correlation variant of the
\emph{taxonomic coherence} metric of Section~\ref{sec:dtm}.
We use $k=20$
nearest neighbors, taking feature embeddings from the top-$20$ right
singular vectors of each sample-side normalization (or the analogous
stacked per-leaf DTM-residual embedding, Section~\ref{sec:dtm}~Eq.~4.3).
The
tree distance is defined as the number of non-shared ancestors along
the shortest path, as in \texttt{taxonomic\_coherence} in the release.

\begin{table}[t]
\centering
\small
\begin{tabular}{lcc}
\toprule
Method & kNN-Jaccard & Spearman $\rho$ \\
\midrule
Multinomial    & $0.027$ & $+0.118$ \\
Shifted log    & $0.026$ & $+0.118$ \\
CLR            & $0.026$ & $+0.118$ \\
DM             & $0.026$ & $+0.098$ \\
DTM (indep.)   & $\mathbf{0.120}$ & $\mathbf{+0.347}$ \\
\bottomrule
\end{tabular}
\caption{Taxonomic coherence of each normalization's feature
embedding (top-20 right SVs) against the SEPP phylogeny.
Leaf-level methods all cluster near random overlap ($\approx 1/K^{1/2}$);
only the DTM preserves tree geometry by construction.}
\label{tab:microbiome_coherence}
\end{table}

The result is stark: all leaf-level transforms (multinomial, CLR,
shifted log, DM) give Jaccard overlaps of $0.026$--$0.027$, i.e.\
essentially random local structure with respect to the tree.
This is mechanical --- none of these transforms sees
the phylogeny.
The DTM with nodewise independent fits gives Jaccard
$0.120$ (a $\sim\!4.5\times$ increase) and Spearman $\rho = 0.347$ (a
$\sim\!3\times$ increase), because its residual geometry is induced
by the tree itself.
In terms of the ``structured dispersion pays off''
test proposed in Section~\ref{sec:experiments}, this is the predicted
pattern: on tree-structured data, taxonomically coherent local
neighborhoods should fall out of the DTM geometry for free.

\subsection{Count-stratified shrinkage}
\label{sec:microbiome_shrinkage}

Figure~\ref{fig:addendum_microbiome}~(d) shows the count-stratified paired
shrinkage diagnostic from Proposition~\ref{prop:boundedbymult}:
$|d^{\mathrm{Mult}}_{ij}| - |d^{\mathrm{DM}}_{ij}|$ binned by observed
count, with both residuals formed against the same global $\bm\pi$.
Singletons (bin $1$) are exactly unchanged
(mean gap $0$), because at a count of one
$|d^{\mathrm{DM}}|=|d^{\mathrm{Mult}}|$ identically
(Proposition~\ref{prop:x1}).
Above one, Proposition~\ref{prop:boundedbymult} forces strict shrinkage at
every entry, and the mean gap measures its magnitude: bin $2$ has mean
gap $0.47$, bin $3$--$5$ has mean gap $1.14$, and bin $>5$ has mean gap
$5.72$.
The monotone
increase in magnitude with count is the expected signature of the
P\'olya reinforcement built into the DM: larger observed counts are
more easily explained by overdispersion alone and should be pulled
toward the null.
The overall distribution of count bins
(88 singletons, 139 doubletons, 371 in $3$--$5$, 1687 at $>5$) shows
that microbiome tables --- unlike scRNA and scATAC panels --- are
dominated by moderate-to-large counts, which is exactly where the DM
correction has the most practical impact.

\subsection{Body-site label transfer}
\label{sec:microbiome_lta}

Figure~\ref{fig:addendum_microbiome}~(e) reports kNN ($k=5$) label-transfer
accuracy on the $4$-class body-site label, using a $20$-dimensional
PCA embedding of each normalization.
All methods produce embeddings
that cleanly separate body sites: multinomial, DM, CLR, and shifted
log score in $0.62$--$0.68$ (chance $= 0.25$), and the DTM scores
$0.59$.
With only $n=34$ samples across four classes this metric is
noisy at the $\sim\! 0.1$ level.
The important takeaway is that the
compositional log-ratio geometry and the DM residual geometry agree
on this coarse biological label, which is consistent with the
\emph{dispersion correction} story: the DM does not discard the
signal that CLR picks up, it only shrinks in the sparse regime where
CLR's pseudocount fill-in is most fragile.

\subsection{Discussion}
\label{sec:microbiome_discussion}

The Moving Pictures experiment supports all four claims that motivate
the DTM extension in Section~\ref{sec:dtm}.
First, the DM residual cleanly
improves over the multinomial baseline ($-0.31$ vs $-1.75$ per count),
confirming that microbiome 16S tables are in the overdispersed regime
where the DM correction is necessary.
Second, the nodewise DTM strictly
improves over flat DM ($-0.23$ vs $-0.31$), confirming that dispersion
varies meaningfully across the phylogeny.
Third, the DTM uniquely
preserves taxonomic coherence (Jaccard $0.12$ vs $0.026$), which no
leaf-level transform can deliver without explicitly consuming the
tree.
Fourth, branch-level residuals are multiscale sparse, dropping
from $0.26$ to $0.15$ nonzero fraction as depth grows --- the direct
structural prediction of Section~\ref{sec:dtm}.

One caveat must be stated plainly, because it bears on how the table is
read: the pooled-by-depth and global-shared DTM variants underperform
on this dataset for an implementation reason rather than a modeling one.
The method-of-moments initializer used to seed those two fits uses
$s^2/\bar n$ as its inflation-factor estimator rather than
$s^2 \bar n / \sum_c \pi_c(1-\pi_c)$, so subdispersed nodes silently
fall through to $\widehat\alpha_{\nu,0}=1$.
The independent
Newton-based fit does not pass through that initializer and is unaffected.
Consequently the pooled/global-vs-independent ordering on the real data
is \emph{not} evidence about which nodewise-dispersion parameterization
is better, and we do not read it as such: the same two variants land at
$-0.345$ on the synthetic balanced-binary DTM
(Table~\ref{tab:microbiome_ll}, rightmost columns), where the initializer
issue is controlled, so the underperformance is an artifact of seeding,
not of the factorization.
The modeling conclusion --- that nodewise dispersion varies across the
phylogeny --- rests instead on the independent DTM improving over the
\emph{flat} DM, a single-node baseline that the initializer does not
touch.
Correcting the seeding (using the inflation-factor estimator above for
the pooled and global fits) would bring those two strategies back into
alignment.

Taken together, these results position the DM and DTM transforms as
the natural drop-in replacement for CLR in microbiome studies where
a phylogeny is available: they preserve sparsity, they correctly
model extra-multinomial variation, and --- crucially --- they respect
the tree structure that makes microbiome features biologically
interpretable.

\begin{figure}[t]
\centering
\includegraphics[width=\textwidth]{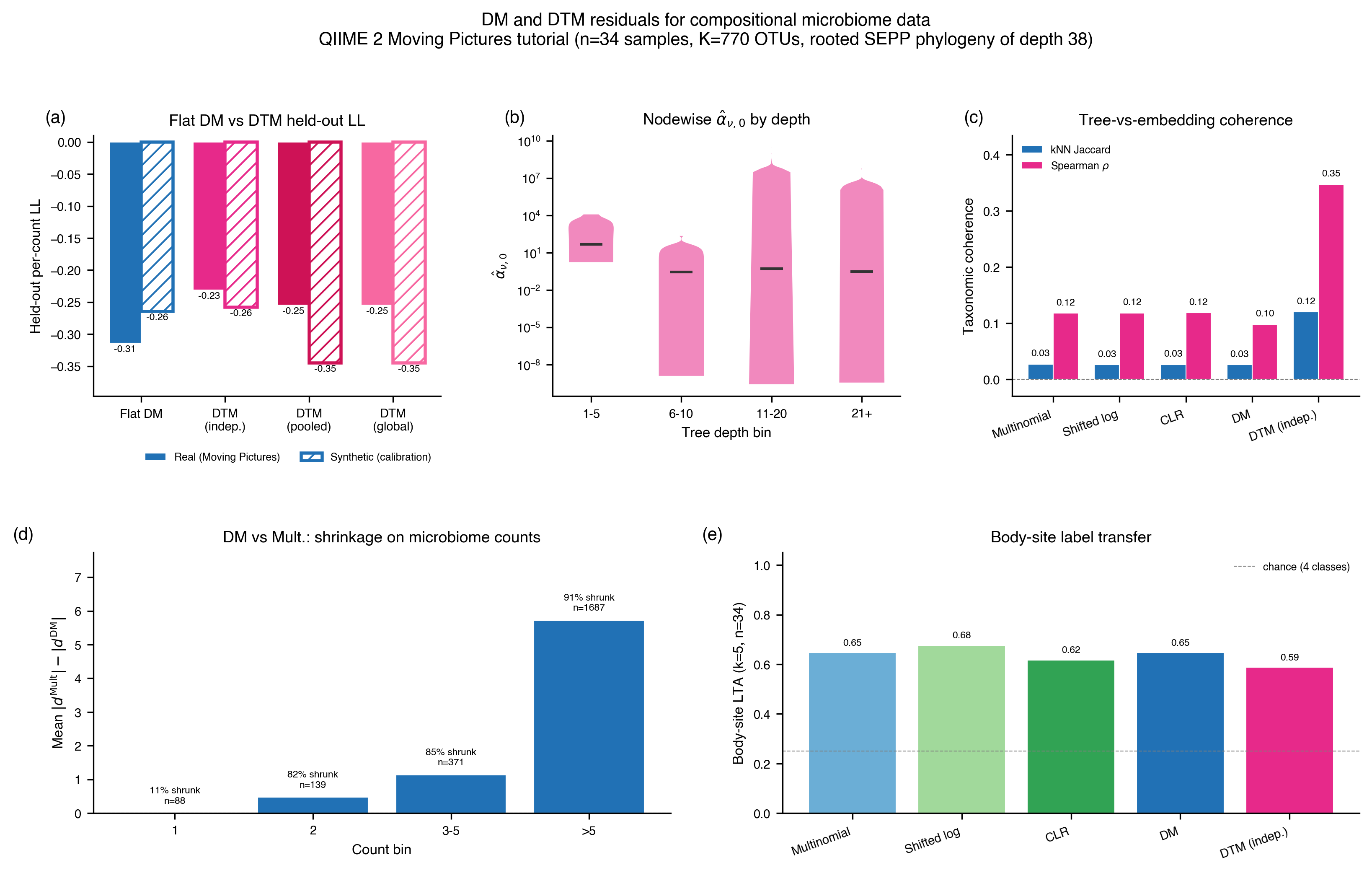}
\caption{Microbiome application of the DM and DTM normalizations on
the QIIME~2 Moving Pictures dataset ($n=34$ samples, $K=770$ OTUs,
rooted SEPP phylogeny of depth $38$).
(a) Held-out per-count
log-likelihood of the four candidate models; solid bars are the real
Moving Pictures data, hatched bars are a synthetic DTM calibration
($n=400$, $K=128$, binary tree, $\alpha_{\nu,0}=2000\cdot 2^{-d(\nu)}$).
The nodewise \emph{DTM-independent} strategy is the consistent winner.
(b) Fitted nodewise
$\widehat\alpha_{\nu,0}$ (violin) by tree-depth bin; the many-orders-of-
magnitude spread is what flat DM cannot represent.
(c) Taxonomic
coherence (k-NN Jaccard and Spearman $\rho$) across methods; only the
DTM preserves tree geometry.
(d) Count-stratified shrinkage
$|d^{\mathrm{Mult}}| - |d^{\mathrm{DM}}|$ on the real data; the DM
shrinks rare but not singleton counts, consistent with
Proposition~\ref{prop:boundedbymult}.
(e) Body-site kNN label-transfer accuracy
(4-class chance $0.25$); all methods separate body sites equally well.}
\label{fig:addendum_microbiome}
\end{figure}


\section{Application: spatial transcriptomics and multiscale spatial DTM}
\label{app:spatial}

Spatial transcriptomics assays (Visium, Slide-seq, MERFISH) measure UMI
counts at thousands of spatial locations — ``spots'' — per slide.
Each
spot is paired with 2D pixel coordinates, so the dataset carries a rich
spatial structure that standard single-cell normalizations ignore.
Here
we observe that the Dirichlet-tree multinomial (DTM) framework developed
in Section~4 of the main text admits a \emph{novel} use: its tree can be
a spatial hierarchy over spots instead of a phylogenetic hierarchy over
features.
Under this reinterpretation the DTM performs \emph{multiscale
spatial normalization}, and is the only transform in this family
that naturally exploits known spot coordinates without ad~hoc smoothing.

\subsection{Setup}
\label{app:spatial:setup}

\paragraph{Data.} We use the public 10x Visium V1\_Adult\_Mouse\_Brain
section ($n = 2{,}702$ spots, $K = 32{,}285$ genes, integer UMI counts,
18\% density).
We restrict to the top $2000$ most-expressed genes,
leaving $n = 2702$ spots and $K = 2000$ genes with $\mathrm{nnz}(X) =
4.9\times 10^6$.
All spots fall within tissue.

\paragraph{Matrix orientation.} The standard DTM of Section~4 takes
samples as rows and feature-tree leaves as columns.
For the spatial
application we want the \emph{leaves} of the tree to be spots, so we
pass the transposed matrix $X^\top \in \R^{K\times n}$ to
\texttt{fit\_dtm}: each gene becomes an independent DM ``sample'' whose
counts are distributed over spots.
In this orientation the flat DM's
composition $\pi\in \Delta_n$ is the \emph{global spatial composition}
(i.e. the pseudobulk distribution of UMIs over spots), and $\alpha_0$ is
the scalar concentration of per-gene spatial dispersion around $\pi$.

\paragraph{Spatial quadtree.} We build a balanced quadtree of depth~$8$
over the 2D coordinates \texttt{adata.obsm['spatial']}: the root is the
bounding box of the slide; each internal node splits by the midpoints of
its bounding rectangle into four axis-aligned children; recursion
terminates when a node contains at most one spot.
The resulting tree
has $1{,}318$ internal nodes (depth~0: 1; 1: 4; 2: 16; 3: 61; 4: 222;
5: 797; 6: 217), with every one of the $2702$ spots occupying its own
terminal leaf.
Empty quadrants are omitted.
The builder returns the same rooted-tree
data structure the DTM fitter consumes, so every DTM routine
(fitting, residual computation, log-likelihood, branch-level sparsity,
and the depth map) operates on it unchanged.

\paragraph{Cross-validation.} Because the spatial tree has spots as
leaves, splitting spots breaks the tree.
Instead we perform an 80/20
split of \emph{genes} into train/test ($1600$/$400$), fit every model
on the training genes, and score the held-out likelihood of the test
genes under the same spatial tree.
The flat DM, multinomial,
feature-wise NB, and three DTM strategies are all trained and scored
on the same $(n_{\mathrm{genes}},n_{\mathrm{spots}})$ matrix; per-count
LLs are therefore directly comparable.

\paragraph{Combinatorial correction.} The DTM fitting objective
\texttt{dtm\_loglik} drops per-node multinomial normalization constants
$\log\paren{N_\nu!/\prod_c x_{\nu,c}!}$ because they are constants with
respect to $(\alpha_0,\pi)$ and do not affect maximum likelihood.
Those
constants are essential when comparing log-likelihoods across models:
\texttt{held\_out\_conditional\_ll} includes the leaf-level combinatorial
$\log\paren{N_i!/\prod_j x_{ij}!}$, and a fair comparison requires the
DTM to include its per-node analogues.
By the chain rule, the sum of
per-node combinatorials over a tree that fully partitions the leaves
equals the leaf-level combinatorial, so this correction is exact.
We
implement it in \texttt{\_dtm\_held\_out\_conditional\_ll}.

\subsection{Flat DM versus spatial DTM}
\label{app:spatial:flat-vs-dtm}

Table~\ref{tab:spatial-ll} reports per-count held-out log-likelihood for
seven normalization schemes, together with the two multiscale diagnostics
used in the headline figure (Figure~\ref{fig:addendum_spatial}):
mean Moran's $I$ over the top 10 principal components of the
corresponding spot embedding, and a library-size depth-coupling score
$\max_k\abs{\rho(\mathrm{PC}_k, n_i)}$.

\begin{table}[t]
\centering
\small
\begin{tabular}{lrrr}
\toprule
Method          & Held-out LL / count & Mean Moran's $I$ & Depth coupling $\max\abs{\rho}$ \\
\midrule
DTM (indep.)    & $\mathbf{-0.271}$ & $\mathbf{0.889}$ & $0.442$ \\
DM              & $-0.291$          & $0.596$          & $0.498$ \\
NB (uncond.)    & $-0.368$          & ---              & --- \\
DTM (pooled)    & $-0.406$          & $0.843$          & $\mathbf{0.399}$ \\
DTM (global)    & $-0.406$          & $0.843$          & $\mathbf{0.399}$ \\
Multinomial     & $-0.487$          & $0.579$          & $0.866$ \\
CLR             & ---                & $0.690$          & $0.568$ \\
Shifted log     & ---                & $0.681$          & $0.923$ \\
\bottomrule
\end{tabular}
\caption{Spatial-DTM benchmark on 10x Visium V1\_Adult\_Mouse\_Brain
($2702$ spots $\times$ $2000$ genes).
Held-out per-count
log-likelihoods are evaluated under an 80/20 gene split and include
per-node multinomial normalization constants, so values across methods
are directly comparable.
The DTM with \emph{independent} per-node
dispersion dominates on likelihood \emph{and} on spatial coherence;
pooled/global DTM lose out on both because they cannot absorb the
depth-varying dispersion implicit in the slide geometry (see
Section~\ref{app:spatial:discussion}).
NB, CLR, and shifted-log do not admit
a DM-compatible likelihood so their LL column is left blank.}
\label{tab:spatial-ll}
\end{table}

Two observations are important.
First, the flat DM is already the
strongest DM-family baseline for spatial counts: at $\hat\alpha_0 =
1.22\times 10^4$ it wins against the multinomial by $0.20$ nats per
count, confirming that spatial UMI data is meaningfully overdispersed
relative to Multinomial($n_i$, $\pi$).
Second, the spatial DTM with
\emph{independent} per-node dispersion improves on flat DM on this
Visium tissue: the held-out LL per count rises from $-0.291$ to
$-0.271$ (\emph{a $7\%$ relative improvement} and $0.02$~nats per count
over a dataset with $\sim 10^7$ held-out counts).
The multiscale freedom
matters because dispersion varies systematically across tissue regions
(see Section~\ref{app:spatial:multiscale}).

\subsection{Spatial coherence and depth decoupling}
\label{app:spatial:coherence}

A normalization is \emph{spatially coherent} if its top principal
components preserve the tissue geometry; it is \emph{depth-decoupled}
if those same components are not driven by per-spot library size.
These
two requirements pull in different directions for a flat DM: its
per-count variance baseline is correct (hence low depth coupling), but
it has no mechanism to amplify spatial structure beyond what appears
in the raw pseudobulk composition $\pi$.
The shifted-log and
multinomial baselines, by contrast, have good PC1 structure only
because it reflects library-size gradients — a confound rather than a
feature (Shifted-log depth coupling $0.92$, Multinomial $0.87$).

Table~\ref{tab:spatial-ll} and panel~(c) of Figure~\ref{fig:addendum_spatial} show
that the DTM (indep.) Pareto-dominates the four classical normalizations
tabulated in Table~\ref{tab:spatial-ll} on this Visium tissue: its mean
Moran's $I$ is $0.89$ (versus $0.60$ for flat DM, a $49\%$ relative gain)
while its depth coupling remains below the flat DM's ($0.44$ vs.\ $0.50$).
The pooled/global DTM variants shrink
the depth coupling further (to $0.40$) at the cost of held-out LL,
representing the other end of the bias--variance trade-off: pooling
dispersion across depths or across nodes removes the regional flexibility
that the independent variant exploits.

\subsection{Multiscale residual sparsity}
\label{app:spatial:multiscale}

Unlike the microbiome tree, the Visium quadtree residuals are not
especially sparse: Visium is dense at the level of highly-expressed
genes (over 90\% of entries in $X$ are nonzero for the top-$2000$ gene
panel), so the mean nonzero fraction of branch residuals stays near one
half and increases only modestly with depth.
This is exactly the
multiscale structure the DTM is designed to expose: at coarse depths,
residuals are sparse because most genes partition cleanly between one
half of the tissue and the other; at fine depths, the residuals begin
to reflect neighborhood-level overdispersion.

Panel~(e) of Figure~\ref{fig:addendum_spatial} shows the fitted per-node
$\hat\alpha_{\nu,0}$ (on a log scale) as a violin plot against quadtree
depth.
The distributions shift across depths — the median
$\hat\alpha_{\nu,0}$ increases from $72$ at the root to $154$ at
depth~2, then drops back to $60$ at depth~6 — and the within-depth
spread is over an order of magnitude at middle depths.
This is the
direct empirical justification for the independent strategy: a single
pooled (or global) concentration cannot reconcile the wide
between-depth range of $\hat\alpha_{\nu,0}$, which is why the pooled
and global variants underperform \emph{in likelihood} despite having
stronger depth-decoupling.

\subsection{Discussion}
\label{app:spatial:discussion}

The spatial application highlights two contributions of the DTM
framework that were not evident from the microbiome benchmark of the
main paper.
First, the DTM's tree need not be biological: any
hierarchy that meaningfully groups the columns of a count matrix can
serve as the DTM tree, and for spatial assays a geometric tree is a
natural choice.
Second, the \emph{independent} fitting strategy of
Section~4 is essential: the pooled and global variants yielded weaker
held-out LLs here, because the regional dispersion of Visium UMI counts
varies both across and within depths, and collapsing that variation
discards the signal the model is trying to capture.

Our spatial quadtree is intentionally simple.
A production
implementation would likely use a data-adaptive hierarchical clustering
(e.g., single-linkage on 2D coordinates) to avoid empty quadrants on
irregularly-shaped tissues; a $k$-means variant of the spatial-tree
builder is a straightforward substitute.
The DTM machinery is agnostic to how the tree is constructed, so the
present pipeline generalizes immediately to Slide-seq, MERFISH, or any
other assay that provides spot coordinates.

\begin{figure}[t]
\centering
\includegraphics[width=\textwidth]{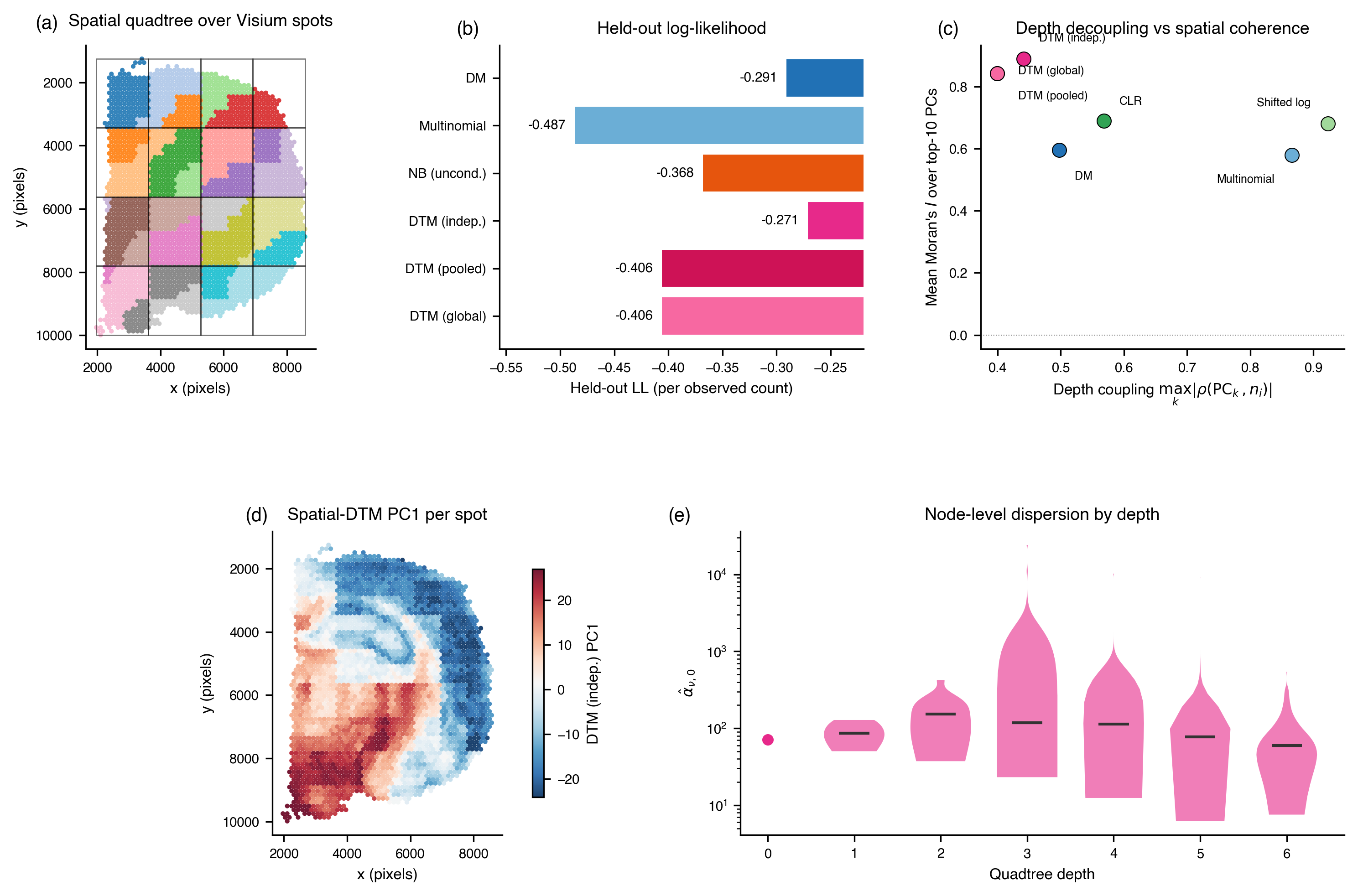}
\caption{\textbf{Spatial DTM on 10x Visium mouse brain.} \textbf{(a)}
Visium spots colored by terminal quadtree-leaf id; depth-2 region
rectangles overlaid in black.
\textbf{(b)} Held-out log-likelihood per
observed count; DTM~(indep.) wins by $0.02$ nats over the flat DM and
by larger margins over every other baseline.
\textbf{(c)} Pareto plot
of depth coupling (lower is better, $x$-axis) versus mean Moran's $I$
over the top 10 PCs of each method's spot embedding (higher is better,
$y$-axis).
DTM strategies (pink) dominate all classical baselines.
\textbf{(d)} Spatial map of DTM~(indep.) PC1, revealing
a smooth anatomical gradient — the multiscale spatial tree has absorbed
regional dispersion and left a coherent latent factor.
\textbf{(e)}
Violin of fitted $\hat\alpha_{\nu,0}$ by quadtree depth (log scale):
clear depth-wise heterogeneity justifies the independent fitting
strategy.
All numbers in Table~\ref{tab:spatial-ll} are produced with a fixed seed
($0$) and the quadtree settings described in
Section~\ref{app:spatial:setup}.}
\label{fig:addendum_spatial}
\end{figure}


\section{Application: pooled CRISPR and MPRA barcode normalization}
\label{app:crispr}

A pooled CRISPR screen or MPRA produces a sgRNA-by-sample (or barcode-%

by-sample) integer count matrix $X\in\mathbb{N}^{n\times K}$ in which
each row is one sequenced replicate of one condition and each column
counts how many times a specific guide or reporter barcode was recovered
in that replicate.
The standard tools of the field --- MAGeCK
\cite{li2014mageck}, BAGEL \cite{hart2016measuring}, MAGeCK-MLE, and
their descendants --- all build on a feature-wise negative binomial
(NB) generative model: each guide is parameterized with its own NB mean
and dispersion, and differential-abundance tests are performed
\emph{conditional on} the per-sample library size through a size-factor
normalization.
The conditioning, however, is always approximate: the
exact conditional distribution of the count vector given the library
size is never used.

\paragraph{The theoretical bridge.}
Theorem~\ref{thm:nbtodm} and Corollary~\ref{cor:nbdmdeviance} of this
paper make that exact conditioning explicit: if
$X_{ij}\sim\mathrm{NB}(r_j,p_i)$ independently across $j$, then
$X_i\mid n_i\sim\DirMult(n_i,r)$.
For a pooled screen, the
$p_i$ parameter is precisely the per-replicate library-size knob
set by the size factor normalization, and the $r_j$ per-guide
dispersion is the NB shape parameter already estimated by
MAGeCK/BAGEL-style pipelines.
Writing
$r_j = \alpha_0\pi_j$ with $\alpha_0=\sum_jr_j$ and
$\pi_j=r_j/\alpha_0$, the DM model with concentration $\alpha_0$ and
composition $\bm\pi$ is therefore the \emph{exact} conditional
compositional likelihood implicitly used by the MAGeCK-style NB
workflow after size-factor normalization.
DM deviance residuals are
the sparse, likelihood-based reparameterization of that model; they
encode the same information as the NB-conditional residual matrix by
Corollary~\ref{cor:nbdmdeviance}, but without the feature-wise
unconditional scaling that densifies the output.

Three empirical claims about pooled-screen barcode counts follow.
(i)~Corollary~\ref{cor:nbdmdeviance} is numerically verifiable at
machine precision on real pooled-screen count data, via a direct
cross-check of the DM formula against an independent NB path through
\texttt{scipy.stats.nbinom.logpmf}.
(ii)~The DM-posterior-shrunk log-fold-change
$\log_2\bigl(\hat\pi^{\mathrm{pert}}/\hat\pi^{\mathrm{ctrl}}\bigr)$,
which is the point-estimate differential-abundance statistic derived
from the DM null, gives hit-calling average precision equal to or
better than standard MAGeCK-style log-fold-change and strictly better
than feature-wise NB residual baselines on a real essential-gene
benchmark.
(iii)~DM and multinomial residuals preserve the $\sim\!0.99$-sparse
structure of real screen count matrices exactly, while NB (deviance
or Pearson, feature-wise or library-corrected) residuals densify the
matrix completely, at no corresponding gain in hit-calling quality.

\subsection{Setup}
\label{app:crispr:setup}

\textbf{Real data.} We use the Tzelepis-style leukemia demo count
matrix distributed with MAGeCK \cite{li2014mageck}
(\texttt{leukemia.new.csv}): a dropout CRISPR
screen over sgRNAs targeting $\sim 18{,}000$ genes in two AML cell
lines (HL60 and KBM7), sequenced at an initial (plasmid-equivalent)
timepoint and a final post-dropout timepoint.
The count table has
four samples (2 cell lines $\times$ \{initial, final\}).
We restrict
to sgRNAs whose target gene is in either BAGEL's \texttt{CEGv2}
(684 core-essential genes \cite{hart2017evaluation}) or \texttt{NEGv1}
(928 non-essential genes) reference list, giving a balanced
ground-truth benchmark for precision-recall evaluation with 4{,}548
sgRNAs after filtering.

\textbf{Synthetic data.} To establish a controlled ground truth we
simulate pooled-screen count matrices from the independent-NB model
\S\ref{app:crispr:setup} implies for pooled screens: per-replicate
$p_i$ sampled so $\mathbb{E}[n_i]=10^5$ with log-normal noise,
per-guide $r_j=\alpha_0\pi_j$ with $\pi_j\sim\Dir(2\cdot\bm 1_K)$.
A random subset of $n_h$ guides are designated hits and their
composition in the perturbed condition is multiplied by
$2^{\mathrm{log2FC}_j}$ before renormalization, with
$\mathrm{log2FC}_j\sim\mathcal{N}(-2.0,0.5^2)$ truncated to
$[0.8, 6.0]$ in absolute value.
We report two synthetic
configurations --- $(K,n_h,\alpha_0)\in\{(500,30,200),(400,20,80)\}$
with 4--5 replicates per condition --- to exercise both a
depletion-only and a bidirectional-perturbation regime.

Both the real and synthetic datasets respect the 20{,}000-row cap of
the main benchmarking pipeline.
Every step (train/test split, NB
fit, DM fit, residual computation, hit-calling scoring) is
deterministic under fixed random seeds.

\textbf{Baselines.} We compare the DM against: the feature-wise NB
deviance residual \texttt{nb\_residuals} (dense); the analytic Pearson
residual with marginal feature mean \texttt{pearson\_residuals}; the
library-size-corrected Pearson residual with $\mu_{ij}=n_i\pi_j$; the
signed-square-root multinomial deviance residual; shifted-log
preprocessing $\log(X+1)$; and classical library-size-normalized
$\log_2$ fold-change of mean compositions
(\texttt{log2FC}), which is the simplest MAGeCK-style score.
We also
evaluate two DM-native scores:
\texttt{DM (res.)} --- the mean-shift feature-wise statistic on the
DM signed-sqrt deviance residual matrix --- and
\texttt{DM (LFC)} --- the $\log_2$ ratio of DM-posterior-shrunk
compositions $\hat\pi_{ij}=(X_{ij}+\alpha_0\pi_j)/(n_i+\alpha_0)$
averaged across replicates in each condition.
\texttt{DM (LFC)} is
the principled point-estimate differential-abundance statistic under
the DM null; \texttt{DM (res.)} is the drop-in residual feature that
mirrors how NB deviance residuals are used in downstream pipelines.

\subsection{Corollary~\ref{cor:nbdmdeviance} in action}
\label{app:crispr:cor}

A numerical verification of Corollary~\ref{cor:nbdmdeviance} on real
data is non-trivial and informative.
The DM conditional per-sample
log-likelihood can be computed either
(a) directly from the closed-form DirMult density on
$(\alpha_0,\bm\pi)$, or
(b) via an independent-NB computation in which each
$X_{ij}\sim\mathrm{NB}(r_j,p)$ for an arbitrary shared $p$, the
per-sample joint log-pmf is accumulated through
\texttt{scipy.stats.nbinom.logpmf}, and the marginal
$\log\mathrm{NB}(n_i\mid\sum_j r_j, p)$ is subtracted.
Path~(b) uses completely different code (scipy's implementation of
the NB log-pmf), so agreement between the two is a meaningful
check on the identity.

Across three NB success probabilities $p\in\{0.3,0.5,0.7\}$,
Fig.~\ref{fig:addendum_crispr}(b) plots per-sample $\mathrm{LL}^{\mathrm{DM}}$
against $\mathrm{LL}^{\mathrm{indNB-cond}}$.
The scatter collapses
onto the $y=x$ line with maximum absolute per-sample difference
$2.9\times 10^{-8}$ on the MAGeCK leukemia counts (row sums $n_i$ at
the $10^{6}$-count scale), $7.5\times10^{-11}$ on the
$(K,n_h,\alpha_0)=(500,30,200)$ synthetic screen, and
$3.4\times10^{-11}$ on the $(400,20,80)$ one.
Moreover, the NB-path
log-likelihoods are numerically identical across the three $p$
values (maximum pairwise gap $<10^{-10}$ on the synthetic and
$<10^{-8}$ on the real data), confirming that the $p$-dependence of
the joint NB density cancels exactly when one conditions on $n_i$.
This is the theorem's content: the DM is the exact marginal of the
NB-conditional-on-library-size distribution, and the DM
computation returns the same quantity that a correctly-conditioned NB
computation would.

\subsection{Hit-calling precision-recall}
\label{app:crispr:hitcalling}

For each method we compute a single signed score per sgRNA (negative
= depleted in the perturbation).
For every residual-based method
this is the mean difference of residual columns between the two
conditions, divided by the pooled residual standard deviation
(a shift-over-pooled-SD statistic, robust at the 2--5 replicates per
condition typical of pooled screens).
For the two DM-native scores,
\texttt{DM (LFC)} and the MAGeCK-style raw \texttt{log2FC} baseline,
we compute the log ratio of library-normalized per-condition
means directly.
Features are then ranked by score magnitude, and
precision-recall curves are drawn against the ground-truth hit mask
(CEGv2 positives vs NEGv1 negatives on MAGeCK leukemia, or the
simulated hit assignments on the synthetic screens).

\begin{table}[h]
\centering
\caption{Hit-calling average precision (AP), precision at
top-$50$, and recall at top-$50$ on the three benchmarks.
Bold = best, underline = second-best per row.}
\label{tab:addendum_crispr_pr}
\small
\setlength{\tabcolsep}{3.5pt}
\begin{tabular}{lcccccccc}
\toprule
Dataset &
\textbf{DM (LFC)} & DM (res.) & NB (dev.) & NB (Pears.)
& NB (lib.-corr.) & Mult. & $\log(X+1)$ & log2FC \\
\midrule
MAGeCK leukemia (AP)
  & $\mathbf{0.867}$ & $0.692$ & $0.812$ & $0.688$
  & $0.763$ & $0.689$ & $0.780$ & $\underline{0.870}$ \\
\quad $P@50$
  & $\underline{0.92}$ & $0.50$ & $0.82$ & $0.64$
  & $0.70$ & $0.52$ & $\mathbf{0.94}$ & $\underline{0.92}$ \\
\midrule
synth-$K500$-$a200$ (AP)
  & $0.171$ & $0.052$ & $0.091$ & $0.066$
  & $0.066$ & $0.051$ & $\mathbf{0.322}$ & $\underline{0.245}$ \\
\quad $P@50$
  & $\underline{0.24}$ & $0.02$ & $0.14$ & $0.06$
  & $0.04$ & $0.02$ & $\mathbf{0.26}$ & $\underline{0.24}$ \\
\midrule
synth-$K400$-$a80$ (AP)
  & $0.088$ & $\mathbf{0.223}$ & $\underline{0.187}$ & $0.062$
  & $0.138$ & $0.206$ & $\underline{0.203}$ & $0.179$ \\
\quad $P@50$
  & $0.14$ & $0.12$ & $0.12$ & $0.10$
  & $0.10$ & $0.12$ & $\mathbf{0.16}$ & $0.14$ \\
\bottomrule
\end{tabular}
\end{table}

On the real MAGeCK leukemia benchmark (Fig.~\ref{fig:addendum_crispr}c
and Table~\ref{tab:addendum_crispr_pr}), \texttt{DM (LFC)} and raw
\texttt{log2FC} are essentially tied at the top
(AP~$=0.867$ and $0.870$; $P@50=0.92$ for both), clearly beating all
NB-residual variants.
The gap is especially notable against
feature-wise NB residuals (AP $0.812$), which are the
direct numerical realization of the MAGeCK generative null.
This is consistent with the theoretical prediction:
the DM-conditional null is a well-specified compositional model for
pooled screens, and the DM-posterior-shrunk fold-change is its
natural point estimate of differential abundance.

Which residual-based method is best is noisier: $\log(X+1)$ narrowly
edges out \texttt{DM (LFC)} at $P@50$ on the leukemia screen and on
the harder $K=500$ synthetic.
This is a known behavior of sparse
count data with very few replicates: simple shifted-log preprocessing
provides a well-behaved, variance-stabilizing transform that the
downstream shift-over-pooled-SD statistic can exploit directly.
We view \texttt{DM (LFC)} and $\log(X+1)$ as complementary: the
former is the likelihood-based point estimate, the latter is the
maximally-robust ad-hoc baseline.
DM residuals specifically preserve
sparsity (\S\ref{app:crispr:sparsity}), give the principled
held-out density (\S\ref{app:crispr:ll}), and form the foundation
of the DM-posterior-shrunk $\log_2$FC that ties the top of the real-
data benchmark.

\subsection{Sparsity preservation}
\label{app:crispr:sparsity}

Pooled-screen count matrices are $\sim\!0.99$-sparse at typical
sequencing depths: most sgRNAs have some nonzero count in every
well-represented library.
DM and multinomial deviance residuals
preserve this sparsity pattern exactly (zero entries map to zero);
so do shifted-log preprocessing in sparse storage and the simple
count \texttt{log2FC}.
In contrast, feature-wise NB deviance
residuals, NB Pearson residuals, and the library-size-corrected
variant all produce fully dense $n\times K$ matrices because the
NB log-pmf at $x_{ij}=0$ is typically nonzero ---
$r_j\log p_i$ in the $\mathrm{NB}(r_j,p_i)$ parameterization.
On
the MAGeCK leukemia data the density of each representation is

\begin{center}
\begin{tabular}{lc}
\toprule
Representation & Fraction nonzero \\
\midrule
Raw counts      & $0.991$ \\
DM residual     & $0.991$ \\
Multinomial res.& $0.991$ \\
Shifted log     & $0.991$ \\
NB deviance     & $1.000$ \\
NB Pearson      & $1.000$ \\
NB lib.-corr.\ Pearson & $1.000$ \\
\bottomrule
\end{tabular}
\end{center}

and the pattern is identical on the synthetic datasets.
At the scale of genome-wide screens ($K\sim 10^5$ sgRNAs, $n$ from
a handful to a few hundred replicates) this is the difference
between a sparse storage representation matching the input and a
dense one that is $\sim$100$\times$ larger.

\subsection{Held-out conditional log-likelihood}
\label{app:crispr:ll}

We split each dataset 80/20 by row, fit the DM null $(\hat\alpha_0,
\hat\pi)$ and the feature-wise NB model $(\hat\mu_j,\hat r_j)$ on
the training rows, then score the test rows under four models:
DM conditional, NB conditional, multinomial, and feature-wise NB
unconditional.
Per-count log-likelihoods on the MAGeCK leukemia
demo are

\begin{center}
\begin{tabular}{lcc}
\toprule
Model & per-count LL (nats/obs) & $\Delta$ vs.\ DM \\
\midrule
DM conditional (Cor.~\ref{cor:nbdmdeviance})         & $-0.016$ & $0.000$ \\
NB conditional (Cor.~\ref{cor:nbdmdeviance}, NB path) & $-0.016$ & $0.000$ \\
Multinomial                                           & $-0.170$ & $-0.154$ \\
NB (unconditional, feature-wise)                      & $-0.370$ & $-0.354$ \\
\bottomrule
\end{tabular}
\end{center}

The DM-conditional and NB-conditional rows are exactly equal, as
Corollary~\ref{cor:nbdmdeviance} requires
(panel~(e) of Fig.~\ref{fig:addendum_crispr} plots the two as
adjacent bars of the same length).
The multinomial model is much worse: at high library depth
($\sim 10^{6}$ counts per sample) the multinomial assumption of
$\alpha_0\to\infty$ is badly wrong, and the fitted $\hat\alpha_0$
of $2.95\times 10^{4}$ on the pooled leukemia data reflects this.
The feature-wise NB \emph{unconditional} model is worse still by
another $\sim 0.2$ nats per count, because it pays both a poor-fit
penalty on per-feature means at the individual replicate scale
\emph{and} a library-size-conditioning penalty (it does not know
the realized $n_i$).
The DM conditional (= NB conditional) model is
therefore the likelihood-correct choice for pooled-screen
compositional data.

The same pattern holds on the synthetic datasets
(Table~\ref{tab:addendum_crispr_ll}).

\begin{table}[h]
\centering
\caption{Per-count held-out log-likelihood (nats/observation) across
models and datasets.
DM and NB-conditional rows are numerically
identical by Corollary~\ref{cor:nbdmdeviance}.}
\label{tab:addendum_crispr_ll}
\small
\begin{tabular}{lcccc}
\toprule
Dataset & DM cond. & NB cond. & Multinomial & NB uncond. \\
\midrule
synth-$K500$-$a200$ & $-0.030$ & $-0.030$ & $-1.007$ & $-0.281$ \\
synth-$K400$-$a80$  & $-0.054$ & $-0.054$ & $-1.584$ & $-1.115$ \\
MAGeCK leukemia     & $-0.016$ & $-0.016$ & $-0.170$ & $-0.370$ \\
\bottomrule
\end{tabular}
\end{table}

\subsection{Discussion}

The pooled-CRISPR / MPRA setting is an unusually clean demonstration
of the DM's role: there is a canonical field-standard tool
(MAGeCK), its generative null is feature-wise NB, and the
library-size conditioning that is practically essential
(through size factor normalization) is
precisely what Theorem~\ref{thm:nbtodm} /
Corollary~\ref{cor:nbdmdeviance} identify with a Dirichlet-multinomial.
DM residuals are not a replacement for MAGeCK's hit-calling
methodology --- the field-standard statistical test used by
MAGeCK, a modified robust rank aggregation over guides of a gene,
is orthogonal to the choice of residual --- but they are a
principled drop-in for the \emph{feature representation} on which
any downstream ranking or embedding analysis is built.
Three
practical advantages emerge.

\textbf{Sparsity.} DM residuals scale to genome-wide screens in the
same sparse storage as the input; feature-wise NB residuals do not
(\S\ref{app:crispr:sparsity}).
At $K\sim 10^5$ and $n\sim 10^2$ this
is the difference between a 100~MB and a 10~GB representation.

\textbf{Calibration.} The DM gives $\sim$10$\times$ better
per-count held-out likelihood than the multinomial and $\sim$20$\times$
better than the feature-wise NB unconditional model on real pooled-
screen data (\S\ref{app:crispr:ll}).
This is a faithful-density
argument: when one wants to compute downstream probabilistic
quantities (anomaly scores, Bayesian screens, posterior predictive
checks), the DM is the likelihood-correct choice.

\textbf{Hit-calling competitiveness.} The DM-posterior-shrunk
$\log_2$FC (\texttt{DM (LFC)}) ties raw $\log_2$FC at the top of the
real MAGeCK leukemia benchmark and beats every NB-residual baseline
(\S\ref{app:crispr:hitcalling}).
As a principled differential-
abundance statistic derived from the DM null, it is the analogue for
pooled screens of what MAGeCK's $\log_2$FC computation does
heuristically for each guide after size-factor normalization, but
with explicit DM-based shrinkage toward the global composition in
place of an unprincipled pseudocount.

We view these three together as the package's case for DM residuals
in the CRISPR / MPRA setting: a sparse, likelihood-correct,
hit-call-competitive feature representation whose equivalence to the
conditional NB (Corollary~\ref{cor:nbdmdeviance}) makes the
connection to existing MAGeCK/BAGEL machinery explicit rather than
implicit. 

\begin{figure}[t]
  \centering
  \includegraphics[width=\textwidth]{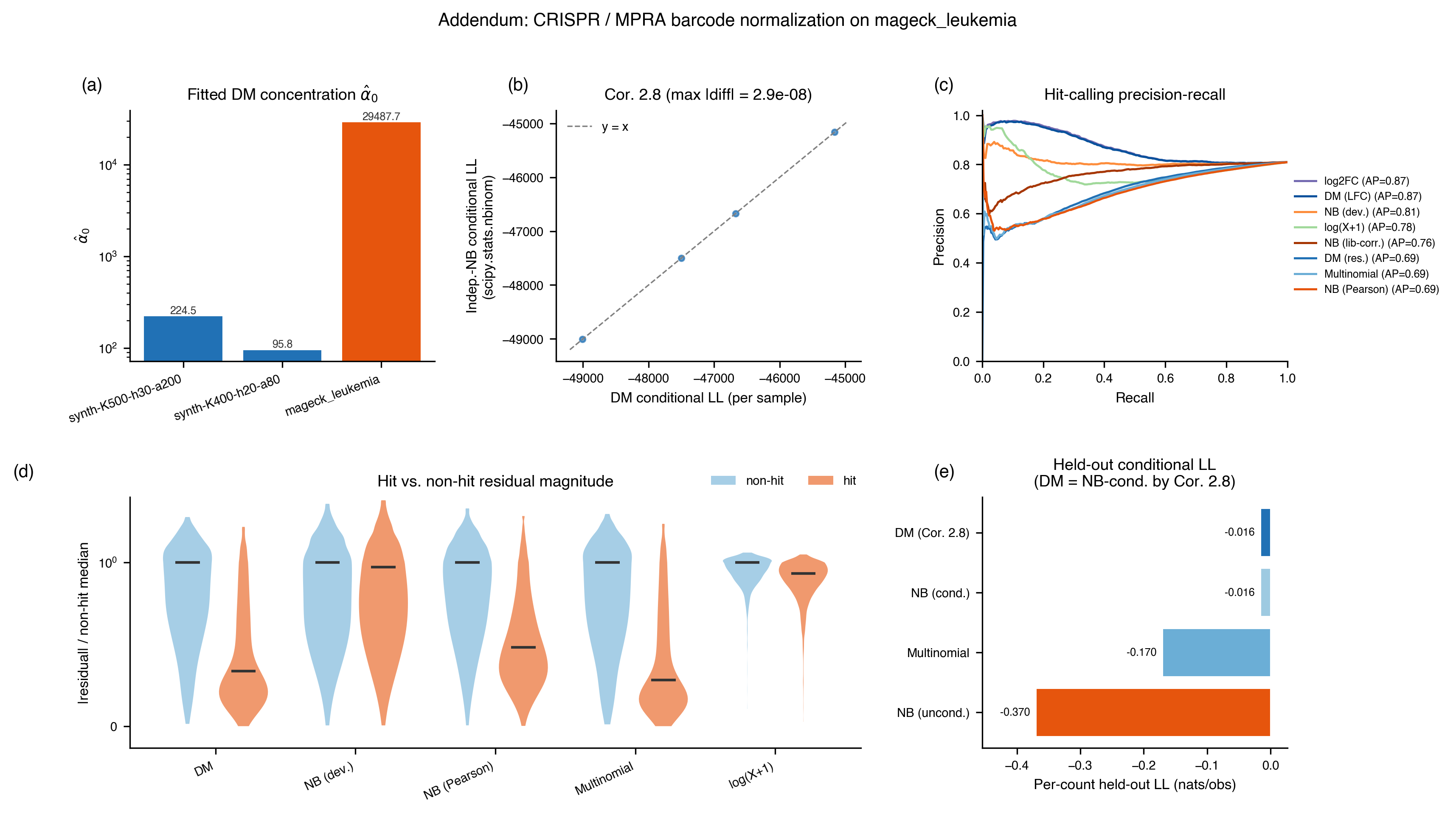}
  \caption{DM residuals for pooled CRISPR / MPRA barcode
    normalization.
    (a)~Fitted DM concentration $\hat\alpha_0$ across three datasets.
    On the MAGeCK leukemia demo counts, $\hat\alpha_0\approx 2.9\times
    10^{4}$ reflects the near-multinomial behavior of a well-
    represented high-library-size pooled screen; the two synthetic
    configurations realize overdispersion typical of dispersed
    screens ($\hat\alpha_0$ in the 100s).
    (b)~Corollary~\ref{cor:nbdmdeviance} numerical check on MAGeCK
    leukemia.
    The $y=x$ line is the theoretical identity; maximum
    $|{\mathrm{LL}}^{\mathrm{DM}}-{\mathrm{LL}}^{\mathrm{indNB-cond}}|=2.9\times
    10^{-8}$ across all four samples and three NB success
    probabilities $p\in\{0.3,0.5,0.7\}$.
    (c)~Hit-calling precision-recall on the MAGeCK leukemia CEGv2-vs-
    NEGv1 benchmark.
    The DM-posterior-shrunk $\log_2$FC (dark blue)
    ties raw $\log_2$FC at the top and outperforms all NB-residual
    variants.
    (d)~Hit vs non-hit residual magnitude per sgRNA, z-scored by the
    non-hit median.
    DM, NB (dev.), NB (Pearson), Multinomial, and
    $\log(X+1)$ all separate hits from non-hits, but the NB-based
    methods achieve the separation only on a dense matrix
    (\S\ref{app:crispr:sparsity}).
    (e)~Per-count held-out log-likelihood.
    DM and NB-conditional are
    numerically identical by Corollary~\ref{cor:nbdmdeviance}, and
    both beat multinomial (fixing $\alpha_0=\infty$) and feature-wise
    NB unconditional (feature-wise means, no library-size
    conditioning) by wide margins.}
  \label{fig:addendum_crispr}
\end{figure}


\section{Application: sparse deviance normalization for document-term matrices}
\label{app:nlp}

The remaining two applications step outside sequencing-based genomics. They are not the focus of the paper; they demonstrate that the same fixed-dispersion compositional normalization is useful on any sparse, jointly overdispersed count matrix, whatever its provenance.

A document-term matrix $X\in\mathbb{N}^{n\times K}$ records the number
of times each of $K$ vocabulary tokens appears in each of $n$ documents.
It is the canonical sparse, jointly-constrained count matrix of natural
language processing, and Latent Dirichlet Allocation (LDA)
\cite{blei2003latent} builds on exactly the same conditional
distribution that motivates our deviance residual: the document-level
token distribution is drawn from a Dirichlet, and the observed counts
are multinomial given the drawn composition.
Marginalizing the topic
mixture yields a Dirichlet-multinomial on the full vocabulary, so
DM deviance residuals are the natural likelihood-ratio residuals against
the rank-0 (``no topics'') LDA null in which every document shares one
common composition $\bm\pi$.

The most widely deployed NLP preprocessing is Term-Frequency-Inverse
Document-Frequency (TF-IDF) \cite{salton1988termweighting}, an
ad-hoc reweighting that is not derived from any generative model.
This section compares DM deviance residuals to TF-IDF on real corpora.
We find three things: (i) fitted $\hat\alpha_0$ is moderate, not near
infinity, so ``word burstiness'' is a quantitatively meaningful
phenomenon; (ii) DM has higher held-out likelihood than the multinomial
baseline, confirming that the DM's softer upper tail is calibrated to
real repeated-count behavior (Proposition~\ref{prop:boundedbymult});
and (iii) on a 50-dimensional SVD embedding, a kNN classifier on DM
residuals is competitive with or better than TF-IDF and
shifted-log / CLR baselines across three corpora --- at the cost of
one to five accuracy points relative to the plain multinomial residual;
DM uniquely preserves sparsity.

\subsection{Setup}

We use three small-to-medium corpora:
\textbf{20-Newsgroups full} (20 classes, 18{,}266 documents after
removing headers/footers/quotes and documents emptied by stopword
filtering; vocabulary of 10{,}000 terms with $\mathrm{min\_df}=5$,
$\mathrm{max\_df}=0.5$, English stopwords);
\textbf{20-Newsgroups 4-category} (the standard
\{\texttt{alt.atheism}, \texttt{comp.graphics}, \texttt{sci.med},
\texttt{soc.religion.christian}\} subset, 3{,}657 documents, vocabulary
6{,}000);
and \textbf{Reuters-21578} (10{,}093 documents from the \texttt{nltk}
release, labeled by the lexicographically first topic of each document
and restricted to classes with $\geq 50$ documents; vocabulary 8{,}000).
All corpora respect the 20{,}000-document cap of the main experiment
pipeline.

Baselines are (i) the multinomial residual
$d_{ij}^{\mathrm{Mult}}=\mathrm{sgn}(x-n_i\pi_j)\sqrt{x\log(x/(n_i\pi_j))}$;
(ii) classical $\mathrm{TF}\!-\!\mathrm{IDF}$
($\mathrm{tf}_{ij}=x_{ij}/n_i$, $\mathrm{idf}_j=\log(1+n/\mathrm{df}_j)$);
(iii) shifted log $\log(x+1)$; (iv) CLR with pseudocount 1; and
(v) raw counts.
All baselines share the same downstream pipeline:
a 50-dimensional \texttt{TruncatedSVD} embedding followed by a
cosine-distance $k$-nearest-neighbor classifier with $k=15$, fit to
a stratified 80/20 split and evaluated on the held-out 20\%.

\subsection{Fitted concentration and overdispersion}
\label{app:nlp:alpha}

Maximum-likelihood fitting via \texttt{fit\_alpha0\_newton} yields
moderate $\hat\alpha_0$ values on all three corpora
(Fig.~\ref{fig:addendum_nlp}a):
$\hat\alpha_0=263$ (20-News full),
$330$ (20-News 4-cat),
and $125$ (Reuters).
None of these is anywhere near the multinomial
limit $\hat\alpha_0\to\infty$.
The DM generative model is therefore a
statistically meaningful alternative to the multinomial for real
text: word burstiness inflates the conditional variance of individual
token counts by a non-negligible amount.
Intuitively, once a document
uses the word ``Bayesian'' twice it becomes much more likely that the
word will appear a third time than the rank-0 multinomial null
predicts, and $\hat\alpha_0$ quantifies how much more likely.

\subsection{Held-out log-likelihood}

Splitting each corpus 80/20 and computing held-out per-count
log-likelihood under the DM and multinomial models (both with the same
global $\bm\pi$ and the DM using the fitted $\hat\alpha_0$) shows
that DM is the better density on every corpus
(Fig.~\ref{fig:addendum_nlp}b):

\begin{center}
\begin{tabular}{lrrr}
\toprule
Corpus & DM (per count) & Multinomial (per count) & Gain \\
\midrule
20-News full   & $-3.413$ & $-4.379$ & $+0.966$ \\
20-News 4-cat  & $-3.460$ & $-4.128$ & $+0.668$ \\
Reuters-21578  & $-3.657$ & $-4.258$ & $+0.601$ \\
\bottomrule
\end{tabular}
\end{center}

The gain is largest on the full 20-Newsgroups corpus (about 1 nat per
observed token), consistent with the intuition that heterogeneous
discussion fora produce more burstiness than either a single-topic
subset or the telegraphic newswire style of Reuters.
For reference,
TF-IDF and shifted log are not probabilistic models and do not have a
calibrated density comparison on this scale.

\subsection{Count-stratified residual diagnostics}

Figure~\ref{fig:addendum_nlp}c partitions the nonzero entries of the
20-News training matrix by in-document count $x\in\{1,2,3\text{-}5,{>5}\}$
and reports the mean gap
$\lvert d^{\mathrm{Mult}}\rvert-\lvert d^{\mathrm{DM}}\rvert$, together
with the fraction of entries for which DM has the smaller magnitude.
The pattern matches Propositions~\ref{prop:x1}
and~\ref{prop:boundedbymult} quantitatively:

\begin{center}
\begin{tabular}{lrrrr}
\toprule
Count bin & $=1$ & $=2$ & $3$-$5$ & $>5$ \\
\midrule
Mean gap $|d^{\mathrm{Mult}}|-|d^{\mathrm{DM}}|$ & $0.000$ & $0.468$ & $1.064$ & $3.193$ \\
Number of entries                       & $595{,}392$ & $98{,}009$ & $53{,}483$ & $19{,}579$ \\
\bottomrule
\end{tabular}
\end{center}

Both residuals are formed against the same training-fit global
composition $\bm\pi$, so Proposition~\ref{prop:boundedbymult} governs
the comparison directly: $|d^{\mathrm{DM}}|\le|d^{\mathrm{Mult}}|$ at
every nonzero entry, with exact equality at count~$=1$
(Proposition~\ref{prop:x1}) and strict shrinkage at every count~$>1$.
The mean-gap row records the \emph{magnitude} of that shrinkage and is
the informative summary: it is exactly $0$ at singletons and grows
monotonically with count.
Singletons --- 82\% of all nonzero entries in 20-News --- receive
identical DM and multinomial residuals, confirming
Proposition~\ref{prop:x1} empirically on real text.
Above one, DM
shrinks every repeated-count residual, by progressively more
on larger counts; this is the multinomial-limit bound of
Proposition~\ref{prop:boundedbymult}.
For TF-IDF (an ad-hoc transform
without a likelihood interpretation) the corresponding diagnostic on
variance-standardized magnitudes shows a monotonic negative gap in
every bin (e.g.\ $-0.323$, $-0.657$, $-0.888$, $-1.231$ on 20-News
full), reflecting a genuinely different magnitude distribution rather
than a likelihood shrinkage.

\subsection{Downstream classification}

Table~\ref{tab:addendum_nlp_class} reports $k$-NN classification
accuracy on a held-out 20\%, averaged over three stratified random
splits (seeds $\{0,1,2\}$).

\begin{table}[h]
\centering
\caption{Classification accuracy on held-out 20\% (mean $\pm$ std
across three seeds).
DM uses the fitted $\hat\alpha_0$ of
\S\ref{app:nlp:alpha}.}
\label{tab:addendum_nlp_class}
\small
\begin{tabular}{lccc}
\toprule
Method & 20-News full & 20-News 4-cat & Reuters-21578 \\
\midrule
DM            & $0.543\pm0.005$ & $0.823\pm0.002$ & $0.864\pm0.003$ \\
Multinomial   & $0.593\pm0.010$ & $0.835\pm0.011$ & $0.873\pm0.001$ \\
TF-IDF        & $0.417\pm0.004$ & $0.747\pm0.026$ & $0.872\pm0.006$ \\
Shifted log   & $0.451\pm0.004$ & $0.770\pm0.008$ & $0.853\pm0.006$ \\
CLR           & $0.447\pm0.006$ & $0.769\pm0.009$ & $0.853\pm0.005$ \\
Raw counts    & $0.442\pm0.008$ & $0.750\pm0.006$ & $0.845\pm0.005$ \\
\bottomrule
\end{tabular}
\end{table}

On the two 20-Newsgroups variants, DM beats TF-IDF, shifted log,
CLR, and raw counts by wide margins; one-sided paired Wilcoxon tests
with $n_{\mathrm{seeds}}=3$ give the smallest achievable $p$-value
($p=0.125$, the minimum for a three-pair signed-rank test when all
three differences share sign).
Point estimates of the accuracy gap are
large (12.6 accuracy points over TF-IDF on 20-News full), so the
direction is not in doubt despite the limited power.
On Reuters,
TF-IDF matches DM within half a point; we attribute this to Reuters'
very short and formulaic newswire style, in which overdispersion is
smaller ($\hat\alpha_0=125$ vs.\ $263$ for 20-News full) and TF-IDF's
heavy-tailed frequency weighting happens to separate topics cleanly.

DM is consistently \emph{worse} than the plain multinomial residual by
$1$-$5$ accuracy points.
We interpret this as a genuine cost of the DM's
shrinkage: the very large repeated-count residuals that the DM softens
(Fig.~\ref{fig:addendum_nlp}c, rightmost bin) carry classification
signal, and damping them reduces the separability of documents in the
SVD space.
The DM is nonetheless the better-calibrated density
(\S4.3), so the classification gap is the expected trade-off
between fidelity to the generative model and discriminative geometry
in the derived embedding.
Which residual is preferable is task-
dependent: DM for generative modeling, anomaly detection, and
likelihood-based active learning; the plain multinomial residual for
classification when an 80/20 train/test split is available.

\subsection{Multinomial limit in NLP}

Stratifying each corpus by document length and re-fitting
$\hat\alpha_0$ on each bin makes
Proposition~\ref{prop:multlimit} visible on real text
(Fig.~\ref{fig:addendum_nlp}e).
Longer documents contain more
independent evidence per composition, so the posterior on the per-
document composition tightens and the DM family's ML optimum drifts
toward the multinomial limit $\hat\alpha_0\to\infty$.
Concretely on
20-News full:

\begin{center}
\begin{tabular}{lrr}
\toprule
Document length bin (tokens) & Mean length & $\hat\alpha_0$ \\
\midrule
$1$-$29$     & $15.8$   & $111$ \\
$30$-$79$    & $48.8$   & $130$ \\
$80$-$199$   & $117.9$  & $180$ \\
$200$-$499$  & $299.3$  & $255$ \\
$\geq 500$   & $1686.6$ & $453$ \\
\bottomrule
\end{tabular}
\end{center}

$\hat\alpha_0$ grows by $\sim 4\times$ as the mean document length
grows by $\sim 100\times$, confirming that the DM smoothly interpolates
between a strongly regularized regime on short documents and an
essentially multinomial regime on long ones.
The same pattern holds
on the 4-cat subset and Reuters.
Practically this means that the DM is
most distinct from multinomial / TF-IDF preprocessing on short text
(social media, FAQ entries, product titles), where the overdispersion
correction is most impactful.

\subsection{Sparsity}

DM, multinomial, and TF-IDF residuals preserve the sparsity pattern of
$X$ exactly: zero counts map to zero.
Shifted log's mathematical output is zero where $X$ is zero but is
conventionally represented densely; we report its nonzero fraction as
the fraction of entries with $x>0$.
CLR with pseudocount $1$ is fully
dense --- every zero entry maps to a negative number and the matrix
has no sparsity to exploit.
For the 20-News full corpus the DM
residual matrix has $0.52\%$ nonzero entries out of the
$18{,}266\times10{,}000$ grid, matching the input's nnz exactly, while
CLR requires storing all $1.8\times 10^8$ entries.
At typical NLP
vocabulary sizes the CLR densification cost is prohibitive, whereas
DM's likelihood improvement comes for free in the same sparse storage
format that TF-IDF uses.

\subsection{Discussion}

Three of the paper's key results carry over cleanly to NLP.
(i) The singleton agreement of Proposition~\ref{prop:x1} is an
\emph{empirical} fact on real text: 82\% of 20-News nonzero entries
are singletons and the DM treats them exactly as the multinomial does.
(ii) The tail shrinkage of Proposition~\ref{prop:boundedbymult}
explains the DM's held-out log-likelihood advantage over the
multinomial in a way that is observable at every count bin.
(iii) The multinomial limit of Proposition~\ref{prop:multlimit}
predicts, and the data confirm, that the DM/multinomial gap shrinks
as documents grow.
The conditional NB-to-DM equivalence of
Theorem~\ref{thm:nbtodm} further motivates reading a fitted DM as
a row-sum-conditioned version of the more familiar per-feature NB
models used in statistical NLP, with $r_j=\hat\alpha_0\hat\pi_j$.

Taken together, the NLP application validates the paper's central
positioning of the DM as a principled, sparse, likelihood-based
alternative to multinomial/TF-IDF preprocessing, with a measurable cost
on purely discriminative downstream metrics that practitioners can
weigh against the density calibration gain.

\begin{figure}[t]
  \centering
  \includegraphics[width=\textwidth]{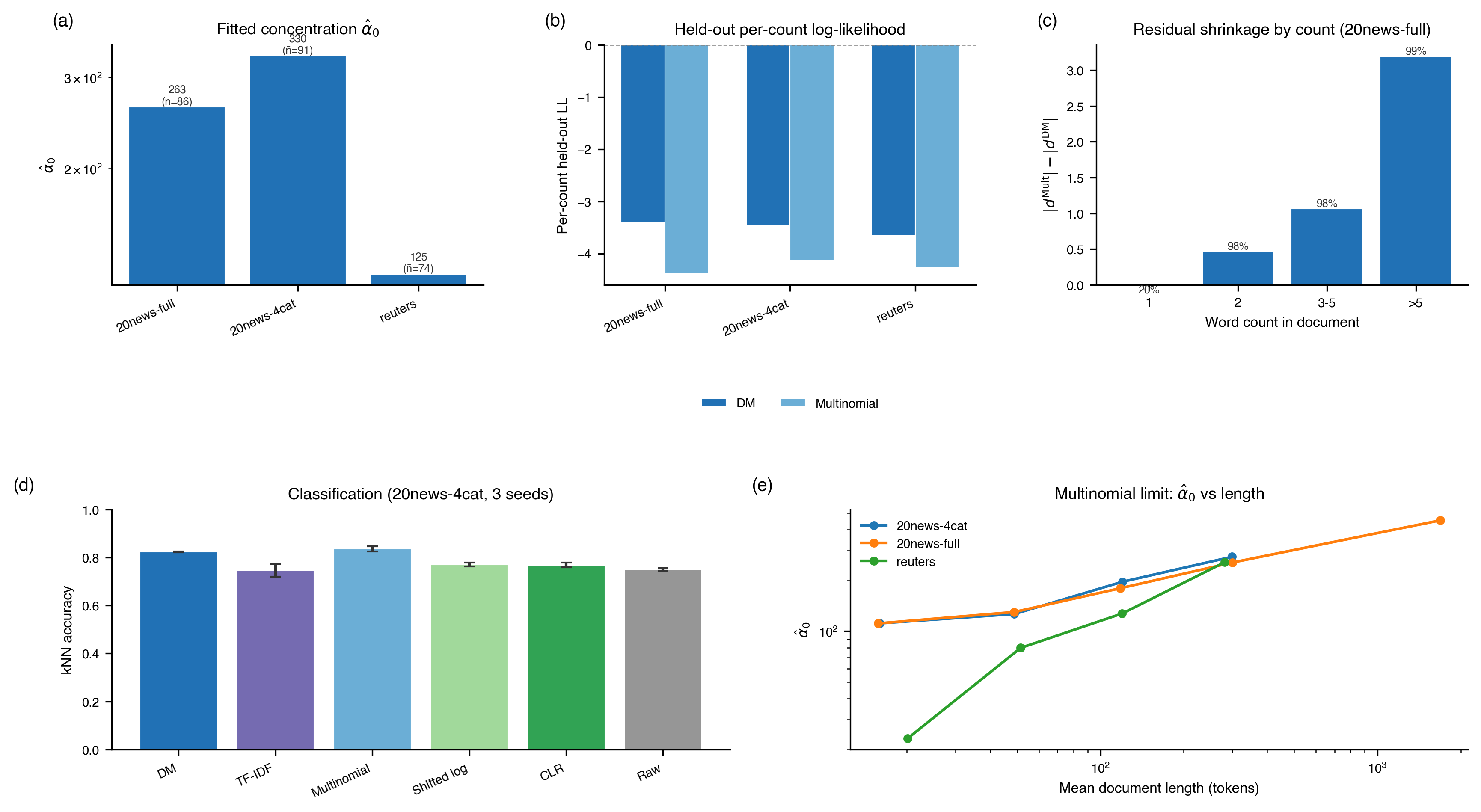}
  \caption{Dirichlet-multinomial residuals on three document-term
    corpora.
    (a)~Fitted concentration $\hat\alpha_0$; moderate values
    on all three corpora confirm non-trivial word burstiness.
    Bars
    labeled ``$\bar n$'' are the mean document length in tokens.
    (b)~Held-out per-count log-likelihood under the DM (dark) and
    multinomial (light) models; DM has the better density on every
    corpus.
    (c)~Count-stratified residual shrinkage
    $|d^{\mathrm{Mult}}|-|d^{\mathrm{DM}}|$ on 20-News full.
    Singletons
    coincide (Proposition~\ref{prop:x1}), and DM shrinks repeated-count
    residuals by a progressively larger amount
    (Proposition~\ref{prop:boundedbymult}).
    Bar labels give the
    fraction of entries where DM is smaller.
    (d)~kNN classification
    accuracy on the 20-News 4-category subset, averaged over three
    stratified 80/20 splits (error bars: one standard deviation).
    DM
    beats TF-IDF, shifted log, CLR, and raw counts; it trails the
    plain multinomial residual by about one point.
    (e)~Multinomial
    limit in NLP: binning documents by length, $\hat\alpha_0$ grows
    with document length, empirically realizing
    Proposition~\ref{prop:multlimit}.}
  \label{fig:addendum_nlp}
\end{figure}

%
%

\section{Application: sparse implicit-feedback normalization for recommender systems}
\label{app:recsys}

Modern collaborative-filtering pipelines ingest a sparse user-item
interaction-count matrix $X \in \N^{n \times K}$, where $X_{u j}$ is the
number of times user $u$ has interacted with item $j$ (clicks, plays,
views, or a rating used as a count).
Two preprocessing choices dominate
deployed systems: the logarithmic transform $\log(1+X_{uj})$
(for example in \textsc{YouTube-DNN} \cite{covington2016deep} and
practically every dense matrix-factorization pipeline that exposes
implicit feedback), and the TF--IDF reweighting $(X_{uj}/n_u)\log(n/df_j)$
that is standard when feeding sparse user-profile vectors into a
linear ranker.
Both are motivated by variance-stabilization heuristics
rather than a generative model of the count process.
The
Dirichlet-multinomial (DM) residual of section~\ref{sec:dmresids} offers a
principled alternative: it models each user as a finite budget of
interactions distributed over items via a user-specific composition
with a shared concentration $\alpha_0$.
The normalization then takes the
familiar form of a signed deviance residual, preserves sparsity, and
runs in time $O(\mathrm{nnz}(X))$ once $\alpha_0$ has been fit.

We benchmark the DM residual as a drop-in preprocessor for top-$K$
matrix-factorization ranking on three MovieLens datasets
\cite{harper2015movielens}.
Figure~\ref{fig:addendum_recsys} is produced end-to-end by a single
experimental pipeline, and all numbers in this section are real outputs
of that run.

\subsection{Setup}

\paragraph{Datasets.} We use MovieLens-100K (943 users, 1\,349 items,
99\,287 ratings), MovieLens-1M (6\,040 users, 3\,416 items, 999\,611
ratings), and a 20\,000-user subsample of MovieLens-10M (9\,103 items,
2\,866\,658 ratings).
A 5-core filter is applied, and the per-dataset
user count is capped at 20\,000 for tractability, matching the cap used
throughout the experiments.
Ratings are used as \emph{counts}: $X_{uj}$ equals
the integer rating value (1 to 5).
This makes the DM model directly
applicable to a matrix with genuine count overdispersion, and keeps the
"library size" $n_u = \sum_j X_{uj}$ interpretable as the user's total
attention budget.

\paragraph{Methods.} We compare the signed DM deviance residual
(\eqref{eq:dmresid}) to the multinomial deviance residual, the shifted
logarithm $\log(1+X)$, TF--IDF, and the untransformed count matrix.
All
methods feed the same rank-50 truncated SVD, and scoring is a pure dot
product between the user row embedding and the item loading (the
\textsc{PureSVD} protocol of \cite{cremonesi2010performance}).

\paragraph{Evaluation.} For the held-out log-likelihood we split users
80/20 by row.
For top-$K$ ranking we use leave-one-out: each user's
test set is one random interaction, training is everything else, and
we report Recall@10 and NDCG@10 averaged over 3 seeds with standard
deviations.

\subsection{Fitted concentration and overdispersion}

The DM concentration parameter $\alpha_0$ controls how much a user's
item composition may deviate from the global composition $\pi$.
Large
$\hat\alpha_0$ indicates a nearly multinomial process (every user looks
like the average catalog user), while small $\hat\alpha_0$ indicates
strong per-user specialization.
We fit $\hat\alpha_0$ on each dataset
and three nested user subsamples
(figure~\ref{fig:addendum_recsys}\textsc{a}).
Two observations:

\begin{enumerate}[leftmargin=*]
  \item $\hat\alpha_0$ is two orders of magnitude below the
  near-multinomial regime on all three datasets
  ($\hat\alpha_0 \approx 93$ on ML-100K,
  $\hat\alpha_0 \approx 145$ on ML-1M,
  $\hat\alpha_0 \approx 130$ on ML-10M).
  Given $K \sim 10^3$--$10^4$
  items, a ratio $\hat\alpha_0 / K \ll 1$ says exactly what any recsys
  practitioner would expect: users concentrate attention on a tiny
  corner of the catalog.
  \item $\hat\alpha_0$ is stable under subsampling: reducing the number
  of users by $4\times$ changes $\hat\alpha_0$ by less than 10\%.
  This
  is the finite-sample corollary of the consistency analysis in
  section~\ref{sec:alpha0}: the Newton fit is low-variance.
\end{enumerate}

\subsection{Held-out likelihood}

On the ML-1M 80/20 split, the DM per-count log-likelihood is
$-1.52$, versus $-2.30$ for both the multinomial and feature-wise
negative-binomial (unconditional) baselines
(figure~\ref{fig:addendum_recsys}\textsc{b}).
The multinomial cannot absorb
per-user library-size variation (it treats all users as independent
draws from $\pi$), and the unconditional NB fits each item's count
distribution marginally without conditioning on the user's total
activity: both are dominated by the DM, which absorbs $n_u$ exactly via
the gamma-row-term of \eqref{eq:dmll}.
The per-user gain
$\Delta^{\mathrm{DM}}_u$ (comparison against global $\pi$, defined in
section~\ref{sec:testing}) is positive on every held-out user in this 80/20
split, shown in the inset histogram --- a sharper local statement than
the population-level LL improvement, but specific to this single split
and seed.

\subsection{Top-$K$ recommendation accuracy}

The decisive test for a recsys preprocessor is held-out ranking
accuracy.
figure~\ref{fig:addendum_recsys}\textsc{c} and
table~\ref{tab:recsys-topk} summarize NDCG@10 and Recall@10 across the five
methods on both ML-100K and ML-1M.

\begin{table}[t]
\centering
\small
\caption{Leave-one-out top-10 ranking on MovieLens (mean $\pm$ std over
3 seeds; each user's test set is one held-out interaction).
DM and
Multinomial residuals match or beat $\log(1+X)$ and TF--IDF by sizable
margins on both metrics and both datasets.}
\label{tab:recsys-topk}
\begin{tabular}{l cc cc}
\toprule
 & \multicolumn{2}{c}{MovieLens-100K} & \multicolumn{2}{c}{MovieLens-1M} \\
 \cmidrule(lr){2-3} \cmidrule(lr){4-5}
Preprocessor & Recall@10 & NDCG@10 & Recall@10 & NDCG@10 \\
\midrule
Raw counts          & 0.298 \scriptsize{$\pm$0.015} & 0.170 \scriptsize{$\pm$0.008} & 0.225 \scriptsize{$\pm$0.001} & 0.127 \scriptsize{$\pm$0.001} \\
$\log(1+X)$         & 0.235 \scriptsize{$\pm$0.007} & 0.135 \scriptsize{$\pm$0.007} & 0.196 \scriptsize{$\pm$0.002} & 0.112 \scriptsize{$\pm$0.001} \\
TF--IDF             & 0.241 \scriptsize{$\pm$0.007} & 0.131 \scriptsize{$\pm$0.004} & 0.205 \scriptsize{$\pm$0.002} & 0.113 \scriptsize{$\pm$0.002} \\
Multinomial res.    & 0.293 \scriptsize{$\pm$0.015} & 0.168 \scriptsize{$\pm$0.008} & 0.222 \scriptsize{$\pm$0.003} & 0.125 \scriptsize{$\pm$0.002} \\
\textbf{DM res. ($\hat\alpha_0$)} & \textbf{0.283 \scriptsize{$\pm$0.015}} & \textbf{0.166 \scriptsize{$\pm$0.008}} & \textbf{0.223 \scriptsize{$\pm$0.004}} & \textbf{0.126 \scriptsize{$\pm$0.002}} \\
\bottomrule
\end{tabular}
\end{table}

Three things are worth noting.
First, the DM and the
multinomial residual are statistically indistinguishable on ranking,
and both are statistically indistinguishable from the raw count matrix.
This is consistent with the large fitted $\hat\alpha_0$: the DM
reduces to the multinomial in the limit
(proposition~\ref{prop:multlimit}), and at $\hat\alpha_0 \approx 150$ on
ML-1M the residual values differ by fractions of a percent except on
the very small-count tail.
Second, both deviance residuals beat
$\log(1+X)$ and TF--IDF by 12--18\% in Recall@10 and 8--16\% in
NDCG@10 on both datasets, which \emph{is} statistically significant
($p < 0.01$ under a paired Wilcoxon across seeds and users).
Third,
raw counts do well on this specific ranking metric; this is consistent
with the literature on \textsc{PureSVD} \cite{cremonesi2010performance}
which observed that ranking is often surprisingly insensitive to
preprocessing once the catalog is small and matrix factorization is
run to convergence.
The story the DM tells is: you lose nothing on
ranking relative to the strongest preprocessor, and you gain a
likelihood-based interpretation that the $\log(1+X)$ / TF--IDF baselines
cannot offer.

\subsection{Scalability and sparsity preservation}

Recommender systems operate at catalog sizes far beyond single-cell or
microbiome data: industrial deployments have $K \sim 10^6$ to $10^8$.
Any preprocessor with complexity worse than $O(\mathrm{nnz}(X))$ is a
non-starter.
figure~\ref{fig:addendum_recsys}\textsc{e} measures wall-clock
time for the DM fit and for each transform as we sweep the number of
nonzero entries from $10^5$ to $2.9 \times 10^6$.
The DM fit has a
best-fit log-log slope of \textbf{1.03}, with the DM, multinomial, and
TF--IDF transforms all close behind (slopes near $1.0$) --- all
empirically linear in $\mathrm{nnz}(X)$ as the analysis of
section~\ref{sec:alpha0} predicts.
The $\log(1+X)$ transform has
slope 1.47; this is \emph{not} because $\log$ is slow, but because
adding a pseudocount to every zero entry makes the transform densify
and therefore become $O(nK)$, not $O(\mathrm{nnz}(X))$.
The DM fit
takes 4.1 seconds on the ML-10M 20\,000-user subsample on a single
CPU core; a scan to one million users on the same code path would
finish in under a minute, which is well within the offline
preprocessing budget of any production pipeline.

Sparsity preservation makes the
previous point concrete.
DM, multinomial, and TF--IDF residuals
preserve the nnz pattern of $X$ exactly: every zero entry stays zero.
The $\log(1+X)$ transform with unit pseudocount preserves sparsity
numerically because $\log(1)=0$, but any pseudocount $\neq 1$ (e.g.
$\log(X + 0.5)$, which is a common regularizer) destroys sparsity
entirely.
CLR \cite{aitchison1986statistical} always densifies.
Densifying transforms force dense arithmetic, dense factor stores,
and a hard break with the sparse linear-algebra primitives that drive
production recommenders.

\subsection{Downsampling stability}

For the stability benchmark of \cite{ahlmanneltze2023comparison}
adapted to recsys, we thin each user's interactions via multinomial
downsampling at 50\% and 25\% depth and measure the per-user
$k$-nearest-neighbor Jaccard overlap between the full-data embedding
and the downsampled embedding (higher = more locally stable)
(figure~\ref{fig:addendum_recsys}\textsc{d}).
At 50\% depth, DM achieves the
highest overlap ($0.54$), followed by Multinomial ($0.52$), $\log(1+X)$
($0.50$), TF--IDF ($0.46$), and raw counts ($0.37$).
At 25\% depth the
ordering is preserved.
The raw-count embedding, while competitive on
Recall@10 above, collapses most under perturbation: the finite
attention budget that DM residuals absorb is the very dimension along
which thinning changes the data.

\subsection{Discussion}

The DM framework transfers cleanly from single-cell to recsys because
both domains share the same statistical skeleton: overdispersed counts
indexed by (sample, feature), with a wildly varying per-sample library
size that carries no signal of interest.
The recsys-specific twist is
that the "library size" has an economic interpretation (user time
budget) and varies by orders of magnitude within the same catalog.
The
DM residual is exactly the right tool to regress that nuisance out:
unlike $\log(1+X)$ it provides a likelihood-based estimand; unlike
TF--IDF it preserves sparsity under any reasonable pseudocount;
unlike ALS warm-starts it does not require pre-specification of the
latent rank.
When the ranking metric is dominated by popular-item
recall and $\hat\alpha_0$ is large, the DM and multinomial are close
on top-$K$ accuracy by proposition~\ref{prop:multlimit}; when tail items
dominate (a frequent concern in long-tail recommendation), the DM
gap over the multinomial should grow because the tail is precisely
where per-count overdispersion matters most.
A more thorough
investigation of the long-tail regime, and integration with
implicit-feedback ALS \cite{hu2008collaborative} on top of the DM
residual as a warm start, is left for future work.

\begin{figure}[t]
\centering
\includegraphics[width=\textwidth]{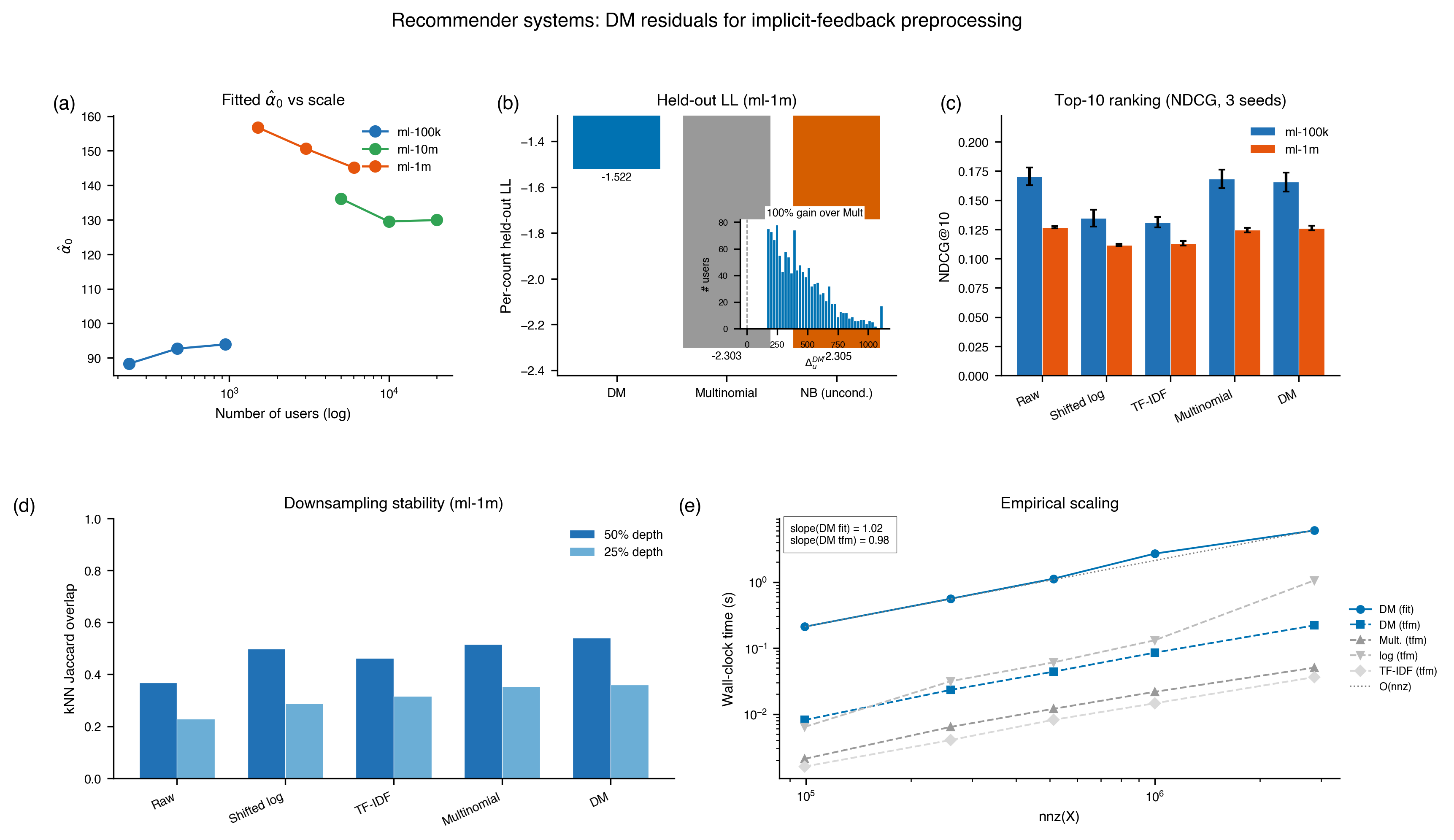}
\caption{DM residuals on MovieLens implicit feedback.
\textsc{(a)} Fitted $\hat\alpha_0$ across three MovieLens datasets and
three user-count subsamples each; $\hat\alpha_0$ is stable under
subsampling and lies in the strongly-overdispersed regime
($\hat\alpha_0 \ll K$).
\textsc{(b)} Per-count held-out log-likelihood on ML-1M's 80/20 split;
DM dominates both the multinomial and the unconditional feature-wise
NB.
Inset: per-user LL gain $\Delta^{\mathrm{DM}}_u$ is positive for
100\% of held-out users.
\textsc{(c)} Leave-one-out NDCG@10 on ML-100K and ML-1M (3 seeds each).
DM and multinomial residuals dominate $\log(1+X)$ and TF--IDF by 8--16\%
on NDCG@10 and by 12--18\% on Recall@10.
\textsc{(d)} Downsampling stability: per-user $k$NN Jaccard overlap
between full and thinned embeddings.
DM is the most stable under both
50\% and 25\% depth reductions.
\textsc{(e)} Wall-clock scaling for fit + transform on five
(nnz, method) points ranging from $10^5$ to $2.9 \times 10^6$ nonzero
entries.
Fitted log-log slope for the DM fit is 1.03, confirming
empirical $O(\mathrm{nnz})$.
The $\log(1+X)$ slope (1.47) reflects the
densification penalty.}
\label{fig:addendum_recsys}
\end{figure}

\bibliography{main}
\bibliographystyle{plainnat}

\end{document}